\newcounter{n}
\newcommand{\R}{\mathbb{R}}
\newcommand{\Z}{\mathbb{Z}}
\newcommand{\zmod}[1]{\mathbb{Z}_{#1}}
\newcommand{\tr}[1]{\mathrm{tr}\left(#1\right)}
\newcommand{\U}[1]{\mathrm{U}\left(#1\right)}
\newcommand{\coho}[3]{\mathrm{H}^{#1}(#2,#3)}
\newcommand{\cohoZ}[2]{\mathrm{H}^{#1}(#2,\mathbb{Z})}
\newcommand{\cohoU}[2]{\mathrm{H}^{#1}(#2,\U{1})}
\newcommand{\rep}[1]{\mathrm{Rep}(#1)}
\newcommand{\abs}[1]{\left|#1\right|}
\declaretheoremstyle[
       shaded={bgcolor=\color{rgb}{0.9,0.9,0.9}}  
]{theorem}
\declaretheoremstyle[
       shaded={bgcolor=\color{rgb}{0.9,0.9,0.9}}
]{question}
\declaretheoremstyle[
       shaded={bgcolor=\color{rgb}{0.9,0.9,0.9}}  
]{remark}
\declaretheoremstyle[
       shaded={bgcolor=\color{rgb}{0.9,0.9,0.9}}  
]{proposition}
\declaretheorem[style=theorem]{proposition}
\declaretheoremstyle[
       shaded={bgcolor=\color{rgb}{0.9,0.9,0.9}}  
]{definition}
\declaretheoremstyle[
       shaded={bgcolor=\color{rgb}{0.9,0.9,0.9}}  
]{assumption}
\declaretheoremstyle[
       shaded={bgcolor=\color{rgb}{0.9,0.9,0.9}}  
]{conjecture}
\declaretheorem[style=conjecture]{conjecture}
\declaretheoremstyle[
       shaded={bgcolor=\color{rgb}{0.9,0.9,0.9}}  
]{corrorary}
\declaretheoremstyle[
       shaded={bgcolor=\color{rgb}{0.9,0.9,0.9}}  
]{axiom}
\declaretheoremstyle[
       shaded={bgcolor=\color{rgb}{0.9,0.9,0.9}}  
]{lemma}
\def\Z{{\mathbb{Z}}}
\def\R{{\mathbb{R}}}
\def\C{{\mathbb{C}}}
\def\a{{\alpha}}
\def\>{{\geq }}
\def\<{{\leq }}
\newcommand{\cA}{\mathcal{A}}
\newcommand{\cB}{\mathcal{B}}
\newcommand{\cC}{\mathcal{C}}
\newcommand{\cD}{\mathcal{D}}
\newcommand{\cH}{\mathcal{H}}
\newcommand{\cI}{\mathcal{I}}
\newcommand{\cL}{\mathcal{L}}
\newcommand{\cM}{\mathcal{M}}
\newcommand{\cN}{\mathcal{N}}
\newcommand{\cO}{\mathcal{O}}
\newcommand{\cT}{\mathcal{T}}
\newcommand{\Hom}{\text{Hom}}
\newcommand{\End}{\text{End}}
\newcommand{\Aut}{\text{Aut}}
\newcommand{\Vect}{\text{Vec}}
\newcommand{\Rep}{\text{Rep}}
\newcommand{\Mod}{\text{Mod}}
\newcommand{\Fun}{\text{Fun}}
\newcommand{\Simp}{\text{Simp}}
\def\1{\mathds{1}}
\newcommand{\SPT}{\mathsf{SPT}}
\newcommand{\phys}{\text{phys}}
\newcommand{\TY}{\text{TY}}
\newcounter{sarrow}
\newcommand\xrsquigarrow[1]{%
\stepcounter{sarrow}%
\begin{tikzpicture}[decoration=snake]
\node (\thesarrow) {\strut#1};
\draw[->,decorate] (\thesarrow.south west) -- (\thesarrow.south east);
\end{tikzpicture}%
}
\begin{document}

\title{1+1d SPT phases with fusion category symmetry:\\ interface modes and non-abelian Thouless pump}

\author{Kansei Inamura}
\email{k\_inamura@issp.u-tokyo.ac.jp}
\affiliation{Institute for Solid State Physics, University of Tokyo, Kashiwa, Chiba 277-8581, Japan}

\author{Shuhei Ohyama}
\email{shuhei.ohyama@riken.jp}
\affiliation{RIKEN Center for Emergent Matter Science, Wako, Saitama, 351-0198, Japan}
\date{\today} 

\begin{abstract}
We consider symmetry protected topological (SPT) phases with finite non-invertible symmetry $\mathcal{C}$ in 1+1d.
In particular, we investigate interfaces and parameterized families of them within the framework of matrix product states.
After revealing how to extract the $\mathcal{C}$-SPT invariant, we identify the algebraic structure of symmetry operators acting on the interface of two $\mathcal{C}$-SPT phases.
By studying the representation theory of this algebra, we show that there must be a degenerate interface mode between different $\mathcal{C}$-SPT phases.
This result generalizes the bulk-boundary correspondence for ordinary SPT phases.
We then propose the classification of one-parameter families of $\mathcal{C}$-SPT states based on the explicit construction of invariants of such families.
Our invariant is identified with a non-abelian generalization of the Thouless charge pump, which is the pump of a local excitation within a $\mathcal{C}$-SPT phase.
Finally, by generalizing the results for one-parameter families of SPT phases, we conjecture the classification of general parameterized families of general gapped phases with finite non-invertible symmetries in both 1+1d and higher dimensions.
\end{abstract}

\maketitle

\setcounter{tocdepth}{3}
\tableofcontents

\section{Introduction and Summary}
\label{sec: Introduction and Summary}

\noindent{\bf Introduction.}
The theoretical study of topological phases has provided various insights into the physics of many-body systems. 
One of the most fundamental subjects of this research is a class of quantum states known as invertible states (a.k.a. short-range entangled states).
An invertible state is a state that is realized as the unique ground state of a gapped Hamiltonian.\footnote{
Here, we would like to clarify the usage of the term ``invertible state." 
Generally, SPT phases protected by non-invertible symmetry do not possess a group structure. 
This means that for a given non-invertible symmetric state, 
there may not necessarily be a non-invertible symmetric state that acts as its inverse. 
However, we will take the position that the concept of invertible states is independent of symmetry.
Therefore, we will refer to a state that is invertible when disregarding symmetry as an invertible state.
}
By imposing symmetry, invertible states exhibit various quantum phases, 
which have been actively studied as Symmetry Protected Topological (SPT) Phases \cite{Gu_2009,Pollmann:2009mhk,Pollmann_2010,Chen:2010zpc,Chen_2011_complete,Chen:2011bcp,Chen:2011pg,Schuch_2011}.
An interesting physical property of SPT phases is the existence of gapless edge modes. 
The non-triviality of the SPT phases and these edge modes are related to each other, 
and their relationship is referred to as bulk-boundary correspondence or anomaly inflow.
Furthermore, it is known that topological transport phenomena can occur by adiabatically and periodically driving an invertible state, 
and this is referred to as generalized Thouless pump phenomena \cite{Thouless83,PhysRevB.82.115120,Kitaev2011SCGP,Kitaev2013SCGP,Kitaev2015IPAM,Kapustin:2020mkl,Shiozaki:2021weu,Hermele2021CMSA,Wen:2021gwc,Spodyneiko:2023vsw,Ohyama:2022cib}.

Another trend in the study of quantum many-body systems is the generalization of the concept of symmetry~\cite{Gaiotto:2014kfa}.
A typical example of this is the notion known as non-invertible symmetry (a.k.a. categorical symmetry), see, e.g., \cite{Cordova:2022ruw, McGreevy:2022oyu, Schafer-Nameki:2023jdn, Brennan:2023mmt, Bhardwaj:2023kri, Luo:2023ive, Shao:2023gho, Carqueville:2023jhb, Iqbal:2024pee} for recent reviews. 
This framework regards invariance under operations on the theory, such as gauging and duality operations, as a symmetry of the theory.
While conventional symmetries are described by the mathematical structure of groups, 
non-invertible symmetries are described by certain types of categories. 
In particular, the structure of finite non-invertible symmetries in 1+1 dimensions is described by fusion categories \cite{Bhardwaj:2017xup, Chang:2018iay, Thorngren:2019iar}.
In this context, it is natural to consider the classification of SPT phases protected by non-invertible symmetry.

Tensor networks are an efficient way to describe highly entangled states. 
They have been applied not only to numerical computations but also to the classification and description of topological phases \cite{Chen:2010zpc,Chen_2011_complete,Chen:2011bcp,Schuch_2011,Pollmann:2009mhk,Cirac:2020obd,Bultinck:2015bot,Fidkowski_2011}.
Furthermore, it has been pointed out that tensor networks are a useful platform for representing (or microscopically realizing) non-invertible symmetries \cite{Molnar:2022nmh,Garre-Rubio:2022uum,Lootens:2021tet,Lootens:2022avn, Gorantla:2024ocs}. 
Therefore, tensor networks are an excellent tool for studying topological phases that possess non-invertible symmetries.

As pioneering work in the classification of 1+1d SPT phases protected by non-invertible symmetry, 
a method using effective theories described by TQFT has been proposed in \cite{Thorngren:2019iar}. 
This paper argues that the classification of fiber functors of the fusion category corresponds to the classification of SPT phases.\footnote{See Sec.~\ref{sec: Fiber functors and SPT phases} for a brief review of this claim.}
However, this paper focuses on continuous systems using field theory and does not discuss lattice systems. 
Recent studies have also presented several examples of lattice models that exhibit SPT phases with non-invertible symmetry \cite{Fechisin:2023dkj, Seifnashri:2024dsd, Jia:2024bng, Choi:2024rjm, Li:2024fhy}.
These studies construct generalizations of the cluster model and investigate their phase structures. 
Additionally, a general theory for classifying gapped systems with symmetry, described by tensor networks known as matrix product operators (MPO), 
has also been proposed in \cite{Garre-Rubio:2022uum}. 
This paper analyzes how matrix product operators (MPO) act on matrix product states (MPS) 
and points out that phases can be distinguished by the categorical data that appear when MPOs act on an MPS.


For SPT phases protected by non-invertible symmetry, 
the presence of edge modes and topological pumping phenomena are expected, similar to conventional SPT phases. 
However, systematic studies of these phenomena have not been carried out in previous research. 
Furthermore, since gapped lattice models are expected to be described by TQFT in the low-energy limit, 
it is expected that the data of a fiber functor can be extracted from the ground states of lattice systems. 
However, it has not been clearly established in previous research how to directly extract the data of the fiber functor from lattice systems.


In this study, we consider SPT phases protected by non-invertible symmetries within the framework of MPS. 
We first propose a method to extract the fiber functor from invertible states realized in lattice systems. 
We then investigate the symmetry algebra acting on the interface of SPT phases, generalizing the anomaly inflow to non-invertible symmetries.
We also explore the classification of generalized Thouless pumps and their relation to interface modes.
A more detailed summary of the paper is provided below.


\vspace{10pt}
\noindent{\bf Outline and Summary.}
The structure of the paper is as follows. If the reader is familiar with the contents of Sec.~\ref{sec: Preliminaries}, each section can be read almost independently.
In this paper, we focus only on bosonic systems. 

In Sec.~\ref{sec: Preliminaries}, we provide a brief review of fusion categories and fiber functors.
It is believed that SPT phases with fusion category symmetry $\cC$ are classified by fiber functors of $\mathcal{C}$. 
We also review the interpretation of this classification from the perspective of Topological Quantum Field Theory (TQFT).
In addition, we explain some mathematical facts on fusion categories used in the main part of the paper 
and provide a brief review of the relation between tensor networks and fusion categories.

In Sec.~\ref{sec: Fiber functors from injective MPS}, we describe how to extract data of the fiber functor from a $\mathcal{C}$-symmetric MPS.  
After some preliminary discussions on MPSs and their transfer matrices, we demonstrate that the data of the fiber functor can be extracted as the triple inner product of infinite MPSs. 
A key to this is the analogy between the partition functions of TQFT with symmetry defect and MPS. 
Additionally, we introduce the abelianization of the fiber functor.
As a result, we can define an invariant that takes values in (abelian) cohomology.
As an example, we explicitly compute the data of a fiber functor from the $G \times \Rep(G)$-symmetric cluster state~\cite{Brell_2015, Fechisin:2023dkj}.

In Sec.~\ref{sec: Interface modes of SPT phases with fusion category symmetries}, we analyze the interface modes between $\mathcal{C}$-symmetric SPT phases. 
This analysis is a generalization of that of edge modes in conventional SPT phases with group symmetry.
We first derive the symmetry operators that act on the interface.
These operators form an algebra, which we refer to as the interface algebra.
The analysis of interface modes is reduced to classifying the irreducible representations of this algebra. 
We show that the interface algebra between different SPT phases does not have one-dimensional representations.
Consequently, an interface between different SPT phases gives rise to degenerate ground states.
This can be regarded as the bulk-boundary correspondence in $\mathcal{C}$-symmetric SPT phases.
On the other hand,  the self-interface within the same SPT phase always has one-dimensional representations, which are in one-to-one correspondence with the automorphisms of a fiber functor.
As an example, we study interfaces of $\Rep(D_8)$-symmetric SPT phases.

In Sec.~\ref{sec: Parameterized family and Thouless pump}, we investigate $S^1$-parameterized families of $\mathcal{C}$-symmetric invertible states. 
Based on a careful analysis of gauge redundancy of $\cC$-symmetric MPS tensors, we define an invariant that detects the non-triviality of an $S^1$-family. 
We will see that this invariant takes values in the group of automorphisms of a fiber functor.
Additionally, we show that the non-triviality of an $S^1$-family gives rise to the Thouless pump of non-abelian charges.\footnote{Here, a charge refers to a one-dimensional representation of the self-interface algebra. In general, the charges form a non-abelian group, which is isomorphic to the group of automorphisms of a fiber functor.}
This non-abelian Thouless pump is in contrast to conventional Thouless pumps, which provide pumping of abelian charges.
As an example, we construct non-trivial $S^1$-families of $\Rep(G)$-symmetric invertible states.
 
In Sec.~\ref{sec: Generalizations}, we consider parameterized families of general $\cC$-symmetric gapped systems, not limited to invertible states. 
The central hypothesis is that the moduli space of $\mathcal{C}$-symmetric gapped systems in a gapped phase labeled by a $\cC$-module category $\mathcal{M}$ is given by the classifying space $B\underline{\Fun}_{\mathcal{C}}(\mathcal{M},\mathcal{M})^{\text{inv}}$ of categorical group $\underline{\Fun}_{\mathcal{C}}(\mathcal{M},\mathcal{M})^{\text{inv}}$. 
Here, $\underline{\Fun}_{\mathcal{C}}(\mathcal{M},\mathcal{M})^{\text{inv}}$ is a 2-group consisting of invertible $\cC$-module functors and invertible $\cC$-module natural transformations. 
Based on this hypothesis, it is expected that $X$-parameterized families of a general gapped phase $\cM$ are classified by the non-abelian \v{C}ech cohomology $\check{\mathrm{H}}^1(X, \underline{\Fun}_{\mathcal{C}}(\mathcal{M},\mathcal{M})^{\text{inv}})$. 
We will present examples supporting this conjecture by considering some special cases. 
Moreover, we extend this consideration to higher dimensions. 
Specifically, in the 2+1-dimensional case, it is expected that the moduli space of \(\mathcal{C}\)-symmetric gapped systems in a non-chiral gapped phase labeled by a $\cC$-module 2-category $\mathcal{M}$ is given by the classifying space $B\underline{\underline{\Fun}}_{\mathcal{C}}(\mathcal{M},\mathcal{M})^{\text{inv}}$ of the categorical 2-group $\underline{\underline{\Fun}}_{\mathcal{C}}(\mathcal{M},\mathcal{M})^{\text{inv}}$. 
As in the 1+1-dimensional case, we provide several examples supporting this conjecture.

Various technical details are relegated to Appendices.
In App.~\ref{sec: Tambara-Yamagami categories}, we review the basics of the Tambara-Yamagami categories.
In App.~\ref{sec: Fiber functors of Tambara-Yamagami categories}, we review the classification of fiber functors of the Tambara-Yamagami categories.
In App.~\ref{sec: L-symbols for fiber functors of Tambara-Yamagami categories}, we compute the data (called $L$-symbols) of fiber functors of non-anomalous $\Z_2 \times \Z_2$ Tambara-Yamagami categories, i.e., $\Rep(D_8)$, $\Rep(Q_8)$, and $\Rep(H_8)$.
In App.~\ref{sec: Interface algebra for Tambara-Yamagami categories}, we enumerate irreducible representations of the interface algebras for SPT phases with $\Rep(D_8)$, $\Rep(Q_8)$, and $\Rep(H_8)$ symmetries.
In App.~\ref{sec: GxRep(G) cluster state}, we show that the $G \times \Rep(G)$-symmetric cluster state is obtained as the ground state of a $\Rep(G)$-symmetric model discussed in \cite{Inamura:2021szw}.
In App.~\ref{sec: Weak completeness relation}, we show some identity of tensor network representations of fusion category symmetry.
In App.~\ref{sec: cohomology}, we provide a detailed computation of a cohomology group associated with $\Rep(D_8)$.
In App.~\ref{sec: Computation of pump invariant}, we explicitly compute the invariants of $S^1$-parameterized families of $\Rep(D_8)$-symmetric invertible states.

\section{Preliminaries}
\label{sec: Preliminaries}
In this section, we briefly review mathematical tools that we will use to study SPT phases with fusion category symmetries.
Throughout the paper, vector spaces are always finite-dimensional and the base field is always the field $\C$ of complex numbers.
We denote the group cohomology by $\mathrm{H}_{\text{gp}}$ to distinguish it from other cohomology.

\subsection{Fusion categories}
\label{sec: Fusion categories and fiber functors}
Finite symmetries of 1+1d unitary bosonic systems are generally described by unitary multifusion categories \cite{Bhardwaj:2017xup, Chang:2018iay}.
In what follows, (multi)fusion categories always refer to unitary ones.
Each object $x$ of a multifusion category $\cC$ labels a topological line $\cL_x$, and each morphism $\mu \in \Hom_{\cC}(x, y)$ labels a topological junction between topological lines $\cL_x$ and $\cL_y$.
Here, $\Hom_{\cC}(x, y)$ denotes the finite dimensional $\C$-vector space of all morphisms from $x$ to $y$.
An object $x$ is said to be simple if its endomorphism space $\End_{\cC}(x) = \Hom_{\cC}(x, x)$ is one-dimensional.
Correspondingly, a topological line $\cL_x$ labeled by a simple object $x$ is indecomposable.
The set of (representatives of isomorphisms classes of) simple objects is denoted by $\text{Simp}(\cC)$.
The number of elements in $\Simp(\cC)$ is finite.

A multifusion category $\cC$ is equipped with various data that encode the algebraic structures of topological lines and topological junctions between them.
Concretely, a multifusion category $\cC$ consists of the data listed below.
We refer the reader to \cite{EGNO2015} for a more complete description of multifusion categories.

\vspace{10pt}
\noindent{\bf Fusion coefficients.} 
The fusion of two topological lines $\cL_x$ and $\cL_y$ defines a tensor product $x \otimes y$ of objects $x$ and $y$.
The object $x \otimes y$ can always be decomposed into a finite direct sum of simple objects as
\begin{equation}
x \otimes y \cong \bigoplus_{z \in \text{Simp}(\cC)} N_{xy}^z ~ z, \quad N_{xy}^z \in \Z_{\geq 0},
\label{eq: fusion rules}
\end{equation}
where the sum on the right-hand side is taken over all simple objects of $\cC$.
The non-negative integer $N_{xy}^z$ is called a fusion coefficient.
The unit of the above fusion rule is denoted by $\1 \in \cC$ and is called a unit object, which corresponds to the trivial topological line.
The unit object $\1$ is equipped with isomorphisms $l_x: \1 \otimes x \rightarrow x$ and $r_x: x \otimes \1 \rightarrow x$, referred to as the left and right unitors, that satisfy appropriate coherence conditions. 
One can always take $l_x$ and $r_x$ to be the identity morphism by identifying $\1 \otimes x$ and $x \otimes \1$ with $x$ \cite{EGNO2015}.
When the unit object $\1$ is simple, the multifusion category $\cC$ is called a fusion category. 
In what follows, we will restrict our attention to the case where $\cC$ is fusion.

\vspace{10pt}
\noindent{\bf Associator/$F$-symbols.}
The fusion of topological lines is associative only up to isomorphism. 
Namely, there is a natural isomorphism
\begin{equation}
\alpha_{x, y, z}: (x \otimes y) \otimes z \rightarrow x \otimes (y \otimes z),
\label{eq: associator}
\end{equation}
which relates two different ways of fusing three topological lines into a single line.
The above isomorphism is called an associator and satisfies the pentagon equation represented by the following commutative diagram~\cite{Moore:1988qv}:
\begin{equation}
\adjincludegraphics[valign = c, width = 0.7\linewidth]{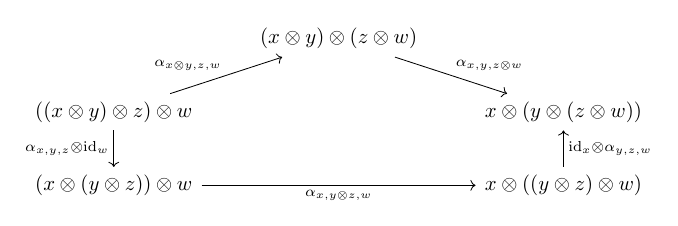}
\end{equation}
More explicitly, the associator \eqref{eq: associator} can be represented by a set of complex numbers $(F^{xyz}_w)_{(u; \mu, \nu), (v; \rho, \sigma)}$ defined by
\begin{equation}
\adjincludegraphics[valign = c, width = 2cm]{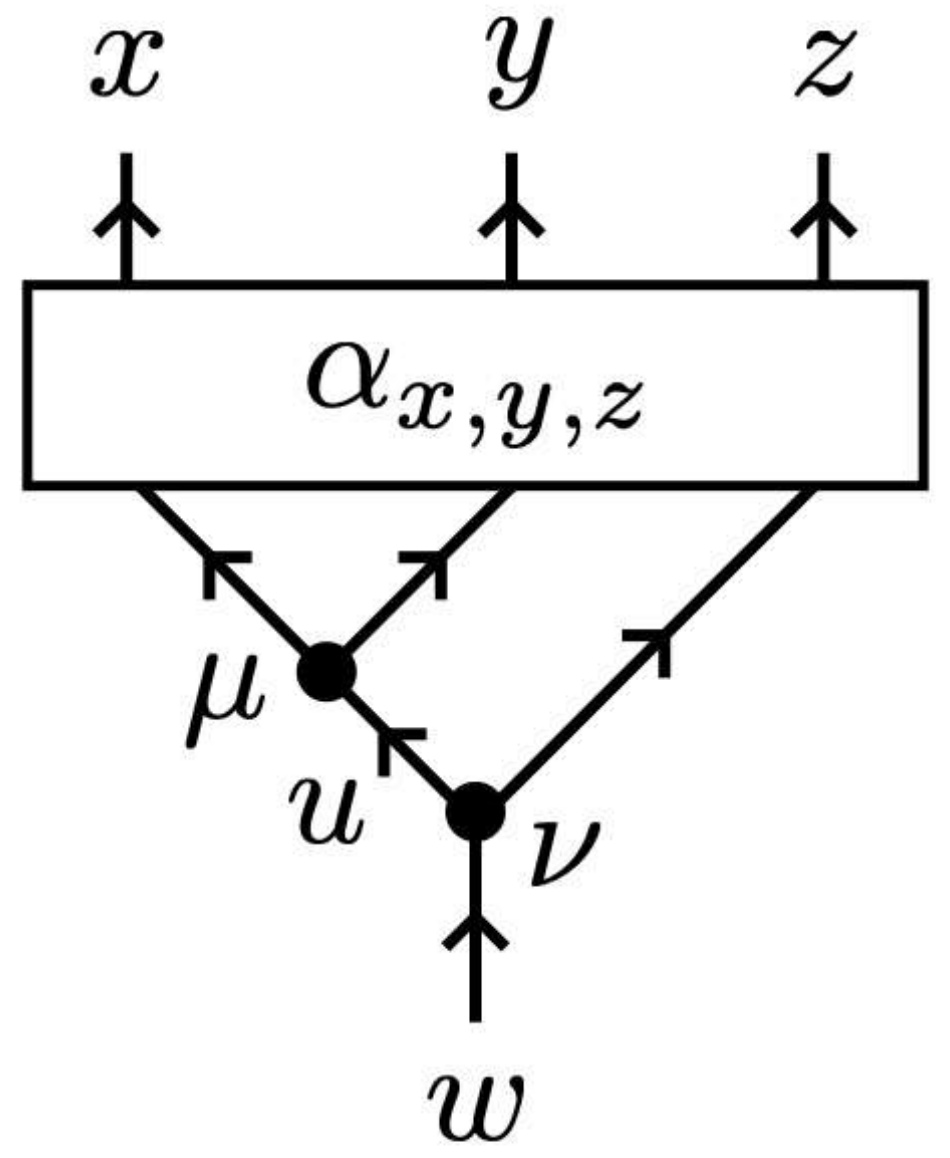} ~ = \sum_{v \in \text{Simp}(\cC)} \sum_{\rho, \sigma} (F^{xyz}_w)_{(u; \mu, \nu), (v; \rho, \sigma)} ~
\adjincludegraphics[valign = c, width = 1.7cm]{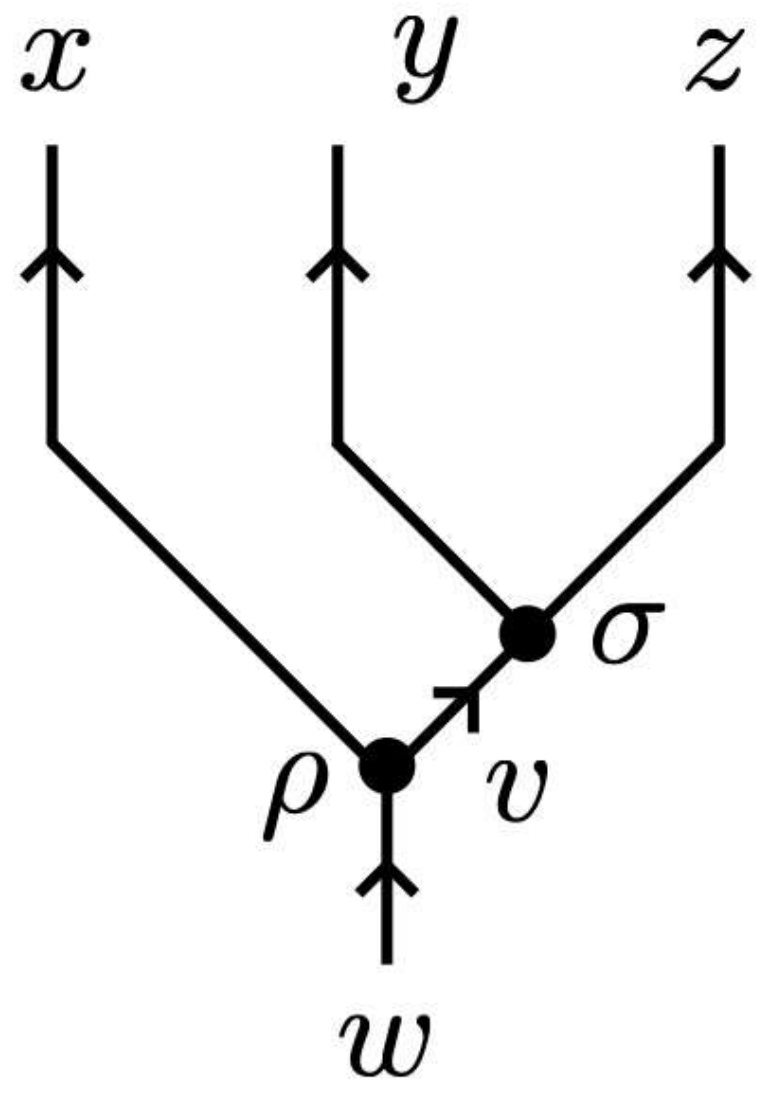},
\label{eq: F-symbols}
\end{equation}
where the summation on the right-hand side is taken over all simple objects of $\cC$ and all basis vectors of the Hom spaces.
The complex numbers $(F^{xyz}_w)_{(u; \mu, \nu), (v; \rho, \sigma)}$ are called $F$-symbols.
The $F$-symbols depend on the choice of basis vectors of the Hom spaces.
When $\cC$ is unitary, one can choose these bases so that $F^{xyz}_w$ becomes a unitary matrix for all simple objects $x, y, z, w \in \cC$.
In the physics literature, the associator in Eq.~\eqref{eq: F-symbols} is often chosen to be the identity, which is always possible due to the Mac Lane strictness theorem~\cite{MacLane1998}.
We emphasize that the $F$-symbols can be non-trivial even if the associator $\alpha_{x, y, z}$ is the identity.

\vspace{10pt}
\noindent{\bf Quantum dimension.}
For a unitary fusion category $\cC$, the quantum dimension $\dim(x)$ of a simple object $x$ is the unique positive real number that satisfies
\begin{equation}
\dim(x) \dim(y) = \sum_{z \in \Simp(\cC)} N_{xy}^z \dim(z),
\end{equation}
where $N_{xy}^z$ is the fusion coefficient in Eq.~\eqref{eq: fusion rules}.
We note that $\dim(x)$ is not necessarily an integer.
The quantum dimension of a non-simple object $\bigoplus_i x_i \in \cC$ is given by the sum of the quantum dimensions $\sum_i \dim(x_i)$.
The quantum dimension of $x$ agrees with that of the dual $x^*$.

\vspace{10pt}
The simplest example of a fusion category is the category $\Vect$ of finite dimensional $\C$-vector spaces.
Objects and morphisms of this category are finite dimensional vector spaces and linear maps between them.
The tensor product of objects is defined by the ordinary tensor product of vector spaces.
In particular, the unit object of $\Vect$ is a one-dimensional vector space $\C$, which is the unique simple object up to isomorphism.
The associators and $F$-symbols are all trivial.
The quantum dimension of an object $x \in \Vect$ is given by the dimension of the vector space $x$, which is a non-negative integer.

\subsection{Fiber functors and SPT phases}
\label{sec: Fiber functors and SPT phases}
In this subsection, we recall the definition of fiber functors and their relation to SPT phases with fusion category symmetries.

A fiber functor of a fusion category $\cC$ is defined as a tensor functor from $\cC$ to $\Vect$.
Let us unpack this definition.
First of all, a fiber functor $F: \cC \rightarrow \Vect$ is a functor, that is, it maps an object $x \in \cC$ to a vector space $F(x) \in \Vect$ and maps a morphism $\mu \in \Hom_{\cC}(x, y)$ to a linear map $F(\mu) \in \Hom_{\Vect}(F(x), F(y))$.
The map on morphisms should preserve the composition law and the identity, i.e.,
\begin{equation}
F(\nu \circ \mu) = F(\nu) \circ F(\mu), \quad F(\text{id}_x) = \text{id}_{F(x)}
\end{equation}
for any composable morphisms $\mu, \nu$ and any object $x \in \cC$.
The functor $F: \cC \rightarrow \Vect$ is called a fiber functor if it is equipped with a natural set of isomorphisms
\begin{equation}
J_{x, y}: F(x) \otimes F(y) \rightarrow F(x \otimes y), \quad \varphi: \C \rightarrow F(\1)
\label{eq: fiber functor isom}
\end{equation}
that satisfy the following commutative diagram:\footnote{We do not specify the coherence conditions on $\varphi$ here because we will not use them in what follows.}
\begin{equation}
\adjincludegraphics[valign = c, width = 0.95\linewidth]{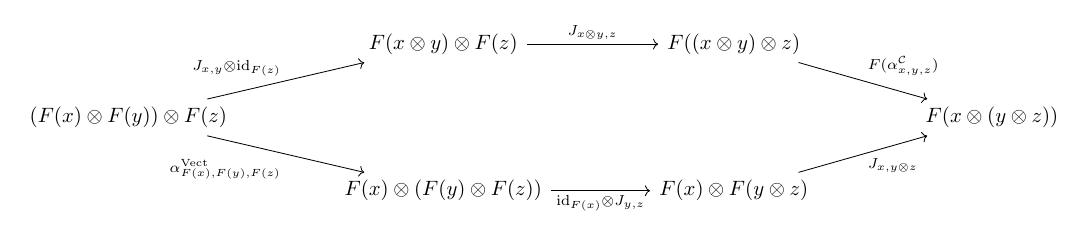}
\label{eq: fiber functor cd}
\end{equation}
where $\alpha^{\cC}$ and $\alpha^{\Vect}$ denote the associators of $\cC$ and $\Vect$, respectively.
The isomorphism $\varphi: \C \rightarrow F(\1)$ can always be chosen to be the identity.
Therefore, a fiber functor is characterized by a set of vector spaces $\{F(x) \mid x \in \cC\}$ together with isomorphisms $\{J_{x, y}: F(x) \otimes F(y) \rightarrow F(x \otimes y) \mid x, y \in \cC\}$ making the diagram~\eqref{eq: fiber functor cd} commute.

Equivalently, a fiber functor can also be defined as a $\cC$-module category whose simple object is unique up to isomorphism.
Here, a $\cC$-module category $\cM$ is a semisimple category equipped with a tensor functor $F: \cC \rightarrow \End(\cM)$, where $\End(\cM)$ is the category of endofunctors of $\cM$.
Indeed, when the simple object of $\cM$ is unique, $\cM$ is equivalent to $\Vect$ as a semisimple category, and hence the tensor functor $F: \cC \rightarrow \End(\cM) \cong \Vect$ becomes a fiber functor.
Conversely, a fiber functor $F: \cC \rightarrow \Vect$ allows us to endow the target category $\Vect$ with a $\cC$-module structure.
Thus, a fiber functor of $\cC$ can be identified with a $\cC$-module category with a single simple object.

Since a fiber functor $F$ preserves the tensor product of objects up to isomorphism $J$, it follows that
\begin{equation}
\quad F(x) \otimes F(y) \cong \bigoplus_{z \in \Simp(\cC)} N_{xy}^z F(z),
\end{equation}
where $N_{xy}^z$ is the fusion coefficient of $\cC$.
The above equation implies
\begin{equation}
\dim(F(x)) \dim(F(y)) = \sum_{z \in \Simp(\cC)} N_{xy}^z \dim(F(z)).
\end{equation}
Recalling that the quantum dimension is the unique positive one-dimensional representation of the fusion rules, we find 
\begin{equation}
\dim(F(x)) = \dim(x),
\end{equation}
which means that $F(x) \in \Vect$ is a vector space of dimension $\dim(x)$, i.e., $F(x) \cong \C^{\dim(x)}$.
In particular, since the dimension of a vector space $F(x)$ must be a positive integer, a fusion category $\cC$ admits a fiber functor only if the quantum dimension of every object $x \in \cC$ is a positive integer, namely, $\dim(x) \in \Z_{\geq 1}$.

We emphasize that not every fusion category admits a fiber functor.
In general, a fusion category $\cC$ admits a fiber functor if and only if $\cC$ is equivalent to the category $\Rep(H)$ of representations of a finite dimensional semisimple Hopf algebra $H$ \cite{EGNO2015}.\footnote{Any fusion category that may not admit a fiber functor is equivalent to the category of representations of a weak Hopf algebra \cite{Ostrik2003, Hayashi1999}, which is a generalization of a Hopf algebra \cite{BNS1999}.}
Here, $\Rep(H)$ denotes the category whose objects are finite dimensional representations of $H$ and morphisms are intertwiners between them.
A fusion category $\cC$ is said to be non-anomalous if it admits a fiber functor.
Otherwise, $\cC$ is said to be anomalous \cite{Thorngren:2019iar}.

For later convenience, let us introduce a matrix representation of the isomorphism $J_{x, y}$.
To this end, we first choose a basis $\{(\phi_x)_i \mid i = 1, 2, \cdots, \dim(x)\}$ of the vector space $F(x)$ for each simple object $x \in \cC$.
We also fix a basis $\{(\phi_{xy}^z)_{\mu} \mid \mu = 1, 2, \cdots, N_{xy}^z\}$ of the Hom space $\Hom_{\cC}(x \otimes y, z)$.
For these bases, one can write down the matrix elements of the isomorphism $J_{x, y}$ as follows:
\begin{equation}
F((\phi_{xy}^z)_{\mu}) \circ J_{x, y} ((\phi_x)_i \otimes (\phi_y)_j) = \sum_{k} J^{xy}_{(i, j), (z; k, \mu)} (\phi_z)_{k}, \quad J^{xy}_{(i, j), (z; k, \mu)} \in \C.
\label{eq: J}
\end{equation}
The set of complex numbers $\{J^{xy}_{(i, j), (z; k, \mu)} \mid z \in \Simp(\cC), ~ i = 1, 2, \cdots, \dim(x), ~ j = 1, 2, \cdots, \dim(y), ~ k = 1, 2, \cdots, \dim(z), ~ \mu = 1, 2, \cdots, N_{xy}^z\}$ form a matrix representation of $J_{x, y}$. 
That is, each $J^{xy}_{(i, j), (z; k, \mu)}$ can be regarded as a component of a square matrix $J^{xy}$ of size $\dim(x) \dim(y)$.

A simple example of a fiber functor is the forgetful functor of the representation category $\Rep(H)$ of a Hopf algebra $H$.
The forgetful functor maps an object $x \in \Rep(H)$ to the underlying vector space of $x$ and maps a morphism to the underlying linear map.
The structure isomorphisms $J$ and $\varphi$ associated with the forgetful functor are trivial.\footnote{The complex numbers $\{J^{xy}_{(i, j), (z; k, \mu)}\}$ can be non-trivial. This is analogous to the situation where $F$-symbols of a fusion category $\cC$ can be non-trivial even if the associator is trivial (i.e., $\cC$ is strict).}

\vspace{10pt}
\noindent{\bf Classification of SPT phases.}
An SPT phase is a symmetric gapped phase that has a unique ground state on any space without boundaries.
It is well known that 1+1d bosonic SPT phases with finite internal symmetry group $G$ are classified by the second group cohomology of $G$ \cite{Chen:2010zpc, Schuch_2011, Pollmann_2010}.
More generally, it was proposed in \cite{Thorngren:2019iar} that 1+1d bosonic SPT phases with symmetry $\cC$ are classified by fiber functors of $\cC$.
In particular, a fusion category symmetry $\cC$ admits an SPT phase if and only if $\cC$ is non-anomalous.
We note that the set of SPT phases with symmetry $\cC$ does not have a natural group structure in general.

The data of a fiber functor can be directly interpreted in terms of topological field theory that describes the low energy limit of an SPT phase \cite{Thorngren:2019iar, Inamura:2021wuo}.
Concretely, in the context of topological field theory, the vector space $F(x)$ is understood as the state space of the $x$-twisted sector on a circle.
Similarly, the isomorphism $J_{x, y}: F(x) \otimes F(y) \rightarrow F(x \otimes y)$ can be understood as the transition amplitude on a pants diagram with topological lines $x$ and $y$ inserted on the two legs.
The complex number $J^{xy}_{(i, j), (z; k, \mu)}$ represents a three-point function of defect operators, which is related to the transition amplitude on a pair of pants via the state-operator correspondence.
See Fig.~\ref{fig: pants} for diagrammatic representations of these data.
\begin{figure}[t]
\begin{minipage}{0.3\linewidth}
$F(x) = \adjincludegraphics[valign = c, width = 0.35\linewidth, trim = 8pt 0pt 8pt 0pt]{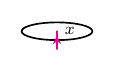}$\\
(a)
\end{minipage}%
\begin{minipage}{0.35\linewidth}
$J_{x, y} = \adjincludegraphics[valign = c, width = 0.5\linewidth, trim = 8pt 0pt 8pt 0pt]{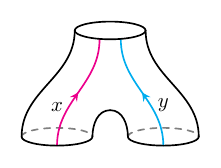}$\\
(b)
\end{minipage}%
\begin{minipage}{0.35\linewidth}
$J^{xy}_{(i, j), (z; k, \mu)} = \adjincludegraphics[valign = c, width = 0.5\linewidth, trim = 8pt 0pt 8pt 0pt]{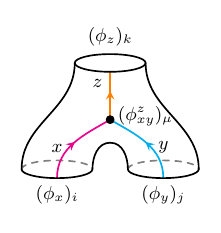}$\\
(c)
\end{minipage}%
\caption{(a) The vector space $F(x)$ represents the state space on a circle twisted by $x \in \cC$. (b) The isomorphism $J_{x, y}$ represents the transition amplitude on a pair of pants. (c) The matrix element of $J_{x, y}$ represents the three-point function.}
\label{fig: pants}
\end{figure}

\vspace{10pt}
\noindent{\bf Classification of general gapped phases.}
The classification of more general gapped phases with fusion category symmetries was also proposed in \cite{Thorngren:2019iar}.
Specifically, general 1+1d bosonic gapped phases with symmetry $\cC$ are in one-to-one correspondence with module categories over $\cC$ \cite{Thorngren:2019iar, Komargodski:2020mxz}.
In particular, SPT phases with symmetry $\cC$ correspond to $\cC$-module categories with a single simple object, which can be identified with fiber functors of $\cC$.
As in the case of SPT phases, the data of a module category can be interpreted in terms of topological field theory that describes the low energy limit of a gapped phase with symmetry \cite{Huang:2021zvu}.
In this paper, we will focus only on 1+1d SPT phases except for Sec.~\ref{sec: Generalizations}.

\subsection{Isomorphisms of fiber functors}
\label{sec: Isomorphisms of fiber functors}
In this subsection, we recall the definition of isomorphisms between fiber functors, which play a crucial role in studying interfaces and parameterized families of SPT phases in later sections.

Let $(F, J)$ and $(F^{\prime}, J^{\prime})$ be fiber functors of a fusion category $\cC$.\footnote{Here and in the subsequent sections, we take the isomorphism $\varphi$ in Eq.~\eqref{eq: fiber functor isom} to be the identity without loss of generality.}
A monoidal natural transformation $\eta: F \Rightarrow F^{\prime}$ between fiber functors $F$ and $F^{\prime}$ is a set of linear maps
\begin{equation}
\eta_x: F(x) \rightarrow F^{\prime}(x), \quad \forall x \in \cC
\end{equation}
that are natural in $x \in \cC$ and satisfy the following commutative diagram:\footnote{The linear map $\eta_x$ is said to be natural in $x$ if it satisfies $\eta_y \circ F(f) = F^{\prime}(f) \circ \eta_x$ for any morphism $f \in \Hom_{\cC}(x, y)$. We note that the structure isomorphisms such as $\alpha_{x, y, z}$ in Eq.~\eqref{eq: associator} and $J_{x, y}$ in Eq.~\eqref{eq: fiber functor isom} are required to be natural.}
\begin{equation}
\adjincludegraphics[valign = c, width = 0.3\linewidth]{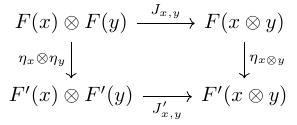}
\label{eq: natural isom cd}
\end{equation}
The monoidal natural transformation $\eta$ is called a monoidal natural isomorphism if the linear maps $\eta_x$ are isomorphisms for all $x \in \cC$.
In what follows, a monoidal natural isomorphism between fiber functors is simply referred to as an isomorphism.
In particular, an isomorphism from fiber functor $F$ to itself is called an automorphism of $F$.
Once we choose bases of $F(x)$ and $F^{\prime}(x)$ to be $\{(\phi_x)_i\}$ and $\{(\phi^{\prime}_x)_i\}$ respectively, we can explicitly write down a monoidal natural transformation $\eta: F \Rightarrow F^{\prime}$ as
\begin{equation}
\eta_x((\phi_x)_i) = \sum_{1 \leq j \leq \dim(x)} (\eta_x)_{ij} (\phi^{\prime}_x)_j,
\label{eq: eta}
\end{equation}
where $(\eta_x)_{ij}$ is the $(i, j)$-component of a $\dim(x) \times \dim(x)$ matrix $\eta_x$.
When $\eta_x: F \Rightarrow F^{\prime}$ is an isomorphism, the matrix $\eta_x$ is invertible for all $x \in \cC$.

The composition of isomorphisms $\eta: F \Rightarrow F^{\prime}$ and $\eta^{\prime}: F^{\prime} \Rightarrow F^{\prime \prime}$ defines an isomorphism $\eta^{\prime} \circ \eta: F \Rightarrow F^{\prime \prime}$.
In particular, automorphisms of a fiber functor $F$ form a finite group $\Aut^{\otimes}(F)$ with the multiplication given by the composition.

For later convenience, let us explicitly write down the commutative diagram \eqref{eq: natural isom cd} in terms of matrix representations of $J_{x, y}$ and $\eta_x$.
To this end, we notice that Eq.~\eqref{eq: natural isom cd} is equivalent to
\begin{equation}
\eta_z \circ F((\phi_{xy}^z)_{\mu}) \circ J_{x, y} = F^{\prime}((\phi_{xy}^z)_{\mu}) \circ J^{\prime}_{x, y} \circ (\eta_x \otimes \eta_y), \quad \forall (\phi_{xy}^z)_{\mu} \in \Hom_{\cC}(x \otimes y, z).
\end{equation}
By taking the matrix elements of both sides of the above equation, we obtain
\begin{equation}
\sum_k J^{xy}_{(i, j), (z; k, \mu)} (\eta_z)_{k k^{\prime}} = \sum_{i^{\prime}, j^{\prime}} (\eta_x)_{i i^{\prime}} (\eta_y)_{j j^{\prime}} (J^{\prime})^{xy}_{(i^{\prime}, j^{\prime}), (z; k^{\prime}, \mu)}.
\label{eq: natural isom elements}
\end{equation}

\vspace{10pt}
\noindent{\bf Classification of parameterized families of SPT states.}
We will argue in Sec.~\ref{sec: Parameterized family and Thouless pump} that $\Aut^{\otimes}(F)$ classifies $S^1$-paramererized families of invertible states in a $\cC$-SPT phase labeled by $F$.
The composition of automorphisms of $F$ corresponds to the concatenation of two $S^1$-families.
In contrast to the classification of SPT phases, the classification of $S^1$-parameterized families of SPT states has a natural group structure.

When $\cC$ is an ordinary group symmetry, the set of SPT phases also has a group structure.
Accordingly, the classification of parameterized families does not depend on a phase.
This is because stacking an SPT state with the whole family gives another family of SPT states in a different phase.
Indeed, as we will see below, the group $\Aut^{\otimes}(F)$ does not depend on $F$ when $\cC$ is an ordinary group symmetry.
On the other hand, when $\cC$ is a non-invertible symmetry, the set of SPT phases does not necessarily have a group structure.
Consequently, the classification of parameterized families may depend on a phase.
Indeed, as we will see below, the group $\Aut^{\otimes}(F)$ generally depends on $F$ when $\cC$ is non-invertible.

\vspace{10pt}
\noindent{\bf Tannaka-Krein duality.}
Let us consider the set $\End(F)$ of all (not necessarily monoidal) natural transformations from $F$ to itself.
The set $\End(F)$ is a vector space because for any natural transformations $\eta, \eta^{\prime} \in \End(F)$ and complex numbers $\lambda, \lambda^{\prime} \in \C$, one can define a natural transformation $\lambda \eta + \lambda^{\prime} \eta^{\prime} \in \End(F)$ by setting $(\lambda \eta + \lambda^{\prime} \eta^{\prime})_x := \lambda \eta_x + \lambda^{\prime} \eta^{\prime}_x$ for all $x \in \cC$.
The vector space $\End(F)$ has the structure of an algebra, whose multiplication is given by the composition of natural transformations.
Furthermore, the monoidal structure of $F$ allows us to define a comultiplication of natural transformations, which, together with the antipode, makes $\End(F)$ into a Hopf algebra.
The representation category $\Rep(\End(F))$ of this Hopf algebra turns out to be equivalent to the original fusion category $\cC$, see, e.g., \cite[Chapter 5]{EGNO2015}.
In particular, when $\cC = \Rep(H)$ and $F: \cC \rightarrow \Vect$ is the forgetful functor, the algebra $\End(F)$ is isomorphic to $H$ as a Hopf algebra, meaning that one can reconstruct a Hopf algebra from its representation category.
This phenomenon is known as Tannaka-Krein duality \cite{Tannaka1939, Krein1949, Ulbrich1990, JS1991, Schauenburg1992}.

We note that monoidal natural automorphisms of a fiber functor $F: \cC \rightarrow \Vect$ are group-like elements of Hopf algebra $\End(F)$.
In particular, when $\cC$ is $\Rep(H)$ and $F$ is the forgetful functor, the group $\Aut^{\otimes}(F)$ of automorphisms of $F$ is isomorphic to the group $G(H)$ of group-like elements of $H$.
For example, when $F$ is the forgetful functor of $\Rep(G)$ with $G$ a finite group, we have a group isomorphism $\Aut^{\otimes}(F) \cong G$.
This fact will naturally appear in later sections where we study parameterized families of SPT states with fusion category symmetries.

\vspace{10pt}
\noindent{\bf Examples.}
Let us see a few simple examples of fiber functors and their automorphisms.
\begin{itemize}
\item \textbf{$\Vect_G$.}
We first consider fiber functors of $\Vect_G$, the category of $G$-graded vector spaces.
Simple objects of $\Vect_G$ are labeled by elements of $G$, and the fusion rules are given by the group multiplication law:
\begin{equation}
g \otimes h = gh, \quad \forall g, h \in G.
\end{equation}
The associator and $F$-symbols of $\Vect_G$ are all trivial.
A fiber functor $F: \Vect_G \rightarrow \Vect$ maps every simple object $g$ to a one-dimensional vector space $F(g) \cong \C$ because the quantum dimension of $g$ is one.
Therefore, the fiber functor $F$ is completely characterized by a set of isomorphisms $J_{g, h}: F(g) \otimes F(h) \rightarrow F(gh)$.
Since the source and target of $J_{g, h}$ are both one-dimensional, it can be identified with a complex number $J(g, h) \in \mathrm{U}(1)$.\footnote{We recall that fusion categories are supposed to be unitary in this paper. Thus, $J(g, h)$ takes values in $\mathrm{U}(1)$ rather than $\C^{\times}$.}
With this identification, the commutative diagram \eqref{eq: fiber functor cd} reduces to the cocycle condition on $J: G \times G \rightarrow \mathrm{U}(1)$, i.e., 
\begin{equation}
J(h, k) J (g, hk) = J(gh, k) J(g, h), \quad \forall g, h, k \in G.
\end{equation}
Similarly, an isomorphism $\eta: F \Rightarrow F^{\prime}$ between two fiber functors $F$ and $F^{\prime}$ is characterized by the set of isomorphisms $\eta_g: F(g)\rightarrow F^{\prime}(g)$, which can be identified with complex numbers $\eta(g) \in \mathrm{U}(1)$.
The commutative diagram \eqref{eq: natural isom cd} reduces to
\begin{equation}
J(g, h) = J^{\prime}(g, h) \delta \eta(g, h), \quad \delta \eta(g, h) = \eta(g) \eta(h) / \eta(gh),
\label{eq: VecG natural isom}
\end{equation}
where $J$ and $J^{\prime}$ are 2-cocycles associated with $F$ and $F^{\prime}$ respectively.
The above equation implies that two fiber functors are isomorphic to each other if and only if the associated 2-cocycles $J$ and $J^{\prime}$ are in the same cohomology class.
Thus, isomorphism classes of fiber functors of $\Vect_G$ are classified by the second group cohomology $\mathrm{H}_{\text{gp}}^2(G, \mathrm{U}(1))$.
This agrees with the well-known classification of 1+1d bosonic SPT phases with symmetry $G$ \cite{Chen:2010zpc, Schuch_2011, Pollmann_2010}.

Equation \eqref{eq: VecG natural isom} also implies that an automorphism of a fiber functor $F$ is given by a function $\eta: G \rightarrow \mathrm{U}(1)$ that satisfies the cocycle condition
\begin{equation}
\delta \eta (g, h) = 1, \quad \forall g, h \in G.
\end{equation}
Therefore, the group $\Aut^{\otimes}(F)$ of automorphisms of $F$ is isomorphic to the first group cohomology $\mathrm{H}_{\text{gp}}^1(G, \mathrm{U}(1))$ regardless of a fiber functor $F$.
We note that $\Aut^{\otimes}(F) = \mathrm{H}_{\text{gp}}^1(G, \mathrm{U}(1))$ classifies $S^1$-parameterized families of $G$-symmetric invertible states in a 1+1d bosonic SPT phase~\cite{Thorngren:1612.00846, Hermele2021CMSA, Thorngren2021YITP, Shiozaki:2021weu, Bachmann:2022bhx}.

\item \textbf{$\Rep(G)$.}
As another example, we consider fiber functors of the representation category $\Rep(G)$ of a finite group $G$.
It is known that fiber functors of $\Rep(G)$ are classified by (equivalence classes of) pairs $(K, \omega)$, where $K$ is a subgroup of $G$ and $\omega \in \mathrm{Z}_{\text{gp}}^2(K, \mathrm{U}(1))$ is a 2-cocycle such that the twisted group algebra $\C[K]^{\omega}$ is simple \cite[Corollary 7.12.20]{EGNO2015}.
In particular, the fiber functor corresponding to the trivial subgroup $K=\{e\}$ is the forgetful functor, which exists for all $G$.

Due to Tannaka-Krein duality, the group of automorphisms of the forgetful functor of $\Rep(G)$ is isomorphic to $G$ itself.
This shows that the group of automorphisms of a fiber functor can be non-abelian when the fusion category has non-invertible objects.
This is in contrast to the fact that the group of automorphisms $\Aut^{\otimes}(F) = \mathrm{H}_{\text{gp}}^1(G, \mathrm{U}(1))$ is always abelian for an invertible (i.e., pointed) fusion category $\Vect_G$.

When the fiber functor $F: \Rep(G) \rightarrow \Vect$ corresponds to a non-trivial pair $(K, \omega)$, the group of automorphisms of $F$ is not necessarily isomorphic to $G$.
One can indeed construct such a fiber functor if there exists a group $G^{\prime}$ that is not isomorphic to $G$ but has the equivalent representation category as $G$.
Specifically, the composition of an equivalence $\Phi: \Rep(G) \rightarrow \Rep(G^{\prime})$ and the forgetful functor of $\Rep(G^{\prime})$ gives us a fiber functor $F: \Rep(G) \rightarrow \Vect$, whose group of automorphisms is isomorphic to $G^{\prime}$ due to Tannaka-Krein duality.
An example of such a pair $(G, G^{\prime})$ was constructed in \cite{EG2001}, where such $G$ and $G^{\prime}$ are called isocategorical.
The fact that $\Aut^{\otimes}(F)$ depends on a fiber functor $F$ is also in contrast to the case of invertible symmetry $\Vect_G$.

\end{itemize}

\subsection{Matrix product states with fusion category symmetries}
\label{sec: Matrix product states with fusion category symmetries}
The ground states of a gapped phase in 1+1 dimensions can be efficiently represented by matrix product states (MPSs) \cite{FNW92, PhysRevB.73.094423, Hastings:2007iok, Cirac:2020obd}.
Furthermore, the symmetry operators acting on an MPS can often be represented by matrix product operators (MPOs) \cite{Molnar:2022nmh,Garre-Rubio:2022uum,Lootens:2021tet,Lootens:2022avn}.
Therefore, MPSs symmetric under the action of MPOs are useful tools to study 1+1d gapped phases with fusion category symmetries.
In this subsection, following \cite{Garre-Rubio:2022uum}, we introduce MPSs symmetric under MPOs and discuss the invariants associated with such MPSs.
See, e.g., \cite{Cirac:2020obd} for a recent review of MPSs and MPOs.

\vspace{10pt}
\noindent{\bf Matrix product states.}
An MPS is a tensor network state constructed from a three-leg tensor $A$ as
\begin{equation}
\ket{A} = \adjincludegraphics[scale=1.2,trim={10pt 10pt 10pt 10pt}, valign = c]{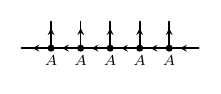},
\label{eq: MPS}
\end{equation}
where each vertical leg represents a physical Hilbert space $\mathcal{H} \cong \C^d$ and each horizontal leg represents a bond Hilbert space $V \cong \C^D$.
The arrows on the edges specify the source and target of each tensor $A: V \rightarrow V \otimes \mathcal{H}$.
These arrows will be omitted when no confusion can arise.\footnote{The choice of the orientation of an arrow is just a matter of convention.}
For simplicity, we only consider translationally invariant MPSs in this paper, i.e., the MPS tensor $A$ does not depend on sites.
Furthermore, the MPSs discussed in this paper are supposed to obey a periodic boundary condition unless otherwise stated.

Given a basis $\{b_i \mid i = 1, 2, \cdots, d\}$ of the physical Hilbert space $\mathcal{H}$, one can express the MPS \eqref{eq: MPS} on a periodic chain as
\begin{equation}
\ket{A} = \sum_{i_1, i_2, \cdots, i_L} \mathop{\text{tr}}(A^{i_1} A^{i_2} \cdots A^{i_n}) \ket{b_{i_1}, b_{i_2}, \cdots, b_{i_L}},
\end{equation}
where $A^i$ is a $D \times D$ matrix and $L$ denotes the number of lattice sites.
An MPS tensor $A$ is said to be injective if the set $\{A^i \mid i = 1, 2, \cdots, d\}$ generates the full matrix algebra $M_{D}(\C)$.
An injective MPS can be realized as the unique ground state of a gapped Hamiltonian belonging to an SPT phase \cite{Perez-Garcia:2006nqo}.
Conversely, the ground state of an SPT phase can always be approximated by an injective MPS \cite{FNW92, PhysRevB.73.094423, Hastings:2007iok, Cirac:2020obd}.
As such, we can restrict our attention to injective MPSs when studying SPT phases.

\vspace{10pt}
\noindent{\bf Matrix product operators.}
An MPO is a tensor network operator built from a four-leg tensor $\cO$ as
\begin{equation}
\widehat{\cO} = \adjincludegraphics[scale=1.2,trim={10pt 10pt 10pt 10pt},valign = c]{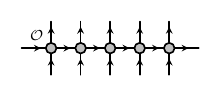}.
\label{eq: MPO}
\end{equation}
We denote the bond Hilbert space of an MPO $\widehat{\cO}$ by $V_{\cO} \cong \C^{\chi}$, where $\chi$ is the bond dimension.
The arrows on the edges show that the MPO tensor $\cO$ is a linear map from $V_{\cO} \otimes \cH$ to $\cH \otimes V_{\cO}$.
For a given basis $\{b_i \mid i = 1, 2, \cdots, d\}$ of the physical Hilbert space $\cH$, the MPO \eqref{eq: MPO} on a periodic chain can be expressed in terms of $\chi \times \chi$ matrices $\{\cO^{ij} \mid i, j = 1, 2, \cdots, d\}$ as
\begin{equation}
\widehat{\cO} = \sum_{i_1 \cdots, i_L} \sum_{j_1, \cdots, j_L} \mathop{\text{tr}}(\cO^{i_1 j_1} \cdots \cO^{i_L j_L}) \ket{b_{i_1}, \cdots, b_{i_L}} \bra{b_{j_1}, \cdots, b_{j_L}}.
\end{equation}
An MPO is said to be injective if the set $\{\cO^{ij} \mid i, j = 1, 2, \cdots, d\}$ generates the full matrix algebra $M_{\chi}(\C)$.

MPOs can be used to represent fusion category symmetries on the lattice \cite{Bultinck:2015bot, Lootens:2021tet, Lootens:2022avn, Molnar:2022nmh, Garre-Rubio:2022uum}.
Indeed, for any fusion category $\cC$, there exists an MPO representation of $\cC$, i.e., a set of injective MPOs $\{\widehat{\cO}_x \mid x \in \Simp(\cC)\}$ together with three-leg tensors $\{(\phi_{xy}^z)_{\mu} \mid \mu = 1, 2, \cdots, N_{xy}^z\}$ and $\{(\overline{\phi}_{xy}^z)_{\mu} \mid \mu = 1, 2, \cdots, N_{xy}^z\}$ that allow us to locally fuse two MPOs $\widehat{\cO}_x$ and $\widehat{\cO}_y$ as follows:
\begin{equation}
\cO_{x \otimes y} := \cO_x \cO_y = \adjincludegraphics[scale=1.2,trim={10pt 10pt 10pt 10pt},valign = c]{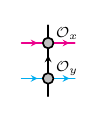}
= \sum_{z \in \Simp(\cC)} \sum_{1 \leq \mu \leq N_{xy}^z} \adjincludegraphics[scale=1.2,trim={10pt 10pt 10pt 10pt},valign = c]{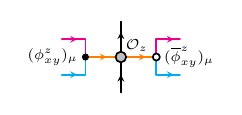}.
\label{eq: fusion tensors}
\end{equation}
The tensors $\{(\phi_{xy}^z)_{\mu}\}$ and $\{(\overline{\phi}_{xy}^z)_{\mu}\}$ are called fusion and splitting tensors respectively.
These tensors satisfy the following orthogonality relation:
\begin{equation}
\adjincludegraphics[valign=c, scale=1.2, trim={10pt 10pt 10pt 10pt}]{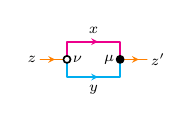}
= \delta_{\mu, \nu} \delta_{z, z^{\prime}} \adjincludegraphics[valign=c, scale=1.2, trim={10pt 10pt 10pt 10pt}]{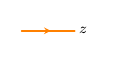}\;.
\label{eq: fusion orthogonality}
\end{equation}
On the left-hand side, the fusion and splitting tensors $(\phi_{xy}^z)_{\mu}$ and $(\overline{\phi}_{xy}^z)_{\nu}$ are abbreviated as $\mu$ and $\nu$, respectively.
Equations~\eqref{eq: fusion tensors} and \eqref{eq: fusion orthogonality} imply that an MPO representation of $\cC$ on a periodic chain obeys the same fusion rules as $\cC$, i.e.,
\begin{equation}
\widehat{\cO}_x \widehat{\cO}_y = \sum_{z} N_{xy}^z \widehat{\cO}_z.
\label{eq: MPO fusion rules}
\end{equation}
Using the fusion and splitting tensors in Eq.~\eqref{eq: fusion tensors}, one can extract the $F$-symbols of $\cC$ as follows:
\begin{equation}
\adjincludegraphics[valign=c, scale=1.2, trim={10pt 10pt 10pt 10pt}]{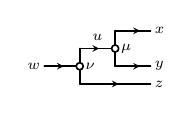}
= \sum_{v} \sum_{\rho, \sigma} (F^{xyz}_w)_{(u; \mu, \nu), (v, \rho, \sigma)}
\adjincludegraphics[valign=c, scale=1.2, trim={10pt 10pt 10pt 10pt}]{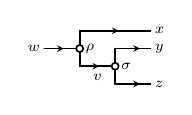}.
\label{eq: MPO F}
\end{equation}
The pentagon equation for the $F$-symbols follows from the associativity of MPOs.
Explicit construction of MPOs $\{\widehat{O}_x \mid x \in \cC\}$ that implement general fusion category symmetry $\cC$ is discussed in, e.g., \cite{Molnar:2022nmh, Lootens:2021tet, Lootens:2022avn}.

As a simple example, let us consider an MPO representation of $\Rep(G)$.
For concreteness, we choose the physical Hilbert space $\mathcal{H}$ to be a $G$-dimensional vector space faithfully graded by $G$, i.e., 
\begin{equation}
\mathcal{H} = \bigoplus_{g \in G} \C \ket{g},
\end{equation}
where $\{\ket{g} \mid g \in G\}$ is a basis of $\mathcal{H}$.
The bond Hilbert space of an MPO $\widehat{\cO}_{\rho}$ is given by the representation space $V_{\rho}$ of representation $\rho \in \Rep(G)$.
One can define the MPO tensor $\cO_{\rho}$ as
\begin{equation}
\cO_{\rho} = \sum_{g \in G} \rho(g) \otimes \ket{g} \bra{g},
\label{eq: MPO Rep(G)}
\end{equation}
where $\rho(g)$ is the representation matrix acting on the bond Hilbert space $V_{\rho}$.
We note that the MPO $\widehat{\cO}_{\rho}$ labeled by an irreducible representation $\rho$ is injective due to Schur's lemma. 
For the above MPO representation, the fusion tensors are simply given by projections from the tensor product representation $\rho_1 \otimes \rho_2 \cong \bigoplus_{\rho_3} N_{\rho_1 \rho_2}^{\rho_3} \rho_3$ to its fusion channel $\rho_3$.
Similarly, the splitting tensors are given by inclusion maps from $\rho_3$ to $\rho_1 \otimes \rho_2$.
By construction, the $F$-symbols defined by Eq. \eqref{eq: MPO F} agree with those of fusion category $\Rep(G)$.
The MPO representation \eqref{eq: MPO Rep(G)} can be generalized to that of $\Rep(H)$ where $H$ is a Hopf algebra \cite{Inamura:2021szw} or a weak Hopf algebra \cite{Molnar:2022nmh}.
Here, we recall that a general fusion category is equivalent to the representation category of a weak Hopf algebra, while a general non-anomalous fusion category is equivalent to the representation category of a Hopf algebra.

\vspace{10pt}
\noindent{\bf MPSs symmetric under MPOs.}
Let $\{\widehat{\cO}_x \mid x \in \Simp(\cC)\}$ be an MPO representation of a fusion category $\cC$.
An injective MPS $A$ is said to be $\cC$-symmetric if there exist three-leg tensors $\{(\phi_x)_i\}$ and $\{(\overline{\phi}_x)_i\}$ called action tensors, which allow us to apply an MPO $\hat{\cO}_x$ locally to an injective MPS $\ket{A}$ as follows \cite{Garre-Rubio:2022uum}:\footnote{In \cite{Garre-Rubio:2022uum}, a block-injective MPS symmetric under MPO symmetry is defined by the condition that the set of MPSs with arbitrary boundary conditions is closed under the action of the MPO algebra. The existence of the action tensors follows from this condition.}
\begin{equation}
\cO_x A = \;\adjincludegraphics[scale=1.2,trim={10pt 10pt 10pt 10pt},valign = c]{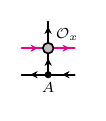}\;
=\sum_i\;\adjincludegraphics[scale=1.2,trim={10pt 10pt 10pt 10pt},valign = c]{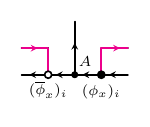}\;.
\label{eq: action tensor}
\end{equation}
Here, the index $i$ runs from 1 to some positive integer $n_x$.
The action tensors are required to satisfy the following orthogonality relation:
\begin{equation}
\;\adjincludegraphics[scale=1.2,trim={10pt 10pt 10pt 10pt},valign = c]{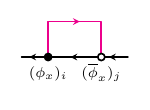}\;
= \delta_{ij} \;\adjincludegraphics[scale=1.2,trim={10pt 10pt 10pt 10pt},valign = c]{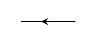}\;.
\label{eq: action orthogonality}
\end{equation}
The sets of action tensors $\{(\phi_x)_i \mid i = 1, 2, \cdots, n_x\}$ and $\{(\overline{\phi}_x)_i \mid i = 1, 2, \cdots, n_x\}$ are simply written as $\phi$ and $\overline{\phi}$ when no confusion can arise.
Equations~\eqref{eq: action tensor} and \eqref{eq: action orthogonality} imply that a $\cC$-symmetric injective MPS $\ket{A}$ on a periodic chain is invariant under the action of $\cC$ up to scalar multiplication:
\begin{equation}
\widehat{\cO}_x \ket{A} = n_x \ket{A}.
\label{eq: invariance}
\end{equation}
This equation forces $n_x = \dim(x)$ because the quantum dimension is the unique one-dimensional representation of the fusion rules such that all objects are represented by positive real numbers.

Using the action tensors in Eq.~\eqref{eq: action tensor}, one can define a set of complex numbers $L^{xy}_{(i, j), (z; k, \mu)}$ called $L$-symbols:
\begin{equation}
\adjincludegraphics[scale=1.2,trim={10pt 10pt 10pt 10pt},valign = c]{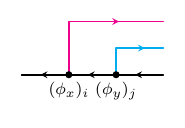}\;
=\sum_{z} \sum_{k, \mu} L^{xy}_{(i, j), (z; k, \mu)} \;\adjincludegraphics[scale=1.2,trim={10pt 10pt 10pt 10pt},valign = c]{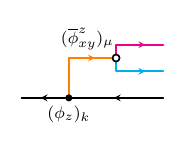}\;.
\label{eq: L-symbols}
\end{equation}
The associativity of MPO actions implies that the $L$-symbols satisfy consistency conditions analogous to the pentagon equation \cite{Garre-Rubio:2022uum}.
The inverse of the $L$-symbol $L^{xy}$ is denoted by $\overline{L}^{xy}$, which obeys
\begin{equation}
\adjincludegraphics[scale=1.2,trim={10pt 10pt 10pt 10pt},valign = c]{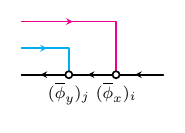}\;
= \sum_z \sum_{k, \mu} \overline{L}^{xy}_{(z;, k, \mu), (i, j)} \;\adjincludegraphics[scale=1.2,trim={10pt 10pt 10pt 10pt},valign = c]{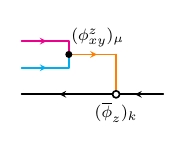}\;.
\label{eq: L-symbols inverse}
\end{equation}
Due to the orthogonality relation~\eqref{eq: action orthogonality}, the $L$-symbols $L^{xy}_{(i, j), (z; k, \mu)}$ and $\overline{L}^{xy}_{(z; k, \mu), (i, j)}$ can be expressed in terms of the following closed diagrams:
\begin{equation}
L^{xy}_{(i, j), (z; k, \mu)} = \frac{1}{D} \adjincludegraphics[valign=c, scale=1, trim={10pt 10pt 10pt 10pt}]{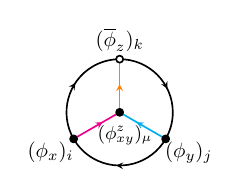}, \quad
\overline{L}^{xy}_{(z; k, \mu), (i, j)} = \frac{1}{D} \adjincludegraphics[valign=c, scale=1, trim={10pt 10pt 10pt 10pt}]{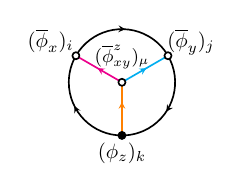},
\label{eq: L benz}
\end{equation}
where $D$ is the bond dimension of the MPS.

The $L$-symbols have gauge ambiguities because they depend on the choice of action tensors.
A gauge transformation of the action tensors is defined by
\begin{equation}
(\phi_x)_i \rightarrow {}^U(\phi_x)_i = \sum_{j} (U_x)_{ij} (\phi_x)_j, \quad 
(\overline{\phi}_x)_i \rightarrow {}^U(\overline{\phi}_x)_i = \sum_{j} (\overline{\phi}_x)_j (U_x)^{-1}_{ji},
\end{equation}
where $U_x$ is a $\dim(x) \times \dim(x)$ invertible matrix.
Correspondingly, a gauge transformation of the $L$-symbols is given by
\begin{equation}
L^{xy}_{(i, j), (z; k, \mu)} \rightarrow {}^UL^{xy}_{(i, j), (z; k, \mu)} = \sum_{i^{\prime}, j^{\prime}, k^{\prime}} (U_x)_{i i^{\prime}} (U_y)_{j j^{\prime}} (U_z)^{-1}_{k^{\prime} k} L^{xy}_{(i^{\prime}, j^{\prime}), (z; k^{\prime}, \mu)}.
\label{eq: L gauge}
\end{equation}
The $L$-symbols that are related by the above gauge transformation are said to be equivalent.

Equivalence classes of $L$-symbols classify $\cC$-symmetric SPT phases whose ground states are represented by injective MPSs \cite{Garre-Rubio:2022uum}.\footnote{Ref.~\cite{Garre-Rubio:2022uum} also defined $L$-symbols for block-injective MPSs symmetric under MPO symmetries. Such $L$-symbols can be identified with the data of a module category and characterize more general gapped phases.}
This is consistent with the classification of SPT phases via fiber functors.
Indeed, the $L$-symbols associated with a $\cC$-symmetric injective MPS can be identified with the data of a fiber functor as
\begin{equation}
L^{xy}_{(i, j), (z; k, \mu)} = J^{xy}_{(i, j), (z; k, \mu)},
\end{equation}
where the right-hand side is the matrix element of the isomorphism $J_{x, y}$ associated with a fiber functor, see Eq.~\eqref{eq: J}.
The consistency conditions on the $L$-symbols agree with the coherence conditions~\eqref{eq: fiber functor cd} on the isomorphism $J$.
Moreover, the gauge transformation \eqref{eq: L gauge} of the $L$-symbols defines an isomorphism \eqref{eq: natural isom elements} of the corresponding fiber functor, meaning that equivalence classes of $L$-symbols correspond to isomorphism classes of fiber functors.
In Sec.~\ref{sec: Fiber functors from injective MPS}, we will show that one can extract the data of a fiber functor/$L$-symbols from an injective MPS by using the triple inner product introduced in \cite{OR23}.

\vspace{10pt}
\noindent{\bf Completeness conditions.}
Before proceeding, we mention that the action tensors $\{(\phi_x)_i, (\overline{\phi}_x)_i\}$ and the fusion tensors $\{(\phi_{xy}^z)_{\mu}, (\overline{\phi}_{xy}^z)_{\mu}\}$ generally do not satisfy the following ``completeness" conditions:
\begin{equation}
\sum_i \;\adjincludegraphics[scale=1.3,trim={10pt 10pt 10pt 0pt},valign = c]{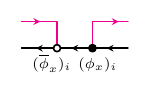}\;
= \;\adjincludegraphics[scale=1.2,trim={10pt 10pt 10pt 10pt},valign = c]{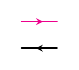}\;, \qquad
\sum_{z, \mu} \;\adjincludegraphics[valign=c, scale=1.3, trim={10pt 10pt 10pt 10pt}]{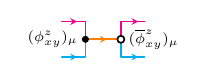}\;
= \;\adjincludegraphics[valign=c, scale=1.2, trim={10pt 10pt 10pt 10pt}]{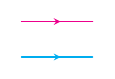}\;.
\label{eq: completeness}
\end{equation}
Given the orthogonality relations~\eqref{eq: fusion orthogonality} and \eqref{eq: action orthogonality}, one can show that the above completeness conditions are equivalent to the condition that the bond dimension $\chi_x$ of the MPO $\cO_x$ is equal to $\dim(x)$ for all $x \in \Simp(\cC)$, i.e.,
\begin{equation}
\text{Eq.~\eqref{eq: completeness}} \quad \Leftrightarrow \quad \chi_x = \dim(x) \text{ for all $x \in \Simp(\cC)$}.
\label{eq: completeness dim}
\end{equation}
The sketch of a proof goes as follows. 
First, we notice that the left-hand side of the first equality of Eq.~\eqref{eq: completeness} is an idempotent (as a linear map from left to right) due to the orthogonality relation~\eqref{eq: action orthogonality}.
The orthogonality further implies that this idempotent is bijective if and only if the bond dimension of $\cO_x$ agrees with the quantum dimension of $x$.
Then, the statement follows from the fact that a bijective idempotent is the identity map.
A similar argument shows that the second equality of Eq.~\eqref{eq: completeness} is also satisfied if and only if $\chi_x = \dim(x)$ for all $x \in \Simp(\cC)$.

Even though the completeness conditions do not hold in general, one can show the following weaker equalities:
\begin{equation}
\begin{aligned}
\sum_i \;\adjincludegraphics[valign=c, scale=1.2, trim={10pt 10pt 10pt 10pt}]{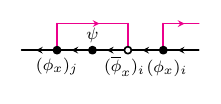}\;
& =\;\adjincludegraphics[valign=c, scale=1.2, trim={10pt 10pt 10pt 10pt}]{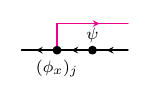}\;,
\\
\sum_{z, \mu}\;\adjincludegraphics[valign=c, scale=1.2, trim={10pt 10pt 10pt 0pt}]{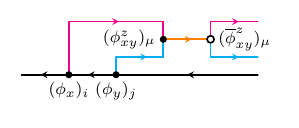}\;
& =\;\adjincludegraphics[scale=1.2,trim={10pt 10pt 10pt 0pt},valign = c]{tikz/out/L_symbols1.pdf}\;,
\end{aligned}
\label{eq: weak completeness}
\end{equation}
where $\psi \in \Aut(V)$ is an arbitrary unitary operator.
The first equality for $\psi = \text{id}$ follows from the orthogonality relation~\eqref{eq: action orthogonality} for the action tensors.
The case of a more general $\psi$ will be shown in Appendix~\ref{sec: Weak completeness relation} under a mild assumption on the duality structure.
The second equality follows from the orthogonality relation~\eqref{eq: fusion orthogonality} for the fusion tensors and the existence of the $L$-symbols~\eqref{eq: L-symbols}.
Equation~\eqref{eq: weak completeness} will be referred to as the weak completeness relation.

We note that an MPO representation that satisfies the completeness condition~\eqref{eq: completeness} exists for any non-anomalous fusion category $\Rep(H)$, where $H$ is a semisimple Hopf algebra.
The explicit construction of such MPOs is discussed in \cite{Inamura:2021szw, Molnar:2022nmh}.
For example, the MPO representation~\eqref{eq: MPO Rep(G)} of $\Rep(G)$ symmetry satisfies the completeness condition because the bond Hilbert space $V_{\rho}$ of an MPO $\cO_{\rho}$ is the representation space of $\rho \in \Rep(G)$ and hence we have $\chi_{\rho} = \dim(V_{\rho}) = \dim(\rho)$.
For a finite group symmetry $\Vect_G$, the completeness condition~\eqref{eq: completeness dim} means that the symmetry operators are on-site.
We do not assume the completeness condition in this paper.

\section{Fiber functors from injective MPS}
\label{sec: Fiber functors from injective MPS}

As we reviewed in Sec.~\ref{sec: Fiber functors and SPT phases}, SPT phases with fusion category symmetry $\cC$ are classified by fiber functors of $\cC$.
In this section, we discuss how to extract these data from the ground state of a lattice system.
The key idea is to exploit the correspondence between a TQFT partition function with defects and a generalized overlap of MPSs. 
In TQFT, a fiber functor $(F, J): \cC \rightarrow \Vect$ is represented diagrammatically as shown in Fig.~\ref{fig: pants}, where $F(x)$ is given by the $x$-twisted sector on a circle and $J_{x, y}$ is given the transition amplitude on a pants diagram. 
Correspondingly, in the context of MPS, it is expected that $F(x)$ is given by a vector space spanned by the fixed points of a mixed transfer matrix,
and $J_{x,y}$ is given by a triple inner product with three symmetry defects. In the following, we will explain this in detail.

\subsection{Non-abelian factor system and fiber functor}

\subsubsection{Non-abelian factor system}
\label{sec: Nonabelian factor system}
Let us first define a non-abelian generalization of the factor system from an injective MPS with fusion category symmetry.
This quantity turns out to be essentially the same as a fiber functor.

Let $\{\cO_x\left.\right|x\in \Simp(\mathcal{C})\}$ be an MPO representation of a fusion category symmetry $\mathcal{C}$, 
and let $A$ be a $\mathcal{C}$-symmetric injective MPS tensor.
The corresponding MPO matrix and MPS matrix are denoted by $\cO_{x}^{ij}$ and $A^{i}$, respectively.
The bond dimensions of $\cO_x$ and $A$ are given by $\chi_x$ and $D$.
Since $A$ is $\cC$-symmetric, it satisfies the symmetricity condition~\eqref{eq: action tensor}, which can be rewritten as
\begin{equation}
       \sum_j \cO^{ij}_x \otimes A^j = \hat{\phi}_x^{-1} (1_x \otimes A^i) \hat{\phi}_x
       \Longleftrightarrow
       \adjincludegraphics[scale=1,trim={10pt 10pt 10pt 10pt},valign = c]{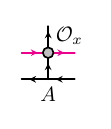}
       =\adjincludegraphics[scale=1,trim={10pt 10pt 10pt 10pt},valign = c]{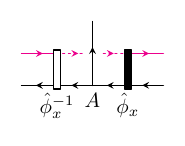}\;.
\label{eq: fractionalized symmetry}
\end{equation}
Here, $1_x$ denotes the $\dim(x) \times \dim(x)$ identity matrix, and the dotted line represents an auxiliary bond Hilbert space $F_x=\C^{\dim(x)}$.\footnote{When the bond dimension is $\chi_x = \dim(x)$, we can choose $F_x$ to be the bond Hilbert space itself.}
On the right-hand side, the crossing of two lines represents the trivial braiding $v \otimes w \mapsto w \otimes v$, and the tensors $\hat{\phi}_x$ and $\hat{\phi}_x^{-1}$ are defined by
\begin{equation}
\adjincludegraphics[valign=c, scale=1, trim={10pt 10pt 10pt 10pt}]{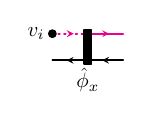}\;
=\;\adjincludegraphics[valign=c, scale=1, trim={10pt 10pt 10pt 10pt}]{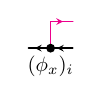}\;, \qquad
\adjincludegraphics[valign=c, scale=1, trim={10pt 10pt 10pt 10pt}]{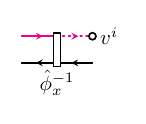}\;
=\;\adjincludegraphics[valign=c, scale=1, trim={10pt 10pt 10pt 10pt}]{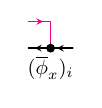},
\label{eq: basis-indep phi}
\end{equation}
where $\{v_i \mid i = 1, 2, \cdots, \dim(x)\}$ and $\{v^i \mid i = 1, 2, \cdots, \dim(x)\}$ are dual bases of $F_x$ and $F_x^{*}$.
We note that $\hat{\phi}_x^{-1}$ is the right inverse of $\hat{\phi}_x$ due to the orthogonality relation~\eqref{eq: action orthogonality}.\footnote{The tensor $\hat{\phi}_x^{-1}$ is also the left inverse of $\hat{\phi}_x$ if and only if the completeness condition~\eqref{eq: completeness} holds.}
By abuse of terminology, we will refer to $\hat{\phi}_x$ as the action tensor.
The action tensor $\hat{\phi}_x$ is uniquely determined up to the following gauge redundancy:
\begin{equation}
\hat{\phi}_x \mapsto \hat{\phi}^{\prime}_x = \;\adjincludegraphics[valign=c, scale=1, trim={10pt 10pt 10pt 0pt}]{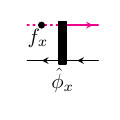}\;,
\qquad f_x \in GL_{\dim(x)}(\C).
\label{eq: gauge redundancy}
\end{equation}
Indeed, if there are two action tensors $\hat{\phi}_x^1$ and $\hat{\phi}_x^2$ that satisfy Eq.~\eqref{eq: fractionalized symmetry}, their composition $\hat{\phi}_x^1 (\hat{\phi}_x^2)^{-1}$ commutes with $1_x \otimes A^i$ for all $i$, and since $A$ is injective, $\hat{\phi}_x^1 (\hat{\phi}_x^2)^{-1}$ has to be of the form $f_x \otimes \text{id}_D$.

For a pair of two simple objects $x, y \in \Simp(\cC)$, we can define the action tensor $\hat{\phi}_{x \otimes y}$ for the tensor product object $x \otimes y \in \cC$ as follows:
\begin{equation}
\hat{\phi}_{x \otimes y} = \sum_{z, \mu} \;\adjincludegraphics[valign=c, scale=1, trim={10pt 10pt 10pt 10pt}]{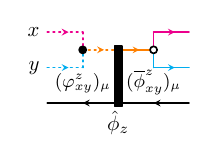}\;, \qquad
\hat{\phi}_{x \otimes y}^{-1} = \sum_{z, \mu} \;\adjincludegraphics[valign=c, scale=1, trim={10pt 10pt 10pt 10pt}]{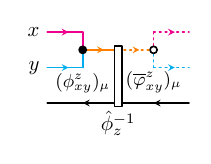}\;.
\label{eq: phi xy}
\end{equation}
Here, $(\varphi_{xy}^z)_{\mu}$ and $(\overline{\varphi}_{xy}^z)_{\mu}$ are fusion and splitting tensors for the auxiliary bond Hilbert spaces, which satisfy the orthogonality relation analogous to Eq.~\eqref{eq: fusion orthogonality}.
A straightforward computation shows that the above action tensors satisfy Eq.~\eqref{eq: fractionalized symmetry} where the MPO $\cO_x$ is replaced by $\cO_{x \otimes y}$ defined by Eq.~\eqref{eq: fusion tensors}.
One can also define the action tensors for the tensor product of more than two simple objects.
For example, for the tensor product of three simple objects $x, y, z \in \Simp(\cC)$, the action tensors $\hat{\phi}_{(x \otimes y) \otimes z}$ and $\hat{\phi}_{x \otimes (y \otimes z)}$ are given by
\begin{equation}
\hat{\phi}_{(x \otimes y) \otimes z} = \sum_{u, w, \mu, \nu} \;\adjincludegraphics[valign=c, scale=1, trim={10pt 10pt 10pt 10pt}]{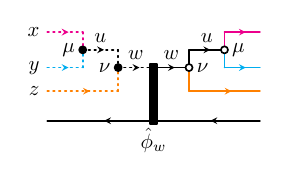}\;, \qquad
\hat{\phi}_{x \otimes (y \otimes z)} = \sum_{v, w, \rho, \sigma} \;\adjincludegraphics[valign=c, scale=1, trim={10pt 10pt 10pt 10pt}]{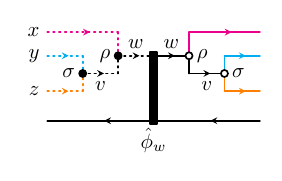}\;,
\label{eq: phi xyz}
\end{equation}
where the fusion and splitting tensors are abbreviated as $\mu$, $\nu$, $\rho$, and $\sigma$.
The above action tensors may differ from each other by a gauge transformation $f_{x, y, z} \in GL_{\dim(x)\dim(y)\dim(z)}$.
Nevertheless, one can always choose $\varphi$'s so that $f_{x, y, z}$ becomes the identity because the MPO tensors for $(x \otimes y) \otimes z$ and $x \otimes (y \otimes z)$ are the same, i.e., $\cO_{(x \otimes y) \otimes z} = \cO_{x \otimes (y \otimes z)}$.
More concretely, if we choose $\varphi$'s so that they have the same $F$-symbols as $\phi$'s, the two action tensors in Eq.~\eqref{eq: phi xyz} agree with each other.\footnote{For instance, when $\chi_x = \dim(x)$ for all $x \in \Simp(\cC)$, we can choose $(\varphi_{xy}^z)_{\mu} = (\phi_{xy}^z)_{\mu}$.}
In what follows, we will suppose $\hat{\phi}_{(x \otimes y) \otimes z} = \hat{\phi}_{x \otimes (y \otimes z)}$ without loss of generality.

Since the action tensors $\{\hat{\phi}_x \mid x \in \cC\}$ have the gauge ambiguity~\eqref{eq: gauge redundancy}, $\hat{\phi}_{x \otimes y}$ generally differs from the multiplication of $\hat{\phi}_x$ and $\hat{\phi}_y$ by a gauge transformation.
Namely, we have
\begin{equation}
    \adjincludegraphics[scale=1,trim={10pt 0pt 10pt 0pt},valign = c]{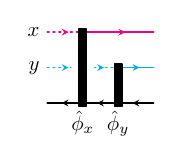}
    =\adjincludegraphics[scale=1,trim={10pt 0pt 10pt 0pt},valign = c]{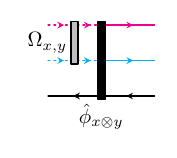}.
\label{eq: def Omega}
\end{equation}
The linear map $\Omega_{x, y}: F_x \otimes F_y \rightarrow F_x \otimes F_y$ will be referred to as a non-abelian factor system. 
To derive the consistency condition on $\Omega_{x, y}$, let us consider the multiplication of three action tensors $\hat{\phi}_x$, $\hat{\phi}_y$, and $\hat{\phi}_z$ in two different ways: 
\begin{equation}
  \adjincludegraphics[scale=1,trim={10pt 10pt 10pt 10pt},valign = c]{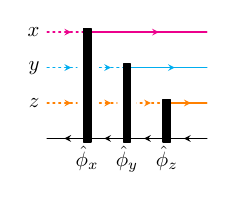}\;
  = \;\adjincludegraphics[scale=1,trim={10pt 10pt 10pt 10pt},valign = c]{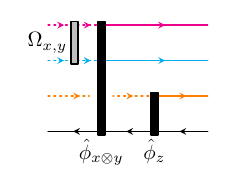}\;
  = \;\adjincludegraphics[scale=1,trim={10pt 10pt 10pt 10pt},valign = c]{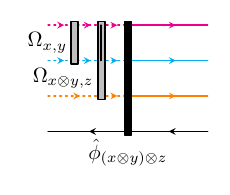}\;,
\label{eq: Omega associativity 1}
\end{equation}
\begin{equation}
  \adjincludegraphics[scale=1,trim={10pt 10pt 10pt 10pt},valign = c]{tikz/out/associativity1_new.pdf}\;
  = \;\adjincludegraphics[scale=1,trim={10pt 10pt 10pt 10pt},valign = c]{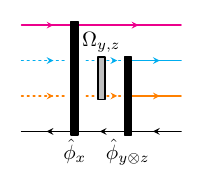}\;
  = \;\adjincludegraphics[scale=1,trim={10pt 10pt 10pt 10pt},valign = c]{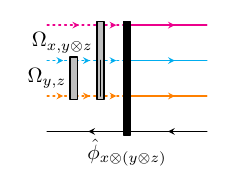}\;.
\label{eq: Omega associativity 2}
\end{equation}
By comparing Eqs.~\eqref{eq: Omega associativity 1} and \eqref{eq: Omega associativity 2}, we find that $\Omega_{x, y}$ satisfies a non-abelian cocycle condition:
\begin{equation}
  \Omega_{x \otimes y, z} (\Omega_{x, y} \otimes1_z) = \Omega_{x , y \otimes z} (1_x\otimes\Omega_{y, z})
  \Leftrightarrow
  \adjincludegraphics[scale=1,trim={10pt 10pt 10pt 10pt},valign = c]{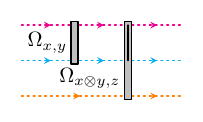}
  =\adjincludegraphics[scale=1,trim={10pt 10pt 10pt 10pt},valign = c]{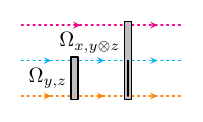}.
  \label{eq: nonabelian cocycle}
\end{equation}
Here, the diagrams in the above equation are read from left to right, and hence, e.g., the left-hand side represents $\Omega_{x \otimes y, z} (\Omega_{x, y} \otimes1_z)$ rather than $(\Omega_{x, y} \otimes1_z) \Omega_{x \otimes y, z}$.
We note that Eq.~\eqref{eq: nonabelian cocycle} agrees with the coherence condition~\eqref{eq: fiber functor cd} on a fiber functor, provided that the fusion category $\cC$ is strictified, i.e., $\alpha^{\cC}_{x, y, z} = \text{id}$.
Thus, the isomorphisms $\{\Omega_{x, y} \mid x, y \in \cC\}$ together with the set of vector spaces $\{F_x \mid x \in \cC\}$ can be identified with the data of a fiber functor $(F, \Omega): \cC \rightarrow \Vect$.

As we mentioned above, the action tensors $\{\hat{\phi}_x \mid x \in \Simp(\cC)\}$ have a gauge redundancy.
Under the gauge transformation~\eqref{eq: gauge redundancy}, the non-abelian factor system transforms as
\begin{equation}
  \Omega_{x,y} \mapsto (\bigoplus_{z \in x \otimes y} f_z)^{-1} \Omega_{x,y} (f_x \otimes f_y).
\label{eq: Omega gauge transformation}
\end{equation}
We note that the gauge transformation of $\hat{\phi}_{x \otimes y}$ is restricted to the one induced by the gauge transformations for simple objects $z \in x \otimes y$.
The non-abelian cocycle condition~\eqref{eq: nonabelian cocycle} is invariant under this gauge transformation.

\subsubsection{Mixed transfer matrix}
\label{sec: Mixed transfer matrix}

In the previous subsection, we argued that the data of a fiber functor is encoded in a $\cC$-symmetric injective MPS as a non-abelian factor system.
In the following, we will discuss how to extract this data.
To this end, we will use the mixed transfer matrix for a $\cC$-symmetric injective MPS and its fixed points. 

Given an MPS tensor $A$, a transfer matrix $T_{A}$ is defined by
\begin{eqnarray}
    T_{A}:=\sum_i A^{i\ast}\otimes A^{i}
    =\adjincludegraphics[scale=1,trim={10pt 10pt 10pt 10pt},valign = c]{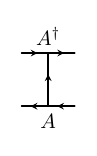},
\end{eqnarray}
where $A^{i\ast}$ denotes the complex conjugate of $A^i$.
The left and right eigenvectors of $T_A$ with the largest eigenvalues are called left and right fixed points.
When $A$ is injective, the left and right fixed points of $T_A$ are unique and have the same positive eigenvalue $\lambda \in \R$ \cite{SP-GWC10,Perez-Garcia:2006nqo,Cirac:2020obd}.
Without loss of generality, one can choose $\lambda=1$ by normalizing the MPS tensor $A$.
Thus, for an injective MPS, the left fixed point $\Lambda_A^L$ and the right fixed point $\Lambda_A^R$ are the unique (up to scalar) solutions to the following equation:
\begin{equation}
\adjincludegraphics[valign=c, scale=1, trim={10pt 10pt 10pt 10pt}]{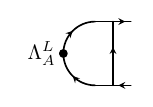}\; =
\;\adjincludegraphics[valign=c, scale=1, trim={10pt 10pt 10pt 10pt}]{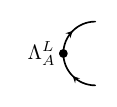}\;, \qquad
\adjincludegraphics[valign=c, scale=1, trim={10pt 10pt 10pt 10pt}]{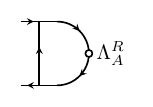}\; =
\;\adjincludegraphics[valign=c, scale=1, trim={10pt 10pt 10pt 10pt}]{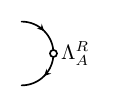}\;.
\end{equation}

Similarly, for a $\cC$-symmetric injective MPS tensor $A$, we define the mixed transfer matrix $T_x$ as 
\begin{equation}
T_{x} = \sum_{ij} A^{i\ast} \otimes \cO_{x}^{ij} \otimes  A^{j}
= \;\adjincludegraphics[scale=1,trim={10pt 10pt 10pt 10pt},valign = c]{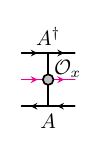}\;
= \sum_i \;\adjincludegraphics[scale=1,trim={10pt 10pt 10pt 10pt},valign = c]{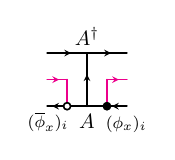}\;,
\end{equation}
where the last equality follows from Eq.~\eqref{eq: action tensor}.
The fixed point equation for the mixed transfer matrix $T_x$ is given by
\begin{equation}
\adjincludegraphics[valign=c, scale=1, trim={10pt 10pt 10pt 10pt}]{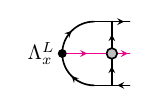}\;
= \;\adjincludegraphics[valign=c, scale=1, trim={10pt 10pt 10pt 10pt}]{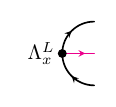}\;, \qquad
\adjincludegraphics[valign=c, scale=1, trim={10pt 10pt 10pt 10pt}]{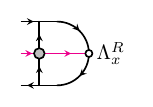}\;
= \;\adjincludegraphics[valign=c, scale=1, trim={10pt 10pt 10pt 10pt}]{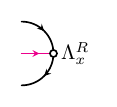}\;
\label{eq: fixed point eq}
\end{equation}
Combined with the orthogonality relation~\eqref{eq: action orthogonality}, the above equation implies that fixed points $\Lambda_x^L$ and $\Lambda_x^R$ have to satisfy
\begin{equation}
\adjincludegraphics[valign=c, scale=1, trim={10pt 10pt 10pt 10pt}]{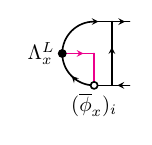}\;
= \;\adjincludegraphics[valign=c, scale=1, trim={10pt 10pt 10pt 10pt}]{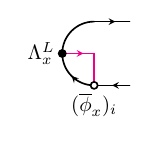}\;, \qquad
\adjincludegraphics[valign=c, scale=1, trim={10pt 10pt 10pt 10pt}]{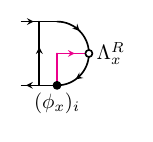}\;
= \;\adjincludegraphics[valign=c, scale=1, trim={10pt 10pt 10pt 10pt}]{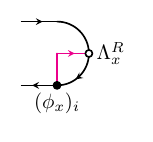}\;
\end{equation}
for all $i = 1, 2, \cdots, \dim(x)$.
This shows that $\Lambda_x^L (\overline{\phi}_x)_i$ and $(\phi_x)_i \Lambda_x^R$ are left and right fixed points of the transfer matrix $T_A$.
Since the fixed point of $T_A$ is unique, they are proportional to $\Lambda_A^L$ and $\Lambda_A^R$, i.e.,
\begin{equation}
\adjincludegraphics[valign=c, scale=1, trim={10pt 10pt 10pt 10pt}]{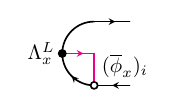}\; = 
a_{i}^L \Lambda_A^L, \qquad
\adjincludegraphics[valign=c, scale=1, trim={10pt 10pt 10pt 10pt}]{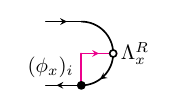}\; = 
a_{i}^R \Lambda_A^R,
\end{equation}
where $a_i^L$ and $a_i^R$ are arbitrary complex numbers.
Substituting this back into the fixed point equation~\eqref{eq: fixed point eq} leads us to
\begin{equation}
\Lambda_x^L = \sum_i a_i^L \Lambda_x^{L(i)}, \qquad
\Lambda_x^R = \sum_i a_i^R \Lambda_x^{R(i)}, 
\label{eq: general fixed point}
\end{equation}
where $\Lambda_x^{L(i)}$ and $\Lambda_x^{R(i)}$ are given by
\begin{equation}
\Lambda_x^{L(i)} = \;\adjincludegraphics[valign=c, scale=1, trim={10pt 10pt 10pt 0pt}]{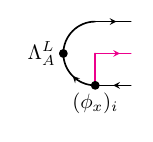}\; = 
\;\adjincludegraphics[valign=c, scale=1, trim={10pt 10pt 10pt 0pt}]{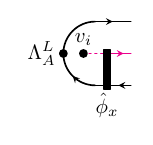}\;, \qquad
\Lambda_x^{R(i)} = \;\adjincludegraphics[valign=c, scale=1, trim={10pt 10pt 10pt 0pt}]{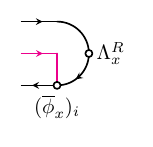}\; = 
\;\adjincludegraphics[valign=c, scale=1, trim={10pt 10pt 10pt 0pt}]{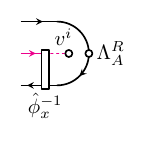}\;.
\label{eq: LR fixed points}
\end{equation}
Here, $\{v_i\}$ and $\{v^i\}$ are dual bases of the auxiliary bond Hilbert space $F_x = \C^{\dim(x)}$ and its dual $F_x^*$ as in Eq.~\eqref{eq: basis-indep phi}.
Equation~\eqref{eq: general fixed point} shows that any fixed point of $T_x$ is given by a linear combination of the fixed points in Eq.~\eqref{eq: LR fixed points}.

In the following, we often write the left fixed point $\Lambda_x^{L(i)}$ as $\ket{\Lambda_x^{L(i)}}$, and define a vector space $F^{\rm MPS}(x)$ assigned to each object $x\in \mathcal{C}$ as 
\begin{equation}
F^{\rm MPS}(x) = \bigoplus_{i=1}^{\dim(x)}\mathbb{C}\ket{\Lambda_{x}^{L(i)}}.
\label{eq: FMPS}
\end{equation}
By definition, the dimension of $F^{\rm MPS}(x)$ is $\dim(x)$.
Similarly, the right fixed point $\Lambda_x^{R(i)}$ will be often denoted by $\bra{\Lambda_x^{R(i)}}$.
This notation is justified because $\{\Lambda_x^{R(i)}\}$ is the dual basis of $\{\Lambda_x^{L(i)}\}$ due to the orthogonality relation~\eqref{eq: action orthogonality}.

\subsubsection{Triple inner product}
\label{sec: Triple inner product}
The assignment of the vector space~\eqref{eq: FMPS} defines a functor $F^{\rm MPS}: \cC \rightarrow \Vect$, which turns out to be a fiber functor.
To see this, let us consider the triple inner product of infinite MPSs \cite{OR23}, which is defined by the following diagram:
\begin{equation}
J_{x,y}^{\rm MPS}
:=\adjincludegraphics[trim={10pt 10pt 10pt 10pt},valign = c]{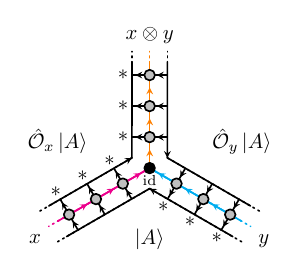}
:F^{\rm MPS}(x)\otimes F^{\rm MPS}(y)\to F^{\rm MPS}(x\otimes y).
\label{eq: MPS J}
\end{equation}
The middle dot in the diagram represents the identity map on the bond Hilbert space $V_x \otimes V_y$.
We emphasize that the triple inner product $J_{x,y}^{\rm MPS}$ is a linear map rather than a single number: terminating each wing of the diagram computes its matrix element.
Since each wing is infinitely long, the matrix element of $J_{x,y}^{\rm MPS}$ vanishes unless the endpoint of each wing is a fixed point of the mixed transfer matrix.\footnote{We recall that a fixed point of a (mixed) transfer matrix is an eigenvector with the largest eigenvalue $\lambda = 1$.}
Therefore, $J_{x, y}^{\rm MPS}$ can be regarded as a linear map from $F^{\rm MPS}(x) \otimes F^{\rm MPS}(y)$ to $F^{\rm MPS}(x \otimes y)$.

Let us explicitly compute the matrix element of $J_{x, y}^{\rm MPS}$.
To this end, we terminate the wings of $J_{x, y}^{\rm MPS}$ by fixed points $\ket{\Lambda_x^{L(i)}}$, $\ket{\Lambda_y^{L(j)}}$, and $\bra{\Lambda_{x \otimes y}^{R(k)}}$.
Here, $i$ runs from $1$ to $\dim(x)$, $j$ runs from $1$ to $\dim(y)$, and $k$ runs from $1$ to $\dim(x) \dim(y)$.
Using the fixed point equation~\eqref{eq: fixed point eq} successively annihilates the MPO tensors in Eq.~\eqref{eq: MPS J}.
As a result, the matrix element of $J_{x, y}^{\rm MPS}$ can be recast into
\begin{equation}
(J_{x, y}^{\rm MPS})_{(i, j), k}
:= \bra{\Lambda_{x\otimes y}^{R(k)}} J_{x,y}^{\rm MPS} \ket{\Lambda_{x}^{L(i)}}\otimes\ket{ \Lambda_{y}^{L(j)}}
=\adjincludegraphics[scale=1,trim={10pt 10pt 10pt 10pt},valign = c]{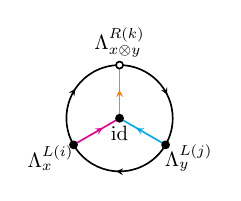}.
\label{eq: benz id}
\end{equation}
To evaluate the right-hand side, we suppose that $A$ is left canonical, meaning that the left fixed point of $T_A$ is the identity, i.e., $\Lambda_A^L = 1_D$.
We also suppose that the right fixed point $\Lambda_A^R$ is normalized so that $\tr{\Lambda_A^R} = 1$.\footnote{This is equivalent to the normalization of the infinite MPS.}
In this case, by substituting Eq.~\eqref{eq: LR fixed points} into the right-hand side of Eq.~\eqref{eq: benz id}, we obtain 
\begin{equation}
(J_{x, y}^{\rm MPS})_{(i, j), k}
= \;\adjincludegraphics[scale=1,trim={10pt 10pt 10pt 0pt},valign = c]{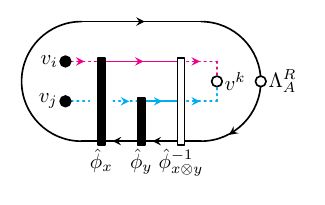}\;
= \;\adjincludegraphics[scale=1,trim={10pt 10pt 10pt 0pt},valign = c]{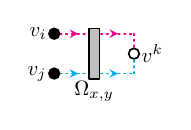}\;.
\label{eq: JMPS=Omega}
\end{equation}
This shows that the matrix element of the triple inner product $J_{x, y}^{\rm MPS}$ agrees with that of the non-abelian factor system $\Omega_{x, y}$.
As discussed in Sec.~\ref{sec: Nonabelian factor system}, $\Omega_{x, y}$ satisfies the coherence condition of a fiber functor.
Thus, $J_{x, y}^{\rm MPS}$ also satisfies the same coherence condition, meaning that $(F^{\rm MPS}, J^{\rm MPS}): \cC \rightarrow \Vect$ is a fiber functor.
It is now clear that one can also compute the $L$-symbols similarly: by inserting the fusion tensor $(\phi_{xy}^z)_{\mu}$ in the middle of the diagram, we obtain
\begin{equation}
L^{xy}_{(i,j),(z;k,\mu)}=\frac{1}{D}\adjincludegraphics[scale=1,trim={10pt 10pt 10pt 10pt},valign = c]{tikz/out/L_benz.pdf}
=\adjincludegraphics[trim={10pt 10pt 10pt 10pt},valign = c]{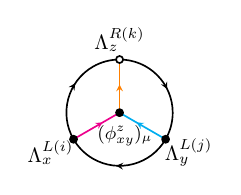}
=\;\adjincludegraphics[trim={10pt 10pt 10pt 10pt},valign = c]{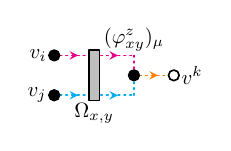}.
\label{eq: L MPS}
\end{equation}
We recall that $(\varphi_{xy}^z)_{\mu}$ on the right-hand side is an auxiliary fusion tensor, which has the same $F$-symbols as $(\phi_{xy}^z)_{\mu}$.

\subsubsection{Torus partition functions}
\label{sec: Torus partition functions}
In the preceding subsections, 
we explained how to compute the fiber functor from the triple inner product of infinite MPSs.
From the perspective of TQFT, 
the data of a fiber functor corresponds to the partition function on a pair of pants with defects as shown in Fig.~\ref{fig: pants}. 
Based on the ideas explained in Sec.~\ref{sec: Introduction and Summary}, 
it is also possible to evaluate more general partition functions in the framework of MPS. 
In this subsection, we evaluate the partition function on a torus as a specific example.

Let us consider the torus partition function in the presence of symmetry defects of the following configuration:
\begin{equation}
  Z\left[\;\;\adjincludegraphics[scale=0.8,trim={10pt 10pt 10pt 10pt},valign = c]{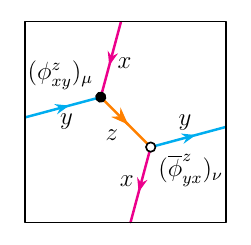}\;\;\right]\;\;.
\end{equation}
As in the computation of a fiber functor, we glue MPSs along these symmetry lines to evaluate the above partition function:
\begin{equation}
  \adjincludegraphics[scale=0.75,trim={10pt 10pt 10pt 10pt},valign = c]{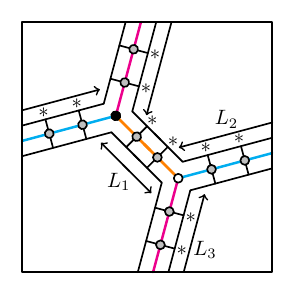}\;\;.
\end{equation}
Here, we put a conjugation $*$ on the left side of the defects. 
The complex number evaluated by the above diagram depends on the number of sites $L_1$, $L_2$, and $L_3$ in each interval.
However, by taking the large volume limit where $L_i$ goes to infinity for all $i$,
this dependence disappears, 
and the diagram reduces to the TQFT partition function:
\begin{equation}
  Z\left[\;\;\adjincludegraphics[scale=0.8,trim={10pt 10pt 10pt 10pt},valign = c]{tikz/out/torus_partition_function_new.pdf}\;\;\right]\;\;
  =\;\;\lim_{L_i\to\infty}\;\;\adjincludegraphics[scale=0.75,trim={10pt 10pt 10pt 10pt},valign = c]{tikz/out/torus_partition_function_MPS_new.pdf}\;\;.
\label{eq: torus_partition_function_MPS}
\end{equation}
In fact, in this limit, each interval produces a projector to $F^{\rm MPS}(x)$, and hence the right-hand side of the above equation can be computed as
\begin{equation}
  \lim_{L_i\to\infty}\;\;\adjincludegraphics[scale=0.75,trim={10pt 10pt 10pt 10pt},valign = c]{tikz/out/torus_partition_function_MPS_new.pdf}\;\;
  =\sum_{ijk}L_{(i,j),(z;k,\mu)}^{xy}\overline{L}_{(z;k,\nu), (j,i)}^{yx}.
\end{equation}
This is nothing but the torus partition function of an SPT phase with fusion category symmetry \cite{Thorngren:2019iar, Huang:2021zvu, Inamura:2021wuo}.

\subsection{Abelianization of the non-abelian factor system}
\label{sec: Abelianization of the nonabelian factor system}

We have seen that the fiber functor or the non-abelian factor system can be used as an invariant that characterizes SPT phases.
However, these are classified by non-abelian cohomology theories, and it may not be easy to compute the invariants and classifications in general. 
In this section, we will abelianize the non-abelian factor system to construct an invariant that takes values in (abelian) cohomology theories. 
Abelianization results in a loss of information, so these invariants do not provide a complete classification. 
Nevertheless, if these invariants take different values, it is clear that the two phases are different.
Additionally, as it only involves ordinary cohomology theories, 
one can invoke various techniques for calculating cohomology groups. 
This makes it relatively easy to identify the structure of the classification.

Before we define the abelianization, let us first define a cohomology theory. 
For a non-anomalous fusion category $\mathcal{C}$, we consider the following cochain $C^k(\mathcal{C;\mathbb{Z}})$:
\begin{eqnarray}
    C^k(\mathcal{C};\mathbb{Z}):=\{n:\Simp(\mathcal{C})^{k}\to\mathbb{Z}\}.
\end{eqnarray}
A cochain $n$ is extended additively to non-simple objects as $n(\cdots, \rho_i \oplus \rho_i^{\prime}, \cdots) = n(\cdots, \rho_i, \cdots) + n(\cdots, \rho_i^{\prime}, \cdots)$.
For this cochain, we introduce a differential
\begin{eqnarray}
    \delta_k:C^k(\mathcal{C};\mathbb{Z})\to C^{k+1}(\mathcal{C};\mathbb{Z})
\end{eqnarray}
defined by
\begin{equation}
\begin{aligned}
  (\delta_k n)(\rho_1,...,\rho_{k+1}) &:=\dim(\rho_1)n(\rho_2,...,\rho_{k+1})+\sum_{i=1}^{k}(-1)^{i}n(\rho_1,...,\rho_i\otimes\rho_{i+1},...,\rho_{k+1}) \\
  & \quad ~ +(-1)^{k+1}n(\rho_1,...,\rho_{k})\dim(\rho_{k+1}),
\end{aligned}
\end{equation}
for $n\in C^k(\mathcal{C};\mathbb{Z})$. We remark that $\dim(\rho)$ is an integer for all $\rho\in \mathcal{C}$ since $\mathcal{C}$ is non-anomalous.
We will drop the subscript $k$ of $\delta_k$ when it is clear from the context.
Then, we can show that
\begin{equation}
  \delta^{2} := \delta_{k+1} \delta_k =0:C^{k}(\mathcal{C};\mathbb{Z})\to C^{k+2}(\mathcal{C};\mathbb{Z}).
\end{equation}
See Appendix~\ref{sec: cohomology} for a proof.
Therefore, we can define the cohomology theory associated to the complex $(C^{k}(\mathcal{C};\mathbb{Z}),\delta)$:
\begin{equation}
  \cohoZ{k}{\mathcal{C}}:={\rm Ker}(\delta_k)/{\rm Im}(\delta_{k-1}).
\end{equation}
Despite the notation, $\cohoZ{k}{\cC}$ depends only on the fusion ring of $\cC$.
We note that $\cohoZ{k}{\mathcal{C}}$ is a special case of the Hochschild cohomology.\footnote{We are grateful to Yosuke Kubota for pointing this out. As a Hochschild cohomology, $\cohoZ{k}{\mathcal{C}}$ should be written as $\mathrm{H}^k(\text{Gr}(\cC), \Z)$, where $\text{Gr}(\cC)$ is the fusion ring of $\cC$ and $\Z$ is equipped with a $\text{Gr}(\cC)$-bimodule structure defined by $x \cdot m = m \cdot x = \dim(x)m$ for all $x \in \cC$ and $m \in \Z$.}

One can define an element of $\cohoZ{3}{\mathcal{C}}$ from the non-abelian factor system $\{\Omega_{x, y}\}$:
we take an $\mathbb{R}$-lift $r_{x,y}\in \mathbb{R}$ of $\det(\Omega_{x,y})$, i.e.,
\begin{equation}
  r_{x,y}:=\frac{1}{2\pi i}\log(\det(\Omega_{x,y})).
\end{equation}
Note that $r_{x,y}$ has an ambiguity because one can shift it by an integer.
Then we define $n_{x,y,z}$ as 
\begin{equation}
  n_{x,y,z}:=r_{y,z}-\sum_{w_{i}\in x\otimes y}r_{w_i,z} +\sum_{w_{j}\in y\otimes z}r_{x,w_{j}}-r_{x,y}. 
\end{equation}
Due to the non-abelian cocycle condition~\eqref{eq: nonabelian cocycle}, $n_{x,y,z}$ takes values in $\mathbb{Z}$.
Therefore, $\{n_{x,y,z}\}$ can be regarded as an element of $C^{3}(\mathcal{C};\mathbb{Z})$.
Moreover, we can check that $(\delta n)_{x,y,z,w}=0$.
On the other hand, under the shift of $r_{x,y}$ 
\begin{equation}
  r_{x,y}\mapsto r_{x,y} + m_{x,y}, \quad m_{x,y}\in \mathbb{Z},
\end{equation}
the cochain $n_{x,y,z}$ changes by the coboundary of $m_{x,y}$:
\begin{equation}
  n_{x,y,z}\mapsto n_{x,y,z} + (\delta m)_{x,y,z}.
\end{equation}
Thus, $\{n_{x,y,z}\}$ can be regarded as an element of the cohomology:
\begin{equation}
  \left[n_{x,y,z}\right]\in \cohoZ{3}{\mathcal{C}}.
\end{equation}
In general, the abelianization $\Omega_{x, y} \mapsto [n_{x, y, z}]$ is neither injective nor surjective.

\subsection{Example: $\Rep(G)$ symmetry}
\label{sec: Example Rep(G) symmetry I}
As a simple example, let us compute the $L$-symbols for the cluster state with $G \times \Rep(G)$ symmetry.
We start by recalling an MPS representation of the $G \times \Rep(G)$ cluster state following \cite{Fechisin:2023dkj}, see also \cite{Brell_2015} for an earlier work on this state.
The physical and bond Hilbert spaces of the $G \times \Rep(G)$ cluster state are both given by the regular representation of $G$.
In what follows, the basis of the physical Hilbert space is denoted by $\{\ket{g}_{\phys} \mid g \in G\}$, while the basis of the bond Hilbert space is denoted by $\{\ket{g} \mid g \in G\}$.
The MPS tensors on odd sites (black dots) and even sites (white dots) are given by\footnote{Our convention for the $G \times \Rep(G)$ cluster state is slightly different from that in \cite{Fechisin:2023dkj}. Specifically, in \cite{Fechisin:2023dkj}, the MPS tensor on the even sites is given by $\sum_{g \in G} L_g \otimes \ket{g}_{\text{phys}}$. Correspondingly, the virtual bonds of the MPO tensor $Z_{\rho}^{\text{phys}}$ in Eq.~\eqref{eq: cluster symmetry} have the opposite orientation in \cite{Fechisin:2023dkj}.}
\begin{equation}
\adjincludegraphics[scale=1,trim={10pt 10pt 10pt 10pt},valign = c]{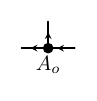}= \sum_{g \in G} P_g \otimes \ket{g}_{\phys}, \quad 
\adjincludegraphics[scale=1,trim={10pt 10pt 10pt 10pt},valign = c]{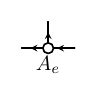}= \sum_{g \in G} L_g^{\dagger} \otimes \ket{g}_{\phys},
\label{eq: GxRep(G) MPS}
\end{equation}
where $P_g$ and $L_g$ are the projection and the left multiplication acting on the bond Hilbert space, i.e.,
\begin{equation}
P_g = \ket{g}\bra{g}, \quad L_g = \sum_{h \in G} \ket{gh} \bra{h}.
\label{eq: Pg Lg}
\end{equation}
We note that the unit cell of the cluster state consists of two sites.
The cluster state given by the above MPS tensors has both $G$ and $\Rep(G)$ symmetries.
The $G$ symmetry is implemented by the right multiplication of $G$ on odd sites, while the $\Rep(G)$ symmetry is implemented by the action of Wilson line operators~\eqref{eq: MPO Rep(G)} on even sites.
More specifically, the actions of $G$ and $\Rep(G)$ symmetries are given by
\begin{equation}
\mathcal{O}_{g}\cdot A_{\rm cluster}
=\adjincludegraphics[scale=1,trim={10pt 10pt 10pt 10pt},valign = c]{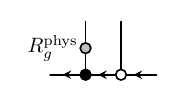}
=\adjincludegraphics[scale=1,trim={10pt 10pt 10pt 10pt},valign = c]{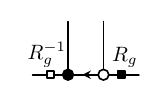}\;, \quad
\mathcal{O}_{\rho}\cdot A_{\rm cluster}
=\adjincludegraphics[scale=1,trim={10pt 10pt 10pt 10pt},valign = c]{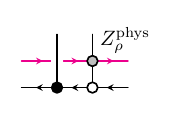}
=\adjincludegraphics[scale=1,trim={10pt 10pt 10pt 10pt},valign = c]{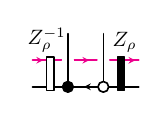}\;,
\label{eq: cluster symmetry}
\end{equation}
where $R_g$ and $Z_{\rho}$ are defined by
\begin{equation}
R_g = \sum_{h \in G} \ket{hg^{-1}} \bra{h}, \quad Z_{\rho} = \sum_{g \in G} \rho(g) \otimes \ket{g} \bra{g}.
\label{eq: Rg Zrho}
\end{equation}
We note that $\rho(g)$ in Eq.~\eqref{eq: Rg Zrho} is acting on the MPO's bond Hilbert space $V_{\rho}$, which is the representation space of $\rho$.
The operators $R_g^{\phys}$ and $Z_{\rho}^{\phys}$ in Eq.~\eqref{eq: cluster symmetry} are defined by the same equations except that the bras and kets now have the subscript ``phys."
Equation \eqref{eq: cluster symmetry} clearly shows that the cluster state defined above has $G \times \Rep(G)$ symmetry.
This cluster state is connected to the trivial product state by a $\Rep(G)$-symmetric finite depth quantum circuit, although it is not by a $G \times \Rep(G)$ symmetric one \cite{Fechisin:2023dkj}.
As such, we say that the $G \times \Rep(G)$ cluster state is in the trivial phase with $\Rep(G)$ symmetry.\footnote{This is just a matter of terminology. In general, defining the notion of a trivial phase with non-invertible symmetry is a subtle problem due to the lack of group structure on the set of SPT phases \cite{Thorngren:2019iar, Fechisin:2023dkj}.}
We note that the cluster state \eqref{eq: GxRep(G) MPS} is the ground state of the $\Rep(G)$ symmetric model discussed in \cite[Section 4.4]{Inamura:2021szw}, see Appendix \ref{sec: GxRep(G) cluster state} for more details.

Now, let us compute the $L$-symbols.
For simplicity, we will focus on the $L$-symbols associated with $\Rep(G)$ symmetry.
Computing the $L$-symbols for the full $G \times \Rep(G)$ symmetry is also straightforward.
The MPO tensor $\cO_{\rho}$ defined in Eq.~\eqref{eq: cluster symmetry} implies that the action tensors for $\Rep(G)$ symmetry are given by
\begin{equation}
(\phi_{\rho})_i = \adjincludegraphics[scale=1,trim={10pt 10pt 10pt 10pt},valign = c]{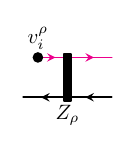}, \qquad 
(\overline{\phi}_{\rho})_i = \adjincludegraphics[scale=1,trim={10pt 10pt 10pt 10pt},valign = c]{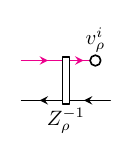},
\label{eq: cluster action tensors}
\end{equation}
where $\{v^{\rho}_i \mid i = 1, 2, \cdots, \dim(\rho)\}$ and $\{v_{\rho}^i = (v_i^{\rho})^{\dagger} \mid i = 1, 2, \cdots, \dim(\rho)\}$ are dual bases of $V_{\rho}$ and its dual $V_{\rho}^*$.
By using these action tensors, we can compute the $L$-symbols~\eqref{eq: L benz} as follows:
\begin{equation}
\begin{aligned}
L^{\rho_1 \rho_2}_{(i, j), (\rho_3; k, \mu)} & = \frac{1}{|G|} \sum_{g \in G} \sum_{i^{\prime}, j^{\prime}, k^{\prime}} \rho_1(g)_{i i^{\prime}} \rho_2(g)_{j j^{\prime}} ((\phi_{\rho_1 \rho_2}^{\rho_3})_{\mu})_{i^{\prime} j^{\prime}}^{k^{\prime}} \rho_3(g)^{\dagger}_{k^{\prime} k}, \\
\overline{L}^{\rho_1 \rho_2}_{(\rho_3; k, \mu), (i, j)} & = \frac{1}{|G|} \sum_{g \in G} \sum_{i^{\prime}, j^{\prime}, k^{\prime}} \rho_3(g)_{k k^{\prime}} ((\overline{\phi}_{\rho_1 \rho_2}^{\rho_3})_{\mu})_{i^{\prime} j^{\prime}}^{k^{\prime}} \rho_1(g)^{\dagger}_{i^{\prime} i} \rho_2(g)^{\dagger}_{j^{\prime} j}. 
\end{aligned}
\label{eq: L for Rep(G)}
\end{equation}
Here, $\rho(g)$ denotes the unitary representation matrix whose $(i, j)$-component $\rho(g)_{ij}$ is defined by
\begin{equation}
\rho(g) v^{\rho}_i = \sum_{j} \rho(g)_{ij} v^{\rho}_j.
\end{equation}
Similarly, $((\phi_{\rho_1 \rho_2}^{\rho_3})_{\mu})_{ij}^k$ in Eq.~\eqref{eq: L for Rep(G)} denotes the matrix element of the fusion tensor $(\phi_{\rho_1 \rho_2}^{\rho_3})_{\mu}: V_{\rho_1} \otimes V_{\rho_2} \rightarrow V_{\rho_3}$, namely,
\begin{equation}
(\phi_{\rho_1 \rho_2}^{\rho_3})_{\mu} (v^{\rho_1}_i \otimes v^{\rho_2}_j) = \sum_{k} ((\phi_{\rho_1 \rho_2}^{\rho_3})_{\mu})_{ij}^k v^{\rho_3}_k.
\end{equation}
The corresponding splitting tensor is given by $(\overline{\phi}_{\rho_1 \rho_2}^{\rho_3})_{\mu} = (\phi_{\rho_1, \rho_2}^{\rho_3})_{\mu}^{\dagger}$.
By using the identities\footnote{The first equality of eq.~\eqref{eq: Schur orthogonality} follows from the fusion rules. The second equality is known as the Schur orthogonality relation for irreducible representations $\rho, \lambda \in \Rep(G)$.}
\begin{equation}
\begin{aligned}
\rho_1(g)_{i i^{\prime}} \rho_2(g)_{j j^{\prime}} & = \sum_{\rho_3, \mu} \sum_{k, k^{\prime}} ((\phi_{\rho_1 \rho_2}^{\rho_3})_{\mu})_{i j}^{k} ((\overline{\phi}_{\rho_1 \rho_2}^{\rho_3})_{\mu})_{i^{\prime} j^{\prime}}^{k^{\prime}} \rho_3(g)_{k k^{\prime}}, \\
\sum_{g \in G} \rho(g)^{\dagger}_{ii^{\prime}} \lambda(g)_{jj^{\prime}} & = \delta_{\rho, \lambda} \delta_{ij^{\prime}} \delta_{i^{\prime} j} |G|/\dim(\rho) \quad \text{for $\rho, \lambda \in \Simp(\Rep(G))$},
\end{aligned}
\label{eq: Schur orthogonality}
\end{equation}
one can rewrite the $L$-symbols in eq.~\eqref{eq: L for Rep(G)} as
\begin{equation}
L^{\rho_1 \rho_2}_{(i, j), (\rho_3; k, \mu)} = ((\phi_{\rho_1 \rho_2}^{\rho_3})_{\mu})_{ij}^k, \quad 
\overline{L}^{\rho_1 \rho_2}_{(\rho_3; k, \mu), (i, j)} = ((\overline{\phi}_{\rho_1 \rho_2}^{\rho_3})_{\mu})_{ij}^k.
\label{eq: L for Rep(G) simplified}
\end{equation}
The above $L$-symbols correspond to the forgetful functor of $\Rep(G)$.

We note that the $L$-symbols in eq.~\eqref{eq: L for Rep(G) simplified} are the same as those for the trivial product state with $\Rep(G)$ symmetry.
This is consistent with the fact the $G \times \Rep(G)$ cluster state is in the trivial $\Rep(G)$-symmetric phase.

\section{Interface modes of SPT phases with fusion category symmetries}
\label{sec: Interface modes of SPT phases with fusion category symmetries}
In this section, we study interfaces of 1+1d SPT phases with fusion category symmetry $\cC$.
We first derive the symmetry algebra, which we call the interface algebra, acting on the interface of two SPT phases with the same symmetry.
The irreducible representations of this algebra can be regarded as localized modes at the interface, i.e. interface modes.\footnote{This is a similar idea to the correspondence between the irreducible representations of the tube algebra and the anyons in a topologically ordered system.}
It turns out that the interface algebra does not have one-dimensional representations if the adjacent SPT phases are different.
This implies that two different SPT phases must have degenerate interface modes that form a higher dimensional representation of the interface algebra.
In other words, the symmetry at the interface of different SPT phases is anomalous.
On the other hand, the interface algebra between the same SPT phase has one-dimensional representations, which describe non-degenerate interface modes.
These one-dimensional representations are in one-to-one correspondence with automorphisms of a fiber functor $F$ and form a group $\Aut^{\otimes}(F)$.
As an illustrative example, we will study the interface algebras of SPT phases with $\Rep(G)$ symmetry in detail.
Our analysis in this section is similar to that in~\cite{Rubio:2024aiw}, which studies domain wall excitations in 1+1d gapped phases with anomalous finite group symmetries.

\subsection{Symmetry algebra at the interface}
\label{sec: Symmetry algebra at the interface}
Let us consider the interface between two SPT phases $\SPT_1$ and $\SPT_2$ with fusion category symmetry $\cC$.
The ground states of these SPT phases are represented by $\cC$-symmetric injective MPSs $A_1$ and $A_2$.
The physical and bond Hilbert spaces of $A_i$ are denoted by $\cH_i$ and $V_i$ respectively.
Here, $\cH_1$ and $V_1$ are not necessarily isomorphic to $\cH_2$ and $V_2$.
The action tensors associated with $A_1$ and $A_2$ are written as $\{(\phi_x^1)_i \mid i = 1, 2, \cdots, \dim(x)\}$ and $\{(\phi_x^2)_i \mid i = 1, 2, \cdots, \dim(x)\}$.
Similarly, the $L$-symbols for $A_1$ and $A_2$ are $\{(L_1)^{xy}_{(i, j), (z; k, \mu)}\}$ and $\{(L_2)^{xy}_{(i, j), (z; k, \mu)}\}$.

To derive the symmetry algebra at the interface, we consider the symmetry action on a periodic MPS in the presence of interfaces.
Specifically, we consider the following MPS on a periodic chain
\begin{equation}
\ket{\psi_{12}, \psi_{21}} = \;\adjincludegraphics[scale=1,trim={10pt 10pt 10pt 10pt},valign = c]{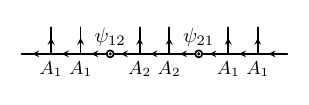}\;,
\label{eq: interface MPS}
\end{equation}
where a half of the chain is occupied by $\mathsf{SPT}_1$ and the other half is occupied by $\mathsf{SPT}_2$.
The interface degrees of freedom in Eq.~\eqref{eq: interface MPS} are represented by arbitrary two-leg tensors $\psi_{12}: V_2 \rightarrow V_1$ and $\psi_{21}: V_1 \rightarrow V_2$.
The action of a symmetry MPO $\widehat{\cO}_x$ on MPS \eqref{eq: interface MPS} is computed as
\begin{equation}
\widehat{\cO}_x \ket{\psi_{12}, \psi_{21}}
= \;\adjincludegraphics[scale=1,trim={10pt 10pt 10pt 10pt},valign = c]{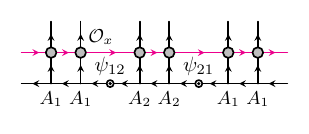}\;
= \sum_{1 \leq i, j \leq \dim(x)} \Ket{(\widehat{\cO}_x^{12})_{ij} \psi_{12}, (\widehat{\cO}_x^{21})_{ji} \psi_{21}},
\label{eq: periodic chain with interface}
\end{equation}
where the operators $(\widehat{\cO}_x^{12})_{ij}$ and $(\widehat{\cO}_x^{21})_{ji}$ are defined by
\begin{equation}
(\widehat{\cO}_x^{12})_{ij} \psi_{12} = \;\adjincludegraphics[scale=1,trim={10pt 10pt 10pt 10pt},valign = c]{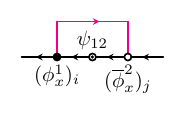}\;, \qquad 
(\widehat{\cO}_x^{21})_{ji} \psi_{21} = \;\adjincludegraphics[scale=1,trim={10pt 10pt 10pt 10pt},valign = c]{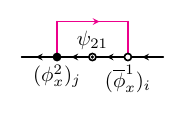}\;.
\end{equation}
These operators can be regarded as symmetry operators acting on the interfaces.
The symmetry algebra $\cA_{12}$ acting on the left interface is spanned by operators $\{(\widehat{\cO}_x^{12})_{ij} \mid x \in \Simp(\cC), ~ i, j = 1, 2, \cdots, \dim(x)\}$.
Similarly, the symmetry algebra $\cA_{21}$ acting on the right interface is spanned by operators $\{(\widehat{\cO}_x^{21})_{ij} \mid x \in \Simp(\cC), ~ i, j = 1, 2, \cdots, \dim(x)\}$.
In what follows, we will focus on $\cA_{12}$ without loss of generality.
The multiplication of symmetry operators in $\cA_{12}$ can be computed explicitly as
\begin{equation}
\begin{aligned}
(\widehat{\cO}_x^{12})_{ii^{\prime}} (\widehat{\cO}_y^{12})_{jj^{\prime}} \psi_{12}
& = \;\adjincludegraphics[scale=1,trim={10pt 10pt 10pt 10pt},valign = c]{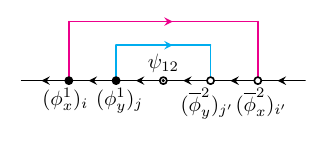}\;
 = \sum_{z, \mu} \;\adjincludegraphics[scale=1,trim={10pt 10pt 10pt 10pt},valign = c]{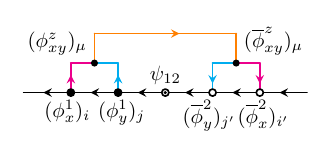}\\
& = \sum_{z, \mu} \sum_{k, k^{\prime}} \adjincludegraphics[scale=1,trim={10pt 10pt 10pt 10pt},valign = c]{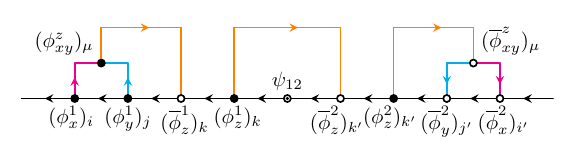}\\
& = \sum_{z, \mu} \sum_{k, k^{\prime}} (L_1)^{xy}_{(i, j), (z; k, \mu)} (\overline{L_2})^{xy}_{(z; k^{\prime}, \mu), (i^{\prime}, j^{\prime})} (\widehat{\cO}_z^{12})_{k k^{\prime}} \psi_{12}.
\end{aligned}
\label{eq: interface algebra derivation}
\end{equation}
The second and third equalities follow from the weak completeness relation~\eqref{eq: weak completeness}.\footnote{The last line of Eq.~\eqref{eq: interface algebra derivation} can also be obtained directly from the left diagram on the first line by using the orthogonality relation~\eqref{eq: fusion orthogonality} and the $L$-move~\eqref{eq: L-symbols} \eqref{eq: L-symbols inverse}.}
The above multiplication law is associative due to the consistency condition on the $L$-symbols.
The unit element of the algebra $\cA_{12}$ is the identity operator $\widehat{\cO}_{\1}^{12}: \psi_{12} \mapsto \psi_{12}$.
Here and in what follows, we omit the indices $i$ and $j$ of an operator $(\widehat{\cO}_x^{12})_{ij}$ when $\dim(x) = 1$ because the indices are unique in this case.

Summarizing, we find that the symmetry algebra acting on the interface of $\cC$-symmetric injective MPSs $\ket{A_1}$ and $\ket{A_2}$ is given by
\begin{equation}
\cA_{12} = \text{Span}\{ (\widehat{\cO}_x^{12})_{ij} \mid x \in \Simp(\cC), ~ i, j = 1, 2, \cdots, \dim(x) \}
\end{equation}
that is equipped with the following associative multiplication:
\begin{equation}
(\widehat{\cO}_x^{12})_{ii^{\prime}} (\widehat{\cO}_y^{12})_{jj^{\prime}} = \sum_{z, \mu} \sum_{k, k^{\prime}} (L_1)^{xy}_{(i, j), (z; k, \mu)} (\overline{L_2})^{xy}_{(z; k^{\prime}, \mu), (i^{\prime}, j^{\prime})} (\widehat{\cO}_z^{12})_{k k^{\prime}}.
\label{eq: interface algebra}
\end{equation}
We emphasize that the symmetry algebra $\cA_{12}$ depends only on the $L$-symbols (or fiber functors), which characterize the SPT phases represented by $\cC$-symmetric injective MPSs.
In particular, $\cA_{12}$ does not depend on microscopic details of MPS tensors.

We note that each symmetry operator $\widehat{\cO}_x$ acting on the bulk SPT phase is factorized into multiple operators $\{ (\widehat{\cO}_x^{12})_{i j} \mid i, j = 1, 2, \cdots, \dim(x)\}$ at the interface.
Accordingly, the dimension of the symmetry algebra $\cA_{12}$ is not equal to the number of simple objects of $\cC$ but coincide with the total dimension of $\cC$, i.e., 
\begin{equation}
\dim(\cA_{12}) = \sum_{x \in \Simp(\cC)} \dim(x)^2 = \dim(\cC).
\end{equation}

The interface algebra $\cA_{12}$ is a special case of the algebra $\cA_{\cM, \cN}$ studied in \cite{KK2011}, where $\cM$ and $\cN$ are general $\cC$-module categories.
Specifically, $\cA_{\cM, \cN}$ reduces to $\cA_{12}$ when both $\cM$ and $\cN$ are module categories associated with fiber functors.
When $\cM$ and $\cN$ are the same, the algebra $\cA_{\cM, \cN}$ has the structure of a weak Hopf algebra \cite{KK2011}, whose representation category is equivalent to the dual fusion category $\Fun_{\cC}(\cM, \cM)$, the category of $\cC$-module endofunctors of $\cM$ \cite{KK2011, Barter_2022}.
See also \cite{Choi:2023xjw, Cordova:2024vsq, Cordova:2024iti,Copetti:2024onh,Copetti:2024dcz, Zheng2024Oxford} for recent applications of this algebra.

\vspace{10pt}
\noindent{\bf Example.}
As the simplest example, let us consider the interface algebra of SPT phases with ordinary finite group symmetry $G$.
When the symmetry $\cC = \Vect_G$ is a finite group, the $L$-symbols are nothing but group 2-cocycles $\omega \in \mathrm{Z}_{\text{gp}}^2(G, \mathrm{U}(1))$.
More specifically, the relation between $L$-symbols and 2-cocycles is given by
\begin{equation}
L^{g,h} = \omega(g, h),
\end{equation}
where the indices of the $L$-symbols are omitted because they are unique.
The symmetry algebra $\cA_{12}$ at the interface of two SPT phases $\mathsf{SPT}_{\omega_1}$ and $\mathsf{SPT}_{\omega_2}$ is spanned by operators $\{\widehat{\cO}_g^{12} \mid g \in G\}$, which obey the following multiplication law:
\begin{equation}
\widehat{\cO}_g^{12} \widehat{\cO}_h^{12} = \omega_1(g, h) \omega_2(g, h)^{-1} \widehat{\cO}_{gh}^{12}.
\end{equation}
Namely, $\cA_{12}$ is isomorphic to the twisted group algebra $\C[G]^{\omega_1 \omega_2^{-1}}$ with the twist $\omega_1 \omega_2^{-1} \in \mathrm{Z}_{\text{gp}}^2(G, \mathrm{U(1)})$.
In particular, if $\omega_1$ and $\omega_2$ belong to different cohomology classes $[\omega_1] \neq [\omega_2] \in \mathrm{H}_{\text{gp}}^2(G, \mathrm{U}(1))$, the symmetry $G$ is realized projectively at the interface.
This phenomenon is known as an anomaly inflow \cite{Callan:1984sa}.

As a more interesting example, we will study the interface algebras for the non-invertible $\Rep(D_8)$ symmetry in Sec.~\ref{sec: Example: Rep(D8) symmetry II}.
The interface algebras for the other non-anomalous $\Z_2 \times \Z_2$ Tambara-Yamagami symmetries $\Rep(Q_8)$ and $\Rep(H_8)$ will also be discussed in Appendix \ref{sec: Interface algebra for Tambara-Yamagami categories}.

\subsection{Interface modes}
\label{sec: Interface modes}
In this subsection, we show that the interface algebra $\cA_{12}$ has one-dimensional representations if and only if $\SPT_1$ and $\SPT_2$ are the same SPT phase.
Consequently, different SPT phases must have degenerate interface modes between them.
In other words, the interface algebra between two different SPT phases is anomalous.
This can be regarded as an anomaly inflow for fusion category symmetry.

We first show that $\cA_{12}$ has a one-dimensional representation if $\SPT_1$ and $\SPT_2$ are the same phase.
When $\SPT_1$ and $\SPT_2$ are the same, the associated $L$-symbols $L_1$ and $L_2$ are equivalent to each other.
Therefore, there exists a set of invertible matrices $\{U_x \mid x \in \Simp(\cC)\}$ such that
\begin{equation}
(L_1)^{xy}_{(i, j), (z; k, \mu)} = \sum_{i^{\prime}, j^{\prime}, k^{\prime}} (U_x)_{i i^{\prime}} (U_y)_{j j^{\prime}} (U_z)^{-1}_{k^{\prime} k} (L_2)^{xy}_{(i^{\prime}, j^{\prime}), (z; k^{\prime}, \mu)}.
\end{equation}
Since $U_x$, $U_y$, and $U_z$ are invertible, one can rewrite the above equation as
\begin{equation}
(U_x)_{i i^{\prime}} (U_y)_{j j^{\prime}} = \sum_{z, \mu} \sum_{k, k^{\prime}} (L_1)^{xy}_{(i, j), (z; k, \mu)} (\overline{L_2})^{xy}_{(z; k^{\prime}, \mu), (i^{\prime}, j^{\prime})} (U_z)_{k k^{\prime}}.
\label{eq: one-dim rep}
\end{equation}
Comparing Eq.~\eqref{eq: one-dim rep} with Eq.~\eqref{eq: interface algebra}, we find that $(\widehat{\cO}_x^{12})_{ij} = (U_x)_{ij}$ defines a one-dimensional representation of $\cA_{12}$.
Here, a one-dimensional representation of an algebra $\cA$ is a representation such that all elements of $\cA$ are represented by complex numbers.

Conversely, if the interface algebra $\cA_{12}$ has a one-dimensional representation, one can show that $\SPT_1$ and $\SPT_2$ are the same SPT phase.
To see this, we first show that if $\cA_{12}$ has a (non-zero) one-dimensional representation, so does $\cA_{21}$.
To this end, we recall that a fusion category $\cC$ is unitary as we assume throughout the paper.
Since $\cC$ is unitary, there exists a basis such that the $L$-symbols $L_1$ and $L_2$ are unitary matrices.\footnote{We recall that the $L$-symbols are matrix elements of the structure isomorphism $J$ of a fiber functor, which is unitary.}
In this basis, given a one-dimensional representation $(U_x^{12})_{ij}$ of $\cA_{12}$, we can obtain a one-dimensional representation $(U_x^{21})_{ij}$ of $\cA_{21}$ by taking the Hermitian conjugate of $U_x^{12}$, i.e., 
\begin{equation}
(U_x^{21})_{ij} := (U_x^{12})^{\dagger}_{ij} = (U_x^{12})^*_{ji}.
\end{equation}
Indeed, it immediately follows from the unitarity of the $L$-symbols that $(U_x)^{21}_{ij}$ defined above is a one-dimesional representation of $\mathcal{A}_{21}$, i.e., it satisfies
\begin{equation}
(U_x^{21})_{i i^{\prime}} (U_y^{21})_{j j^{\prime}} = \sum_{z, \mu} \sum_{k, k^{\prime}} (L_2)^{xy}_{(i, j), (z; k, \mu)} (\overline{L_1})^{xy}_{(z; k^{\prime}, \mu), (i^{\prime}, j^{\prime})} (U_z^{21})_{k k^{\prime}},
\end{equation}
as long as $(U_x)^{12}_{ij}$ is a one-dimensional representation of $\mathcal{A}_{12}$:
\begin{equation}
(U_x^{12})_{i i^{\prime}} (U_y^{12})_{j j^{\prime}} = \sum_{z, \mu} \sum_{k, k^{\prime}} (L_1)^{xy}_{(i, j), (z; k, \mu)} (\overline{L_2})^{xy}_{(z; k^{\prime}, \mu), (i^{\prime}, j^{\prime})} (U_z^{12})_{k k^{\prime}}.
\label{eq: one-dim rep 12 21}
\end{equation}
Now, we notice that the product $(U_x^{12} U_x^{21})_{ij} := \sum_k (U_x^{12})_{ik} (U_x^{21})_{kj}$ defines a (non-zero) one-dimensional representation of the self-interface algebra $\cA_{11}$ of $\SPT_1$.
Similarly, the product $(U_x^{21} U_x^{12})_{ij} := \sum_k (U_x^{21})_{ik} (U_x^{12})_{kj}$ defines a (non-zero) one-dimensional representation of the self-interface algebra $\cA_{22}$ of $\SPT_2$.
The $n$th power $(U_x^{12} U_x^{21})^n$ is also a (non-zero) one-dimensional representation of $\cA_{11}$ for any positive integer $n$.
Since $\cA_{11}$ has only finitely many one-dimensional representations, there exists a positive integer $n$ such that the one-dimensional representation given by $(\widehat{\cO}_x^{11})_{ij} = (U_x^{12} U_x^{21})^n_{ij}$ is isomorphic to the obvious one-dimensional representation $(\widehat{\cO}_x^{11})_{ij} = \delta_{ij}$.
In particular, the determinant of $U_x^{12} U_x^{21}$ is non-zero, which implies that $U_x^{12}$ and $U_x^{21}$ are invertible.
Thus, Eq.~\eqref{eq: one-dim rep 12 21} gives an equivalence between the $L$-symbols $L_1$ and $L_2$, meaning that $\SPT_1$ and $\SPT_2$ are the same SPT phase.

In summary, we find that $\SPT_1$ and $\SPT_2$ are the same SPT phase if and only if the interface algebra $\cA_{12}$ between them has a one-dimensional representation.
In particular, the interface algebra between different SPT phases does not have one-dimensional representations.
This implies that interface degrees of freedom between different SPT phases always form a higher dimensional representation of $\cA_{12}$, which leads to degenerate ground states on a periodic chain in the presence of interfaces.
Put differently, the symmetry algebra at the interface of different SPT phases is anomalous.\footnote{Symmetry is called anomalous if it does not admit a gapped phase with a unique ground state. This definition of an anomaly was proposed in 1+1d in \cite{Thorngren:2019iar} and can be employed in any spacetime dimension, including 0+1d as in the current situation.}
This result generalizes the anomaly inflow on the boundaries of SPT phases with ordinary group symmetries.

From the point of view of symmetry topological field theory, representations of the interface algebra $\cA_{12}$ (or more generally, $\cA_{\cM, \cN}$) can be understood as boundary excitations of 2+1d topological orders \cite{KK2011}.
We will discuss this perspective in more detail in Sec.~\ref{sec: Parameterized family and Thouless pump}.

\vspace{10pt}
\noindent{\bf Non-degenerate interface modes.}
One-dimensional representations of the interface algebra $\cA_{12}$ describe non-degenerate interface modes between the same SPT phase $\SPT_1 = \SPT_2$.
It is clear from the above discussion that one-dimensional representations of the interface algebra $\cA_{12}$ are in one-to-one correspondence with gauge transformations between $L$-symbols $L_1$ and $L_2$.
As we discussed at the end of Sec.~\ref{sec: Matrix product states with fusion category symmetries}, gauge transformations between $L_1$ and $L_2$ are also in one-to-one correspondence with isomorphisms between the corresponding fiber functors $F_1$ and $F_2$.
Therefore, one-dimensional representations of $\cA_{12}$ are in one-to-one correspondence with isomorphisms of fiber functors $F_1$ and $F_2$.
Specifically, given a one-dimensional representation $\{(\widehat{\cO}_x^{12})_{ij} = (U_x)_{ij}\}$ of $\cA_{12}$, the matrix elements of the corresponding isomorphism $\eta: F_1 \Rightarrow F_2$ are given by 
\begin{equation}
(\eta_x)_{ij} = (U_x)_{ij}.
\label{eq: one-dim rep = isom}
\end{equation}

When the $L$-symbols $L_1$ and $L_2$ are equal to each other, one-dimensional representations of $\cA_{12} =: \cA$ can be canonically identified with automorphisms of the corresponding fiber functor $F$.\footnote{General representations of $\cA$ including higher dimensional ones can be identified with monoidal natural transformations of $F$~\cite{KK2011, Barter_2022}. These describe local excitations within a single SPT phase.}
These one-dimensional representations form a group under the composition of the interface modes as follows:
\begin{equation}
\begin{aligned}
& (\widehat{\cO}_x)_{ij} \psi = (U_x)_{ij} \psi, \quad (\widehat{\cO}_x)_{ij} \psi^{\prime} = (U_x^{\prime})_{ij} \psi^{\prime}\\
& \Rightarrow (\widehat{\cO}_x)_{ij} (\psi \circ \psi^{\prime}) = \sum_{k} (U_x)_{ik} (U_x^{\prime})_{kj} \psi \circ \psi^{\prime} = (U_x U_x^{\prime})_{ij} \psi \circ \psi^{\prime}.
\end{aligned}
\end{equation}
The unit element of this group is given by the obvious one-dimensional representation $\{(\widehat{\cO}_x)_{ij} = \delta_{ij}\}$.
Equation \eqref{eq: one-dim rep = isom} tells us that the group of one-dimensional representations of $\cA$ is isomorphic to $\Aut^{\otimes}(F)$.

\subsection{Examples}
\label{sec: Examples}

\subsubsection{Non-degenerate interfaces of $\Rep(G)$ SPT phases}
\label{sec: Non-degenerate interfaces of a Rep(G) SPT phase}
Let us consider representations of the interface algebra $\cA$ between two copies of the $G \times \Rep(G)$ cluster state.
As in Sec.~\ref{sec: Example Rep(G) symmetry I}, we will focus on the $\Rep(G)$ symmetry of the cluster state rather than the full $G \times \Rep(G)$ symmetry.

The interface algebra $\cA$ is spanned by the operators $\{(\widehat{\cO}_\rho)_{ij}\}$ that are labeled by irreducible representations $\rho \in \Rep(G)$ and indices $i, j = 1, 2, \cdots, \dim(\rho)$.
We note that the dimension of $\cA$ is equal to the order of $G$.
Since the $L$-symbols for the $\Rep(G)$ symmetry of the cluster state are given by Eq.~\eqref{eq: L for Rep(G) simplified}, the multiplication of operators $\{(\widehat{\cO}_\rho)_{ij}\}$ can be written explicitly as
\begin{equation}
(\widehat{\cO}_{\rho_1})_{i i^{\prime}} (\widehat{\cO}_{\rho_2})_{j j^{\prime}} = \sum_{\rho_3, \mu} \sum_{k, k^{\prime}} ((\phi_{\rho_1 \rho_2}^{\rho_3})_{\mu})_{ij}^k ((\overline{\phi}_{\rho_1 \rho_2}^{\rho_3})_{\mu})_{i^{\prime} j^{\prime}}^{k^{\prime}} (\widehat{O}_{\rho_3})_{k k^{\prime}}.
\label{eq: Rep(G) interface}
\end{equation}
From the above equation, we find that $(\widehat{\cO}_{\rho})_{ij} = \rho(g)_{ij}$ gives a one-dimensional representation of $\cA$ for any $g \in G$, cf. the first equality of eq.~\eqref{eq: Schur orthogonality}.
These $|G|$ one-dimensional representations exhaust all the irreducible representations of $\cA$ because a $|G|$-dimensional semisimple algebra can have at most $|G|$ irreducible representations.
The product of one-dimensional representations $\{(\widehat{\cO}_{\rho})_{ij} = \rho(g)_{ij}\}$ and $\{(\widehat{\cO}_{\rho})_{ij} = \rho(h)_{ij}\}$ is given by
\begin{equation}
(\widehat{\cO}_{\rho})_{ij} = \sum_{k} \rho(g)_{ik} \rho(h)_{kj} = \rho(gh)_{ij}.
\end{equation}
Thus, we find that the group of one-dimensional representations of $\cA$ is isomorphic to $G$.
This result can also be derived from the fact that the $G \times \Rep(G)$ cluster state corresponds to the forgetful functor of $\Rep(G)$ as an SPT phase with $\Rep(G)$ symmetry.
Specifically, the discussions at the end of Sec.~\ref{sec: Interface modes} imply that the group of one-dimensional representations of $\cA$ is isomorphic to the group of automorphisms of the forgetful functor.
This group is isomorphic to $G$ due to Tannaka-Krein duality.

\subsubsection{Degenerate interfaces of $\Rep(D_8)$ SPT phases}
\label{sec: Example: Rep(D8) symmetry II}
As another example, we consider interfaces of different SPT phases with $\Rep(D_8)$ symmetry.
Here, $\Rep(D_8)$ is a fusion category consisting of five simple objects $\{(a, b), D \mid a, b = 0, 1\}$, which obey the following non-invertible fusion rules:
\begin{equation}
(a, b) \otimes (c, d) = (a+c, b+d), \quad (a, b) \otimes D = D \otimes (a, b) = D, \quad D \otimes D = \bigoplus_{a, b =0, 1} (a, b),
\end{equation}
where $a+c$ and $b+d$ are defined modulo 2.
The $\Rep(D_8)$ symmetry admits three SPT phases \cite{Thorngren:2019iar}, which we denote by $\SPT_{\nu_1}$, $\SPT_{\nu_2}$, and $\SPT_{\nu_3}$.
As we derive in Appendix~\ref{sec: Interface algebra for Tambara-Yamagami categories}, the interface algebra between any two of them has two two-dimensional irreducible representations.
This means that there are two different interface modes, each of which is two-fold degenerate.
The symmetry transformations of these interface modes are determined by the irreducible representations of the interface algebra.
See Appendix~\ref{sec: Interface algebra for Tambara-Yamagami categories} for a detailed analysis of the interface algebras and their representations.

In this subsection, using the results in Appendix~\ref{sec: Interface algebra for Tambara-Yamagami categories}, we discuss the action of the $\Rep(D_8)$ symmetry on a periodic chain consisting of two regions occupied by different SPT phases.
For concreteness, we suppose that half of the chain is occupied by $\SPT_{\nu_1}$ and the other half is occupied by $\SPT_{\nu_2}$.
The other cases can also be studied similarly.
We denote the left and right interfaces by $\cI_{\nu_1, \nu_2}$ and $\cI_{\nu_2, \nu_1}$ respectively.
In the presence of these interfaces, the symmetry action on a periodic chain is given by Eq.~\eqref{eq: periodic chain with interface}, that is,
\begin{equation}
\widehat{\cO}_x \ket{\psi_L, \psi_R} = \sum_{1 \leq i, j \leq \dim(x)} \ket{(\widehat{\cO}^{\cI_{\nu_1, \nu_2}}_x)_{ij} \psi_L, (\widehat{\cO}^{\cI_{\nu_2, \nu_1}}_x)_{ji} \psi_R},
\label{eq: Rep(D8) action on periodic chain}
\end{equation}
where $\psi_L$ and $\psi_R$ are the left and right interface modes.
We suppose that $\psi_L$ and $\psi_R$ are in irreducible representations $\rho_L$ and $\rho_R$ of the interface algebras $\cA_{\cI_{\nu_1, \nu_2}}$ and $\cA_{\cI_{\nu_2, \nu_1}}$.
Specifically, $\rho_L$ and $\rho_R$ are either of the two irreducible representations $R_e$ and $R_d$ listed in Table~\ref{tab: reps nu1nu2}.
Since $\rho_L$ and $\rho_R$ are two-dimensional, the interface modes $\psi_L$ and $\psi_R$ take values in $\C^2$.
We denote the Pauli operators acting on them by $X$, $Y$, and $Z$.
Based on the data in Table~\ref{tab: reps nu1nu2}, one can compute the symmetry action~\eqref{eq: Rep(D8) action on periodic chain} for all choices of $(\rho_L, \rho_R)$.

\begin{itemize}
\item When $(\rho_L, \rho_R) = (R_e, R_e) \text{ or } (R_d, R_d)$, the symmetry action~\eqref{eq: Rep(D8) action on periodic chain} reduces to
\begin{equation}
\begin{aligned}
\widehat{\cO}_{(0, 0)} \ket{\psi_L, \psi_R} & = \widehat{\cO}_{(1, 1)} \ket{\psi_L, \psi_R} = \ket{\psi_L, \psi_R}, \\
\widehat{\cO}_{(0, 1)} \ket{\psi_L, \psi_R} & = \widehat{\cO}_{(1, 0)} \ket{\psi_L, \psi_R} = Z^L Z^R \ket{\psi_L, \psi_R}, \\
\widehat{\cO}_D \ket{\psi_L, \psi_R} & = \frac{1}{2}(S_+^L S_+^R + S_-^L S_-^R) \ket{\psi_L, \psi_R},
\end{aligned}
\end{equation}
where the operators with superscript $L/R$ act on the left/right interface and $S_{\pm} := X \pm iY$.
Remarkably, the local factor $Z^{L/R}$ of the invertible operators $\cO_{(0, 1)}$ and $\cO_{(1, 0)}$ anti-commutes with the local factors $S_{\pm}^{L/R}$ of the non-invertible operator $\cO_D$.
This anti-commutation relation of local factors was originally observed in \cite{Seifnashri:2024dsd} and can be regarded as a manifestation of an anomaly at the interface.
We note that the global symmetry operators still obey the correct fusion rules
\begin{equation}
\widehat{\cO}_{(a, b)} \widehat{\cO}_{(c, d)} = \widehat{\cO}_{(a+c, b+d)}, \quad \widehat{\cO}_{(a, b)} \widehat{\cO}_D = \widehat{\cO}_D \widehat{\cO}_{(a, b)} = \widehat{\cO}_D, \quad \widehat{\cO}_D \widehat{\cO}_D = \sum_{a, b = 0, 1} \widehat{\cO}_{(a, b)}.
\label{eq: global Rep(D8) fusion rules}
\end{equation}
In particular, $\widehat{\cO}_{(a, b)}$ and $\widehat{\cO}_D$ commute with each other.
\item When $(\rho_L, \rho_R) = (R_e, R_d) \text{ or } (R_d, R_e)$, the symmetry action~\eqref{eq: Rep(D8) action on periodic chain} reduces to
\begin{equation}
\begin{aligned}
\widehat{\cO}_{(0, 0)} \ket{\psi_L, \psi_R} & = - \widehat{\cO}_{(1, 1)} \ket{\psi_L, \psi_R} = \ket{\psi_L, \psi_R}, \\
\widehat{\cO}_{(0, 1)} \ket{\psi_L, \psi_R} & = - \widehat{\cO}_{(1, 0)} \ket{\psi_L, \psi_R} = Z^L Z^R \ket{\psi_L, \psi_R}, \\
\widehat{\cO}_D \ket{\psi_L, \psi_R} & = 0.
\end{aligned}
\end{equation}
This result is consistent with the fusion rules \eqref{eq: global Rep(D8) fusion rules}.
\end{itemize}

\section{Parameterized family and non-abelian Thouless pump}
\label{sec: Parameterized family and Thouless pump}
In this section, we study families of SPT states parameterized by a circle $S^1$.
We first define an invariant of an $S^1$-parameterized family of $\cC$-symmetric injective MPSs, where $\cC$ is a fusion category.
The invariant turns out to be an automorphism of a fiber functor $F: \cC \rightarrow \Vect$.
Physically, this invariant can be understood as a generalized Thouless pump.
Indeed, as we will see later, an adiabatic evolution of a $\cC$-symmetric injective MPS along a non-trivial family gives rise to a pump of a non-degenerate interface mode.\footnote{We recall that there is a one-to-one correspondence between automorphisms of a fiber functor and non-degenerate interface modes, see Sec.~\ref{sec: Interface modes}.}
As an illustrative example, we provide concrete lattice models of $S^1$-parameterized families of SPT states with $\Rep(G)$ symmetry.
In what follows, a family of SPT states parameterized by $X$ will be sometimes abbreviated as an $X$-family.

\subsection{Invariants of one-parameter families of SPT phases}
\label{sec: Invariants of one-parameter families of SPT phases}
Let $\{A(\theta) \mid \theta \in [0, 2\pi]\}$ be a family of $\cC$-symmetric injective MPS tensors parameterized by an interval $[0, 2\pi]$.
The MPS tensor $A(\theta)$ is supposed to be continuous in $\theta$.
In particular, the physical Hilbert space $\cH$ and the bond Hilbert space $V=\C^D$ do not depend on $\theta$.
We also suppose that the state $\ket{A(\theta)}$ associated with the MPS tensor $A(\theta)$ is $2\pi$-periodic, i.e.,
\begin{equation}
\ket{A(0)} = \ket{A(2\pi)},
\label{eq: circle condition}
\end{equation}
which means that the parameter space is a circle $S^1$.
We emphasize that $A(\theta)$ is not necessarily $2\pi$-periodic due to the gauge redundancy that we will discuss shortly.
Since $A(\theta)$ is $\cC$-symmetric, there exists a set of action tensors $\{(\phi(\theta)_x)_i \mid x \in \Simp(\cC), ~ i = 1, 2, \cdots, \dim(x)\}$ that satisfies Eq.~\eqref{eq: action tensor} for each $\theta \in [0, 2\pi]$, that is, we have
\begin{equation}
\adjincludegraphics[scale=1,trim={10pt 10pt 10pt 10pt},valign = c]{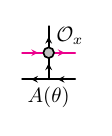}\;
= \sum_{i} \;\adjincludegraphics[scale=1,trim={10pt 10pt 10pt 10pt},valign = c]{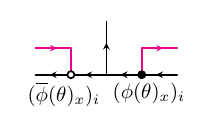}\;.
\label{eq: action tensor theta}
\end{equation}
The $L$-symbols computed from the action tensors $\{(\phi(\theta)_x)_i \}$ are denoted by $\{L(\theta)^{xy}_{(i, j), (z; k, \mu)}\}$.
Since $A(\theta)$ is continuous, one can take $\phi(\theta)$ and $L(\theta)$ to be continuous in $\theta \in [0, 2\pi]$ without loss of generality.

The MPS tensor $A(\theta)$ has a gauge ambiguity because the state $\ket{A(\theta)}$ is invariant under the following gauge transformation:
\begin{equation}
A(\theta) \rightarrow {}_{\psi(\theta)} A(\theta) = \adjincludegraphics[scale=1,trim={10pt 10pt 10pt 10pt},valign = c]{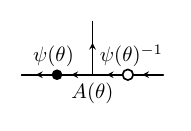},
\label{eq: psi A}
\end{equation}
where $\psi(\theta)$ is an arbitrary automorphism of $V$, i.e., a $D \times D$ invertible matrix.
This gauge transformation induces the following transformation of $\{(\phi(\theta)_x)_i\}$ so that the symmetricity condition~\eqref{eq: action tensor theta} holds for any $\theta \in [0, 2\pi]$:
\begin{equation}
(\phi(\theta)_x)_i \rightarrow {}_{\psi(\theta)}(\phi(\theta)_x)_i = \;\adjincludegraphics[scale=1,trim={10pt 10pt 10pt 10pt},valign = c]{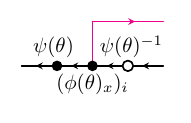}\;, \qquad 
(\overline{\phi}(\theta)_x)_i \rightarrow {}_{\psi(\theta)}(\overline{\phi}(\theta)_x)_i = \;\adjincludegraphics[scale=1,trim={10pt 10pt 10pt 10pt},valign = c]{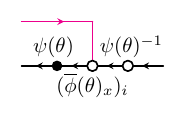}
\label{eq: psi phi}
\end{equation}
The action tensors themselves also have a gauge ambiguity because Eq.~\eqref{eq: action tensor theta} is invariant under the change of a basis
\begin{equation}
(\phi(\theta)_x)_i \rightarrow {}^{U(\theta)}(\phi(\theta)_x)_i = \sum_j (U(\theta)_x)_{ij} (\phi(\theta)_x)_j,
\label{eq: U phi}
\end{equation}
where $U(\theta)_x$ is a $\dim(x) \times \dim(x)$ invertible matrix.
Accordingly, the $L$-symbols $L(\theta)^{xy}_{(i, j), (z; k, \mu)}$ are also transformed as in Eq.~\eqref{eq: L gauge}, which we reproduce here:
\begin{equation}
L(\theta)^{xy}_{(i, j), (z; k, \mu)} \rightarrow {}^{U(\theta)}L(\theta)^{xy}_{(i, j), (z; k, \mu)} = \sum_{i^{\prime}, j^{\prime}, k^{\prime}} (U(\theta)_x)_{i i^{\prime}} (U(\theta)_y)_{j j^{\prime}} (U(\theta)_z)^{-1}_{k^{\prime} k} L(\theta)^{xy}_{(i^{\prime}, j^{\prime}), (z; k^{\prime}, \mu)}.
\end{equation}
We note that the transformation \eqref{eq: psi phi} does not affect the $L$-symbols because $\psi(\theta)$ and $\psi(\theta)^{-1}$ are canceled out in the definition \eqref{eq: L benz} of the $L$-symbols.

Now, let us define an invariant of an $S^1$-parameterized family of SPT states with symmetry $\cC$.
We provide three different ways to define the invariant for a given $S^1$-parameterized family $\{A(\theta), \phi(\theta), L(\theta)\}$.

\begin{itemize}
\item {\bf Non-periodicity of $\phi(\theta)$.}
We first define the invariant of a one-parameter family using the non-periodicity of the action tensors $\phi(\theta)$ in a specific gauge.
To this end, we first perform a gauge transformation $\psi(\theta) \in \Aut(V)$ so that the MPS tensor $A(\theta)$ becomes $2\pi$-periodic.
Such a gauge transformation always exists because $A(0)$ and $A(2\pi)$ are gauge equivalent due to Eq.~\eqref{eq: circle condition}.
Furthermore, we choose the gauge transformation $U(\theta)$ so that the $L$-symbols $L(\theta)$ become constant in $\theta$.
This choice of a gauge is always possible because the equivalence class of $L(\theta)$ does not depend on $\theta$.\footnote{In other words, we are in the same SPT phase throughout the parameter space $S^1$.}
In the above gauge, the action tensors $\phi(\theta)$ are not necessarily $2\pi$-periodic.
Nevertheless, since we have $A(0) = A(2\pi)$, the non-periodicity of $\phi(\theta)$ is at most a gauge transformation
\begin{equation}
(\phi(2\pi)_x)_i = \sum_j (\eta_x)_{ij} (\phi(0)_x)_j,
\label{eq: pump inv 1}
\end{equation}
where $\eta_x$ is a $\dim(x) \times \dim(x)$ invertible matrix.
Moreover, since the $L$-symbols are constant in $\theta$, the gauge transformation $\eta$ preserves the $L$-symbols, meaning that $\eta$ defines an automorphism of the corresponding fiber functor $F$, cf. equations \eqref{eq: eta} and \eqref{eq: natural isom elements}.
This automorphism $\eta$ can be regarded as an invariant of an $S^1$-parameterized family of injective MPSs.

\item {\bf Non-periodicity of $A(\theta)$.}
We can also define an invariant of an $S^1$-parameterized family by using the non-periodicity of the MPS tensor $A(\theta)$ in another gauge.
To see this, we first choose a gauge so that $A(\theta)$ is $2\pi$-periodic and $L(\theta)$ is constant as above.
We then perform a gauge transformation $\psi(\theta) \in \Aut(V)$ so that the action tensors $\phi(\theta)$ become $2\pi$-periodic.
A straightforward computation shows that the $2\pi$-periodicity of the action tensor after the gauge transformation is equivalent to\footnote{The action tensors in Eq.~\eqref{eq: periodicity of phi} are those before the gauge transformation.}
\begin{equation}
\adjincludegraphics[valign=c, scale=1, trim={10pt 10pt 10pt 10pt}]{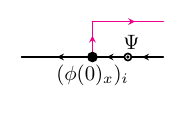}\;
= \sum_{j} (\eta_x)_{ij} \;\adjincludegraphics[valign=c, scale=1, trim={10pt 10pt 10pt 10pt}]{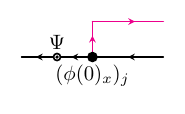}\;,
\label{eq: periodicity of phi}
\end{equation}
where we defined $\Psi := \psi(0)^{-1} \psi(2\pi)$, and $\eta_x$ is a $\dim(x) \times \dim(x)$ invertible matrix defined in Eq.~\eqref{eq: pump inv 1}.
The above equation, together with the orthogonality relation~\eqref{eq: action orthogonality}, implies that $\Psi$ has to satisfy
\begin{equation}
\adjincludegraphics[scale=1,trim={10pt 10pt 10pt 10pt},valign = c]{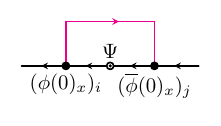}\; = (\eta_x)_{ij} \Psi.
\label{eq: pump inv 2}
\end{equation}
That is, $\Psi$ is a one-dimensional representation of the self-interface algebra.
Conversely, if $\Psi$ satisfies Eq.~\eqref{eq: pump inv 2}, it also satisfies Eq.~\eqref{eq: periodicity of phi} due to the weak completeness relation~\eqref{eq: weak completeness}.
Therefore, the action tensors become $2\pi$-periodic after the gauge transformation $\psi(\theta)$ if and only if $\Psi$ satisfies Eq.~\eqref{eq: pump inv 2}.
We denote the solution to this equation by $\Psi = \psi_{\eta}$, which always exists and is unique up to scalar multiplication due to the one-to-one correspondence between one-dimensional representations of the self-interface algebra and automorphisms of a fiber functor.
After the above gauge transformation, the MPS tensor $A(\theta)$ is no longer $2\pi$-periodic in general, while the $L$-symbols remain constant in $\theta$.
The non-periodicity of $A(\theta)$ after the gauge transformation is given by $\psi_{\eta}^{\prime} = \psi(0) \psi_{\eta} \psi(0)^{-1}$:
\begin{equation}
A(2\pi) = \adjincludegraphics[scale=1,trim={10pt 10pt 10pt 10pt},valign = c]{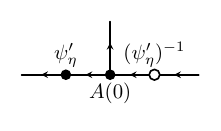}\;.
\label{eq: non-periodic A}
\end{equation}
We note that $\psi_{\eta}$ and $\psi_{\eta}^{\prime}$ are equivalent one-dimensional representations, both of which are associated with the same automorphism $\eta$.
Any $2\pi$-periodic gauge transformation cannot change this equivalence class because such a gauge transformation $\psi^{\prime}$ acts on $\psi_{\eta}^{\prime}$ by conjugation: $\psi_{\eta}^{\prime} \mapsto \psi^{\prime}(0) \psi_{\eta}^{\prime} \psi^{\prime}(0)^{-1}$.
Thus, the equivalence class of the one-dimensional representation $\psi_{\eta}^{\prime}$ can be regarded as an invariant of an $S^1$-parameterized family of injective MPSs.
The invariant defined here is related to the one defined by the non-periodicity of $\phi(\theta)$ (cf. Eq.~\eqref{eq: pump inv 1}) via the one-to-one correspondence between one-dimensional representations of the interface algebra and automorphisms of a fiber functor.

\item {\bf Automorphism of $L(\theta)$.}
A yet another way to define the invariant of an $S^1$-parameterized family is to look into $L(\theta)$ in a gauge such that both $A(\theta)$ and $\phi(\theta)$ are $2\pi$-periodic.
In such a gauge, the $L$-symbols are not necessarily constant in $\theta$.
More specifically, $L(\theta)$ is given by a gauge transformation of $L(0)$ as
\begin{equation}
L(\theta)^{xy}_{(i, j), (z; k, \mu)} = \sum_{i^{\prime}, j^{\prime}, k^{\prime}} (U(\theta)_x)_{i i^{\prime}} (U(\theta)_y)_{j j^{\prime}} (U(\theta)_z)^{-1}_{k^{\prime} k} L(0)^{xy}_{(i^{\prime}, j^{\prime}), (z; k^{\prime}, \mu)},
\end{equation}
where $U(\theta)_x$ is a $\dim(x) \times \dim(x)$ invertible matrix.
Without loss of generality, we choose $(U(0)_x)_{ij} = \delta_{ij}$ and take $U(\theta)_x$ to be continuous in $\theta$.
Since $L(\theta)$ is $2\pi$-periodic due to the periodicity of $A(\theta)$ and $\phi(\theta)$, the gauge transformation $U(\theta)$ at $\theta = 2\pi$ defines an automorphism of a fiber functor corresponding to $L(0)$.
In particular, at $\theta = 2\pi$, we have
\begin{equation}
U(2\pi)_x = \eta_x^{-1},
\label{eq: pump inv 3}
\end{equation}
where $\eta_x$ is the same as the one in Eqs.~\eqref{eq: pump inv 1} and \eqref{eq: pump inv 2}.
Thus, the automorphism induced at $\theta = 2\pi$ can be identified with the invariant of an $S^1$-parameterized family.

\end{itemize}

From the above discussions, we find that one can associate an automorphism of a fiber functor to a given $S^1$-parameterized family of $\cC$-symmetric injective MPSs.
We call this automorphism a pump invariant due to its relation to a generalized Thouless pump, which will be detailed in Sec.~\ref{sec: Thouless pump of non-degenerate interface modes}.
We expect that the pump invariant is a complete invariant of a family, meaning that $S^1$-parameterized families of SPT states with symmetry $\cC$ are classified by automorphisms of a fiber functor.

\vspace{10pt}
\noindent{\bf Group of one-parameter families.}
The set of $S^1$-parameterized families that share the same MPS at $\theta = 0$ is equipped with a group structure.
Specifically, given a pair of families $\{A_1(\theta)\}$ and $\{A_2(\theta)\}$ such that $\ket{A_1(0)} = \ket{A_2(0)}$, their product $\{A(\theta)\}$ is defined by
\begin{equation}
A(\theta) = 
\begin{cases}
A_1(2\theta) \quad & 0 \leq \theta \leq \pi, \\
A_2(2\theta - 2\pi) \quad & \pi \leq \theta \leq 2\pi.
\end{cases}
\label{eq: product family}
\end{equation}
Here, we chose a gauge such that $A_1(2\pi) = A_2(0)$, which is possible because $\ket{A_1(2\pi)}$ and $\ket{A_2(0)}$ are the same state.
The automorphism (i.e., the pump invariant) associated with the product $\{A(\theta)\}$ is given by the composition of the automorphisms associated with $\{A_1(\theta)\}$ and $\{A_2(\theta)\}$.
This implies that the pump invariant gives rise to a group homomorphism from the group of $S^1$-parameterized families of injective MPSs to the group $\Aut^{\otimes}(F)$ of automorphisms of a fiber functor $F: \cC \rightarrow \Vect$.
We expect that this homomorphism is an isomorphism, meaning that $S^1$-parameterized families of SPT states with symmetry $\cC$ are classified by $\Aut^{\otimes}(F)$ as a group.
We note that $\Aut^{\otimes}(F)$ can be non-abelian in general.
In the case of finite group symmetry $\cC = \Vect_G$, we have $\Aut^{\otimes}(F) \cong \mathrm{H}^1_{\text{gp}}(G, \mathrm{U}(1))$ irrespective of $F$ (cf. Sec.~\ref{sec: Isomorphisms of fiber functors}).
This agrees with the known classification of $S^1$-parameterized families of SPT phases with $G$ symmetry~\cite{Thorngren:1612.00846, Hermele2021CMSA, Thorngren2021YITP, Shiozaki:2021weu, Bachmann:2022bhx}.

\vspace{10pt}
\noindent{\bf Abelianization of the pump invariant.}
As we saw in Sec.~\ref{sec: Abelianization of the nonabelian factor system}, 
we can abelianize the non-abelian factor systems and define a topological invariant that takes values in the third Hochschild cohomology $\cohoZ{3}{\cC}$.
Similarly, we can abelianize the pump invariant.
To define the abelianization, we first choose a gauge where $A(\theta)$ and $\phi(\theta)$ are $2\pi$-periodic.
Note that in this gauge, $L(\theta)$ is a $2\pi$-periodic function that is not necessarily constant.
As we saw around Eqs.~\eqref{eq: JMPS=Omega} and \eqref{eq: L MPS},
the $L$-symbols can be identified with the matrix elements of the non-abelian factor system.
Since the $L$-symbols are now parameterized by $\theta \in S^1$, the corresponding non-abelian factor system $\Omega_{x, y}(\theta)$ is also parameterized by $S^1$.
Then, we define the abelianization of the pump invariant by the winding number of $\det(\Omega_{x, y}(\theta))$:
\begin{equation}
    n_{x,y}:=\frac{1}{2\pi i}\int d\log\det(\Omega_{x,y}(\theta)).
\label{eq: winding}
\end{equation}

This invariant takes values in $\cohoZ{2}{\cC}$. 
Let us confirm this property.
First of all, the winding number $n_{x,y}$ is quantized to integers due to the periodicity of $\Omega_{x, y}(\theta)$. 
In addition, $n_{x,y}$ satisfies the cocycle condition $(\delta n)_{x,y,z}=0$, which follows from the cocycle condition for $\Omega_{x,y}$ described in Eq.~\eqref{eq: nonabelian cocycle}. 
Furthermore, $n_{x, y}$ is defined up to coboundary because gauge transformations can shift it by a coboundary.
More concretely, a gauge transformation that preserves the $2\pi$-periodicity of $A(\theta)$ and $\phi(\theta)$ changes the non-abelian factor system as
\begin{equation}
\Omega_{x,y}(\theta) \mapsto \left(\bigoplus_{z \in x \otimes y} f_z(\theta)\right)^{-1} \Omega_{x,y}(\theta) \left(f_x(\theta) \otimes f_y(\theta)\right),
\end{equation}
where $f_x(\theta)$ is a $2\pi$-periodic $\dim(x) \times \dim(x)$ invertible matrix, cf. Eq.~\eqref{eq: Omega gauge transformation}.
Under this gauge transformation, the winding number $n_{x, y}$ changes as
\begin{equation}
n_{x,y}\mapsto n_{x,y} + \dim(y)\cdot m_{x}-\sum_{z\in x\otimes y}m_{z} +\dim(x)\cdot m_{y} = n_{x, y} + (\delta m)_{x, y},
\end{equation}
where $m_{x}\in\mathbb{Z}$ is the winding number of $\det(f(\theta)_{x})$.
Thus, the abelianization $[n_{x, y}]$ of the pump invariant takes values in $\cohoZ{2}{\cC}$.
In general, the map $\Aut^{\otimes}(F) \rightarrow \cohoZ{2}{\cC}$ defined by $\{\Omega_{x, y}(\theta)\} \mapsto [n_{x, y}]$ is neither surjective nor injective.

The term ``abelianization" is justified because the composition law of $[n_{x, y}]$ is abelian.
Indeed, for the product~\eqref{eq: product family} of two families $\{A_1(\theta)\}$ and $\{A_2(\theta)\}$, the winding number $n_{x, y}$ can be computed as
\begin{equation}
\begin{aligned}
n_{x, y}
& = \frac{1}{2\pi i} \int_0^{\pi} d\theta \frac{d}{d\theta} \log\det(\Omega^1_{x,y}(2\theta)) + \frac{1}{2\pi i} \int_{\pi}^{2\pi} d\theta \frac{d}{d\theta} \log\det(\Omega^2_{x,y}(2\theta - 2\pi)) \\
& = \frac{1}{2\pi i} \int d\log\det(\Omega^1_{x,y}(\theta)) + \frac{1}{2\pi i} \int d\log\det(\Omega^2_{x,y}(\theta)) \\
&= n^1_{x, y} + n^2_{x, y}.
\end{aligned}
\end{equation}
Here, $\Omega_{x, y}^i(\theta)$ and $n_{x, y}^i$ are the non-abelian factor system for $\{A^i(\theta)\}$ and its winding number.
Put differently, taking the winding number of the non-abelian factor system gives a homomorphism from $\Aut^{\otimes}(F)$ to its abelianization $\Aut^{\otimes}(F)/[\Aut^{\otimes}(F),  \Aut^{\otimes}(F)]$.

\subsection{Thouless pump of non-degenerate interface modes}
\label{sec: Thouless pump of non-degenerate interface modes}
\subsubsection{Adiabatic evolution of $\cC$-symmetric injective MPS}
The pump invariant defined in the previous subsection can be interpreted as a generalized Thouless pump.
To see this, let us consider the following injective MPS on a periodic chain:
\begin{equation}
\ket{A(0; \theta; 0)} = \adjincludegraphics[scale=1,trim={10pt 10pt 10pt 10pt},valign = c]{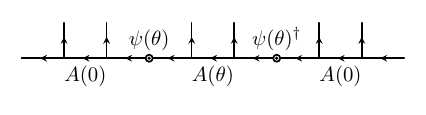}.
\label{eq: adiabatic evolution}
\end{equation}
The state in the middle of the chain is adiabatically evolved along an $S^1$-parameterized family $\{A(\theta)\}$.
The interface tensor $\psi(\theta) \in \Aut(V)$, which we choose to be unitary, is inserted so that the state $\ket{A(0; \theta; 0)}$ is $\cC$-symmetric for all $\theta \in [0, 2\pi]$, i.e.,
\begin{equation}
\widehat{\cO}_x \ket{A(0; \theta; 0)} = \dim(x) \ket{A(0; \theta; 0)}, \quad \forall x \in \cC.
\label{eq: symmetric 1}
\end{equation}
We choose $\psi(0) = \text{id}_V$ so that the initial state is $\ket{A(0)}$.
The condition \eqref{eq: symmetric 1} can be expressed in terms of diagrams as
\begin{equation}
\sum_{i,j} \adjincludegraphics[scale=1,trim={10pt 10pt 10pt 10pt},valign = c]{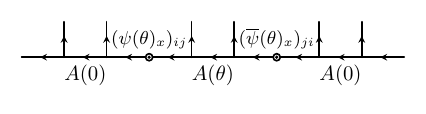} = \dim(x) \;\adjincludegraphics[scale=1,trim={10pt 10pt 10pt 10pt},valign = c]{tikz/out/pump_interpretation.pdf}\;,
\label{eq: symmetric 2}
\end{equation}
where $(\psi(\theta)_x)_{ij}$ and $(\overline{\psi}(\theta)_x)_{ji}$ are defined by
\begin{equation}
(\psi(\theta)_x)_{ij} = \adjincludegraphics[scale=1,trim={10pt 10pt 10pt 10pt},valign = c]{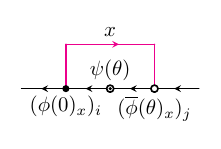}, \quad 
(\overline{\psi}(\theta)_x)_{ji} = \adjincludegraphics[scale=1,trim={10pt 10pt 10pt 10pt},valign = c]{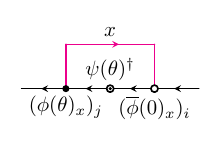}. 
\end{equation}
In the following, we will show that the tensor $\psi(\theta)$ that satisfies Eq.~\eqref{eq: symmetric 2} always exists.

To begin with, let us show that the symmetricity condition \eqref{eq: symmetric 2} is satisfied if and only if
\begin{equation}
(\psi(\theta)_x)_{ij} = (U(\theta)_x)_{ij} \psi(\theta), \quad (\overline{\psi}(\theta)_x)_{ij} = (U(\theta)_x)^{-1}_{ij} \psi(\theta)^{\dagger},
\label{eq: symmetric 3}
\end{equation}
where $U(\theta)_x$ is a $\dim(x) \times \dim(x)$ invertible matrix.\footnote{The two equalities in Eq.~\eqref{eq: symmetric 3} are equivalent to each other due to the weak completeness relation~\eqref{eq: weak completeness}.}
The ``if" part can be readily seen by substituting Eq.~\eqref{eq: symmetric 3} into the left-hand side of Eq.~\eqref{eq: symmetric 2}.
The ``only if" part follows from the injectivity of the MPS tensors $A(0)$ and $A(\theta)$.
More concretely, since $A(0)$ and $A(\theta)$ are injective, equation~\eqref{eq: symmetric 2} is satisfied only if
\begin{equation}
\sum_{i, j} (\psi(\theta)_x)_{ij} A (\overline{\psi}(\theta)_x)_{ji} = \dim(x) \psi(\theta) A \psi(\theta)^{\dagger}
\label{eq: injectivity}
\end{equation}
for any $D \times D$ matrix $A$ with $D$ being the bond dimension of the MPS.
In particular, in a gauge where $(\overline{\phi}(\theta)_x)_i = (\phi(\theta)_x)^{\dagger}_i$, the above equation can be written as
\begin{equation}
\sum_{i, j} \psi(\theta)^{\dagger}(\psi(\theta)_x)_{ij} A [\psi(\theta)^{\dagger} (\psi(\theta)_x)_{ij}]^{\dagger} = \dim(x) A,
\label{eq: id Kraus}
\end{equation}
where we used $(\overline{\psi}(\theta)_x)_{ji} = (\psi(\theta)_x)_{ij}^{\dagger}$, which follows from our gauge choice.
On the other hand, in the same gauge, the operators of the form $\psi(\theta)^{\dagger} (\psi(\theta)_x)_{ij}$ satisfy the following Kraus condition:
\begin{equation}
\sum_{i, j} [\psi(\theta)^{\dagger} (\psi(\theta)_x)_{ij}]^{\dagger} \psi(\theta)^{\dagger} (\psi(\theta)_x)_{ij}
= \sum_{i, j} (\overline{\psi}(\theta)_x)_{ji} \psi(\theta) \psi(\theta)^{\dagger} (\psi(\theta)_x)_{ij}
= \dim(x) \text{id}_V.
\label{eq: Kraus}
\end{equation}
Here, the second equality follows from the orthogonality~\eqref{eq: action orthogonality} and weak completeness~\eqref{eq: weak completeness} of the action tensors.
Equations \eqref{eq: id Kraus} and \eqref{eq: Kraus} show that $\left\{\psi(\theta)^{\dagger} (\psi(\theta)_x)_{ij} / \sqrt{\dim(x)} ~ \middle| ~ i, j = 1, 2, \cdots, \dim(x)\right\}$ in this gauge are Kraus operators that implement the identity operation on any $D \times D$ matrix $A$.
Such Kraus operators are known to be proportional to the identity.\footnote{See, e.g., \url{https://quantumcomputing.stackexchange.com/questions/25917/what-are-the-possible-kraus-operators-of-the-identity-channel}}
Therefore, there exists a set of complex numbers $\{(U(\theta)_x)_{ij}\}$ that satisfies
\begin{equation}
(\psi(\theta)_x)_{ij} = (U(\theta)_x)_{ij} \psi(\theta).
\end{equation}
The $\dim(x) \times \dim(x)$ matrix $U(\theta)_x$ has to be unitary (in particular, invertible) so that Eq.~\eqref{eq: symmetric 2} holds.
In a general gauge, $U(\theta)_x$ may not be unitary but is invertible.
Thus, we find that the state $\ket{A(0; \theta; 0)}$ is $\cC$-symmetric if and only if $\psi(\theta)$ satisfies Eq.~\eqref{eq: symmetric 3}.

Due to the orthogonality relation~\eqref{eq: action orthogonality} and the weak completeness relation~\eqref{eq: weak completeness}, equation~\eqref{eq: symmetric 3} can also be written as
\begin{equation}
(\phi(\theta)_x)_{i}= \sum_j (U(\theta)_x)_{ij}^{-1} \;\adjincludegraphics[scale=1,trim={10pt 10pt 10pt 10pt},valign = c]{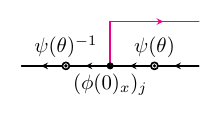}.
\label{eq: symmetric 4}
\end{equation}
Let us argue that $\psi(\theta)$ that satisfies the above equation always exists.
To this end, we first define operators $(\phi(\theta)_x)_i^{\alpha}$ by
\begin{equation}
(\phi(\theta)_x)_i^{\alpha} = \;\adjincludegraphics[scale=1,trim={10pt 10pt 10pt 10pt},valign = c]{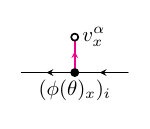}\;,
\end{equation}
where $\{v_x^{\alpha} \mid \alpha = 1, 2, \cdots, \dim V_x\}$ is a basis of the dual of the bond Hilbert space $V_x$.
The operators $\{(\phi(\theta)_x)_i^{\alpha} \in \End(V) \mid x \in \Simp(\cC), ~ i = 1, 2, \cdots, \dim(x),~ \alpha = 1, 2, \cdots, \dim V_x\}$ form a representation of a finite dimensional algebra whose multiplication is given by the composition $(\phi(\theta)_x)_i^{\alpha} (\phi(\theta)_y)_j^{\beta} \in \End(V)$. 
The structure constant of this algebra is determined by the $L$-symbols $L(\theta)$.
In particular, the algebra structure is independent of $\theta$ if we choose a gauge where the $L$-symbols are constant in $\theta$.
Since everything is supposed to be continuous in $\theta$, the equivalence class of the representation $\{(\phi(\theta)_x)_i^{\alpha}\}$ is also continuous in $\theta$.
Thus, if we suppose that the above algebra has only finitely many inequivalent finite dimensional representations, the representation $\{(\phi(\theta)_x)_i^{\alpha}\}$ must be in the same equivalence class for all $\theta$.
Therefore, there exists an automorphism $\psi(\theta) \in \Aut(V)$ such that
\begin{equation}
(\phi(\theta)_x)_i = \psi(\theta)^{-1} (\phi(0)_x)_i \psi(\theta).
\end{equation}
The above equation leads to Eq.~\eqref{eq: symmetric 4} if we perform a gauge transformation of the action tensor $(\phi(\theta)_x)_i \rightarrow \sum_j (U(\theta)_x)_{ij} (\phi(\theta)_x)_j$, where $U(\theta)_x$ is an invertible $\dim(x) \times \dim(x)$ matrix such that $U(0)_x = 1$.
Thus, there always exists $\psi(\theta)$ that fits into Eq.~\eqref{eq: symmetric 4}, or equivalently, Eq.~\eqref{eq: symmetric 2}.
In other words, one can always find the interface tensor $\psi(\theta)$ in Eq.~\eqref{eq: adiabatic evolution} that makes the state $\ket{A(0; \theta; 0)}$ symmetric under $\cC$.

\subsubsection{Non-abelian Thouless pump}
Now, let us show that changing $\theta$ from $0$ to $2\pi$ pumps a non-degenerate interface mode.
To this end, we choose a gauge in which the action tensors $\{(\phi(\theta)_x)_i\}$ do not depend on $\theta$.
This choice of a gauge is possible due to Eq.~\eqref{eq: symmetric 4}.
We note that $\psi(2\pi)$ is the identity in this gauge.
Since the action tensors $(\phi(\theta)_x)_i$ are $2\pi$-periodic and the $L$-symbols $L(\theta)$ are constant in $\theta$, the MPS tensor $A(\theta)$ is not necessarily $2\pi$-periodic.
Specifically, the MPS tensors at $\theta = 0$ and $\theta = 2\pi$ are related by a gauge transformation
\begin{equation}
A(2\pi) = \adjincludegraphics[scale=1,trim={10pt 10pt 10pt 10pt},valign = c]{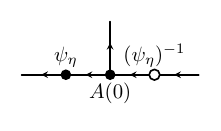},
\label{eq: Thouless pump A}
\end{equation}
where $\psi_{\eta}$ is the pump invariant associated with the $S^1$-parameterized family $\{A(\theta)\}$, see Eq. \eqref{eq: non-periodic A}.\footnote{The tensor $\psi_{\eta}^{\prime}$ in Eq.~\eqref{eq: non-periodic A} is now written as $\psi_{\eta}$.}
Therefore, going around the parameter space $S^1$ changes the initial state $\ket{A(0)} = \ket{A(0; 0; 0)}$ into
\begin{equation}
\ket{A(0; 2\pi; 0)} = \adjincludegraphics[scale=1,trim={10pt 10pt 10pt 10pt},valign = c]{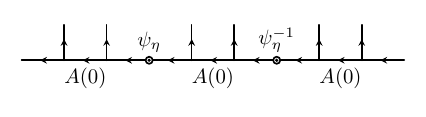}.
\label{eq: Thouless pump}
\end{equation}
This shows that $\ket{A(0; \theta; 0)}$ returns to the initial state $\ket{A(0)}$ at $\theta = 2\pi$ except that non-degenerate interface modes $\psi_{\eta}$ and $\psi_{\eta}^{-1}$ are localized at the interfaces of two regions.
This non-degenerate interface mode is regarded as a generalized Thouless pump.
We recall that $\psi_{\eta}$ is a one-dimensional representation of the self-interface algebra.
Since the one-dimensional representations generally form a non-abelian group, they will be referred to as non-abelian Thouless pumps.

One can also see the generalized Thouless pump in a different gauge where $A(\theta)$ is $2\pi$-periodic and $L(\theta)$ is constant in $\theta$.
In this gauge, the action tensors $\{(\phi(\theta)_x)_i\}$ are not necessarily $2\pi$-periodic.
Specifically, Eq.~\eqref{eq: symmetric 4} implies that the action tensors at $\theta = 0$ and $\theta = 2\pi$ are related by
\begin{equation}
(\phi(2\pi)_x)_i = \adjincludegraphics[scale=1,trim={10pt 10pt 10pt 10pt},valign = c]{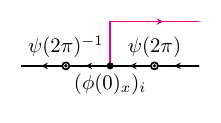}.
\label{eq: pump in another gauge 1}
\end{equation}
Here, we used the fact that $(U(\theta)_x)_{ij} = \delta_{ij}$ in this gauge because $L(\theta)$ is constant in $\theta$.
On the other hand, as we discussed in Sec.~\ref{sec: Invariants of one-parameter families of SPT phases}, the action tensors at $\theta = 0$ and $\theta = 2\pi$ are also related by a gauge transformation
\begin{equation}
(\phi(2\pi)_x)_i = \sum_j (\eta_x)_{ij} (\phi(0)_x)_j,
\label{eq: pump in another gauge 2}
\end{equation}
where $\eta$ is the pump invariant associated with the $S^1$-parameterized family $\{A(\theta)\}$, see Eq.~\eqref{eq: pump inv 1}.
Combining Eqs.~\eqref{eq: pump in another gauge 1} and \eqref{eq: pump in another gauge 2}, we find that $\psi(2\pi) = \psi_{\eta}$ in this gauge.
Therefore, the state $\ket{A(0; 2\pi; 0)}$ is again given by Eq.~\eqref{eq: Thouless pump}.
This is consistent with the fact the Thouless pump does not depend on a gauge.

When $\cC$ is an ordinary finite group symmetry $\Vect_G$, the pumped interface mode $\psi_{\eta}$ is a one-dimensional representation of the group algebra $\C[G]$, i.e., $\psi_{\eta}$ is a $G$-charge.
Thus, in this case, the generalized Thouless pump reduces to the ordinary one.

\subsection{Example: $\Rep(G)$ symmetry}
\label{sec: Example: Rep(G) symmetry II}
As an example, let us consider concrete models of $S^1$-parameterized families of SPT states with $\Rep(G)$ symmetry.
We construct these models using the $G \times \Rep(G)$ cluster state that we reviewed in Sec.~\ref{sec: Example Rep(G) symmetry I}.
To obtain an $S^1$-parameterized family from the cluster state, we first fix an arbitrary group element $g \in G$ and choose any continuous path $\{R_g(\theta) \mid \theta \in [0, 2\pi]\}$ of unitary operators, acting on the physical Hilbert space $\cH = \C^{|G|}$, such that
\begin{equation}
R_g(0) = \text{id}_{\cH}, \quad R_g(2\pi) = R_g.
\end{equation}
Namely, the path $\{R_g(\theta)\}$ interpolates between the identity operator and the right multiplication of $g^{-1}$.\footnote{We recall that $R_g$ denotes the right multiplication of $g^{-1}$ as defined in Eq.~\eqref{eq: Rg Zrho}.}
Such a path always exists because the space of unitary operators on $\cH$ (i.e., the space of $|G| \times |G|$ unitary matrices) is path-connected.
Given a path $\{R_g(\theta)\}$ of unitary operators, one can construct a family $\{A_g(\theta) \mid \theta \in [0, 2\pi]\}$ of MPS tensors by applying $R(\theta)$ to odd sites of the cluster state:
\begin{equation}
A_g(\theta) = \;\adjincludegraphics[scale=1,trim={10pt 10pt 10pt 10pt},valign = c]{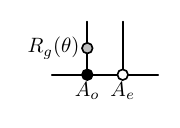}.
\label{eq: cluster pump}
\end{equation}
See Eq.~\eqref{eq: GxRep(G) MPS} for the definition of the MPS tensor of the $G \times \Rep(G)$ cluster state.
We note that the MPS tensors $A_g(0)$ and $A_g(2\pi)$ are gauge equivalent because $G$ is a symmetry of the cluster state, cf. Eq.~\eqref{eq: cluster symmetry}.
Therefore, the parameter space is a circle $S^1$.
The path of Hamiltonians $\{H_g(\theta)\}$ is explicitly given by
\begin{equation}
H_g(\theta) = \left(\prod_{i: \text{odd}} R_g(\theta)_i \right) H_{\text{cluster}} \left(\prod_{i: \text{odd}} R_g(\theta)_i \right)^{\dagger},
\label{eq: cluster_family}
\end{equation}
where $R_g(\theta)_i$ denotes $R_g(\theta)$ acting on site $i$ and $H_{\text{cluster}}$ is the Hamiltonian of the $G \times \Rep(G)$ cluster model given in \cite{Fechisin:2023dkj}.

The above family $\{A_g(\theta)\}$ preserves the $\Rep(G)$ symmetry at each $\theta \in [0, 2\pi]$, meaning that $\{A_g(\theta)\}$ gives rise to an $S^1$-parameterized family of SPT states with $\Rep(G)$ symmetry.
To compute the pump invariant of this family, we notice that the action tensors $(\phi(\theta)_{\rho})_i$ for the $\Rep(G)$ symmetry does not depend on $\theta$ by construction:
\begin{equation}
\adjincludegraphics[scale=1,trim={10pt 10pt 10pt 10pt},valign = c]{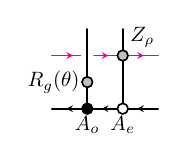}\;
=\;\adjincludegraphics[scale=1,trim={10pt 10pt 10pt 10pt},valign = c]{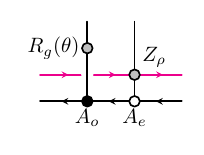}\;
=\;\sum_{i} \adjincludegraphics[scale=1,trim={10pt 10pt 10pt 10pt},valign = c]{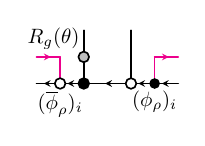}\;,
\end{equation}
where $Z_{\rho}$, $\phi_{\rho}$, and $\overline{\phi}_{\rho}$ are defined by Eqs.~\eqref{eq: Rg Zrho} and \eqref{eq: cluster action tensors}.
Since the action tensors do not depend on $\theta$, the $L$-symbols are constant in $\theta$.
Therefore, the pump invariant of the family appears as the non-periodicity of the MPS tensor $A_g(\theta)$.
As we can see from Eq.~\eqref{eq: cluster symmetry}, the MPS tensor $A_g(\theta)$ is not $2\pi$-periodic when $g \in G$ is not the unit element.
Concretely, $A_g(0)$ and $A_g(2\pi)$ differ by a gauge transformation $R_g^{-1}$ acting on the bond Hilbert space.
Accordingly, the family $\{A_g(\theta)\}$ pumps a non-degenerate interface mode $R_g^{-1}$ when $\theta$ increases from $0$ to $2\pi$.
Here, we note that $R_g^{-1}$ is indeed a one-dimensional representation of the interface algebra, i.e.,
\begin{equation}
(\widehat{\cO}_{\rho})_{ij} R_g^{-1} = \rho(g)_{ij} R_g^{-1}.
\end{equation}
Thus, we find that the pump invariant of the $S^1$-parameterized family $\{A_g(\theta)\}$ constructed above is a one-dimensional representation $\{(\widehat{\cO}_{\rho})_{ij} = \rho(g)_{ij}\}$.
Since $g \in G$ is arbitrary, one can construct as many families as $|G|$ in this way.
Remarkably, the above construction exhausts all $S^1$-parameterized families of $\Rep(G)$ SPT states in the trivial phase because one-dimensional representations of the interface algebra of the cluster state are classified by $G$ as we discussed in Sec.~\ref{sec: Non-degenerate interfaces of a Rep(G) SPT phase}.
We note that the generalized Thouless pump in this example can be non-abelian.
This is in contrast to the case of an ordinary group symmetry, where the Thouless pump is always abelian.

\section{Classification of parameterized families}
\label{sec: Generalizations}
In this section, we propose conjectures on the classification of various parameterized families of gapped systems, generalizing the classification of $S^1$-parameterized families of 1+1d SPT states discussed in the previous section.
Our conjectures are supported by general arguments and concrete examples.

\subsection{General gapped phases in 1+1d}
\label{sec: General gapped phases}
\subsubsection{Conjecture}
Let us consider the classification of $S^1$-parameterized families of general 1+1d gapped systems with fusion category symmetry $\cC$.
In general, $\cC$-symmetric gapped phases in 1+1d are in one-to-one correspondence with module categories over $\cC$ \cite{Thorngren:2019iar}.
In particular, SPT phases with symmetry $\cC$ correspond to $\cC$-module categories with a single simple object, which can be identified with fiber functors of $\cC$ \cite{EGNO2015}.
In what follows, the gapped phase labeled by a $\cC$-module category $\cM$ is denoted by $\cT_{\cM}^{\cC}$.

Our conjecture on the classification of $S^1$-parameterized families of general $\cC$-symmetric gapped systems is as follows:
\begin{conjecture}
$S^1$-parameterized families of 1+1d $\cC$-symmetric gapped systems in gapped phase $\cT_{\cM}^{\cC}$ are classified by the group $\Fun_{\cC}(\cM, \cM)^{\text{inv}}$ of invertible objects of the category $\Fun_{\cC}(\cM, \cM)$ of $\cC$-module endofunctors of $\cM$.
\label{con: general gapped phases}
\end{conjecture}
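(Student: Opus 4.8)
The strategy is to lift the argument of Sec.~\ref{sec: Invariants of one-parameter families of SPT phases} from SPT phases to an arbitrary gapped phase $\cT_{\cM}^{\cC}$. First I would check that the SPT result is already the special case $\cM = \Vect$: when $\cM = \Vect$ carries the module structure of a fiber functor $F$, the dual category $\Fun_{\cC}(\cM, \cM)$ is the representation category of the self-interface algebra $\cA_{\cM, \cM}$ (see Sec.~\ref{sec: Interface modes} and \cite{KK2011, Barter_2022}), and its invertible objects are exactly the one-dimensional representations of $\cA_{\cM, \cM}$, which are in bijection with $\Aut^{\otimes}(F)$. Hence $\Fun_{\cC}(\Vect, \Vect)^{\text{inv}} \cong \Aut^{\otimes}(F)$, and the conjecture recovers the classification of $S^1$-families of SPT states. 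The plan is to show that the same mechanism---a topological pump of an \emph{invertible} self-interface excitation---persists for general $\cM$.

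Next I would construct the invariant. Ground states of $\cT_{\cM}^{\cC}$ are represented by block-injective (rather than injective) MPS, whose generalized $L$-symbols encode the data of the module category $\cM$ \cite{Garre-Rubio:2022uum}. Given an $S^1$-family $\{A(\theta)\}$, I would run the gauge analysis of Sec.~\ref{sec: Invariants of one-parameter families of SPT phases}: fix a gauge in which the MPS tensor is $2\pi$-periodic and the module-category data is constant in $\theta$. The residual non-periodicity of the action tensors is then a gauge transformation that preserves the module-category data, i.e. an invertible $\cC$-module endofunctor of $\cM$---an invertible object of $\Fun_{\cC}(\cM, \cM)$. This directly generalizes the automorphism $\eta \in \Aut^{\otimes}(F)$ of Eq.~\eqref{eq: pump inv 1}, and physically it is the invertible boundary excitation pumped across the self-interface as $\theta$ traverses $S^1$ (cf. Eq.~\eqref{eq: non-periodic A}). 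Only invertible excitations can appear, because the holonomy of a reversible adiabatic loop must itself be invertible.

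I would then address the group structure and completeness. Concatenation of families as in Eq.~\eqref{eq: product family} composes the pumped excitations, so the pump invariant is a group homomorphism to $\Fun_{\cC}(\cM, \cM)^{\text{inv}}$. Surjectivity should follow by adapting the explicit construction of Sec.~\ref{sec: Example: Rep(G) symmetry II}: for each invertible module endofunctor one builds a path of Hamiltonians whose holonomy realizes it, generalizing the $R_g(\theta)$ family for $\Rep(G)$. The hard part is injectivity/completeness, and I expect it to be the main obstacle. In the block-injective setting the transfer matrix has one fixed point per simple object of $\cM$, so the uniqueness arguments and the finiteness-of-representations input used below Eq.~\eqref{eq: symmetric 4} must be re-established blockwise before one can conclude that families with equal invariant are connected by a continuous $\cC$-symmetric path. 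The conceptually clean route is the moduli-space picture of Sec.~\ref{sec: Generalizations}: identify the space of $\cC$-symmetric gapped systems in $\cT_{\cM}^{\cC}$ with the classifying space $B\underline{\Fun}_{\cC}(\cM, \cM)^{\text{inv}}$ of the $2$-group of invertible module functors, so that based $S^1$-families are computed by $\pi_1$ of this space, namely $\pi_0$ of the $2$-group, which is precisely $\Fun_{\cC}(\cM, \cM)^{\text{inv}}$. Establishing this identification rigorously---as opposed to motivating it from examples---is the genuine content, and at the level of the present paper it remains the supporting general argument behind the conjecture rather than a complete proof.
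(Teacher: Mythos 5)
Your proposal matches the paper's own treatment: the statement is a conjecture, and the paper supports it exactly as you do, by noting that the holonomy of an $S^1$-family of module categories is an invertible $\cC$-module endofunctor, checking that the $\cM=\Vect$ case reduces to the pump invariant $\Aut^{\otimes}(F)$ of Sec.~\ref{sec: Invariants of one-parameter families of SPT phases}, and deferring the block-injective MPS construction; you also correctly flag that completeness is not established. The only material you omit is the paper's two additional pieces of supporting evidence --- the symmetry-TFT picture (invertible lines on the topological boundary $\cM$) and the compatibility of the conjecture with gauging via $\Fun_{\cC}(\cC_B,\cC_B)\cong\Fun_{{}_A\cC_A}({}_A\cC_B,{}_A\cC_B)$ --- but these are corroboration, not a different route.
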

Let us give a motivation for the above conjecture.
First of all, since $\cC$-symmetric gapped phases are classified by $\cC$-module categories, an $S^1$-parameterized family of gapped systems should give rise to an $S^1$-parameterized family of $\cC$-module categories $\cM(\theta)$.
When the gapped systems in the family belong to the same phase $\cT^{\cC}_{\cM}$, the $\cC$-module category $\cM(\theta)$ is equivalent to $\cM$ for all $\theta \in S^1 = [0, 2\pi]$.
In particular, $\cM(0)$ and $\cM(2\pi)$ are the same module category, which we choose to be $\cM$.
Nevertheless, going around $S^1$ can induce a non-trivial automorphism of $\cM$, as we saw in the case of SPT phases (cf. Sec.~\ref{sec: Parameterized family and Thouless pump}).
Thus, an $S^1$-parameterized family of gapped systems is associated with an automorphism of $\cM$, i.e., an invertible $\cC$-module endofunctor of $\cM$.
We expect that this automorphism classifies $S^1$-parameterized families of gapped systems.
In the context of MPS, it should be straightforward to construct an invariant of the classification by applying the analysis in Sec.~\ref{sec: Invariants of one-parameter families of SPT phases} to $\cC$-symmetric block-injective MPSs.

When $\cM$ has only one simple object, the group of invertible objects of $\Fun_{\cC}(\cM, \cM)$ is isomorphic to the group $\Aut^{\otimes}(F)$ of automorphisms of the fiber functor $F$ corresponding to $\cM$.
Thus, as a special case, our conjecture implies that $S^1$-parameterized families of SPT states labeled by a fiber functor $F$ are classified by $\Aut^{\otimes}(F)$.
This is supported by the discussions on pump invariants in Sec.~\ref{sec: Invariants of one-parameter families of SPT phases}.

\vspace{10 pt}
\noindent{\bf Relation to symmetry TFT.}
Conjecture~\ref{con: general gapped phases} is also natural from the perspective of symmetry TFT. 
In the symmetry TFT construction, a $\mathcal{C}$-symmetric system in 1+1d is realized as a three-dimensional TFT on a slab as shown in Fig.~\ref{fig: SymTFT}.
\begin{figure}[t]
\adjincludegraphics[valign=c, scale=1, trim={10pt 10pt 10pt 10pt}]{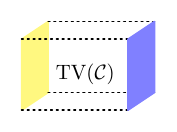}
\caption{The symmetry TFT construction of 1+1d systems with symmetry $\cC$. The left boundary of the slab is topological and supports topological lines described by $\cC$. On the other hand, the right boundary is arbitrary and dictates the dynamics of the 1+1d system. When the right boundary is also topological, the whole 1+1d system becomes topological.}
\label{fig: SymTFT}
\end{figure}
The 3d TFT in the bulk is the Turaev-Viro-Barrett-Westbury TFT $\text{TV}(\cC)$ \cite{Turaev:1992hq, Barrett:1993ab}, which is the low-energy limit of the 2+1d string-net model \cite{Levin:2004mi}.
In this construction, the dynamics of the 1+1d system is determined by the choice of a boundary condition of $\text{TV}(\cC)$.
In particular, 1+1d gapped systems are obtained by choosing this boundary to be topological.
The $\cC$-symmetric gapped phase $\cT^{\cC}_{\cM}$ corresponds to the topological boundary labeled by the $\cC$-module category $\cM$.
Thus, in this framework, an $S^1$-parameterized family of $\cC$-symmetric gapped states corresponds to an $S^1$-parameterized family of topological boundaries of $\text{TV}(\cC)$.

Given a family $\{\cM(\theta) \mid \theta \in S^1\}$ of topological boundaries of $\text{TV}(\cC)$, one can consider a spatially modulated boundary of the symmetry TFT as illustrated in Fig.~\ref{fig: modulated boundary}.
\begin{figure}[t]
\adjincludegraphics[valign=c, scale=1, trim={10pt 10pt 10pt 10pt}]{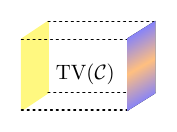}
$\quad \xrsquigarrow{~coarse graining~} \quad$
\adjincludegraphics[valign=c, scale=1, trim={10pt 10pt 10pt 10pt}]{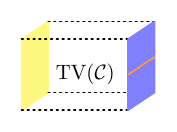}
\caption{The middle region of the right boundary is modulated gradually from $\cM(0)$ to $\cM(2\pi)$. This modulated region becomes an invertible topological line when viewed from far away. After squashing the bulk, the invertible line on the boundary becomes an invertible interface in 1+1d. The correspondence between invertible interfaces and $S^1$-parameterized families was already observed in Sec.~\ref{sec: Thouless pump of non-degenerate interface modes} in the case of SPT phases.}
\label{fig: modulated boundary}
\end{figure}
If the whole system is viewed from far away, the modulation looks like a topological line on the boundary.
This topological line is invertible because the modulation can be undone by the reversed modulation.
Therefore, from the perspective of symmetry TFT, $S^1$-parameterized families of $\mathcal{C}$-symmetric gapped states should be classified by invertible topological lines on the topological boundary $\cM$.
Since topological lines on $\cM$ are classified by $\Fun_\mathcal{C}(\mathcal{M}, \mathcal{M})$ \cite{KK2011, Fuchs:2012dt, Kapustin:2010hk}, we are led to Conjecture~\ref{con: general gapped phases}.

\vspace{10pt}
\noindent{\bf Relation to gauging.}
Provided that the conjecture holds for some gapped phase $\cT$, we can show that the conjecture also holds for other gapped phases obtained by the gauging of $\cT$.
To see this, we first recall the gauging of 1+1d gapped phases with fusion category symmetries.
Let $\cT^{\cC}_{\cM}$ be a gapped phase labeled by a $\cC$-module category $\cM$ and let $\cT^{\cC}_{\cM}/A$ denote the gauging of $\cT^{\cC}_{\cM}$ by a separable algebra object $A \in \cC$.
The symmetry of $\cT^{\cC}_{\cM}/A$ is described by the category ${}_A \cC_A$ of $(A, A)$-bimodules in $\cC$ \cite{Bhardwaj:2017xup}.
Due to Ostrik's theorem~\cite{Ostrik2003}, any $\cC$-module category $\cM$ can be written as the category $\cC_B$ of right $B$-modules in $\cC$, where $B$ is also a separable algebra object in $\cC$.
When $\cM = \cC_B$, the ${}_A \cC_A$-module category corresponding to the gauged theory $\cT^{\cC}_{\cM}/A$ is the category ${}_A \cC_B$ of $(A, B)$-bimodules in $\cC$ \cite[Appendix A]{Komargodski:2020mxz}.

Now, we assume that the conjecture holds for $\cT^{\cC}_{\cM}$, that is, $S^1$-parameterized families of $\cT^{\cC}_{\cM}$ are classified by the group of invertible objects of $\Fun_{\cC}(\cM, \cM) = \Fun_{\cC}(\cC_B, \cC_B)$.
Since the gauging is an invertible operation~\cite{Bhardwaj:2017xup},\footnote{Precisely, there exists a separable algebra $A^{\prime} \in {}_A \cC_A$ such that $T/A/A^{\prime}$ goes back to the original theory $\cT$.} $S^1$-parameterized families of the gauged theory $\cT^{\cC}_{\cM}/A$ are also classified by $\Fun_{\cC}(\cC_B, \cC_B)^{\text{inv}}$.
On the other hand, Theorem 7.12.16 of \cite{EGNO2015} implies that $\Fun_{\cC}(\cC_B, \cC_B)$ is equivalent to $\Fun_{{}_A \cC_A}({}_A \cC_B, {}_A \cC_B)$ as a fusion category:
\begin{equation}
\Fun_{\cC}(\cC_B, \cC_B) \cong \Fun_{{}_A \cC_A} (\Fun_{\cC}(\cC_A, \cC_B), \Fun_{\cC}(\cC_A, \cC_B)) \cong \Fun_{{}_A \cC_A} ({}_A \cC_B, {}_A \cC_B).
\end{equation}
Therefore, $S^1$-parameterized families of the gauged theory $\cT^{\cC}_{\cM}/A = \cT^{{}_A \cC_A}_{{}_A \cC_B}$ are classified by the group of invertible objects of $\Fun_{{}_A \cC_A}({}_A \cC_B, {}_A \cC_B)$.
This shows that the conjecture holds for the gauged theory $\cT^{\cC}_{\cM}/A$ as long as it holds for the original theory $\cT^{\cC}_{\cM}$.

\subsubsection{Examples}
\noindent{\bf Example 1: $G$-SSB phase.}
As a simple example, we apply our conjecture to the gapped phase that spontaneously breaks a finite group symmetry $G$ down to the trivial group.
The fusion category that describes a finite group symmetry $G$ is the category $\Vect_G$ of $G$-graded vector spaces.
The $\Vect_G$-module category corresponding to the $G$-SSB phase is the regular module $\Vect_G$.\footnote{For a fusion category $\cC$, a regular $\cC$-module is $\cC$ itself, on which $\cC$ acts by its tensor product $\otimes: \cC \times \cC \rightarrow \cC$.} 
Thus, our conjecture implies that $S^1$-parameterized families of gapped systems in the $G$-SSB phase are classified by the group of invertible objects of $\Fun_{\Vect_G}(\Vect_G, \Vect_G) \cong \Vect_G$, which is $G$.
Physically, this group $G$ can be interpreted as the group of invertible domain walls pumped by $S^1$-parameterized families.

One can also understand this classification from the point of view of gauging.
If we gauge the symmetry $G$ of the $G$-SSB phase, we obtain a $\Rep(G)$ SPT phase that corresponds to the forgetful functor of $\Rep(G)$.
Our conjecture implies that $S^1$-parameterized families of invertible states in this $\Rep(G)$ SPT phase are classified by the group of automorphisms of the forgetful functor, which is isomorphic to $G$ due to Tannaka-Krein duality.
Thus, $S^1$-parameterized families of gapped systems in the $G$-SSB phase should also be classified by $G$ because the gauging is invertible.

\vspace{10pt}
\noindent{\bf Example 2: $\cC$-SSB phase.}
More generally, we can consider the SSB phase of a general fusion category symmetry $\cC$.
The $\cC$-module category corresponding to the $\cC$-SSB phase is the regular module category $\cC$.
The associated functor category is $\Fun_{\cC}(\cC, \cC) \cong \cC$, and hence Conjecture~\ref{con: general gapped phases} suggests that $S^1$-parameterized families of gapped systems in the $\cC$-SSB phase are classified by $\cC^{\text{inv}}$, the group of invertible objects of $\cC$.

\vspace{10pt}
\noindent{\bf Example 3: $\Z_4/\Z_2$-SSB phase.}
One can also consider examples where the symmetry is partially broken.
A simple example is the spontaneous symmetry breaking of $\Z_4$ down to $\Z_2$.
The $\Vect_{\Z_4}$-module category corresponding to this gapped phase is $\Vect_{\Z_2}$.
The action of $\Vect_{\Z_4}$ on $\Vect_{\Z_2}$ is defined via the group homomorphism that maps $a \in \Z_4$ to $a \pmod 2 \in \Z_2$.
The corresponding functor category is $\Fun_{\Vect_{\Z_4}}(\Vect_{\Z_2}, \Vect_{\Z_2}) \cong \Vect_{\Z_2 \times \Z_2}^{\omega}$ \cite{Bhardwaj:2017xup, Tachikawa:2017gyf, naidu2007categorical, naidu2008lagrangian, uribe2017classification}, where $[\omega] \neq 0 \in \mathrm{H}_{\text{gp}}^3(\Z_2 \times \Z_2, \mathrm{U}(1))$ is given by
\begin{equation}
\omega((a_1, b_1), (a_2, b_2), (a_3, b_3)) = (-1)^{a_1 a_2 b_3}
\end{equation}
for $a_i, b_i \in \Z_2 = \{0, 1\}$.
Thus, Conjecture~\ref{con: general gapped phases} suggests that $S^1$-parameterized families of gapped systems in the $\Z_4/\Z_2$-SSB phase are classified by $\Z_2 \times \Z_2$.
We note that $[\omega] \in \mathrm{H}_{\text{gp}}^3(\Z_2 \times \Z_2, \mathrm{U}(1))$ does not affect the classification of $S^1$-parameterized families.
However, it would affect the classification for more general parameter spaces, see Sec.~\ref{sec: General parameter space} for more details.

\subsection{General parameter space in 1+1d}
\label{sec: General parameter space}
\subsubsection{Conjecture}
In the discussions so far, the parameter space was supposed to be a circle $S^1$.
More generally, one can consider families of 1+1d gapped systems parameterized by a general parameter space $X$, which we assume to be sufficiently nice (e.g., a CW complex).
Our conjecture on the classification of such families is as follows\footnote{ In the following, we consider the classification of families for a fixed module category $\cM$. Therefore, the classification for $X=pt$ is always trivial.}:
\begin{conjecture}
$X$-parameterized families of 1+1d gapped systems in a gapped phase $\cT^{\cC}_{\cM}$ are classified by the set $[X, B\underline{\Fun}_{\cC}(\cM, \cM)^{\text{inv}}]$ of homotopy classes of maps from $X$ to the classifying space of a 2-group $\underline{\Fun}_{\cC}(\cM, \cM)^{\text{inv}}$.
See below for the definition of $\underline{\Fun}_{\cC}(\cM, \cM)^{\text{inv}}$.
The set $[X, B\underline{\Fun}_{\cC}(\cM, \cM)^{\text{inv}}]$ is also known as the non-abelian \v{C}ech cohomology $\check{\mathrm{H}}^1(X, \underline{\Fun}_{\cC}(\cM, \cM)^{\text{inv}})$.
\label{con: general parameter space}
\end{conjecture}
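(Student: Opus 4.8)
The plan is to argue for the conjecture in two stages: first establish that the moduli space of $\cC$-symmetric gapped systems in the phase $\cT^{\cC}_{\cM}$ is the classifying space $\mathcal{B} := B\underline{\Fun}_{\cC}(\cM, \cM)^{\text{inv}}$ of the $2$-group $\mathcal{G} := \underline{\Fun}_{\cC}(\cM, \cM)^{\text{inv}}$, and then invoke the general principle that families parameterized by a space $X$ and valued in a moduli space $\mathcal{B}$ are classified by homotopy classes of maps $[X, \mathcal{B}]$. For a fixed phase and a fixed basepoint (the convention of fixing $\cM$), a family is a map $X \to \mathcal{B}$, two families are equivalent precisely when the maps are homotopic, and the standard identity $[X, B\mathcal{G}] = \check{\mathrm{H}}^1(X, \mathcal{G})$ for any (higher) group $\mathcal{G}$ then yields the stated non-abelian \v{C}ech description.

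First I would pin down the homotopy type of $\mathcal{B}$ from the $2$-group data. Since the objects of $\mathcal{G}$ are invertible $\cC$-module endofunctors of $\cM$ and its $1$-morphisms are invertible module natural transformations, one has $\pi_1(\mathcal{B}) = \pi_0(\mathcal{G}) = \Fun_{\cC}(\cM, \cM)^{\text{inv}}$ and $\pi_2(\mathcal{B}) = \pi_1(\mathcal{G}) = \Aut(\mathrm{id}_{\cM})$, together with a Postnikov $k$-invariant in $\mathrm{H}^3(\pi_1(\mathcal{B}), \pi_2(\mathcal{B}))$ encoding the associator. The decisive consistency check is $X = S^1$: based homotopy classes $[S^1, \mathcal{B}]$ compute $\pi_1(\mathcal{B}) = \Fun_{\cC}(\cM, \cM)^{\text{inv}}$, so Conjecture~\ref{con: general parameter space} specializes exactly to Conjecture~\ref{con: general gapped phases}, and when $\cM$ has a single simple object this reduces further to $\Aut^{\otimes}(F)$, recovering the pump invariant of Sec.~\ref{sec: Invariants of one-parameter families of SPT phases}.

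The explicit construction of the invariant of an $X$-family should mirror Sec.~\ref{sec: Invariants of one-parameter families of SPT phases}. Over each contractible patch $U_\alpha$ of a good cover of $X$ one chooses a gauge in which the $\cC$-symmetric block-injective MPS data is constant; on a double overlap $U_\alpha \cap U_\beta$ the two gauges differ by an invertible module endofunctor $g_{\alpha\beta}$, while on a triple overlap the composite $g_{\alpha\beta} g_{\beta\gamma}$ agrees with $g_{\alpha\gamma}$ only up to an invertible module natural transformation $t_{\alpha\beta\gamma}$, subject to a coherence (tetrahedron) relation on quadruple overlaps. These are precisely the data and conditions of a $\mathcal{G}$-valued \v{C}ech $1$-cocycle, and a change of local gauges acts by \v{C}ech coboundaries, producing a well-defined class in $\check{\mathrm{H}}^1(X, \mathcal{G})$. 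The $S^1$ analysis is the special case of a cover by two arcs, in which $g$ is the pump invariant $\eta$ and there are no nontrivial $t$'s.

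The hard part will be twofold. First, justifying the central hypothesis that the moduli space is \emph{exactly} $\mathcal{B}$ — that the full tower of gauge redundancies of $\cC$-symmetric block-injective MPS tensors (tensor gauge, action-tensor gauge, and module-natural-transformation gauge) organizes into precisely this $2$-group, with no further higher structure and vanishing higher homotopy — requires a systematic treatment of these redundancies beyond the level reached in Sec.~\ref{sec: Parameterized family and Thouless pump}. Second, and more seriously, upgrading the assignment of a \v{C}ech class to an actual \emph{bijection} demands both faithfulness and completeness of the invariant; completeness is already only conjectural in the $S^1$ case, and for general $X$ one would additionally need an obstruction-theoretic argument, built on the Postnikov $k$-invariant above, to realize every class in $\check{\mathrm{H}}^1(X, \mathcal{G})$ by a genuine family. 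I expect this completeness/realization direction to be the principal obstacle.
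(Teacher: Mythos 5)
The statement you are addressing is a conjecture, and the paper offers no proof, only a motivating argument; your proposal follows essentially the same route as that motivation (moduli-space hypothesis plus $[X, B\mathcal{G}] = \check{\mathrm{H}}^1(X, \mathcal{G})$, patchwise gauge-fixing producing a $\mathcal{G}$-valued \v{C}ech cocycle with $g_{\alpha\beta}$ on double overlaps and $t_{\alpha\beta\gamma}$ on triple overlaps), and you correctly identify the two genuine open points, namely the moduli-space hypothesis itself and the completeness/realization of the invariant.

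There is, however, one substantive discrepancy with the paper's formulation of the $2$-group, and it changes the answer for $X = S^2$ and $X = S^3$. You set $\pi_2(\mathcal{B}) = \pi_1(\mathcal{G}) = \Aut(\mathrm{id}_{\cM})$ and place the Postnikov invariant in $\mathrm{H}^3(\pi_1(\mathcal{B}), \pi_2(\mathcal{B}))$, implicitly treating the $1$-form group $\mathrm{U}(1)$ as a discrete homotopy group in degree $2$. The paper insists that the $1$-form group carries the \emph{standard} Lie-group topology, so that $B^2\mathrm{U}(1) \simeq K(\Z, 3)$: the classifying space is then a $K(\Z,3)$-fibration over $B\Fun_{\cC}(\cM,\cM)^{\text{inv}}$ with $\pi_2 = 0$ and $\pi_3 = \Z$, and the Postnikov invariant is $\beta_1([\beta]) \in \mathrm{H}^4(B\Fun_{\cC}(\cM,\cM)^{\text{inv}}, \Z)$ obtained from the $F$-symbol class via the Bockstein. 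With your discrete convention one would predict $S^2$-families classified by $\mathrm{U}(1)$ and no $S^3$-families, which the paper argues is inconsistent with the known classification of invertible states in 1+1d (the correct answers being $0$ and $\Z$, the latter being the higher Berry class, i.e.\ the $\check{\mathrm{H}}^3(X,\Z)$ summand appearing in all of the paper's examples). If you adopt the standard topology your outline otherwise matches the paper's; as written, the identification of the homotopy type of $\mathcal{B}$ is the step that would fail.
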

The above conjecture is based on the hypothesis that the space of 1+1d systems in a $\cC$-symmetric gapped phase labeled by $\cM$ is homotopy equivalent to $B\underline{\Fun}_{\cC}(\cM, \cM)^{\text{inv}}$.
See \cite{Hsin:2020cgg, Aasen:2022cdu, Hsin:2022iug} for a similar conjecture for 2+1d topologically ordered phases.\footnote{In \cite{Hsin:2020cgg, Aasen:2022cdu, Hsin:2022iug}, it is conjectured that the space of 2+1d gapped Hamiltonians that realize a topological order described by a modular tensor category $\cB$ (up to invertible topological order) is homotopy equivalent to the classifying space $B\underline{\underline{\text{Pic}}}(\cB)$ of the categorical Picard 2-group $\underline{\underline{\text{Pic}}}(\cB)$.
Here, the categorical Picard 2-group $\underline{\underline{\text{Pic}}}(\cB)$ is a monoidal 2-category consisting of invertible $\cB$-modules, invertible $\cB$-module functors, and invertible $\cB$-module natural transformations \cite{ENO2010}. See also \cite[Appendix F]{Kitaev:2005hzj} for an earlier discussion. We will return to this conjecture at the end of Sec.~\ref{sec: General classification in higher dimensions}.}
Conjecture~\ref{con: general parameter space} reduces to Conjecture~\ref{con: general gapped phases} when the parameter space $X$ is a circle because we have
\begin{equation}
[S^1, B\underline{\Fun}_{\cC}(\cM, \cM)^{\text{inv}}] = \pi_1(B\underline{\Fun}_{\cC}(\cM, \cM)^{\text{inv}}) \cong \Fun_{\cC}(\cM, \cM)^{\text{inv}}.
\end{equation}
See Eq.~\eqref{eq: homotopy groups} for more general homotopy groups of $B\underline{\Fun}_{\cC}(\cM, \cM)^{\text{inv}}$.
As in the case of Conjecture~\ref{con: general gapped phases}, if Conjecture \ref{con: general parameter space} holds for a $\cC$-symmetric gapped phase $\cT^{\cC}_{\cM}$, it also holds for any other gapped phases obtained by the gauging of $\cT^{\cC}_{\cM}$ because $\Fun_{\cC}(\cM, \cM)$ is invariant under gauging.

Let us unpack the above conjecture in detail.
In Conjecture~\ref{con: general parameter space},  $\underline{\Fun}_{\cC}(\cM, \cM)^{\text{inv}}$ denotes a categorical group consisting of invertible objects and invertible morphisms of $\Fun_{\cC}(\cM, \cM)$.
More concretely, $\underline{\Fun}_{\cC}(\cM, \cM)^{\text{inv}}$ is a 2-group $(G, A, \rho, \beta)$,\footnote{There are various descriptions of a 2-group \cite{Baez:2003yaq}. In this paper, a 2-group $\mathbb{G}$ refers to a quadruple $(G, A, \rho, \beta)$ with $G$ a group, $A$ an abelian group, $\rho$ a group homomorphism from $G$ to $\Aut(A)$, and $\beta$ a group 3-cocycle of $G$ with coefficients in $A$. We refer the reader to, e.g., \cite{Kapustin:2013uxa, Benini:2018reh} for more detailed expositions of 2-groups for physicists.} where the 0-form group $G$ is $\Fun_{\cC}(\cM, \cM)^{\text{inv}}$, the 1-form group $A$ is $\mathrm{U}(1)$, the action $\rho: G \rightarrow \Aut(A)$ is trivial, and the Postnikov class $[\beta] \in \mathrm{H}_{\text{gp}}^3(\Fun_{\cC}(\cM, \cM)^{\text{inv}}, \mathrm{U}(1))$ is given by the $F$-symbols of the invertible subcategory of $\Fun_{\cC}(\cM, \cM)$.
We suppose that the 1-form group $\mathrm{U}(1)$ is equipped with the standard (i.e., continuous) topology of the Lie group $\mathrm{U}(1)$, not the discrete topology.
On the other hand, the 0-form group $\Fun_{\cC}(\cM, \cM)^{\text{inv}}$ has the discrete topology because it is a finite group.
As we will see below, the choice of a topology of $\mathrm{U}(1)$ plays an important role in the classification of parameterized families.
In what follows, $\mathrm{U}(1)$ equipped with the standard topology will be simply written as $\mathrm{U}(1)$, whereas the same underlying group equipped with the discrete topology will be denoted by $\mathrm{U}(1)_{\delta}$.

The classifying space of $\underline{\Fun}_{\cC}(\cM, \cM)^{\text{inv}}$ is denoted by $B\underline{\Fun}_{\cC}(\cM, \cM)^{\text{inv}}$.
If we use $\mathrm{U}(1)$ with the standard topology, the realization of $B\underline{\Fun}_{\cC}(\cM, \cM)^{\text{inv}}$ as a space is given by a Postnikov tower
\begin{equation}
  \adjincludegraphics[scale=1,trim={10pt 10pt 10pt 10pt},valign = c]{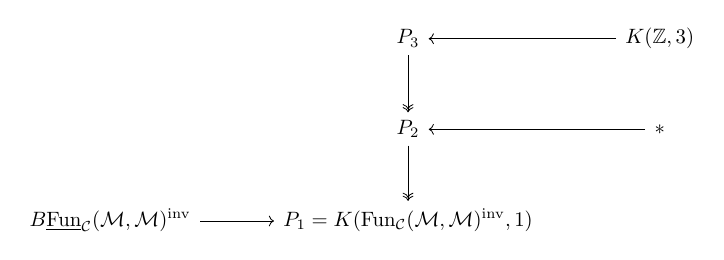}
\end{equation}
with the Postnikov invariant $\beta_{1}([\beta])\in \cohoZ{4}{B\Fun_{\mathcal{C}}(\mathcal{M}, \mathcal{M})^{\text{inv}}}$. 
Here, $\beta_{1}:\cohoU{3}{\bullet}\to\cohoZ{4}{\bullet}$ is the Bochstein homomorphism associated with $0\to \mathbb{Z} \to \mathbb{R} \to \mathrm{U}(1)$,
and surjective arrows represent Serre fibrations.
If we use $\mathrm{U}(1)_{\delta}$ with the discrete topology, the realization as a space is given by a Postnikov tower
\begin{equation}
  \adjincludegraphics[scale=1,trim={10pt 10pt 10pt 10pt},valign = c]{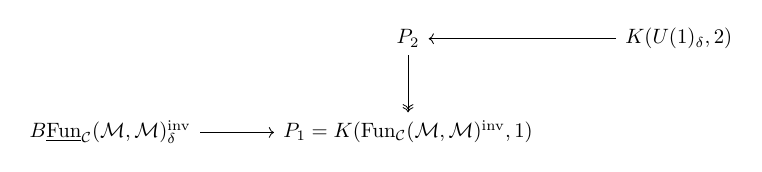}
\end{equation}
with the Postnikov invariant $[\beta] \in \coho{3}{B\Fun_{\mathcal{C}}(\mathcal{M}, \mathcal{M})^{\text{inv}}}{\mathrm{U}(1)_{\delta}} \simeq \mathrm{H}^3_{\text{gp}}(\Fun_{\mathcal{C}}(\mathcal{M}, \mathcal{M})^{\text{inv}}, \mathrm{U}(1))$.\footnote{We note $\mathrm{H}^3_{\text{gp}}(G, A) \simeq \coho{3}{BG}{A}$ if $A$ is discrete~\cite{wigner1970algebraic}.}
See, e.g.,~\cite{BS2009} for details of the classifying space of a general topological 2-group.

For our purposes, it suffices to know the following fact about the classifying space: for a topological 2-group $\mathbb{G}$, the homotopy classes of maps from $X$ to $B \mathbb{G}$ are in one-to-one correspondence with elements of the non-abelian \v{C}ech cohomology $\check{\mathrm{H}}^1(X, \mathbb{G})$ \cite{BS2009, Wockel2011, Bartels2006, Jurco2011, BBK2012}, i.e., 
\begin{equation}
[X, B \mathbb{G}] = \check{\mathrm{H}}^1(X, \mathbb{G}).
\end{equation}
The explicit description of the \v{C}ech cohomology $\check{\mathrm{H}}^1(X, \mathbb{G})$ for a general topological 2-group $\mathbb{G}$ is given in \cite[Section 4]{BS2009}.
The description in \cite{BS2009} is based on the crossed module corresponding to $\mathbb{G} = (G, A, \rho, \beta)$.
When $G$ is finite and $\rho$ is trivial as in the case of our interest $\mathbb{G} = \underline{\Fun}_{\cC}(\cM, \cM)^{\text{inv}}$, one can directly describe the \v{C}ech cohomology $\check{\mathrm{H}}^1(X, \mathbb{G})$ in terms of $(G, A, \beta)$ \cite{Berwickevans2021}.
We will review this description below.
See also \cite{Kapustin:2013uxa} for the direct description of the \v{C}ech cohomology with coefficients in a general discrete 2-group.

\vspace{10pt}
\noindent{\bf Non-abelian \v{C}ech cohomology.}
For concreteness, let us give an explicit description of the \v{C}ech cohomology $\check{\mathrm{H}}^1(X, \underline{\Fun}_{\cC}(\cM, \cM)^{\text{inv}})$ following \cite{Berwickevans2021}.
We first fix an open cover $\{U_i\}_{i \in I}$ of $X$, where $I$ is an index set.
The overlaps of open subsets are denoted by $U_{ij \cdots k} := U_i \cap U_j \cap \cdots \cap U_k$.
Here, any finite overlaps $U_{ij \cdots k}$ are supposed to be contractible, i.e., $\{U_i\}$ is a good open cover.
An element of $\check{\mathrm{H}}^1(X, \underline{\Fun}_{\cC}(\cM, \cM)^{\text{inv}})$ is an equivalence class of a pair $(\{g_{ij}\}, \{a_{ijk}\})$ of continuous functions
\begin{equation}
g_{ij}: U_{ij} \rightarrow \Fun_{\cC}(\cM, \cM)^{\text{inv}}, \qquad a_{ijk}: U_{ijk} \rightarrow \mathrm{U}(1)
\end{equation}
such that 
\begin{equation}
\begin{aligned}
(dg)_{ijk}(x) & := g_{jk}(x) g_{ik}(x)^{-1} g_{ij}(x) = 1,& & \forall x \in U_{ijk}, \\
(da)_{ijkl}(y) & := a_{jkl}(y) a_{ikl}(y)^{-1} a_{ijl}(y) a_{ijk}(y)^{-1} = \beta(g_{ij}(y), g_{jk}(y), g_{kl}(y)), & & \forall y \in U_{ijkl}.
\end{aligned}
\label{eq: Cech object}
\end{equation}
Here, the 3-cocycle $\beta$ is normalized so that $\beta(g, h, k) = 1$ if any of $g$, $h$, and $k$ is the unit element of $\underline{\Fun}_{\cC}(\cM, \cM)^{\text{inv}}$.
This normalization is always possible without changing the cohomology class of $\beta$.
Two pairs $(\{g_{ij}\}, \{a_{ijk}\})$ and $(\{g^{\prime}_{ij}\}, \{a^{\prime}_{ijk}\})$ are said to be equivalent if and only if there exists a pair $(\{f_i\}, \{\gamma_{ij}\})$ of continuous functions
\begin{equation}
f_i: U_i \rightarrow \Fun_{\cC}(\cM, \cM)^{\text{inv}}, \qquad \gamma_{ij}: U_{ij} \rightarrow \mathrm{U}(1)
\end{equation}
such that 
\begin{equation}
\begin{aligned}
g^{\prime}_{ij}(x) & = f_i(x) g_{ij}(x) f_j(x)^{-1}, & & \forall x \in U_{ij}, \\
a^{\prime}_{ijk}(y) & = a_{ijk}(y) (d \gamma)_{ijk}(y) \zeta_{ijk}(y), & & \forall y \in U_{ijkl},
\end{aligned}
\label{eq: Cech equivalence}
\end{equation}
where 
\begin{equation}
\zeta_{ijk}(y) := \frac{\beta(g^{\prime}_{ij}(y), f_j(y), g_{jk}(y))}{\beta(f_i(y), g_{ij}(y), g_{jk}(y)) \beta(g^{\prime}_{ij}(y), g^{\prime}_{jk}(y), f_k(y))}.
\end{equation}
The function $\zeta$ is called the first descendant of $\beta$ \cite{Kapustin:2013uxa, Thorngren:2018ziu} and satisfies
\begin{equation}
(\delta \zeta)_{ijkl}(y) = \beta(g^{\prime}_{ij}(y), g^{\prime}_{jk}(y), g^{\prime}_{kl}(y)) \beta(g_{ij}(y), g_{jk}(y), g_{kl}(y))^{-1},
\end{equation}
which follows from the cocycle condition on $\beta$.
Summarizing, the \v{C}ech cohomology is given by the quotient
\begin{equation}
\check{\mathrm{H}}^1(X, \underline{\Fun}_{\cC}(\cM, \cM)^{\text{inv}}) = \{(\{g_{ij}\}, \{a_{ijk}\}) \text{ satisfying Eq.~\eqref{eq: Cech object}}\} / \sim,
\label{eq: Cech cohomology}
\end{equation}
where the equivalence relation $(\{g_{ij}\}, \{a_{ijk}\}) \sim (\{g^{\prime}_{ij}\}, \{a^{\prime}_{ijk}\})$ is defined by Eq.~\eqref{eq: Cech equivalence}.
We note that $g_{ij}(x)$ and $f_i(x)$ in Eqs.~\eqref{eq: Cech object} and \eqref{eq: Cech equivalence} are constant on $U_{ij}$ and $U_i$ because $\Fun_{\cC}(\cM, \cM)^{\text{inv}}$ is finite and hence has the discrete topology.

It is worth mentioning that the \v{C}ech cohomology $\check{\mathrm{H}}^1(X, \underline{\Fun}_{\cC}(\cM, \cM)^{\text{inv}})$ depends on the topology of the 1-form group $\mathrm{U}(1)$.
This is because continuous functions involved in the definition~\eqref{eq: Cech cohomology} depend on the topology of $\mathrm{U}(1)$.
Specifically, since $\mathrm{U}(1)$ has the standard topology, the values $\a_{ijk}(y)$ and $\gamma_{ij}(y)$ in Eqs.~\eqref{eq: Cech object} and \eqref{eq: Cech equivalence} can vary continuously with respect to $y$.
On the other hand, if we replace $\mathrm{U}(1)$ with $\mathrm{U}(1)_{\delta}$, the functions $a_{ijk}(y)$ and $\gamma_{ij}(y)$ have to be constant on $U_{ijk}$ and $U_{ij}$.

When the Postnikov class $[\beta] \in \mathrm{H}_{\text{gp}}^3(\Fun_{\cC}(\cM, \cM)^{\text{inv}}, \mathrm{U}(1))$ is trivial, the functions $g_{ij}$ and $a_{ijk}$ in Eq.~\eqref{eq: Cech object} are independent of each other.
As a result, the \v{C}ech cohomology~\eqref{eq: Cech cohomology} is decomposed into the direct sum 
\begin{equation}
\begin{aligned}
\check{\mathrm{H}}^1(X, \underline{\Fun}_{\cC}(\cM, \cM)^{\text{inv}}) & = \check{\mathrm{H}}^1(X, \Fun_{\cC}(\cM, \cM)^{\text{inv}}) \oplus \check{\mathrm{H}}^2(X, \mathrm{U}(1)) \\
& = \check{\mathrm{H}}^1(X, \Fun_{\cC}(\cM, \cM)^{\text{inv}}) \oplus \check{\mathrm{H}}^3(X, \Z).
\end{aligned}
\label{eq: trivial Postnikov}
\end{equation}
Here, the second equality follows from the identity
\begin{equation}
\check{\mathrm{H}}^2(X, \mathrm{U}(1)) = [X, B^2\mathrm{U}(1)] = [X, K(\Z, 3)] = \check{\mathrm{H}}^3(X, \Z),
\end{equation}
where $K(\Z, 3) = B^3 \Z = B^2 \mathrm{U}(1)$ is the third Eilenberg-MacLane space of $\mathbb{Z}$.
We note that the second line of Eq.~\eqref{eq: trivial Postnikov} can also be regarded as the singular cohomology because the \v{C}ech cohomology with coefficients in a discrete group agrees with the singular cohomology with the same coefficient.

The \v{C}ech cohomology $\check{\mathrm{H}}^1(X, \underline{\Fun}_{\cC}(\cM, \cM)^{\text{inv}})$ does not necessarily have a group structure when $X$ is a general parameter space.
This is because there is no natural way to define the product of maps from $X$ to $B\underline{\Fun}_{\cC}(\cM, \cM)^{\text{inv}}$ in general.

\vspace{10pt}
\noindent{\bf $S^k$-parameterized families.}
Let us apply Conjecture~\ref{con: general parameter space} to the case where the parameter space $X$ is a $k$-dimensional sphere $S^k$.
When $k=1$, the above explicit description of the \v{C}ech cohomology shows that $\check{\mathrm{H}}^1(S^1, \underline{\Fun}_{\cC}(\cM, \cM)^{\text{inv}})$ is given by $\Fun_{\cC}(\cM, \cM)^{\text{inv}}$.
To compute the \v{C}ech cohomology of $S^{k \geq 2}$, we note that one can always take $\{g_{ij}\}$ to be trivial when $k \geq 2$ because, for finite group $G$, any principal $G$-bundle over $S^{k \geq 2}$ can be trivialized.
Therefore, the \v{C}eck cohomology $\check{\mathrm{H}}^1(S^k, \underline{\Fun}_{\cC}(\cM, \cM)^{\text{inv}})$ for $k \geq 2$ reduces to the second \v{C}ech cohomology of $S^k$ with coefficients in $\mathrm{U}(1)$.
Since $\mathrm{U}(1)$ is equipped with the standard topology, the second \v{C}ech cohomology is given by
\begin{equation}
\check{\mathrm{H}}^2(S^k, \mathrm{U}(1)) = \check{\mathrm{H}}^3(S^k, \Z) = \delta_{k, 3} \Z.
\label{eq: homotopy groups}
\end{equation}
Thus, we find
\begin{equation}
\check{\mathrm{H}}^1(S^k, \underline{\Fun}_{\cC}(\cM, \cM)^{\text{inv}}) = \pi_k(B \underline{\Fun}_{\cC}(\cM, \cM)^{\text{inv}}) = 
\begin{cases}
\Fun_{\cC}(\cM, \cM)^{\text{inv}} \quad & k = 1,\\
\Z \quad & k = 3,\\
0 \quad & k \neq 1, 3.
\end{cases}
\end{equation}
We note that the \v{C}ech cohomology of $S^3$ is given by $\Z$ regardless of $\cC$ and $\cM$, meaning that $S^3$-parameterized families of any $\cC$-symmetric gapped phase $\cM$ are classified by $\Z$.

We mention that replacing $\mathrm{U}(1)$ with $\mathrm{U}(1)_{\delta}$ leads us to a different result.
Specifically, the \v{C}ech cohomology of $S^k$ with coefficients in $\underline{\Fun}_{\cC}(\cM, \cM)^{\text{inv}}_{\delta}$ equipped with the discrete topology is given by
\begin{equation}
\check{\mathrm{H}}^1(S^k, \underline{\Fun}_{\cC}(\cM, \cM)^{\text{inv}}_{\delta}) = \pi_k(B \underline{\Fun}_{\cC}(\cM, \cM)^{\text{inv}}_{\delta}) = 
\begin{cases}
\Fun_{\cC}(\cM, \cM)^{\text{inv}} \quad & k = 1,\\
\mathrm{U}(1) \quad & k = 2,\\
0 \quad & k \geq 3.
\end{cases}
\end{equation}
This result is inconsistent with the known classification of invertible states in 1+1d \cite{Kitaev2011SCGP, Kitaev2013SCGP, Kitaev2015IPAM}.
Thus, one should employ the standard topology of $\mathrm{U}(1)$ rather than the discrete one.
Choosing a suitable topology of $\mathrm{U}(1)$ is analogous to choosing a suitable dual (i.e., the Anderson dual) of the bordism group in the classification of SPT phases \cite{Kapustin:2014tfa, Kapustin:2014dxa, Freed:2016rqq}.\footnote{
  The topology of $\mathrm{U}(1)$ needs to be set according to the specific objectives. 
  For example, in classifications using effective field theory, 
  the non-triviality of $\mathrm{U}(1)$ is detected through the partition function. 
  If one considers the values of the partition function to have physical significance, 
  it is natural to use the discrete topology 
  \cite{Hsin:2022iug}.
  On the other hand, in analyses using lattice systems, $\mathrm{U}(1)$ appears as an unphysical phase. 
  Therefore, it is natural to endow $\mathrm{U}(1)$ with the standard topology.
}

\subsubsection{Examples}
\noindent{\bf Example 1: Trivial phase without symmetry.}
As the simplest example, let us consider parameterized families of trivial states without symmetry.
The symmetry category in this case is given by $\cC = \Vect$, and the trivial phase corresponds to the $\Vect$-module category $\cM = \Vect$.
Thus, the functor category $\underline{\Fun}_{\cC}(\cM, \cM)$ is equivalent to $\Vect$.
In particular, the 2-group $\underline{\Fun}_{\cC}(\cM, \cM)^{\text{inv}}$ consists only of a 1-form group $\mathrm{U}(1)$.
Consequently, the classification of $X$-parameterized families of trivial states is given by
\begin{equation}
\check{\mathrm{H}}^1(X, \underline{\Vect}^{\text{inv}}) = \check{\mathrm{H}}^3(X, \Z).
\end{equation}
This agrees with Kitaev's proposal for the classification of invertible states in 1+1d \cite{Kitaev2011SCGP, Kitaev2013SCGP, Kitaev2015IPAM}.
The invariant of the classification is known as a higher Berry phase \cite{Kitaev2019UTA, Kapustin:2020eby, Hsin:2020cgg, Artymowicz:2023erv, Cordova:2019jnf, Cordova:2019uob, Choi:2022odr, Wen:2021gwc, Ohyama:2023suc, Qi:2023ysw, Sommer:2024dtb, Ohyama:2024jsg}.
From our point of view, this classification originates from the fact that symmetry category $\Vect$ of the trivial states contains the $\mathrm{U}(1)$ 1-form symmetry, which exists in any quantum systems in 1+1d.\footnote{In general, any quantum systems in $(n+1)$d have a $\mathrm{U}(1)$ $n$-form symmetry, which is generated by scalar multiples of the identity point operator. Usually, a system only with this symmetry is said to have no symmetry.}

\vspace{10pt}
\noindent{\bf Example 2: SPT phases with $G$ symmetry.}
One can also incorporate a finite group symmetry $G$ in the above example.
A 1+1d SPT phase with symmetry $G$ is labeled by a fiber functor of $\Vect_G$, or equivalently, a $\Vect_G$-module category with a single simple object.
When the simple object of a $\Vect_G$-module category $\cM$ is unique, the functor category $\Fun_{\Vect_G}(\cM, \cM)$ is equivalent to $\Rep(G)$ as a fusion category \cite{EGNO2015}.
Therefore, the group of invertible objects of $\Fun_{\Vect_G}(\cM, \cM)$ is isomorphic to an abelian group $\mathrm{H}_{\text{gp}}^1(G, \mathrm{U}(1))$, i.e., the group of one-dimensional representations of $G$.
Furthermore, since $\Rep(G)$ is non-anomalous, its invertible subcategory is also non-anomalous, meaning that the $F$-symbols of this subcategory are trivial.
Thus, the Postnikov class $\beta$ of the 2-group $\underline{\Fun}_{\cC}(\cM, \cM)^{\text{inv}}$ is trivial.
Consequently, the \v{C}ech cohomology splits into the direct sum
\begin{equation}
\check{\mathrm{H}}^1(X, \underline{\Rep(G)}^{\text{inv}}) = \check{\mathrm{H}}^1(X, \mathrm{H}_{\text{gp}}^1(G, \mathrm{U}(1))) \oplus \check{\mathrm{H}}^3(X, \Z).
\label{eq: X-family of G-SPT}
\end{equation}
This agrees with the known classification of $X$-parameterized families of 1+1d SPT states with symmetry $G$~\cite{Thorngren:1612.00846, Hermele2021CMSA, Thorngren2021YITP, Hsin:2020cgg,Kapustin:2020mkl}.

\vspace{10pt}
\noindent{\bf Example 3: SPT phases with $\Rep(G)$ symmetry.}
As another example, let us consider parameterized families of $\Rep(G)$ SPT states labeled by the forgetful functor of $\Rep(G)$.
The $\Rep(G)$-module category corresponding to this phase is $\Vect$, on which $\Rep(G)$ acts via the forgetful functor.
The functor category $\Fun_{\Rep(G)}(\Vect, \Vect)$ is equivalent to $\Vect_G$ \cite{EGNO2015}, and hence the group of its invertible objects is isomorphic to $G$.
Furthermore, since $\Vect_G$ is non-anomalous, its $F$-symbols are trivial, meaning that the Postnikov class of $\underline{\Fun}_{\Rep(G)}(\Vect, \Vect)^{\text{inv}}$ is trivial.
Therefore, Conjecture~\ref{con: general parameter space} implies that $X$-parameterized families are classified by
\begin{equation}
\check{\mathrm{H}}^1(X, \underline{\Vect}_G^{\text{inv}}) = \check{\mathrm{H}}^1(X, G) \oplus \check{\mathrm{H}}^3(X, \Z).
\label{eq: X-family of Rep(G) SPT}
\end{equation}
We note that $\check{\mathrm{H}}^1(X, G)$ does not admit a group structure for generic $X$ when $G$ is non-abelian.
Equation~\eqref{eq: X-family of Rep(G) SPT} implies that $S^1$-parameterized families are classified by $G$.
Concrete models of such families were constructed in Sec.~\ref{sec: Example: Rep(G) symmetry II} by using the $G \times \Rep(G)$ cluster state.

\vspace{10pt}
\noindent{\bf Example 4: SPT phases with $\Rep(H)$ symmetry.}
More generally, for any finite-dimensional semisimple Hopf algebra $H$, its representation category $\Rep(H)$ has an SPT phase labeled by the forgetful functor of $\Rep(H)$.
The corresponding $\Rep(H)$-module category is $\Vect$, on which $\Rep(H)$ acts via the forgetful functor.
The functor category $\Fun_{\Rep(H)}(\Vect, \Vect)$ is equivalent to $\Rep(H^*)$, where $H^*$ is the dual of $H$ \cite[Example 7.12.26]{EGNO2015}.
In particular, the group of invertible objects of $\Fun_{\Rep(H)}(\Vect, \Vect)$ is isomorphic to $G(H)$, the group of group-like elements of $H$.
Since $\Rep(H^*)$ is also non-anomalous, its group-like subcategory has the trivial $F$-symbols, meaning that the 2-group $\underline{\Fun}_{\Rep(H)}(\Vect, \Vect)$ has the trivial Postnikov class.
Therefore, Conjecture \ref{con: general parameter space} suggests that $X$-parameterized families are classified by
\begin{equation}
\check{\mathrm{H}}^1(X, \underline{\Rep(H^*)}^{\text{inv}}) = \check{\mathrm{H}}^1(X, G(H)) \oplus \check{\mathrm{H}}^3(X, \Z).
\end{equation}
In particular, $S^1$-parameterized families are classified by $G(H)$.
As in the case of $\Rep(G)$, it should be possible to construct concrete models of these families by using the Hopf algebra analogue of the $G \times \Rep(G)$ cluster state \cite{Jia:2024bng}.
We leave this construction for the future.

\subsection{Non-chiral SPT phases in 2+1d}
\label{sec: Higher dimensions}
\subsubsection{Conjecture}
One can also consider parameterized families of gapped systems in higher dimensions.
In this subsection, we focus on $S^1$- and $S^2$-parameterized families of 2+1d non-chiral SPT states with finite non-invertible symmetries.\footnote{A gapped phase is said to be non-chiral if and only if it admits gapped boundaries.}
More general (non-chiral) gaped systems parameterized by more general parameter spaces will be discussed in Sec.~\ref{sec: General classification in higher dimensions}.

Finite non-invertible symmetries in 2+1 dimensions are expected to be described by fusion 2-categories.
Physically, a fusion 2-category consists of topological surfaces, topological lines, and topological point operators.
The associativity of their fusion product is captured by a higher analogue of the $F$-symbols, known as the 10-j symbols.
See \cite{Douglas:2018qfz, barrett2024gray} for the precise definition of fusion 2-categories.

It is natural to expect that 2+1d non-chiral SPT phases with fusion 2-category symmetry $\cC$ are classified by fiber 2-functors of $\cC$.
Here, a fiber 2-functor is a tensor 2-functor from $\cC$ to the 2-category $2\Vect$ of finite semisimple 1-categories. 
Our conjecture for the classification of $S^1$- and $S^2$-parameterized families of invertible states in these phases is as follows:
\begin{conjecture}
$S^1$-parameterized families of non-chiral 2+1d SPT states labeled by a fiber 2-functor $F: \cC \rightarrow 2\Vect$ are classified by the group of isomorphism classes of monoidal natural automorphisms of $F$. Similarly, $S^2$-parameterized families of these SPT states are classified by the group of invertible monoidal modifications of the identity natural transformation of $F$.
\label{con: 2+1d SPT}
\end{conjecture}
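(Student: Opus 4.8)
The plan is to run the same homotopy-theoretic argument used for Conjectures~\ref{con: general gapped phases} and~\ref{con: general parameter space}, raised by one categorical level. The input is the central hypothesis of this section: the moduli space of 2+1d $\cC$-symmetric gapped systems in the non-chiral phase labeled by a $\cC$-module 2-category $\cM$ is homotopy equivalent to the classifying space $B\underline{\underline{\Fun}}_{\cC}(\cM,\cM)^{\text{inv}}$ of the categorical 2-group $\mathbb{G} := \underline{\underline{\Fun}}_{\cC}(\cM,\cM)^{\text{inv}}$, whose objects are invertible $\cC$-module 2-endofunctors, whose 1-morphisms are invertible module natural transformations, and whose 2-morphisms are invertible module modifications. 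Granting this, an $X$-parameterized family is a map into $B\mathbb{G}$ and families are classified by $[X,B\mathbb{G}]$. First I would specialize to $X=S^k$, using $[S^k,B\mathbb{G}]=\pi_k(B\mathbb{G})$, so that the statement reduces to identifying $\pi_1(B\mathbb{G})$ and $\pi_2(B\mathbb{G})$.

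Since $\mathbb{G}$ is a categorical 2-group, its classifying space obeys the standard degree shift $\pi_k(B\mathbb{G})\cong\pi_{k-1}(\mathbb{G})$. Here $\pi_0(\mathbb{G})$ is the group of equivalence classes of objects, $\pi_1(\mathbb{G})$ is the group of isomorphism classes of 1-automorphisms of the identity object, and $\pi_2(\mathbb{G})$ is the group of 2-automorphisms of the identity 1-morphism, which is $\mathrm{U}(1)$. Because $\mathrm{U}(1)$ carries the standard topology, this top layer contributes $B^2\mathrm{U}(1)=K(\Z,3)$ and hence only affects $\pi_3(B\mathbb{G})$; it leaves $\pi_1(B\mathbb{G})=\pi_0(\mathbb{G})$ and $\pi_2(B\mathbb{G})=\pi_1(\mathbb{G})$ untouched, consistent with the clean answers claimed for $S^1$ and $S^2$ and with the universal $\check{\mathrm{H}}^3(S^3,\Z)=\Z$ higher-Berry class appearing only at $S^3$ (cf.\ Example~1 of Sec.~\ref{sec: General parameter space}).

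Next I would translate $\pi_0(\mathbb{G})$ and $\pi_1(\mathbb{G})$ into the language of the fiber 2-functor. For an SPT phase $\cM$ has a single simple object, so $\cM\simeq 2\Vect$ with $\cC$-action given by $F$, and I would establish the 2-categorical analogue of the 1+1d identification (invertible objects of $\Fun_{\cC}(\cM,\cM)\cong\Aut^{\otimes}(F)$): invertible module 2-endofunctors of $2\Vect$ correspond to monoidal natural automorphisms of $F$, invertible module natural transformations between them correspond to invertible monoidal modifications, and the module modifications collapse to $\mathrm{U}(1)$. Under this dictionary the identity object is $\text{id}_F$, so $\pi_0(\mathbb{G})$ is the group of isomorphism classes of monoidal natural automorphisms of $F$ and $\pi_1(\mathbb{G})$ is the group of invertible monoidal modifications of $\text{id}_F$, which are exactly the two asserted classifications for $S^1$ and $S^2$.

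The hard part is not the homotopy computation but the two conjectural inputs. First, the identification of the physical moduli space with $B\mathbb{G}$ cannot be extracted from the MPS analysis of the earlier sections: the 2+1d counterpart would demand a controlled theory of parameterized PEPS families, which is not available, so at present the justification rests on analogy with the proven 1+1d constructions and on the symmetry-TFT picture in which an $S^1$- (resp.\ $S^2$-) family corresponds to an invertible codimension-one (resp.\ codimension-two) defect on the topological boundary labeled by $\cM$. Second, making the dictionary of the previous paragraph precise requires a 2-categorical Tannaka-Krein-type statement together with substantial coherence bookkeeping for monoidal pseudonatural transformations and their modifications; I expect this categorical identification, rather than any calculation, to be the genuine obstacle, and the most convincing supporting evidence would come from checking the predicted groups against explicit 2+1d lattice models such as higher-dimensional analogues of the $G\times\Rep(G)$ cluster state.
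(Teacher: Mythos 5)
The statement is a conjecture, so the paper offers motivation rather than proof; your proposal is a legitimate alternative motivation, but it runs in the opposite logical direction from the paper's. The paper argues for Conjecture~\ref{con: 2+1d SPT} directly and locally: an $S^1$-family of SPT states induces an $S^1$-family of fiber 2-functors all isomorphic to $F$, and the monodromy around the circle is a monoidal natural automorphism of $F$; an $S^2$-family is then read as an $S^1$-family of $S^1$-families whose constituent loops are contractible, so the induced natural automorphisms are trivializable and the family instead records an invertible monoidal modification of $\text{id}_F$. Only afterwards does the paper promote this to the classifying-space hypothesis (Conjecture~\ref{con: general parameter space 2}). You instead take that stronger hypothesis as the starting point, compute $\pi_1(B\mathbb{G})=\pi_0(\mathbb{G})$ and $\pi_2(B\mathbb{G})=\pi_1(\mathbb{G})$, and translate through the dictionary between invertible objects/1-morphisms of $\Fun_{\cC}(\cM,\cM)$ and natural automorphisms/modifications of $F$. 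What your route buys is uniformity — the $S^1$ and $S^2$ answers, the vanishing for other $S^k$ below the higher-Berry degree, and the general-$X$ statement all drop out of one computation, and the dictionary you need is exactly the one the paper verifies in its $2\Vect_G$ example ($2\Rep(G)$ with invertible objects $\mathrm{H}^2_{\text{gp}}(G,\mathrm{U}(1))$ and invertible 1-morphisms $\mathrm{H}^1_{\text{gp}}(G,\mathrm{U}(1))$). What the paper's route buys is that the $S^1$ and $S^2$ claims are supported without presupposing the full homotopy type of the moduli space: the monodromy argument only needs that isomorphism classes of fiber 2-functors label phases. Your candid identification of the two conjectural inputs (the moduli-space hypothesis and the 2-categorical Tannaka-type dictionary) matches where the paper itself leaves things unproven.

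One small correction to your consistency check: the $\mathrm{U}(1)$ sitting in the 2-morphism layer of $\mathbb{G}$ contributes, with its standard topology, $B^3\mathrm{U}(1)\simeq K(\Z,4)$ to $B\mathbb{G}$, so the higher-Berry class first appears in $\pi_4(B\mathbb{G})\cong\check{\mathrm{H}}^4(S^4,\Z)=\Z$, not $\pi_3$; the $\check{\mathrm{H}}^3(X,\Z)$ statement you cite is the 1+1d one, and its 2+1d counterpart in the paper is $\check{\mathrm{H}}^4(X,\Z)$. This does not affect your identifications of $\pi_1(B\mathbb{G})$ and $\pi_2(B\mathbb{G})$, which are what the conjecture is about.
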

We refer the reader to \cite{gordon1995coherence} \cite[Chapter 2]{schommerpries2014classification} for the definitions of monoidal 2-functors, monoidal natural transformations of monoidal 2-functors, and monoidal modifications of monoidal natural transformations.

Our conjecture on $S^1$-parameterized families is motivated by a similar argument to the case of 1+1d.
Since an SPT phase is labeled by a fiber 2-functor $F$, an $S^1$-parameterized family of SPT states in the same phase gives rise to an $S^1$-parameterized family of fiber 2-functors, all of which are isomorphic to $F$.
In particular, going around the parameter space $S^1$ induces a natural automorphism of $F$.
We expect that this automorphism completely characterizes the $S^1$-parameterized family of these SPT states.

To motivate the conjecture on $S^2$-parameterized families, we point out that an $S^2$-family can be viewed as an $S^1$-family of $S^1$-families as illustrated in Fig.~\ref{fig: S2 family}.
\begin{figure}[t]
\includegraphics[scale=1.5, trim={10pt 10pt 10pt 10pt}]{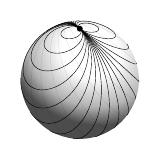}
\caption{An $S^2$-family is viewed as an $S^1$-family of $S^1$-families. Each loop represents an $S^1$-family embedded in the $S^2$-family. The black dot at the top represents a constant family, which we choose to be the initial and final $S^1$-families.}
\label{fig: S2 family}
\end{figure}
The initial and final $S^1$-families are represented by a single dot in the figure, which indicates that these families are constant in $\theta \in S^1$.
Since any loops in $S^2$ are contractible, any $S^1$-familes embedded in the $S^2$-family are trivial.
Consequently, the monoidal natural transformations induced by these $S^1$-families are isomorphic to the identity natural transformation.
In particular, the monoidal natural transformations induced by the initial and final $S^1$-families are the identity because these families are constant.
Therefore, an $S^2$-family, viewed as an $S^1$-family of $S^1$-families, induces a monoidal modification of the identity natural transformation.
We expect that this monoidal modification is an invariant that classifies the $S^2$-parameterized family of 2+1d SPT states.
See \cite{Aasen:2022cdu} for a similar argument in the context of $S^2$-families of 2+1d topological orders.

\subsubsection{Example}
As a simple example, let us consider the case of finite group symmetry $G$.
We start by describing in detail the relevant fusion 2-categorical notions such as fiber 2-functors, monoidal natural transformations, and monoidal modifications.
We will then apply Conjecture~\ref{con: 2+1d SPT} and find that our conjecture reproduces the correct classification of parameterized families of 2+1d SPT states.

As a fusion 2-category, a finite group symmetry $G$ is described by $2\Vect_G$, the 2-category of $G$-graded 2-vector spaces.
Simple objects of $2\Vect_G$ are labeled by elements of $G$ and form a group $G$ under the tensor product.
The Hom categories consisting of 1-morphisms and 2-morphisms are trivial, that is, $\Hom_{2\Vect_G}(g, h) = \delta_{g, h} \Vect$ as a 1-category.
The 10-j symbols of $2\Vect_G$ are also trivial, meaning that the symmetry group $G$ is non-anomalous.\footnote{An anomalous finite group symmetry $G$ is described by $2\Vect_G^{\alpha}$ with the 10-j symbols given by a 4-cocycle $[\alpha] \in \mathrm{H}_{\text{gp}}^4(G, \mathrm{U}(1))$.}

A fiber 2-functor $(F, J, \omega): 2\Vect_G \rightarrow 2\Vect$ consists of 1-isomorphisms $J_{g, h}: F(g) \otimes F(h) \rightarrow F(gh)$ and 2-isomorphisms $\omega_{g, h, k}$ fitting into the following diagram:
\begin{equation}
\begin{tikzcd}
\adjincludegraphics[valign=c, scale=1]{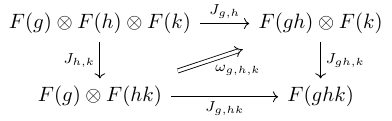}
\end{tikzcd}
\label{eq: fiber 2-functor}
\end{equation}
The identity 1-morphisms are omitted in the above diagram.
The coherence condition on $\omega_{g, h, k}$ is given by the equality of the following two diagrams:
\begin{equation}
\adjincludegraphics[valign=c, scale=1]{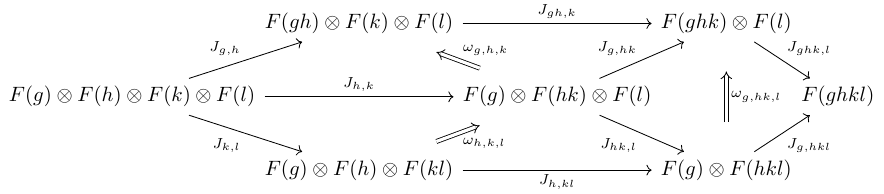}
\label{eq: fiber 2-functor coherence1}
\end{equation}
\begin{equation}
\adjincludegraphics[valign=c, scale=1]{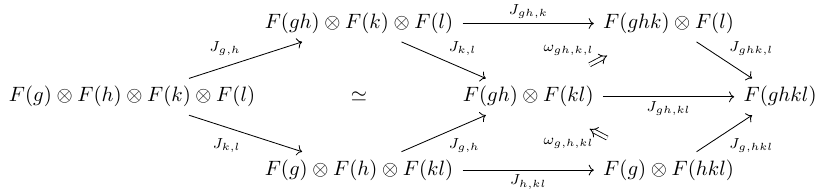}
\label{eq: fiber 2-functor coherence2}
\end{equation}
The 2-isomorphism $\omega_{g, h, k}$ in Eq.~\eqref{eq: fiber 2-functor} can be identified with a complex number $\omega(g, h, k) \in \mathrm{U}(1)$ because the source and target 1-morphisms of $\omega_{g, h, k}$ are invertible and hence simple.\footnote{The space of 2-morphisms between simple 1-morphisms is isomorphic to $\C$ \cite{Douglas:2018qfz}.}
With this identification, the equality of two diagrams in Eqs.~\eqref{eq: fiber 2-functor coherence1} and \eqref{eq: fiber 2-functor coherence2} leads to
\begin{equation}
\omega(h, k, l) \omega(g, hk, l) \omega(g, h, k) = \omega(gh, k, l) \omega(g, h, kl) \Leftrightarrow \delta \omega = 1.
\label{eq: 3-cocycle omega}
\end{equation}
Thus, the 2-isomorphism $\omega$ associated with a fiber 2-functor defines a group 3-cocycle $\omega \in \mathrm{Z}_{\text{gp}}^3(G, \mathrm{U}(1))$.

Given two fiber 2-functors $(F, J, \omega)$ and $(F^{\prime}, J^{\prime}, \omega^{\prime})$, a monoidal natural isomorphism $(\eta, \nu): (F, J, \omega) \Rightarrow (F^{\prime}, J^{\prime}, \omega^{\prime})$ consists of 1-isomorphisms $\eta_g: F(g) \rightarrow F^{\prime}(g)$ and 2-isomorphisms $\nu_{g, h}$ fitting into the following diagram:
\begin{equation}
\adjincludegraphics[valign=c, scale=1]{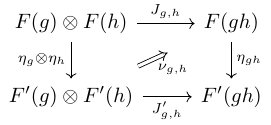}
\label{eq: natural transformation 2}
\end{equation}
The coherence condition on $\nu$ is given by the equality of the following two diagrams:
\begin{equation}
\adjincludegraphics[valign=c, scale=1]{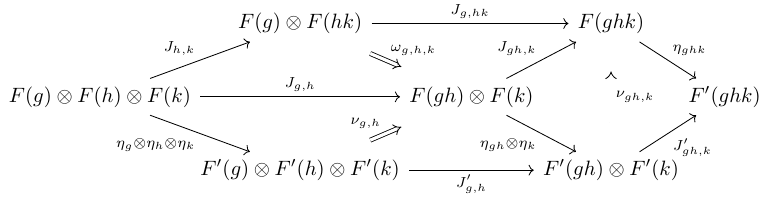}
\label{eq: nu coherence 1}
\end{equation}
\begin{equation}
\adjincludegraphics[valign=c, scale=1]{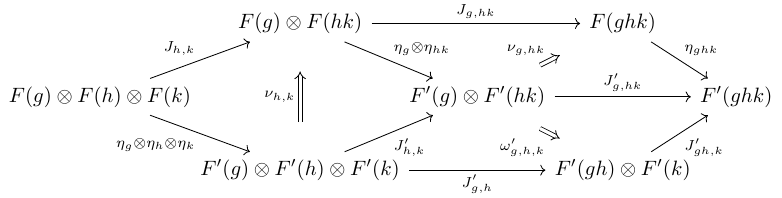}
\label{eq: nu coherence 2}
\end{equation}
The 2-isomorphism $\nu_{g, h}$ in Eq.~\eqref{eq: natural transformation 2} can be identified with a complex number $\nu(g, h) \in \mathrm{U}(1)$ just as in the case of $\omega_{g, h, k}$.
With this identification, the equality of the two diagrams in Eqs.~\eqref{eq: nu coherence 1} and \eqref{eq: nu coherence 2} reduces to
\begin{equation}
\nu(g, h) \nu(gh, k) \omega(g, h, k)^{-1} = \nu(h, k) \nu(g, hk) \omega^{\prime}(g, h, k)^{-1} \Leftrightarrow \omega^{\prime} = \omega \delta \nu.
\label{eq: equivalence of omega}
\end{equation}
Thus, fiber 2-functors $(F, J, \omega)$ and $(F^{\prime}, J^{\prime}, \omega^{\prime})$ are isomorphic to each other if and only if $\omega$ and $\omega^{\prime}$ are in the same cohomology class, i.e., $[\omega] = [\omega^{\prime}] \in \mathrm{H}_{\text{gp}}^3(G, \mathrm{U}(1))$.

Given two monoidal natural isomorphisms $(\eta, \nu), (\eta^{\prime}, \nu^{\prime}): (F, J, \omega) \Rightarrow (F^{\prime}, J^{\prime}, \omega^{\prime})$, an invertible monoidal modification $\lambda: (\eta, \nu) \Rrightarrow (\eta^{\prime}, \nu^{\prime})$ consists of 2-isomorphisms $\lambda_g: \eta_g \Rightarrow \eta^{\prime}_g$ such that
\begin{equation}
\adjincludegraphics[valign=c, scale=1]{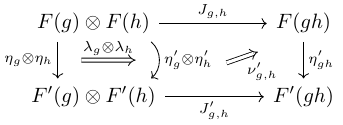}
~ = ~
\adjincludegraphics[valign=c, scale=1]{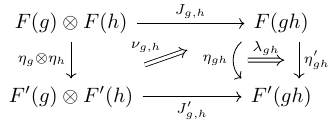}
\end{equation}
By identifying the 2-isomorphism $\lambda_g$ with a complex number $\lambda(g) \in \mathrm{U}(1)$, the above equation reduces to
\begin{equation}
\nu^{\prime}(g, h) \lambda(g) \lambda(h) = \nu(g, h) \lambda(gh) \Leftrightarrow \nu^{\prime} = \nu / \delta \lambda.
\label{eq: equivalence of nu}
\end{equation}

Now, let us apply Conjecture~\ref{con: 2+1d SPT} to 2+1d SPT phases with symmetry $G$.
First of all, Eqs.~\eqref{eq: 3-cocycle omega} and \eqref{eq: equivalence of omega} show that isomorphism classes of fiber 2-functors of $2\Vect_G$ are classified by the third group cohomology $\mathrm{H}_{\text{gp}}^3(G, \mathrm{U}(1))$.
This agrees with the well-known classification of 2+1d SPT phases \cite{Chen:2011bcp, Chen:2011pg}.

When $\omega = \omega^{\prime}$, Eq.~\eqref{eq: equivalence of omega} reduces to $\delta \nu = 1$, which implies that a monoidal natural automorphism of a fiber 2-functor $(F, J, \omega)$ is associated with a group 2-cocycle $\nu$.
Furthermore, Eq.~\eqref{eq: equivalence of nu} implies that natural automorphisms $(\eta, \nu)$ and $(\eta^{\prime}, \nu^{\prime})$ are isomorphic to each other if and only if the associated 2-cocycles are equivalent.
Therefore, isomorphism classes of monoidal natural automorphisms of a fiber 2-functor $(F, J, \omega)$ are classified by the second group cohomology $\mathrm{H}_{\text{gp}}^2(G, \mathrm{U}(1))$.
This agrees with the known classification of $S^1$-parameterized families of $G$-SPT states in 2+1d~\cite{Thorngren:1612.00846, Hermele2021CMSA, Thorngren2021YITP, Shiozaki:2021weu, Hsin:2020cgg}.

When $\nu = \nu^{\prime}$, Eq.~\eqref{eq: equivalence of nu} reduces to $\delta \lambda = 1$, meaning that an invertible modification of a natural isomorphism $(\eta, \nu)$ is associated with a 1-cocycle $\lambda$, i.e., a group homomorphism $\lambda: G \rightarrow \mathrm{U}(1)$.
In particular, the group of invertible modifications of the identity natural transformation of a fiber 2-functor $(F, J, \omega)$ is the first group cohomology $\mathrm{H}_{\text{gp}}^1(G, \mathrm{U}(1))$.
Thus, Conjecture~\ref{con: 2+1d SPT} suggests that $S^2$-parameterized families of $G$-SPT states in 2+1d are classified by $\mathrm{H}_{\text{gp}}^1(G, \mathrm{U}(1))$.
This also agrees with the known classification~\cite{Thorngren:1612.00846, Hermele2021CMSA, Thorngren2021YITP, Hsin:2020cgg}.

More generally, $X$-parameterized families of $G$-SPT states in 2+1d are known to be classified by~\cite{Thorngren:1612.00846, Hermele2021CMSA, Thorngren2021YITP, Hsin:2020cgg}
\begin{equation}
\check{\mathrm{H}}^1(X, \mathrm{H}_{\text{gp}}^2(G, \mathrm{U}(1))) \oplus \check{\mathrm{H}}^2(X, \mathrm{H}_{\text{gp}}^1(G, \mathrm{U}(1))) \oplus \check{\mathrm{H}}^4(X, \Z).
\end{equation}
We will reproduce this classification as a special case of our general conjecture in the next subsection.

\subsection{General non-chiral gapped phases in 2+1d}
\label{sec: General classification in higher dimensions}
\subsubsection{Conjecture}
Although we only discussed $S^1$- and $S^2$-parameterized families of 2+1d non-chiral SPT states above, it is natural to expect that a similar conjecture holds for more general gapped systems parameterized by more general parameter spaces.
In this subsection, we consider general parameterized families of general non-chiral gapped systems in 2+1d.

As in the 1+1d case, 2+1d non-chiral gapped phases with fusion 2-category symmetry $\cC$ are expected to be classified by module 2-categories over $\cC$.
For these phases, we propose the following conjecture on the classification of parameterized families:
\begin{conjecture}
$X$-parameterized families of 2+1d $\cC$-symmetric gapped systems in a non-chiral gapped phase labeled by a $\cC$-module 2-category $\cM$ are classified by the set $[X, B \underline{\underline{\Fun}}_{\cC}(\cM, \cM)^{\text{inv}}]$ of the homotopy classes of maps from $X$ to the classifying space of $\underline{\underline{\Fun}}_{\cC}(\cM, \cM)^{\text{inv}}$.
Here, $\underline{\underline{\Fun}}_{\cC}(\cM, \cM)^{\text{inv}}$ is a 3-group consisting of invertible objects, invertible 1-morphisms, and invertible 2-morphisms of the fusion 2-category $\Fun_{\cC}(\cM, \cM)$ of $\cC$-module 2-endofunctors of $\cM$. Equivalently, $X$-parameterized families of these gapped systems are classified by the non-abelian \v{C}ech cohomology $\check{\mathrm{H}}^1(X, \underline{\underline{\Fun}}_{\cC}(\cM, \cM)^{\text{inv}})$.
\label{con: general parameter space 2}
\end{conjecture}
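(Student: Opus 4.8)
The plan is to argue for this conjecture by the same two-step strategy used for Conjecture~\ref{con: general parameter space} in 1+1d, now lifted one categorical level. The central hypothesis I would adopt is that the moduli space of 2+1d $\cC$-symmetric gapped systems lying in the phase labeled by $\cM$ is homotopy equivalent to the classifying space $B\underline{\underline{\Fun}}_{\cC}(\cM, \cM)^{\text{inv}}$ of the categorical 3-group of invertible module 2-endofunctors, invertible module natural transformations, and invertible module modifications. Granting this, the statement reduces to the purely homotopy-theoretic identity $[X, B\mathbb{G}] = \check{\mathrm{H}}^1(X, \mathbb{G})$ for a topological 3-group $\mathbb{G}$, generalizing the 2-group result of \cite{BS2009} invoked in the 1+1d case. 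Concretely, I would present $\mathbb{G} = \underline{\underline{\Fun}}_{\cC}(\cM, \cM)^{\text{inv}}$ through its Postnikov data: a finite 0-form group $\Fun_{\cC}(\cM, \cM)^{\text{inv}}$, a 1-form group of invertible 1-morphisms, and a top 2-form group $\U{1}$ (with the standard topology) of invertible 2-morphisms, together with higher Postnikov invariants built from the associator data (the 10-j symbols) of the invertible sub-2-category.

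I would then motivate the central hypothesis through the symmetry TFT, paralleling Figs.~\ref{fig: SymTFT} and \ref{fig: modulated boundary}. A 2+1d $\cC$-symmetric system is the boundary of the 3+1d Douglas--Reutter (Turaev--Viro type) TFT associated with the fusion 2-category $\cC$, and a gapped phase is a topological boundary labeled by the $\cC$-module 2-category $\cM$. An $X$-family of such boundaries, viewed from far away, coarse-grains to a network of invertible topological \emph{surfaces} on the boundary, and these surfaces, together with their invertible line and point operators, are exactly the invertible part of $\Fun_{\cC}(\cM, \cM)$. The assembly of such defects over a good cover of $X$ is precisely the cocycle data defining $\check{\mathrm{H}}^1(X, \underline{\underline{\Fun}}_{\cC}(\cM, \cM)^{\text{inv}})$.

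Next I would check the conjecture against every already-established special case. For $X = S^1$ and $X = S^2$ with $\cM$ having a single simple object, the homotopy groups of $B\mathbb{G}$ reproduce the monoidal natural automorphisms and the monoidal modifications of the identity of a fiber 2-functor, recovering Conjecture~\ref{con: 2+1d SPT}; for $\cC = 2\Vect_G$ the three layers become $\mathrm{H}_{\text{gp}}^2(G, \U{1})$, $\mathrm{H}_{\text{gp}}^1(G, \U{1})$, and $\Z$, and the trivial-Postnikov splitting of the \v{C}ech cohomology reproduces the known answer $\check{\mathrm{H}}^1(X, \mathrm{H}_{\text{gp}}^2(G, \U{1})) \oplus \check{\mathrm{H}}^2(X, \mathrm{H}_{\text{gp}}^1(G, \U{1})) \oplus \check{\mathrm{H}}^4(X, \Z)$. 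I would also transport the gauging argument: since $\Fun_{\cC}(\cM, \cM)$ should be invariant under gauging a separable algebra object (the 2-categorical analogue of Theorem 7.12.16 of \cite{EGNO2015}), the conjecture holds for a gauged phase whenever it holds for the original, so it suffices to verify it on a convenient representative of each gauging orbit.

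The hard part will be twofold and genuinely beyond routine computation. First, the identity $[X, B\mathbb{G}] = \check{\mathrm{H}}^1(X, \mathbb{G})$ for a \emph{3}-group, with an explicit non-abelian cocycle description and equivalence relation (the analogue of Eqs.~\eqref{eq: Cech object}--\eqref{eq: Cech equivalence}, now carrying two descendant layers built from the 10-j symbols), must be made precise, and the topology on the top $\U{1}$ must again be taken standard rather than discrete, since the discrete choice conflicts with the known classification of invertible states. Second, and more fundamentally, the central hypothesis that the moduli space of gapped systems in a fixed 2+1d phase is $B\underline{\underline{\Fun}}_{\cC}(\cM, \cM)^{\text{inv}}$ is a physical input, not a theorem; a microscopic derivation, for instance via block-injective projected entangled pair states analogous to the MPS analysis of Sec.~\ref{sec: Invariants of one-parameter families of SPT phases}, would be required to upgrade the conjecture to a proof, and this is where I expect the main obstacle to lie.
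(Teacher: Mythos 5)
Your proposal takes essentially the same route as the paper: the statement is a conjecture, and the paper likewise supports it by (i) positing that the moduli space of gapped systems in the phase $\cM$ is $B\underline{\underline{\Fun}}_{\cC}(\cM,\cM)^{\text{inv}}$ so that $X$-families are $[X, B\underline{\underline{\Fun}}_{\cC}(\cM,\cM)^{\text{inv}}] = \check{\mathrm{H}}^1(X,\underline{\underline{\Fun}}_{\cC}(\cM,\cM)^{\text{inv}})$, and (ii) checking consistency with Conjecture~\ref{con: 2+1d SPT}, the $2\Vect_G$ and $2\Rep(G)$ examples, and the known higher-Berry-phase classification, with the same insistence on the standard topology for the top $\mathrm{U}(1)$. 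Your added gauging-invariance and symmetry-TFT arguments are natural transplants of the paper's 1+1d discussion, and you correctly identify the same open gaps (the precise non-abelian \v{C}ech description for a 3-group and the lack of a microscopic derivation of the moduli-space hypothesis) that keep this a conjecture rather than a theorem.
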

We note that the above conjecture is a natural generalization of the conjecture for the 1+1d case, cf. Conjecture~\ref{con: general parameter space}.
A formal generalization to higher dimensions should be clear.
However, we do not state the conjecture in higher dimensions here, partly because the theory of fusion $n$-categories and related notions for $n \geq 3$ are still under development despite significant advances in recent years \cite{Gaiotto:2019xmp, Johnson_Freyd_2022, Kong:2020iek}.

The conjectural one-to-one correspondence between 2+1d non-chiral gapped phases with symmetry $\cC$ and $\cC$-module 2-categories is supported by the explicit construction of lattice models in \cite{Inamura:2023qzl}.
As discussed in~\cite{Inamura:2023qzl}, one can construct a $\cC$-symmetric commuting projector model from each $\cC$-module 2-category.
The gapped phase realized in this model should be non-chiral because the Hamiltonian is given by the sum of local commuting projectors.
Thus, the lattice models in~\cite{Inamura:2023qzl} relate $\cC$-module 2-categories to $\cC$-symmetric non-chiral gapped phases in 2+1d.
For example, when the symmetry $\cC$ is trivial, i.e., when $\cC = 2\Vect$, an indecomposable module 2-category over $\cC$ is given by the 2-category $\Mod(\cD)$ of module categories over a fusion 1-category $\cD$~\cite{Decoppet2023Morita}.
The commuting projector model of \cite{Inamura:2023qzl} constructed from $\Mod(\cD)$ is the Levin-Wen model~\cite{Levin:2004mi} constructed from $\cD$.
This model exhibits a topological order whose anyon contents are described by the Drinfeld center $Z(\cD)$ of $\cD$.
Since the Drinfeld center of a fusion category describes the most general non-chiral topological order in 2+1d \cite{Fuchs:2012dt, Freed:2020qfy}, the aforementioned one-to-one correspondence is established at least in the case of the trivial symmetry.\footnote{General 2+1d topological orders up to invertible ones are believed to be classified by modular tensor categories, which are in one-to-one correspondence with Lagrangian algebras in the Drinfeld center $Z(2\Vect)$ of $2\Vect$ \cite{Decoppet:2023rlx}.}

In what follows, we provide several examples that verify the classification of parameterized families stated in Conjecture~\ref{con: general parameter space 2}

\subsubsection{Examples}
\noindent{\bf Example 1: trivial phase without symmetry.}
The simplest example is the parameterized families of trivial states without symmetry.
From the point of view of module 2-categories, the trivial phase corresponds to the regular module $\cM = 2\Vect$ over $\cC = 2\Vect$.
The associated functor 2-category $\Fun_{2\Vect}(2\Vect, 2\Vect)$ is also equivalent to $2\Vect$.
The 3-group $\underline{\underline{\Fun}}_{2\Vect}(2\Vect, 2\Vect)^{\text{inv}} = \underline{\underline{2\Vect}}^{\text{inv}}$ consists only of the 2-form group $\mathrm{U}(1)$.
We denote this 3-group by $\mathrm{U}(1)^{[2]}$, where the superscript specifies the form degree.
Applying Conjecture~\ref{con: general parameter space 2} to this example suggests that $X$-parameterized families of the trivial states without symmetry are classified by
\begin{equation}
\check{\mathrm{H}}^1(X, \underline{\underline{2\Vect}}^{\text{inv}}) = \check{\mathrm{H}}^{1}(X, \mathrm{U}(1)^{[2]}) = [X, K(\Z, 4)] = \check{\mathrm{H}}^4(X, \mathbb{Z}),
\end{equation}
where $K(\Z, 4) = B^3 \mathrm{U}(1)$ is the fourth Eilenberg-MacLance space of $\Z$.
Here, $\mathrm{U}(1)$ is supposed to have the standard topology rather than the discrete one.
The above result agrees with the known classification of parameterized families of trivial invertible states \cite{Kitaev2011SCGP, Kitaev2013SCGP, Kitaev2015IPAM} via higher Berry phases \cite{Qi:2023ysw,Kapustin:2020eby,Hsin:2020cgg,Ohyama:2024ytt, Sommer:2024lzp}.
As in 1+1d, this classification originates from the fact that the ``trivial" fusion 2-category $2\Vect$ captures a $\mathrm{U}(1)$ 2-form symmetry, which exists in any quantum systems in 2+1d.

\vspace{10pt}
\noindent{\bf Example 2: SPT phases with $G$ symmetry.}
Another simple example is the parameterized families of SPT states with finite group symmetry $G$.
For simplicity, here we only consider $G$-symmetric trivial states.\footnote{The classification of parameterized families should not depend on the choice of an SPT phase because stacking a $G$-SPT phase is an invertible operation.}
From the point of view of module 2-categories, the trivial $G$-symmetric phase corresponds to a $2\Vect_G$-module 2-category $2\Vect$.
The associated functor 2-category is $\Fun_{2\Vect_G}(2\Vect, 2\Vect) \cong 2\Rep(G)$, the category of 2-representations of $G$ \cite{Decoppet2023Morita}.
Invertible objects and invertible 1-morphisms of $2\Rep(G)$ form $\mathrm{H}_{\text{gp}}^2(G, \mathrm{U}(1))$ and $\mathrm{H}_{\text{gp}}^1(G, \mathrm{U}(1))$ respectively, see e.g. \cite{Bhardwaj:2022lsg, Bartsch:2022mpm, Delcamp:2023kew} for a detailed description of $2\Rep(G)$ for physicists.
Thus, the 3-group $\underline{\underline{2\Rep(G)}}^{\text{inv}}$ consists of a 0-form group $\mathrm{H}_{\text{gp}}^2(G, \mathrm{U}(1))$, a 1-form group $\mathrm{H}_{\text{gp}}^1(G, \mathrm{U}(1))$, and a 2-form group $\mathrm{U}(1)$.
The 3-group structure on $\underline{\underline{2\Rep(G)}}^{\text{inv}}$ is trivial, meaning that the actions of the 0-form group on the 1-form and 2-form groups are trivial, and the Postnikov classes are also trivial.\footnote{The invertible objects and invertible 1-morphisms of $2\Rep(G)$ are known as gauged SPT defects in the context of finite group gauge theories \cite{Barkeshli:2022wuz, Barkeshli:2022edm}. These gauged SPT defects from an anomaly-free 2-group symmetry $\mathrm{H}_{\text{gp}}^2(G, \mathrm{U}(1))^{[0]} \times \mathrm{H}_{\text{gp}}^1(G, \mathrm{U}(1))^{[1]}$ with the trivial action of the 0-form group and the trivial Postnikov class \cite{Barkeshli:2022wuz}. Put differently, $\underline{\underline{2\Rep(G)}}^{\text{inv}}$ has the trivial $3$-group structure. The authors thank Ryohei Kobayashi for discussions on this point.}
Therefore, Conjecture~\ref{con: general parameter space 2} suggests that $X$-parameterized families of trivial states with $G$ symmetry are classified by
\begin{equation}
\begin{aligned}
\check{\mathrm{H}}^1(X, \underline{\underline{2\Rep(G)}}^{\text{inv}}) & = \check{\mathrm{H}}^1(X, \mathrm{H}_{\text{gp}}^2(G, \mathrm{U}(1))^{[0]} \times \mathrm{H}_{\text{gp}}^1(G, \mathrm{U}(1))^{[1]} \times \mathrm{U}(1)^{[2]})\\
& = \check{\mathrm{H}}^1(X, \mathrm{H}_{\text{gp}}^2(G, \mathrm{U}(1))) \oplus \check{\mathrm{H}}^2(X, \mathrm{H}_{\text{gp}}^1(G, \mathrm{U}(1))) \oplus \check{\mathrm{H}}^4(X, \mathrm{U}(1)).
\end{aligned}
\end{equation}
This agrees with the known classification of parameterized families of 2+1d $G$-SPT states~\cite{Thorngren:1612.00846, Hermele2021CMSA, Thorngren2021YITP, Hsin:2020cgg, Kapustin:2020mkl}.

\vspace{10pt}
\noindent{\bf Example 3: SPT phases with $2\Rep(G)$ symmetry.}
Yet another example is the parameterized families of SPT states with $2\Rep(G)$ symmetry.
As opposed to the case of $G$ symmetry, the set of SPT phases with $2\Rep(G)$ symmetry does not have a natural group structure in general.
As a result, the classification of parameterized families depends on the phases.
Here, for simplicity, we consider families of $2\Rep(G)$-symmetric trivial states, which are obtained by gauging $G$-symmetry-breaking states.

From the point of view of module 2-categories, the trivial phase with $2\Rep(G)$ symmetry corresponds to a $2\Rep(G)$-module 2-category $2\Vect$.
The associated functor 2-category is $\Fun_{2\Rep(G)}(2\Vect, 2\Vect) \cong 2\Vect_G$ \cite{Decoppet2023Morita}.
Thus, the 3-group $\underline{\underline{\Fun}}_{2\Rep(G)}(2\Vect, 2\Vect)^{\text{inv}}$ consists only of the 0-form group $G$ and the 2-form group $\mathrm{U}(1)$.
The 3-group structure is again trivial.
Therefore, Conjecture~\ref{con: general parameter space 2} suggests that $X$-parameterized families of trivial states with $2\Rep(G)$ symmetry are classified by
\begin{equation}
\check{\mathrm{H}}^1(X, \underline{\underline{2\Vect_G}}^{\text{inv}}) = \check{\mathrm{H}}^1(X, G^{[0]} \times \mathrm{U}(1)^{[2]}) = \check{\mathrm{H}}^1(X, G) \oplus \check{\mathrm{H}}^4(X, \Z).
\end{equation}
We note that parameterized families of $G$-SSB states in 2+1d are also classified by the same cohomology group because $G$-SSB states and $2\Rep(G)$-symmetric trivial states are related by gauging.

\vspace{10pt}
\noindent{\bf Example 4: topological orders without symmetry.}
Finally, we consider the parameterized families of topologically ordered states with the trivial symmetry $\cC = 2\Vect$.
As already mentioned, an indecomposable module 2-category over $2\Vect$ is given by the 2-category $\Mod(\cD)$ of module categories over a fusion 1-category $\cD$ \cite{Decoppet2023Morita}.
The associated functor 2-category is $\Fun_{2\Vect}(\Mod(\cD), \Mod(\cD)) \cong \Mod(Z(\cD))$ \cite[Example 5.3.7]{Decoppet2023Morita}, where $Z(\cD)$ denotes the Drinfeld center of $\cD$.
The 3-group consisting of invertible objects, invertible 1-morphisms, and invertible 2-morphisms of $\Mod(Z(\cD))$ is known as the categorical Picard 2-group $\underline{\underline{\text{Pic}}}(Z(\cD))$ of $Z(\cD)$ \cite{ENO2010}.
Thus, we have
\begin{equation}
\underline{\underline{\Fun}}_{2\Vect}(\Mod(\cD), \Mod(\cD))^{\text{inv}} = \underline{\underline{\text{Pic}}}(Z(\cD)).
\end{equation}
Therefore, Conjecture~\ref{con: general parameter space 2} suggests that $X$-parameterized families of gapped systems in a topologically ordered phase $Z(\cD)$ are classified by the set $[X, B \underline{\underline{\text{Pic}}}(Z(\cD))] = \check{\mathrm{H}}^1(X, \underline{\underline{\text{Pic}}}(Z(\cD)))$.
This agrees with the conjecture in \cite{Aasen:2022cdu}.

\section{Future Directions}

In this paper, we developed the analysis of interface modes and Thouless pumps in $\mathcal{C}$-SPT phases within the framework of MPS representations.
Specifically, by investigating the properties of interface modes, we generalized the bulk-boundary correspondence for usual SPT phases, 
and by investigating Thouless pumps, we revealed pumping phenomena with respect to non-abelian charges. 
In addition, through categorical considerations, we proposed several conjectures for the classification of parameterized families of general gapped phases in both 1+1d and higher dimensions. 
The next task to complete is verifying these conjectures using tensor networks and constructing models that realize their classifications.
We leave them for future studies.

\begin{acknowledgments}
We thank Ryohei Kobayashi and Yosuke Kubota for useful discussions.
We are grateful to the long-term workshop YITP-T-23-01 held at Yukawa Institute for Theoretical Physics, Kyoto University, where a part of this work was done.
S.O. is supported by JSPS KAKENHI Grant Number 23KJ1252 and 24K00522.
\end{acknowledgments}

\appendix

\section{Tambara-Yamagami categories}
\label{sec: Tambara-Yamagami categories}
In this appendix, we review the Tambara-Yamagami categories \cite{TY1998}.
A Tambara-Yamagami category $\TY(A, \chi, \epsilon)$ is specified by a triple $(A, \chi, \epsilon)$, where $A$ is a finite abelian group, $\chi: A \times A \rightarrow \mathrm{U}(1)$ is a symmetric non-degenerate bicharacter of $A$, and $\epsilon \in \{+1, -1\}$ is a sign.
The set of simple objects consists of invertible objects labeled by elements of $A$ and a single non-invertible object $D$.
The fusion rules of these simple objects are given by
\begin{equation}
a \otimes b = ab, \quad a \otimes D = D \otimes a = D, \quad D \otimes D = \bigoplus_{a \in A} a,
\end{equation}
where $a$ and $b$ represent invertible objects labeled by group elements $a, b \in A$.
The above fusion rules imply that the quantum dimension of the non-invertible object $D$ is given by $\dim(D) = \sqrt{|A|}$, while the quantum dimensions of invertible objects are all $1$.
The $F$-symbols in a specific gauge can be written as
\begin{equation}
(F^{aDb}_D)_{DD} = (F^{DaD}_b)_{DD} = \chi(a, b), \quad (F^{DDD}_D)_{a, b} = \frac{\epsilon}{\sqrt{|A|}} \chi(a, b)^{-1}.
\label{eq: TY F}
\end{equation}
The other $F$-symbols are all trivial.

Simple examples of our interest are $\Z_2 \times \Z_2$ Tambara-Yamagami categories $\TY(\Z_2 \times \Z_2, \chi, \epsilon)$.
When $A = \Z_2 \times \Z_2$, there are two choices for a symmetric non-degenerate bicharacter $\chi$, which we denote by $\chi_+$ and $\chi_-$.
These bicharacters are explicitly given by \cite{TY1998}
\begin{equation}
\chi_{\pm}((1, 0), (1, 0)) = \chi_{\pm}((0,1), (0, 1)) = \pm 1, \quad \chi_{\pm}((1, 0), (0, 1)) = \mp 1,
\label{eq: chi pm}
\end{equation}
where $(1, 0)$ and $(0, 1)$ are generators of $\Z_2 \times \Z_2$.
The above equation uniquely determines $\chi_{\pm}(a, b)$ for general $a, b \in \Z_2 \times \Z_2$ because $\chi_{\pm}$ is a symmetric bicharacter.
Corresponding to different choices for $\chi$ and $\epsilon$, there are four inequivalent $\Z_2 \times \Z_2$ Tambara-Yamagami categories.
It is known that $\TY(\Z_2 \times \Z_2, \chi_+, +1)$ is equivalent to the representation category $\Rep(D_8)$ of the dihedral group $D_8$ of order 8.
Specifically, invertible and non-invertible simple objects of $\TY(\Z_2 \times \Z_2, \chi_+, +1)$ correspond to one-dimensional and two-dimensional irreducible representations of $D_8$.
Similarly, $\TY(\Z_2 \times \Z_2, \chi_+, -1)$ is equivalent to the representation category $\Rep(Q_8)$ of the quaternion group $Q_8$, and $\TY(\Z_2 \times \Z_2, \chi_-, +1)$ is equivalent to the representation category $\Rep(H_8)$ of an eight-dimensional semisimple Hopf algebra $H_8$ known as the Kac-Paljutkin algebra \cite{KP1966}.
The above three $\Z_2 \times \Z_2$ Tambara-Yamagami categories are non-anomalous because they are equivalent to representation categories of semisimple Hopf algebras.
Thus, these fusion categories admit SPT phases.
On the other hand, the remaining one $\TY(\Z_2 \times \Z_2, \chi_-, -1)$ is anomalous and hence does not admit SPT phases.
See Table~\ref{tab: TY} for a summary of four $\Z_2 \times \Z_2$ Tambara-Yamagami categories.
\begin{table}[t]
\begin{tabular}{c|c|c}
 & $\epsilon = 1$ & $\epsilon = -1$\\ \hline
$\chi = \chi_+$ & $\Rep(D_8)$ & $\Rep(Q_8)$ \\
$\chi = \chi_-$ & $\Rep(H_8)$ & anomalous
\end{tabular}
\caption{There are four $\Z_2 \times \Z_2$ Tambara-Yamagami categories depending on the choice of $\chi$ and $\epsilon$. Three of them are non-anomalous, while the other one is anomalous. Ref.~\cite{Tam2000} showed that $\Rep(D_8)$ admits three fiber functors, while $\Rep(Q_8)$ and $\Rep(H_8)$ admit only one fiber functor.}
\label{tab: TY}
\end{table}

\section{Fiber functors of Tambara-Yamagami categories}
\label{sec: Fiber functors of Tambara-Yamagami categories}
In this appendix, we review fiber functors of Tambara-Yamagami categories following \cite{Tam2000}.
The fiber functors and the corresponding $L$-symbols that we write down below are derived in a specific gauge where the $F$-symbols of the Tambara-Yamagami categories are given by Eq.~\eqref{eq: TY F}.

We first recall the classification of fiber functors of $\TY(A, \chi, \epsilon)$ \cite{Tam2000}, see also \cite{Thorngren:2019iar}.
Proposition 3.2 of \cite{Tam2000} claims that isomorphism classes of fiber functors of $\TY(A, \chi, \epsilon)$ are in one-to-one correspondence with equivalence classes of triples $(\sigma, \xi, \nu)$, where $\sigma: A \rightarrow A$ is an involutive automorphism of $A$, $\xi: A \times A \rightarrow \mathrm{U}(1)$ is a 2-cocycle on $A$, and $\nu: A \rightarrow \mathrm{U}(1)$ is a $\mathrm{U}(1)$-valued function on $A$, such that 
\begin{equation}
\begin{aligned}
\chi(a, b) & = \xi(a, \sigma(b)) / \xi(\sigma(b), a), \\
\nu(a) \nu(b) / \nu(ab) & = \xi(a, b) / \xi(\sigma(b), \sigma(a)), \\
\nu(a) \nu(\sigma(a)) & = 1, \\
\sum_{\substack{a \in A \\ \text{s.t. } \sigma(a) = a}} \nu(a) & = \epsilon \sqrt{|A|}.
\end{aligned}
\label{eq: fiber functor classification}
\end{equation}
Here, two triples $(\sigma, \xi, \nu)$ and $(\sigma^{\prime}, \xi^{\prime}, \nu^{\prime})$ are said to be equivalent if and only if
\begin{equation}
\sigma^{\prime} = \sigma, \quad \xi^{\prime}(a, b) = \xi(a, b) \zeta(a) \zeta(b) / \zeta(ab), \quad \nu^{\prime}(a) = \nu(a) \zeta(a) / \zeta(\sigma(a))
\end{equation}
for some $\mathrm{U}(1)$-valued function $\zeta: A \rightarrow \mathrm{U}(1)$.
The 2-cocycle $\xi$ can always be normalized so that $\xi(g, 1) = \xi(1, h) = 1$.
In what follows, we suppose that $\xi$ is a normalized 2-cocycle.
We note that $\xi$ has to be non-trivial due to the first equality of Eq.~\eqref{eq: fiber functor classification}.
In particular, the non-degeneracy of $\chi$ implies that the twisted group algebra $\C[A]^{\xi}$ is simple, i.e., it is isomorphic to a matrix algebra $M_{\sqrt{|A|}}(\C)$.

Let us write down a fiber functor in terms of the triple $(\sigma, \xi, \nu)$.
A fiber functor $F$ of a Tambara-Yamagami category $\TY(A, \chi, \epsilon)$ is characterized by the natural isomorphism
\begin{equation}
\begin{aligned}
J_{a, b}&: F(a) \otimes F(b) \rightarrow F(ab), &
J_{a, D}&: F(a) \otimes F(D) \rightarrow F(D), \\
J_{D, a}&: F(D) \otimes F(a) \rightarrow F(D), & 
J_{D, D}&: F(D) \otimes F(D) \rightarrow \bigoplus_{a \in A} F(a).
\end{aligned}
\label{eq: TY fiber functor}
\end{equation}
Here, $F(a) \cong \C$ is a one-dimensional vector space and $F(D) \cong \C^{\sqrt{|A|}}$ is a $\sqrt{|A|}$-dimensional vector space.
Once we choose bases $\{v_a\}$ and $\{v_D^i \mid i = 1, 2, \cdots, \sqrt{|A|}\}$ of $F(a)$ and $F(D)$, the isomorphisms in Eq.~\eqref{eq: TY fiber functor} can be represented by $L$-symbols $\{L^{a, b}, L^{a, D}_{i, j}, L^{D, a}_{i, j}, L^{D, D}_{(i, j), a} \mid a, b \in A, ~ i, j = 1, 2, \cdots, \sqrt{|A|}\}$, that is,
\begin{equation}
\begin{aligned}
J_{a, b} (v_a \otimes v_b) & = L^{a, b} v_{a \otimes b}, &
J_{a, D} (v_a \otimes v_D^i) & = \sum_j L^{a, D}_{i, j} v_D^j, \\
J_{D, a} (v_D^i \otimes v_a) & = \sum_j L^{D, a}_{i, j} v_D^j, &
J_{D, D} (v_D^i \otimes v_D^j) & = \bigoplus_{a \in A} L^{D, D}_{(i, j), a} v_a.
\end{aligned}
\end{equation}
The $L$-symbols in the above equation are related to the triple $(\sigma, \xi, \nu)$ as follows \cite{Tam2000}:
\begin{equation}
\begin{aligned}
L^{a, b} & = \xi(a, b), & 
\sum_{j} L^{a, D}_{i, j} L^{b, D}_{j, k} & = \xi(a, b) L^{ab, D}_{i, k}, \\
L^{D, a}_{i, j} & = \nu(a) L^{\sigma(a), D}_{i, j}, &
L^{D, D}_{(i, j), a} & = \xi(a^{-1}, a)^{-1} \sum_{k} L^{a^{-1}, D}_{i, k} \gamma_{kj}.
\end{aligned}
\label{eq: TY L-symbols}
\end{equation}
Here, $\gamma_{kj}$ in the last equation is the $(k, j)$-component of a non-degenerate $\sqrt{|A|} \times \sqrt{|A|}$ matrix that satisfies
\begin{equation}
L^{a, D} \gamma = \nu(a) \gamma (L^{\sigma(a), D})^T,
\label{eq: gamma}
\end{equation}
where $L^{a, D}$ is the $\sqrt{|A|} \times \sqrt{|A|}$ matrix whose $(i, j)$-component is the $L$-symbol $L^{a, D}_{i, j}$, and $(L^{a, D})^T$ is its transpose.
The matrix $\gamma$ that satisfies the above equation is unique up to scalar multiplication \cite{Tam2000}.

The second equation on the first line of Eq.~\eqref{eq: TY L-symbols} implies that $L^{a, D}$ is the representation matrix of a $\sqrt{|A|}$-dimensional representation of the twisted group algebra $\C[A]^{\xi}$.
Such a representation is unique up to unitary equivalence because $\C[A]^{\xi}$ is a simple algebra as already mentioned.
Thus, $L^{a, D}_{i, j}$ is uniquely determined up to the change of the basis of $F(D)$.

\section{$L$-symbols for fiber functors of $\Z_2 \times \Z_2$ Tambara-Yamagami categories}
\label{sec: L-symbols for fiber functors of Tambara-Yamagami categories}
In this appendix, we write down the $L$-symbols for fiber functors of non-anomalous $\Z_2 \times \Z_2$ Tambara-Yamagami categories, i.e., $\Rep(D_8)$, $\Rep(Q_8)$, and $\Rep(H_8)$.

\subsection{$\Rep(D_8)$ symmetry}
\label{sec: Fiber functors of Rep(D8)}
We first consider the case of $\TY(\Z_2 \times \Z_2, \chi_+, +1) \cong \Rep(D_8)$.
As we reviewed in Appendix~\ref{sec: Fiber functors of Tambara-Yamagami categories}, fiber functors of $\TY(A, \chi, \epsilon)$ are in one-to-one correspondence with triples $(\sigma, \xi, \nu)$ that satisfies Eq.~\eqref{eq: fiber functor classification}.
When $(A, \chi, \epsilon) = (\Z_2 \times \Z_2, \chi_+, +1)$, there are three inequivalent solutions to Eq.~\eqref{eq: fiber functor classification} \cite{Tam2000, Thorngren:2019iar}.
These solutions share the same $\xi$ and $\sigma$, which are given by\footnote{One can equally choose a different representative $\xi$ of a non-trivial cohomology class $[\xi] \neq 0 \in \mathrm{H}_{\text{gp}}^2(\Z_2 \times \Z_2, \mathrm{U}(1)) \cong \Z_2$.}
\begin{equation}
\xi(a, b) = (-1)^{a_2 b_1}, \quad \sigma(a) = a,
\end{equation}
where $a = (a_1, a_2), b = (b_1, b_2) \in \Z_2 \times \Z_2 = \{(0, 0), (0, 1), (1, 0), (1, 1)\}$.
The three solutions are distinguished by different $\nu$'s, which we denote by $\nu_1$, $\nu_2$, and $\nu_3$.
Concretely, these $\nu$'s are given by
\begin{equation}
\nu_1(0, 1) = \nu_2(1, 0) = \nu_3(1, 1) = -1
\label{eq: nu}
\end{equation}
and $\nu_i(a) = +1$ otherwise.

Since all of the three fiber functors have the same $\xi$ and $\sigma$, some of the $L$-symbols are independent of fiber functors.
Specifically, $L^{a, b}$, $L^{a, D}$, and their inverses are given regardless of $\nu$ as
\begin{equation}
L^{a, b} = \overline{L}^{a, b} = (-1)^{a_2 b_1}, 
\qquad 
L^{a, D}_{i, j} = (-1)^{a_1 a_2} \overline{L}^{a, D}_{i, j} =
\begin{cases}
\delta_{ij} \quad & a = (0, 0),\\
Z_{ij} \quad & a = (0, 1), \\
X_{ij} \quad & a = (1, 0), \\
-iY_{ij} \quad & a = (1, 1),
\end{cases}
\label{eq: Lab LaD}
\end{equation}
where $X$, $Y$, and $Z$ denote Pauli matrices.
On the other hand, $L^{D, a}$, $L^{D, D}$ and their inverses depend on the choice of $\nu$.

\begin{itemize}
\item When $\nu = \nu_1$, Eqs.~\eqref{eq: gamma} and \eqref{eq: Lab LaD} imply $\gamma=X$, and hence we have
\begin{equation}
L^{D, a}_{i, j} = 
\begin{cases}
\delta_{ij} \quad & a = (0, 0), \\
-Z_{ij} \quad & a = (0, 1), \\
X_{ij} \quad & a = (1, 0), \\
-iY_{ij} \quad & a = (1, 1),
\end{cases}
\qquad
L^{D, D}_{(i, j), a} = 
\begin{cases}
X_{ij} \quad & a = (0, 0), \\
iY_{ij} \quad & a = (0, 1), \\
\delta_{ij} \quad & a = (1, 0), \\
Z_{ij} \quad & a = (1, 1).
\end{cases}
\end{equation}

\item When $\nu = \nu_2$, Eqs.~\eqref{eq: gamma} and \eqref{eq: Lab LaD} imply $\gamma=Z$, and hence we have
\begin{equation}
L^{D, a}_{i, j} = 
\begin{cases}
\delta_{ij} \quad & a = (0, 0), \\
Z_{ij} \quad & a = (0, 1), \\
-X_{ij} \quad & a = (1, 0), \\
-iY_{ij} \quad & a = (1, 1),
\end{cases}
\qquad
L^{D, D}_{(i, j), a} = 
\begin{cases}
Z_{ij} \quad & a = (0, 0), \\
\delta_{ij} \quad & a = (0, 1), \\
-iY_{ij} \quad & a = (1, 0), \\
-X_{ij} \quad & a = (1, 1).
\end{cases}
\end{equation}

\item When $\nu = \nu_3$, Eqs.~\eqref{eq: gamma} and \eqref{eq: Lab LaD} imply $\gamma=I$, and hence we have
\begin{equation}
L^{D, a}_{i, j} = 
\begin{cases}
\delta_{ij} \quad & a = (0, 0), \\
Z_{ij} \quad & a = (0, 1), \\
X_{ij} \quad & a = (1, 0), \\
iY_{ij} \quad & a = (1, 1),
\end{cases}
\qquad
L^{D, D}_{(i, j), a} = 
\begin{cases}
\delta_{ij} \quad & a = (0, 0), \\
Z_{ij} \quad & a = (0, 1), \\
X_{ij} \quad & a = (1, 0), \\
iY_{ij} \quad & a = (1, 1).
\end{cases}
\end{equation}
\end{itemize}
In all cases, the inverses of $L^{D, a}$ and $L^{D, D}$ are given by
\begin{equation}
\overline{L}^{D, a}_{i, j} = (-1)^{a_1 a_2} L^{D, a}_{i, j}, \qquad \overline{L}^{D, D}_{a, (i, j)} = \frac{1}{2} L^{D, D}_{(i, j), a}.
\end{equation}

\subsection{$\Rep(Q_8)$ symmetry}
\label{sec: Fiber functors of Rep(Q8)}
The $\Z_2 \times \Z_2$ Tambara-Yamagami category $\TY(\Z_2 \times \Z_2, \chi_+, -1) \cong \Rep(Q_8)$ has only one fiber functor, which is the forgetful functor.
The corresponding solution to Eq.~\eqref{eq: fiber functor classification} is given by
\begin{equation}
\xi(a, b) = (-1)^{a_2 b_1}, \quad \sigma(a) = a, \quad \nu(a) = (-1)^{a_1 + a_2 + a_1a_2},
\end{equation}
where $a = (a_1, a_2), b = (b_1, b_2) \in \Z_2 \times \Z_2$.
The $L$-symbols $L^{a, b}$, $L^{a, D}_{i, j}$, and their inverses are given by Eq.~\eqref{eq: Lab LaD}.
The other $L$-symbols $L^{D, a}$ and $L^{D, D}$ are given by
\begin{equation}
L^{D, a}_{i, j} = 
\begin{cases}
\delta_{ij} \quad & a = (0, 0),\\
-Z_{ij} \quad & a = (0, 1), \\
-X_{ij} \quad & a = (1, 0), \\
iY_{ij} \quad & a = (1, 1),
\end{cases}
\qquad
L^{D, D}_{(i, j), a} = 
\begin{cases}
-iY_{ij} \quad & a = (0, 0), \\
-X_{ij} \quad & a = (0, 1), \\
Z_{ij} \quad & a = (1, 0), \\
\delta_{ij} \quad & a = (1, 1).
\end{cases}
\label{eq: LDa LDD for Q8}
\end{equation}
Here, we used $\gamma=-iY$, which is the unique solution to Eq.~\eqref{eq: gamma} up to scalar.
The inverses of $L^{D, a}$ and $L^{D, D}$ are given by
\begin{equation}
\overline{L}^{D, a}_{i, j} = (-1)^{a_1 a_2} L^{D, a}_{i, j}, \qquad 
\overline{L}^{D, D}_{a, (i, j)} = \frac{1}{2} L^{D, D}_{(i, j), a}.
\label{eq: inverse of LDa LDD Q8}
\end{equation}

\subsection{$\Rep(H_8)$ symmetry}
\label{sec: Fiber functors of Rep(H8)}
The $\Z_2 \times \Z_2$ Tambara-Yamagami category $\TY(\Z_2 \times \Z_2, \chi_-, +1) \cong \Rep(H_8)$ has only one fiber functor, which is the forgetful functor.
The corresponding solution to Eq.~\eqref{eq: fiber functor classification} is given by
\begin{equation}
\xi(a, b) = (-1)^{a_2 b_1}, \quad \sigma(a) = (a_2, a_1), \quad \nu(a) = 1,
\end{equation}
where $a = (a_1, a_2), b = (b_1, b_2) \in \Z_2 \times \Z_2$.
The $L$-symbols $L^{a, b}$, $L^{a, D}_{i, j}$, and their inverses are given by Eq.~\eqref{eq: Lab LaD}.
The other $L$-symbols $L^{D, a}$ and $L^{D, D}$ are given by
\begin{equation}
L^{D, a}_{i, j} = 
\begin{cases}
\delta_{ij} \quad & a = (0, 0), \\
X_{ij} \quad & a = (0, 1), \\
Z_{ij} \quad & a = (1, 0), \\
-iY_{ij} \quad & a = (1, 1),
\end{cases}
\qquad
L^{D, D}_{(i, j), a} = 
\begin{cases}
\frac{1}{\sqrt{2}} (X + Z)_{ij} \quad & a = (0, 0), \\
\frac{1}{\sqrt{2}} (1 + iY)_{ij} \quad & a = (0, 1), \\
\frac{1}{\sqrt{2}} (1 - iY)_{ij} \quad & a = (1, 0), \\
- \frac{1}{\sqrt{2}} (X - Z)_{ij} \quad & a = (1, 1).
\end{cases}
\label{eq: LDa LDD for H8}
\end{equation}
Here, we used $\gamma = \frac{1}{\sqrt{2}} (X + Z)$, which is the unique solution to Eq.~\eqref{eq: gamma} up to scalar.
The inverses of $L^{D, a}$ and $L^{D, D}$ are given by
\begin{equation}
\overline{L}^{D, a}_{i, j} = (-1)^{a_1 a_2} L^{D, a}_{i, j}, \qquad 
\overline{L}^{D, D}_{a, (i, j)} = \frac{1}{2} L^{D, D}_{(i, j), a}.
\label{eq: inverse of LDa LDD H8}
\end{equation}

\section{Interface algebras for $\Z_2 \times \Z_2$ Tambara-Yamagami categories}
\label{sec: Interface algebra for Tambara-Yamagami categories}
In this appendix, we study representations of symmetry algebras acting on the interface of SPT phases with non-anomalous $\Z_2 \times \Z_2$ Tambara-Yamagami symmetries, i.e., $\Rep(D_8)$, $\Rep(Q_8)$, and $\Rep(H_8)$ symmetries.
For any of these symmetries, the symmetry algebra $\cA$ acting on the interface $\cI$ of two SPT phases $\SPT_1$ and $\SPT_2$ is spanned by\footnote{In Sec.~\ref{sec: Interface modes of SPT phases with fusion category symmetries}, a symmetry operator acting on the interface $\cI$ was written as $(\widehat{\cO}_x^{\cI})_{ij}$. In this appendix, we omit the hat.}
\begin{equation}
\{\cO^{\cI}_a, (\cO^{\cI}_D)_{ij} \mid a \in \Z_2 \times\Z_2, ~ i, j = 1, 2\}.
\end{equation}
In particular, $\cA$ is an eight-dimensional algebra.
The multiplication of symmetry operators is given by Eq.~\eqref{eq: interface algebra}, i.e., 
\begin{equation}
\begin{aligned}
\cO_a^{\cI} \cO_b^{\cI} & = (L_1)^{a, b} (\overline{L_2})^{a, b} \cO_{ab}^{\cI}, &
\cO_a^{\cI} (\cO_D^{\cI})_{ij} & = \sum_{k, k^{\prime}} (L_1)^{a, D}_{i, k} (\overline{L_2})^{a, D}_{k^{\prime}, j} (\cO_D^{\cI})_{kk^{\prime}}, \\
(\cO_D^{\cI})_{ij} \cO_a^{\cI} & = \sum_{k, k^{\prime}} (L_1)^{D, a}_{i, k} (\overline{L_2})^{D, a}_{k^{\prime}, j} (\cO_D^{\cI})_{kk^{\prime}}, &
(\cO_D^{\cI})_{ij} (\cO_D^{\cI})_{kl} & = \sum_{a \in \Z_2 \times \Z_2} (L_1)^{D, D}_{(i, k), a} (\overline{L_2})^{D, D}_{a, (j, l)} \cO_a^{\cI},
\end{aligned}
\label{eq: TY self-interface}
\end{equation}
where $L_1$ and $L_2$ are the $L$-symbols associated with $\SPT_1$ and $\SPT_2$ respectively.
Since $L^{a, b}$ and $L^{a, D}$ are given by Eq.~\eqref{eq: Lab LaD} regardless of SPT phases, the first line of the above equation reduces to
\begin{equation}
\cO_a^{\cI} \cO_b^{\cI} = \cO_{ab}^{\cI}, \qquad \cO_a^{\cI} (\cO_D^{\cI})_{ij} = 
\begin{cases}
(\cO_D^{\cI})_{ij} \quad & a = (0, 0), \\
(Z \cO_D^{\cI} Z)_{ij} \quad & a = (0, 1), \\
(X \cO_D^{\cI} X)_{ij} \quad & a = (1, 0), \\
(Y \cO_D^{\cI} Y)_{ij} \quad & a = (1, 1).
\end{cases}
\label{eq: OaOb OaOD}
\end{equation}
On the other hand, the second line of Eq.~\eqref{eq: TY self-interface} depends on SPT phases.
Since the interface algebra $\cA$ should be semisimple, all the irreducible representations of $\cA$ are obtained by the direct sum decomposition of $\cA$.

\subsection{$\mathrm{Rep}(D_8)$ symmetry}
\label{sec: Interface algebra for Rep(D8) symmetry}
\subsubsection{Self-interface}
The $\Rep(D_8)$ symmetry has three inequivalent SPT phases, which are distinguished by different values of $\nu$, see Eq.~\eqref{eq: nu}.
We denote the $\Rep(D_8)$ SPT phase associated with triple $(\sigma, \xi, \nu)$ by $\SPT_{\nu}$.
The interface between $\SPT_{\nu_i}$ and $\SPT_{\nu_j}$ is denoted by $\cI_{\nu_i, \nu_j}$, and the associated interface algebra is denoted by $\cA_{\nu_i, \nu_j}$.
In what follows, we write down the symmetry algebra $\cA_{\nu, \nu}$ acting on the self-interface $\cI_{\nu, \nu}$ and enumerate all irreducible representations of $\cA_{\nu, \nu}$.
When no confusion can arise, we will omit the subscripts of $\cI_{\nu_i, \nu_j}$ and $\cA_{\nu_i, \nu_j}$.

\begin{itemize}
\item When $\SPT_1 = \SPT_2 = \SPT_{\nu_1}$, the multiplication of symmetry operators on the second line of Eq.~\eqref{eq: TY self-interface} is given by
\begin{equation}
\begin{aligned}
(\cO_D^{\cI})_{ij} \cO_a^{\cI} & = \cO_a^{\cI} (\cO_D^{\cI})_{ij}, \\
(\cO_D^{\cI})_{ij} (\cO_D^{\cI})_{kl} & = \frac{1}{2} X_{ik} X_{jl} \cO_{(0, 0)}^{\cI} - \frac{1}{2} Y_{ik} Y_{jl} \cO_{(0, 1)}^{\cI} + \frac{1}{2} \delta_{ik} \delta_{jl} \cO_{(1, 0)}^{\cI} + \frac{1}{2} Z_{ik} Z_{jl} \cO_{(1, 1)}^{\cI}.
\end{aligned}
\label{eq: ODOa ODOD}
\end{equation}
The multiplications \eqref{eq: OaOb OaOD} and \eqref{eq: ODOa ODOD} imply that the interface algebra $\cA$ is commutative.
Thus, due to the Artin-Wedderburn theorem, the eight-dimensional algebra $\cA$ can be decomposed into a direct sum of eight one-dimensional subalgebras as
\begin{equation}
\cA = \bigoplus_{s = \pm} \bigoplus_{1 \leq i \leq 4} \C e_i^s,
\label{eq: direct sum nu1}
\end{equation}
where the direct summands $e_i^s$ are idempotents of $\cA$, i.e., they are elements of $\cA$ such that $e_i^s e_j^t = \delta_{ij} \delta_{st} e_i^s$.
The explicit forms of these idempotents are given by
\begin{equation}
e_1^{\pm} = \frac{1}{2} (e_1 \pm e_1^{\prime}), \quad e_2^{\pm} = \frac{1}{2} (e_2 \pm ie_2^{\prime}), \quad
e_3^{\pm} = \frac{1}{2} (e_3 \pm e_3^{\prime}), \quad e_4^{\pm} = \frac{1}{2} (e_4 \pm ie_4^{\prime}),
\end{equation}
where $e_i$ and $e_i^{\prime}$ for $i = 1, 2, 3, 4$ are defined as follows:
\begin{equation}
\begin{aligned}
e_1 & = \frac{1}{4} \left(\cO_{(0, 0)}^{\cI} + \cO_{(0, 1)}^{\cI} + \cO_{(1, 0)}^{\cI} + \cO_{(1, 1)}^{\cI}\right), & 
e_2 & = \frac{1}{4} \left(\cO_{(0, 0)}^{\cI} + \cO_{(0, 1)}^{\cI} - \cO_{(1, 0)}^{\cI} - \cO_{(1, 1)}^{\cI}\right), \\
e_3 & = \frac{1}{4} \left(\cO_{(0, 0)}^{\cI} - \cO_{(0, 1)}^{\cI} + \cO_{(1, 0)}^{\cI} - \cO_{(1, 1)}^{\cI}\right), & 
e_4 & = \frac{1}{4} \left(\cO_{(0, 0)}^{\cI} - \cO_{(0, 1)}^{\cI} - \cO_{(1, 0)}^{\cI} + \cO_{(1, 1)}^{\cI}\right), \\
e_1^{\prime} & = \frac{1}{2} \left((\cO_D^{\cI})_{11} + (\cO_D^{\cI})_{22}\right), &
e_2^{\prime} & = \frac{1}{2} \left((\cO_D^{\cI})_{11} - (\cO_D^{\cI})_{22}\right), \\
e_3^{\prime} & = \frac{1}{2} \left((\cO_D^{\cI})_{12} + (\cO_D^{\cI})_{21}\right), &
e_4^{\prime} & = \frac{1}{2} \left((\cO_D^{\cI})_{12} - (\cO_D^{\cI})_{21}\right).
\end{aligned}
\label{eq: pre-idempotent}
\end{equation}
The direct sum decomposition \eqref{eq: direct sum nu1} implies that the interface algebra $\cA$ has eight one-dimensional representations, which we denote by $R_i^s$ where $s = \pm$ and $i = 1, 2, 3, 4$.
The (one by one) representation matrix of each idempotent $e_j^t$ in representation $R_i^s$ is given by
\begin{equation}
R_i^s(e^t_j) = \delta_{ij} \delta_{st}.
\end{equation}
Namely, $e_i^s$ acts as one in $R_i^s$ and zero in the other representations.
Going back to the original basis $\{\cO_a^{\cI}, (\cO_D^{\cI})_{ij} \mid a \in \Z_2 \times \Z_2, ~ i, j = 1, 2\}$ of the interface algebra $\cA$, one can summarize the action of symmetry operators in each representation $R_i^s$ as in Table~\ref{tab: reps nu1}.
\begin{table}[t]
\begin{tabular}{c|cccccccc}
$\cI = \cI_{\nu_1, \nu_1}$ & $R_1^+$ & $R_1^-$ & $R_2^+$ & $R_2^-$ & $R_3^+$ & $R_3^-$ & $R_4^+$ & $R_4^-$ \\ \hline
$\cO_{(0, 0)}^{\cI}$ & 1 & 1 & 1 & 1 & 1 & 1 & 1 & 1 \\
$\cO_{(0, 1)}^{\cI}$ & 1 & 1 & 1 & 1 & -1 & -1 & -1 & -1 \\
$\cO_{(1, 0)}^{\cI}$ & 1 & 1 & -1 & -1 & 1 & 1 & -1 & -1 \\ 
$\cO_{(1, 1)}^{\cI}$ & 1 & 1 & -1 & -1 & -1 & -1 & 1 & 1 \\
$(\cO_D^{\cI})_{ij}$ & $\delta_{ij}$ & $-\delta_{ij}$ & $-iZ_{ij}$ & $iZ_{ij}$ & $X_{ij}$ & $-X_{ij}$ & $Y_{ij}$ & $-Y_{ij}$
\end{tabular}
\caption{Irreducible representations of the self-interface algebra $\cA$ for the $\Rep(D_8)$ SPT phase $\SPT_{\nu_1}$. All irreducible representations are one-dimensional and form a group $D_8$ under the multiplication.}
\label{tab: reps nu1}
\end{table}
The group of one-dimensional representations $\{R_i^s \mid s = \pm, ~ i = 1, 2, 3, 4\}$ is isomorphic to $D_8$.

\item When $\SPT_1 = \SPT_2 = \SPT_{\nu_2}$, the multiplication of symmetry operators on the second line of Eq.~\eqref{eq: TY self-interface} is given by
\begin{equation}
\begin{aligned}
(\cO_D^{\cI})_{ij} \cO_a^{\cI} & = \cO_a^{\cI} (\cO_D^{\cI})_{ij}, \\
(\cO_D^{\cI})_{ij} (\cO_D^{\cI})_{kl} & = \frac{1}{2} Z_{ik} Z_{jl} \cO_{(0, 0)}^{\cI} + \frac{1}{2} \delta_{ik} \delta_{jl} \cO_{(0, 1)}^{\cI} - \frac{1}{2} Y_{ik} Y_{jl} \cO_{(1, 0)}^{\cI} + \frac{1}{2} X_{ik} X_{jl} \cO_{(1, 1)}^{\cI}.
\end{aligned}
\label{eq: ODOa ODOD nu2}
\end{equation}
The multiplications \eqref{eq: OaOb OaOD} and \eqref{eq: ODOa ODOD nu2} imply that the interface algebra $\cA$ is commutative.
Thus, $\cA$ can be decomposed into a direct sum of eight one-dimensional subalgebras as
\begin{equation}
\cA = \bigoplus_{s = \pm} \bigoplus_{1 \leq i \leq 4} \C e_i^s,
\label{eq: direct sum nu2}
\end{equation}
where the idempotents $e_i^s$ are given by
\begin{equation}
e_1^{\pm} = \frac{1}{2} (e_1 \pm e_1^{\prime}), \quad e_2^{\pm} = \frac{1}{2} (e_2 \pm e_2^{\prime}), \quad
e_3^{\pm} = \frac{1}{2} (e_3 \pm ie_3^{\prime}), \quad e_4^{\pm} = \frac{1}{2} (e_4 \pm ie_4^{\prime}).
\end{equation}
Here, $e_i$ and $e_i^{\prime}$ for $i = 1, 2, 3, 4$ are defined by Eq.~\eqref{eq: pre-idempotent}.
The direct sum decomposition \eqref{eq: direct sum nu2} implies that the interface algebra $\cA$ has eight one-dimensional representations, which we denote by $R_i^s$ where $s = \pm$ and $i = 1, 2, 3, 4$.
The (one by one) representation matrix of each idempotent $e_j^t$ in representation $R_i^s$ is given by
\begin{equation}
R_i^s(e^t_j) = \delta_{ij} \delta_{st}. 
\end{equation}
Going back to the original basis $\{\cO_a^{\cI}, (\cO_D^{\cI})_{ij} \mid a \in \Z_2 \times \Z_2, ~ i, j = 1, 2\}$ of the interface algebra $\cA$, one can summarize the action of symmetry operators in each representation $R_i^s$ as in Table~\ref{tab: reps nu2}.
\begin{table}[t]
\begin{tabular}{c|cccccccc}
$\cI = \cI_{\nu_2, \nu_2}$ & $R_1^+$ & $R_1^-$ & $R_2^+$ & $R_2^-$ & $R_3^+$ & $R_3^-$ & $R_4^+$ & $R_4^-$ \\ \hline
$\cO_{(0, 0)}^{\cI}$ & 1 & 1 & 1 & 1 & 1 & 1 & 1 & 1 \\
$\cO_{(0, 1)}^{\cI}$ & 1 & 1 & 1 & 1 & -1 & -1 & -1 & -1 \\
$\cO_{(1, 0)}^{\cI}$ & 1 & 1 & -1 & -1 & 1 & 1 & -1 & -1 \\ 
$\cO_{(1, 1)}^{\cI}$ & 1 & 1 & -1 & -1 & -1 & -1 & 1 & 1 \\
$(\cO_D^{\cI})_{ij}$ & $\delta_{ij}$ & $-\delta_{ij}$ & $Z_{ij}$ & $-Z_{ij}$ & $-iX_{ij}$ & $iX_{ij}$ & $Y_{ij}$ & $-Y_{ij}$
\end{tabular}
\caption{Irreducible representations of the self-interface algebra $\cA$ for the $\Rep(D_8)$ SPT phase $\SPT_{\nu_2}$. All irreducible representations are one-dimensional and form a group $D_8$ under the multiplication.}
\label{tab: reps nu2}
\end{table}
The group of one-dimensional representations $\{R_i^s \mid s = \pm, ~ i = 1, 2, 3, 4\}$ is isomorphic to $D_8$.

\item When $\SPT_1 = \SPT_2 = \SPT_{\nu_3}$, the multiplication of symmetry operators on the second line of Eq.~\eqref{eq: TY self-interface} is given by
\begin{equation}
\begin{aligned}
(\cO_D^{\cI})_{ij} \cO_a^{\cI} & = \cO_a^{\cI} (\cO_D^{\cI})_{ij}, \\
(\cO_D^{\cI})_{ij} (\cO_D^{\cI})_{kl} & = \frac{1}{2} \delta_{ik} \delta_{jl} \cO_{(0, 0)}^{\cI} + \frac{1}{2} Z_{ik} Z_{jl} \cO_{(0, 1)}^{\cI} + \frac{1}{2} X_{ik} X_{jl} \cO_{(1, 0)}^{\cI} - \frac{1}{2} Y_{ik} Y_{jl} \cO_{(1, 1)}^{\cI}.
\end{aligned}
\label{eq: ODOa ODOD nu3}
\end{equation}
The multiplications \eqref{eq: OaOb OaOD} and \eqref{eq: ODOa ODOD nu3} imply that the interface algebra $\cA$ is commutative.
Thus, $\cA$ can be decomposed into a direct sum of eight one-dimensional subalgebras as
\begin{equation}
\cA = \bigoplus_{s = \pm} \bigoplus_{1 \leq i \leq 4} \C e_i^s,
\label{eq: direct sum nu3}
\end{equation}
where the idempotents $e_i^s$ are given by
\begin{equation}
e_1^{\pm} = \frac{1}{2} (e_1 \pm e_1^{\prime}), \quad e_2^{\pm} = \frac{1}{2} (e_2 \pm e_2^{\prime}), \quad
e_3^{\pm} = \frac{1}{2} (e_3 \pm e_3^{\prime}), \quad e_4^{\pm} = \frac{1}{2} (e_4 \pm e_4^{\prime}).
\end{equation}
Here, $e_i$ and $e_i^{\prime}$ for $i = 1, 2, 3, 4$ are defined by Eq.~\eqref{eq: pre-idempotent}.
The direct sum decomposition \eqref{eq: direct sum nu3} implies that the interface algebra $\cA$ has eight one-dimensional representations, which we denote by $R_i^s$ where $s = \pm$ and $i = 1, 2, 3, 4$.
The (one by one) representation matrix of each idempotent $e_j^t$ in representation $R_i^s$ is given by
\begin{equation}
R_i^s(e^t_j) = \delta_{ij} \delta_{st}. 
\end{equation}
Going back to the original basis $\{\cO_a^{\cI}, (\cO_D^{\cI})_{ij} \mid a \in \Z_2 \times \Z_2, ~ i, j = 1, 2\}$ of the interface algebra $\cA$, one can summarize the action of symmetry operators in each representation $R_i^s$ as in Table~\ref{tab: reps nu3}.
\begin{table}[t]
\begin{tabular}{c|cccccccc}
$\cI = \cI_{\nu_3, \nu_3}$ & $R_1^+$ & $R_1^-$ & $R_2^+$ & $R_2^-$ & $R_3^+$ & $R_3^-$ & $R_4^+$ & $R_4^-$ \\ \hline
$\cO_{(0, 0)}^{\cI}$ & 1 & 1 & 1 & 1 & 1 & 1 & 1 & 1 \\
$\cO_{(0, 1)}^{\cI}$ & 1 & 1 & 1 & 1 & -1 & -1 & -1 & -1 \\
$\cO_{(1, 0)}^{\cI}$ & 1 & 1 & -1 & -1 & 1 & 1 & -1 & -1 \\ 
$\cO_{(1, 1)}^{\cI}$ & 1 & 1 & -1 & -1 & -1 & -1 & 1 & 1 \\
$(\cO_D^{\cI})_{ij}$ & $\delta_{ij}$ & $-\delta_{ij}$ & $Z_{ij}$ & $-Z_{ij}$ & $X_{ij}$ & $-X_{ij}$ & $iY_{ij}$ & $-iY_{ij}$
\end{tabular}
\caption{Irreducible representations of the self-interface algebra $\cA$ for the $\Rep(D_8)$ SPT phase $\SPT_{\nu_3}$. All irreducible representations are one-dimensional and form a group $D_8$ under the multiplication.}
\label{tab: reps nu3}
\end{table}
The group of one-dimensional representations $\{R_i^s \mid s = \pm, ~ i = 1, 2, 3, 4\}$ is isomorphic to $D_8$.
\end{itemize}

\subsubsection{Interface between different SPT phases}
The interface algebra $\cA$ between different SPT phases $\SPT_1$ and $\SPT_2$ does not have one-dimensional representations as shown in Sec.~\ref{sec: Interface modes}.
Thus, the direct sum decomposition of $\cA$ does not contain one-dimensional factors.
In addition, since $\cA$ is an eight-dimensional algebra, the only possible decomposition is
\begin{equation}
\cA = M_2(\C) \oplus M_2(\C).
\label{eq: degenerate interface D8}
\end{equation}
The above direct sum decomposition implies that $\cA$ has two two-dimensional irreducible representations.
In what follows, we will explicitly decompose $\cA$ into a direct sum~\eqref{eq: degenerate interface D8} and write down its irreducible representations.
Given a representation $R$ of the interface algebra $\cA_{\nu_i, \nu_j}$, one can obtain the corresponding representation of $\cA_{\nu_j, \nu_i}$, which we denote by $R$ by abuse of notation, as follows:
\begin{equation}
R(\cO_a^{\cI_{\nu_j, \nu_i}}) = R(\cO_a^{\cI_{\nu_i, \nu_j}})^*, \quad R((\cO_D^{\cI_{\nu_j, \nu_i}})_{kl}) = R(\cO_D^{\cI_{\nu_i, \nu_j}})_{lk})^*.
\end{equation}
Therefore, we only need to consider three interface algebras $\cA_{\nu_1, \nu_2}$, $\cA_{\nu_2, \nu_3}$, and $\cA_{\nu_3, \nu_1}$.

\begin{itemize}
\item When $\SPT_1 = \SPT_{\nu_1}$ and $\SPT_2 = \SPT_{\nu_2}$, the multiplication of symmetry operators on the second line of Eq.~\eqref{eq: TY self-interface} reduces to
\begin{equation}
\begin{aligned}
(\cO_D^{\cI})_{ij} \cO_a^{\cI} & = 
\begin{cases}
(\cO_D^{\cI})_{ij} \quad & a = (0, 0), \\
-(Z \cO_D^{\cI} Z)_{ij} \quad & a = (0, 1), \\
-(X \cO_D^{\cI} X)_{ij} \quad & a = (1, 0), \\
(Y \cO_D^{\cI} Y)_{ij} \quad & a = (1, 1),
\end{cases}\\
(\cO_D^{\cI})_{ij} (\cO_D^{\cI})_{kl} & = \frac{1}{2} X_{ik} Z_{jl} \cO_{(0, 0)}^{\cI} + \frac{1}{2} iY_{ik} \delta_{jl} \cO_{(0, 1)}^{\cI} - \frac{1}{2} \delta_{ik} iY_{jl} \cO_{(1, 0)}^{\cI} - \frac{1}{2} Z_{ik} X_{jl} \cO_{(1, 1)}^{\cI}.
\end{aligned}
\label{eq: ODOa ODOD nu1nu2}
\end{equation}
The multiplications \eqref{eq: OaOb OaOD} and \eqref{eq: ODOa ODOD nu1nu2} imply that the interface algebra $\cA$ is decomposed as Eq.~\eqref{eq: degenerate interface D8}, where the two factors of the direct sum decomposition~\eqref{eq: degenerate interface D8} are generated by $\{e_i, e_i^{\prime} \mid i = 1, 4\}$ and $\{e_i, e_i^{\prime} \mid i = 2, 3\}$ respectively.
Here, $e_i$ and $e_i^{\prime}$ are defined by Eq.~\eqref{eq: pre-idempotent}.
Indeed, in a new basis given by
\begin{equation}
\begin{aligned}
e_{11} & = e_1, & e_{12} & = ie_1^{\prime}, & e_{21} & = ie_4^{\prime}, & e_{22} & = e_4, \\
d_{11} & = e_2, & d_{12} & = e_2^{\prime}, & d_{21} & = e_3^{\prime}, & d_{22} & = e_3,
\end{aligned}
\end{equation}
the multiplication of $\cA$ can be written as
\begin{equation}
e_{ij} e_{kl} = \delta_{jk} e_{il}, \quad d_{ij} d_{kl} = \delta_{jk} d_{il}, \quad e_{ij} d_{kl} = d_{ij} e_{kl} = 0,
\end{equation}
which shows that each of $\{e_{ij} \mid i, j = 1, 2\}$ and $\{d_{ij} \mid i, j = 1, 2\}$ spans the full matrix algebra $M_2(\C)$.
The above direct sum decomposition implies that $\cA$ has two two-dimensional irreducible representations, which we denote by $R_e$ and $R_d$.
The (two by two) representation matrices of $e_{ij}$ and $d_{ij}$ in these representations are given by
\begin{equation}
\begin{aligned}
R_e(e_{11}) & = 
\begin{pmatrix}
1 & 0 \\
0 & 0
\end{pmatrix}, &
R_e(e_{12}) & = 
\begin{pmatrix}
0 & 1 \\
0 & 0
\end{pmatrix}, &
R_e(e_{21}) & = 
\begin{pmatrix}
0 & 0 \\
1 & 0
\end{pmatrix}, &
R_e(e_{22}) & = 
\begin{pmatrix}
0 & 0 \\
0 & 1
\end{pmatrix}, & 
R_e(d_{ij}) & = 0, \\
R_d(d_{11}) & = 
\begin{pmatrix}
1 & 0 \\
0 & 0
\end{pmatrix}, &
R_d(d_{12}) & = 
\begin{pmatrix}
0 & 1 \\
0 & 0
\end{pmatrix}, &
R_d(d_{21}) & = 
\begin{pmatrix}
0 & 0 \\
1 & 0
\end{pmatrix}, &
R_d(d_{22}) & = 
\begin{pmatrix}
0 & 0 \\
0 & 1
\end{pmatrix}, & 
R_d(e_{ij}) & = 0.
\end{aligned}
\label{eq: Re Rd}
\end{equation}
Going back to the original basis $\{\cO_a^{\cI}, (\cO_D^{\cI})_{ij} \mid a \in \Z_2 \times \Z_2, ~ i, j = 1, 2\}$ of the interface algebra $\cA$, one can summarize the action of symmetry operators in each representation as in Table~\ref{tab: reps nu1nu2}.
\begin{table}[t]
\begin{minipage}{0.5 \linewidth}
\begin{tabular}{c|cc}
$\cI = \cI_{\nu_1, \nu_2}$ & $R_e$ & $R_d$ \\ \hline
$\cO_{(0, 0)}^{\cI}$ & $I_2$ & $I_2$ \\
$\cO_{(0, 1)}^{\cI}$ & $Z$ & $Z$ \\
$\cO_{(1, 0)}^{\cI}$ & $Z$ & $-Z$ \\
$\cO_{(1, 1)}^{\cI}$ & $I_2$ & $-I_2$ \\
$(\cO_D^{\cI})_{ij}$ & 
$\begin{pmatrix}
-\frac{i}{2}S_+ & -\frac{i}{2}S_- \\
\frac{i}{2}S_- & -\frac{i}{2}S_+
\end{pmatrix}_{ij}
$ &
$\begin{pmatrix}
\frac{1}{2}S_+ & \frac{1}{2}S_- \\
\frac{1}{2}S_- & -\frac{1}{2}S_+
\end{pmatrix}_{ij}
$
\end{tabular}
\end{minipage}%
\begin{minipage}{0.5 \linewidth}
\begin{tabular}{c|cc}
$\cI = \cI_{\nu_2, \nu_1}$ & $R_e$ & $R_d$ \\ \hline
$\cO_{(0, 0)}^{\cI}$ & $I_2$ & $I_2$ \\
$\cO_{(0, 1)}^{\cI}$ & $Z$ & $Z$ \\
$\cO_{(1, 0)}^{\cI}$ & $Z$ & $-Z$ \\
$\cO_{(1, 1)}^{\cI}$ & $I_2$ & $-I_2$ \\
$(\cO_D^{\cI})_{ij}$ & 
$\begin{pmatrix}
\frac{i}{2}S_+ & -\frac{i}{2}S_- \\
\frac{i}{2}S_- & \frac{i}{2}S_+
\end{pmatrix}_{ij}
$ &
$\begin{pmatrix}
\frac{1}{2}S_+ & \frac{1}{2}S_- \\
\frac{1}{2}S_- & -\frac{1}{2}S_+
\end{pmatrix}_{ij}
$
\end{tabular}
\end{minipage}
\caption{Irreducible representations of the interface algebra $\cA$ between $\Rep(D_8)$ SPT phases $\SPT_{\nu_1}$ and $\SPT_{\nu_2}$. Here, $I_2$ denotes the $2 \times 2$ identity matrix and $S_{\pm} := X \pm iY$. The bottom row reads $R_e((\cO_D^{\cI})_{11}) = -\frac{i}{2}S_+$, etc.}
\label{tab: reps nu1nu2}
\end{table}

\item When $\SPT_1 = \SPT_{\nu_2}$ and $\SPT_2 = \SPT_{\nu_3}$, the multiplication of symmetry operators on the second line of Eq.~\eqref{eq: TY self-interface} reduces to
\begin{equation}
\begin{aligned}
(\cO_D^{\cI})_{ij} \cO_a^{\cI} & = 
\begin{cases}
(\cO_D^{\cI})_{ij} \quad & a = (0, 0), \\
(Z \cO_D^{\cI} Z)_{ij} \quad & a = (0, 1), \\
-(X \cO_D^{\cI} X)_{ij} \quad & a = (1, 0), \\
-(Y \cO_D^{\cI} Y)_{ij} \quad & a = (1, 1),
\end{cases}\\
(\cO_D^{\cI})_{ij} (\cO_D^{\cI})_{kl} & = \frac{1}{2} Z_{ik} \delta_{jl} \cO_{(0, 0)}^{\cI} + \frac{1}{2} \delta_{ik} Z_{jl} \cO_{(0, 1)}^{\cI} - \frac{1}{2} iY_{ik} X_{jl} \cO_{(1, 0)}^{\cI} - \frac{1}{2} X_{ik} iY_{jl} \cO_{(1, 1)}^{\cI}.
\end{aligned}
\label{eq: ODOa ODOD nu2nu3}
\end{equation}
The multiplications \eqref{eq: OaOb OaOD} and \eqref{eq: ODOa ODOD nu2nu3} imply that the interface algebra $\cA$ is decomposed as Eq.~\eqref{eq: degenerate interface D8}, where the two factors of the direct sum decomposition~\eqref{eq: degenerate interface D8} are generated by $\{e_i, e_i^{\prime} \mid i = 1, 2\}$ and $\{e_i, e_i^{\prime} \mid i = 3, 4\}$ respectively.
Here, $e_i$ and $e_i^{\prime}$ are defined by Eq.~\eqref{eq: pre-idempotent}.
Indeed, in a new basis given by
\begin{equation}
\begin{aligned}
e_{11} & = e_1, & e_{12} & = e_1^{\prime}, & e_{21} & = e_2^{\prime}, & e_{22} & = e_2, \\
d_{11} & = e_3, & d_{12} & = e_3^{\prime}, & d_{21} & = e_4^{\prime}, & d_{22} & = e_4,
\end{aligned}
\end{equation}
the multiplication of $\cA$ can be written as
\begin{equation}
e_{ij} e_{kl} = \delta_{jk} e_{il}, \quad d_{ij} d_{kl} = \delta_{jk} d_{il}, \quad e_{ij} d_{kl} = d_{ij} e_{kl} = 0,
\end{equation}
which shows that each of $\{e_{ij} \mid i, j = 1, 2\}$ and $\{d_{ij} \mid i, j = 1, 2\}$ spans the full matrix algebra $M_2(\C)$.
The above direct sum decomposition implies that $\cA$ has two two-dimensional irreducible representations, which we denote by $R_e$ and $R_d$.
The (two by two) representation matrices of $e_{ij}$ and $d_{ij}$ in these representations are given by Eq.~\eqref{eq: Re Rd}.
Going back to the original basis $\{\cO_a^{\cI}, (\cO_D^{\cI})_{ij} \mid a \in \Z_2 \times \Z_2, ~ i, j = 1, 2\}$ of the interface algebra $\cA$, one can summarize the action of symmetry operators in each representation as in Table~\ref{tab: reps nu2nu3}.
\begin{table}[t]
\begin{minipage}{0.5 \linewidth}
\begin{tabular}{c|cc}
$\cI = \cI_{\nu_2, \nu_3}$ & $R_e$ & $R_d$ \\ \hline
$\cO_{(0, 0)}^{\cI}$ & $I_2$ & $I_2$ \\
$\cO_{(0, 1)}^{\cI}$ & $I_2$ & $-I_2$ \\
$\cO_{(1, 0)}^{\cI}$ & $Z$ & $Z$ \\
$\cO_{(1, 1)}^{\cI}$ & $Z$ & $-Z$ \\
$(\cO_D^{\cI})_{ij}$ & 
$\begin{pmatrix}
X & 0 \\
0 & iY
\end{pmatrix}_{ij}
$ &
$\begin{pmatrix}
0 & X \\
iY & 0
\end{pmatrix}_{ij}
$
\end{tabular}
\end{minipage}%
\begin{minipage}{0.5 \linewidth}
\begin{tabular}{c|cc}
$\cI = \cI_{\nu_3, \nu_2}$ & $R_e$ & $R_d$ \\ \hline
$\cO_{(0, 0)}^{\cI}$ & $I_2$ & $I_2$ \\
$\cO_{(0, 1)}^{\cI}$ & $I_2$ & $-I_2$ \\
$\cO_{(1, 0)}^{\cI}$ & $Z$ & $Z$ \\
$\cO_{(1, 1)}^{\cI}$ & $Z$ & $-Z$ \\
$(\cO_D^{\cI})_{ij}$ & 
$\begin{pmatrix}
X & 0 \\
0 & iY
\end{pmatrix}_{ij}
$ &
$\begin{pmatrix}
0 & iY \\
X & 0
\end{pmatrix}_{ij}
$
\end{tabular}
\end{minipage}
\caption{Irreducible representations of the interface algebra $\cA$ between $\Rep(D_8)$ SPT phases $\SPT_{\nu_2}$ and $\SPT_{\nu_3}$. Here, $I_2$ denotes the $2 \times 2$ identity matrix, and the bottom row reads $R_e((\cO_D^{\cI})_{11}) = X$, etc.}
\label{tab: reps nu2nu3}
\end{table}

\item When $\SPT_1 = \SPT_{\nu_3}$ and $\SPT_2 = \SPT_{\nu_1}$, the multiplication of symmetry operators on the second line of Eq.~\eqref{eq: TY self-interface} reduces to
\begin{equation}
\begin{aligned}
(\cO_D^{\cI})_{ij} \cO_a^{\cI} & = 
\begin{cases}
(\cO_D^{\cI})_{ij} \quad & a = (0, 0), \\
-(Z \cO_D^{\cI} Z)_{ij} \quad & a = (0, 1), \\
(X \cO_D^{\cI} X)_{ij} \quad & a = (1, 0), \\
-(Y \cO_D^{\cI} Y)_{ij} \quad & a = (1, 1),
\end{cases}\\
(\cO_D^{\cI})_{ij} (\cO_D^{\cI})_{kl} & = \frac{1}{2} \delta_{ik} X_{jl} \cO_{(0, 0)}^{\cI} + \frac{1}{2} Z_{ik} iY_{jl} \cO_{(0, 1)}^{\cI} + \frac{1}{2} X_{ik} \delta_{jl} \cO_{(1, 0)}^{\cI} + \frac{1}{2} iY_{ik} Z_{jl} \cO_{(1, 1)}^{\cI}.
\end{aligned}
\label{eq: ODOa ODOD nu3nu1}
\end{equation}
The multiplications \eqref{eq: OaOb OaOD} and \eqref{eq: ODOa ODOD nu3nu1} imply that the interface algebra $\cA$ is decomposed as Eq.~\eqref{eq: degenerate interface D8}, where the two factors of the direct sum decomposition~\eqref{eq: degenerate interface D8} are generated by $\{e_i, e_i^{\prime} \mid i = 1, 3\}$ and $\{e_i, e_i^{\prime} \mid i = 2, 4\}$ respectively.
Here, $e_i$ and $e_i^{\prime}$ are defined by Eq.~\eqref{eq: pre-idempotent}.
Indeed, in a new basis given by
\begin{equation}
\begin{aligned}
e_{11} & = e_1, & e_{12} & = e_1^{\prime}, & e_{21} & = e_3^{\prime}, & e_{22} & = e_3, \\
d_{11} & = e_2, & d_{12} & = e_2^{\prime}, & d_{21} & = e_4^{\prime}, & d_{22} & = e_4,
\end{aligned}
\end{equation}
the multiplication of $\cA$ can be written as
\begin{equation}
e_{ij} e_{kl} = \delta_{jk} e_{il}, \quad d_{ij} d_{kl} = \delta_{jk} d_{il}, \quad e_{ij} d_{kl} = d_{ij} e_{kl} = 0,
\end{equation}
which shows that each of $\{e_{ij} \mid i, j = 1, 2\}$ and $\{d_{ij} \mid i, j = 1, 2\}$ spans the full matrix algebra $M_2(\C)$.
The above direct sum decomposition implies that $\cA$ has two two-dimensional irreducible representations, which we denote by $R_e$ and $R_d$.
The (two by two) representation matrices of $e_{ij}$ and $d_{ij}$ in these representations are given by Eq.~\eqref{eq: Re Rd}.
Going back to the original basis $\{\cO_a^{\cI}, (\cO_D^{\cI})_{ij} \mid a \in \Z_2 \times \Z_2, ~ i, j = 1, 2\}$ of the interface algebra $\cA$, one can summarize the action of symmetry operators in each representation as in Table~\ref{tab: reps nu3nu1}.
\begin{table}[t]
\begin{minipage}{0.5 \linewidth}
\begin{tabular}{c|cc}
$\cI = \cI_{\nu_3, \nu_1}$ & $R_e$ & $R_d$ \\ \hline
$\cO_{(0, 0)}^{\cI}$ & $I_2$ & $I_2$ \\
$\cO_{(0, 1)}^{\cI}$ & $Z$ & $Z$ \\
$\cO_{(1, 0)}^{\cI}$ & $I_2$ & $-I_2$ \\
$\cO_{(1, 1)}^{\cI}$ & $Z$ & $-Z$ \\
$(\cO_D^{\cI})_{ij}$ & 
$\begin{pmatrix}
\frac{1}{2}S_+ & \frac{1}{2}S_- \\
\frac{1}{2}S_- & \frac{1}{2}S_+
\end{pmatrix}_{ij}
$ &
$\begin{pmatrix}
\frac{1}{2}S_+ & \frac{1}{2}S_- \\
-\frac{1}{2}S_- & -\frac{1}{2}S_+
\end{pmatrix}_{ij}
$
\end{tabular}
\end{minipage}%
\begin{minipage}{0.5 \linewidth}
\begin{tabular}{c|cc}
$\cI = \cI_{\nu_1, \nu_3}$ & $R_e$ & $R_d$ \\ \hline
$\cO_{(0, 0)}^{\cI}$ & $I_2$ & $I_2$ \\
$\cO_{(0, 1)}^{\cI}$ & $Z$ & $Z$ \\
$\cO_{(1, 0)}^{\cI}$ & $I_2$ & $-I_2$ \\
$\cO_{(1, 1)}^{\cI}$ & $Z$ & $-Z$ \\
$(\cO_D^{\cI})_{ij}$ & 
$\begin{pmatrix}
\frac{1}{2}S_+ & \frac{1}{2}S_- \\
\frac{1}{2}S_- & \frac{1}{2}S_+
\end{pmatrix}_{ij}
$ &
$\begin{pmatrix}
\frac{1}{2}S_+ & -\frac{1}{2}S_- \\
\frac{1}{2}S_- & -\frac{1}{2}S_+
\end{pmatrix}_{ij}
$
\end{tabular}
\end{minipage}
\caption{Irreducible representations of the interface algebra $\cA$ between $\Rep(D_8)$ SPT phases $\SPT_{\nu_3}$ and $\SPT_{\nu_1}$. Here, $I_2$ denotes the $2 \times 2$ identity matrix and $S_{\pm} := X \pm iY$. The bottom row reads $R_e((\cO_D^{\cI})_{11}) = \frac{1}{2}S_+$, etc.}
\label{tab: reps nu3nu1}
\end{table}
\end{itemize}

\subsection{$\mathrm{Rep}(Q_8)$ symmetry}
\label{sec: Interface algebra for Rep(Q8) symmetry}
The $\Rep(Q_8)$ symmetry has only one SPT phase and hence the only possible interface is the self-interface.
The $L$-symbols for this SPT phase are given by Eqs.~\eqref{eq: Lab LaD}, \eqref{eq: LDa LDD for Q8}, and \eqref{eq: inverse of LDa LDD Q8}.
Substituting these $L$-symbols into the second line of Eq.~\eqref{eq: TY self-interface}, one finds that the multiplication of symmetry operators of the self-interface algebra $\cA$ is given by
\begin{equation}
\begin{aligned}
(\cO_D^{\cI})_{ij} \cO_a^{\cI} & = \cO_a^{\cI} (\cO_D^{\cI})_{ij}, \\
(\cO_D^{\cI})_{ij} (\cO_D^{\cI})_{kl} & = -\frac{1}{2} Y_{ik} Y_{jl} \cO_{(0, 0)}^{\cI} + \frac{1}{2} X_{ik} X_{jl} \cO_{(0, 1)}^{\cI} + \frac{1}{2} Z_{ik} Z_{jl} \cO_{(1, 0)}^{\cI} + \frac{1}{2} \delta_{ik} \delta_{jl} \cO_{(1, 1)}^{\cI}.
\end{aligned}
\label{eq: ODOa ODOD for Q8}
\end{equation}
The multiplications \eqref{eq: OaOb OaOD} and \eqref{eq: ODOa ODOD for Q8} imply that the interface algebra $\cA$ is commutative.
Thus, $\cA$ can be decomposed into a direct sum of eight one-dimensional subalgebras as
\begin{equation}
\cA = \bigoplus_{s = \pm} \bigoplus_{1 \leq i \leq 4} \C e_i^s,
\label{eq: decomposition Q8}
\end{equation}
where the idempotents $e_i^s$ are given by
\begin{equation}
e_1^{\pm} = \frac{1}{2} (e_1 \pm e_1^{\prime}), \quad e_2^{\pm} = \frac{1}{2} (e_2 \pm ie_2^{\prime}), \quad
e_3^{\pm} = \frac{1}{2} (e_3 \pm ie_3^{\prime}), \quad e_4^{\pm} = \frac{1}{2} (e_4 \pm e_4^{\prime}).
\end{equation}
Here, $e_i$ and $e_i^{\prime}$ for $i = 1, 2, 3, 4$ are defined by Eq.~\eqref{eq: pre-idempotent}.
The direct sum decomposition \eqref{eq: direct sum nu3} implies that the interface algebra $\cA$ has eight one-dimensional representations, which we denote by $R_i^s$ where $s = \pm$ and $i = 1, 2, 3, 4$.
The (one by one) representation matrix of each idempotent $e_j^t$ in representation $R_i^s$ is given by
\begin{equation}
R_i^s(e^t_j) = \delta_{ij} \delta_{st}.
\end{equation}
Going back to the original basis $\{\cO_a^{\cI}, (\cO_D^{\cI})_{ij} \mid a \in \Z_2 \times \Z_2, ~ i, j = 1, 2\}$ of the interface algebra $\cA$, one can summarize the action of symmetry operators in each representation $R_i^s$ as in Table~\ref{tab: reps Q8}.
\begin{table}[t]
\begin{tabular}{c|cccccccc}
& $R_1^+$ & $R_1^-$ & $R_2^+$ & $R_2^-$ & $R_3^+$ & $R_3^-$ & $R_4^+$ & $R_4^-$ \\ \hline
$\cO_{(0, 0)}^{\cI}$ & 1 & 1 & 1 & 1 & 1 & 1 & 1 & 1 \\
$\cO_{(0, 1)}^{\cI}$ & 1 & 1 & 1 & 1 & -1 & -1 & -1 & -1 \\
$\cO_{(1, 0)}^{\cI}$ & 1 & 1 & -1 & -1 & 1 & 1 & -1 & -1 \\ 
$\cO_{(1, 1)}^{\cI}$ & 1 & 1 & -1 & -1 & -1 & -1 & 1 & 1 \\
$(\cO_D^{\cI})_{ij}$ & $\delta_{ij}$ & $-\delta_{ij}$ & $-iZ_{ij}$ & $iZ_{ij}$ & $-iX_{ij}$ & $iX_{ij}$ & $iY_{ij}$ & $-iY_{ij}$
\end{tabular}
\caption{Irreducible representations of the self-interface algebra $\cA$ for the $\Rep(Q_8)$ SPT phase. All irreducible representations are one-dimensional and form a group $Q_8$ under the multiplication.}
\label{tab: reps Q8}
\end{table}
The group of one-dimensional representations $\{R_i^s \mid s = \pm, ~ i = 1, 2, 3, 4\}$ is isomorphic to $Q_8$.

\subsection{$\mathrm{Rep}(H_8)$ symmetry}
\label{sec: Interface algebra for Rep(H8) symmetry}
The $\Rep(H_8)$ symmetry has only one SPT phase and hence the only possible interface is the self-interface.
The $L$-symbols for this SPT phase are given by Eqs.~\eqref{eq: Lab LaD}, \eqref{eq: LDa LDD for H8}, and \eqref{eq: inverse of LDa LDD H8}.
Substituting these $L$-symbols into the second line of Eq.~\eqref{eq: TY self-interface}, one finds that the multiplication of symmetry operators of the self-interface algebra $\cA$ is given by
\begin{equation}
\begin{aligned}
(\cO_D^{\cI})_{ij} \cO_a^{\cI} & = 
\begin{cases}
(\cO_D^{\cI})_{ij} \quad & a = (0, 0),\\
(X \cO_D^{\cI} X)_{ij} \quad & a = (0, 1), \\
(Z \cO_D^{\cI} Z)_{ij} \quad & a = (1, 0), \\
(Y \cO_D^{\cI} Y)_{ij} \quad & a = (1, 1),
\end{cases} \\
(\cO_D^{\cI})_{ij} (\cO_D^{\cI})_{kl} & = \frac{1}{4} (X + Z)_{ik} (X + Z)_{jl} \cO_{(0, 0)}^{\cI} + \frac{1}{4} (1+iY)_{ik} (1+iY)_{jl} \cO_{(0, 1)}^{\cI} \\
& \quad + \frac{1}{4} (1-iY)_{ik} (1-iY)_{jl} \cO_{(1, 0)}^{\cI} + \frac{1}{4} (X-Z)_{ik} (X-Z)_{jl} \cO_{(1, 1)}^{\cI}.
\end{aligned}
\label{eq: ODOa ODOD for H8}
\end{equation}
The multiplications \eqref{eq: OaOb OaOD} and \eqref{eq: ODOa ODOD for H8} imply that the interface algebra $\cA$ is non-commutative.
Thus, $\cA$ cannot be a direct sum of eight one-dimensional subalgebras.
On the other hand, since the algebra $\cA$ is acting on the self-interface, it must have one-dimensional representations as shown in Sec.~\ref{sec: Interface modes}.
In other words, the direct sum decomposition of $\cA$ has to contain one-dimensional subalgebras.
Therefore, it follows that $\cA$ is decomposed as
\begin{equation}
\cA = \C \oplus \C \oplus \C \oplus \C \oplus M_2(\C).
\label{eq: decomposition H8}
\end{equation}
The basis of each one-dimensional component $\C$ is given by
\begin{equation}
e_1^+ = \frac{1}{2} (e_1 + e_1^{\prime}), \qquad 
e_1^- = \frac{1}{2} (e_1 - e_1^{\prime}), \qquad 
e_4^+ = \frac{1}{2} (e_4 + ie_4^{\prime}), \qquad 
e_4^- = \frac{1}{2} (e_4 - ie_4^{\prime}),
\label{eq: one-dim basis}
\end{equation}
where $e_1$, $e_1^{\prime}$, $e_4$, and $e_4^{\prime}$ are defined by Eq.~\eqref{eq: pre-idempotent}.
On the other hand, a basis $\{e_{ij} \mid i, j = 1, 2\}$ of the four-dimensional component $M_2(\C)$ can be written as
\begin{equation}
e_{11} = e_2, \qquad e_{12} = e_2^{\prime}, \qquad e_{21} = e_3^{\prime}, \qquad e_{22} = e_3.
\label{eq: four-dim basis}
\end{equation}
The multiplication of the bases in Eqs.~\eqref{eq: one-dim basis} and \eqref{eq: four-dim basis} is 
\begin{equation}
e_i^s e_j^t = \delta_{ij} \delta_{st} e_i^s, \qquad e_{ij} e_{kl} = \delta_{jk} e_{il}, \qquad e_i^s e_{jk} = e_{jk} e_i^s = 0.
\end{equation}
The direct sum decomposition \eqref{eq: decomposition H8} implies that $\cA$ has four one-dimensional representations $\{R_i^s \mid s = \pm, ~i = 1, 4\}$ and a single two-dimensional irreducible representation $R$.
The (one by one) representation matrices of $e_j^t$ and $e_{jk}$ in one-dimensional representation $R_i^s$ are given by
\begin{equation}
R_i^s(e_j^t) = \delta_{ij} \delta_{st}, \qquad R_i^s(e_{jk}) = 0.
\end{equation}
Similarly, the (two by two) representation matrices of $e_i^s$ and $e_{ij}$ in two-dimensional representation $R$ are given by
\begin{equation}
R(e_i^s) = 0, \quad 
R(e_{11}) = 
\begin{pmatrix}
1 & 0 \\
0 & 0
\end{pmatrix}, \quad
R(e_{12}) = 
\begin{pmatrix}
0 & 1 \\
0 & 0
\end{pmatrix}, \quad
R(e_{21}) = 
\begin{pmatrix}
0 & 0 \\
1 & 0
\end{pmatrix}, \quad
R(e_{22}) = 
\begin{pmatrix}
0 & 0 \\
0 & 1
\end{pmatrix}.
\end{equation}
Going back to the original basis $\{\cO_a^{\cI}, (\cO_D^{\cI})_{ij} \mid a \in \Z_2 \times \Z_2, ~i, j = 1, 2\}$ of the interface algebra $\cA$, we can summarize the action of symmetry operators in each representation as in Table~\ref{tab: reps H8}.
The group of one-dimensional representations $\{R_i^s \mid s = \pm, ~i = 1, 4\}$ is isomorphic to $\Z_2 \times \Z_2$, which is the group of group-like elements of $H_8^* \cong H_8$.
\begin{table}[t]
\begin{tabular}{c|ccccc}
& $R_1^+$ & $R_1^-$ & $R_4^+$ & $R_4^-$ & $R$ \\ \hline
$\cO_{(0, 0)}^{\cI}$ & 1 & 1 & 1 & 1 & $I_2$ \\
$\cO_{(0, 1)}^{\cI}$ & 1 & 1 & -1 & -1 & $Z$ \\
$\cO_{(1, 0)}^{\cI}$ & 1 & 1 & -1 & -1 & $-Z$ \\
$\cO_{(1, 1)}^{\cI}$ & 1 & 1 & 1 & 1 & $-I_2$ \\
$(\cO_D^{\cI})_{ij}$ & $\delta_{ij}$ & $-\delta_{ij}$ & $Y_{ij}$ & $-Y_{ij}$ &
$
\begin{pmatrix}
\frac{1}{2}(X+iY) & \frac{1}{2}(X-iY) \\
\frac{1}{2}(X-iY) & -\frac{1}{2}(X+iY) 
\end{pmatrix}_{ij}$
\end{tabular}
\caption{Irreducible representations of the self-interface algebra $\cA$ for the $\Rep(H_8)$ SPT phase. One-dimensional representations form a group $\Z_2 \times \Z_2$ under the multiplication. On the rightmost column, $I_2$ denotes the $2 \times 2$ identity matrix, and the bottom entry means that $R((\cO_D^{\cI})_{11}) = \frac{1}{2}(X+iY)$, etc.}
\label{tab: reps H8}
\end{table}

\section{The $G \times \Rep(G)$ cluster state as a trivial $\Rep(G)$ SPT state}
\label{sec: GxRep(G) cluster state}
In this appendix, we show that the $G \times \Rep(G)$ cluster state in \cite{Fechisin:2023dkj} is the exact ground state of the $\Rep(G)$ symmetric model discussed in \cite[Section 4.4]{Inamura:2021szw}.
This gives us another piece of evidence for the claim that the $G \times \Rep(G)$ cluster state is in the $\Rep(G)$-symmetric trivial phase labeled by the forgetful functor of $\Rep(G)$.

To begin with, we review the general construction of commuting projector models and their ground states with $\Rep(H)$ symmetry following \cite{Inamura:2021szw}, where $H$ is a general finite semisimple Hopf algebra.
The physical Hilbert space of the model is chosen to be a finite semisimple left $H$-comodule algebra $K$, i.e., an algebra equipped with a left $H$-comodule action $\lambda_K: K \rightarrow H \otimes K$ that is compatible with the algebra structure on $K$.
Since $K$ is a semisimple algebra, it admits a unique $\Delta$-separable symmetric Frobenius algebra structure \cite{Fuchs:2002cm, FS2008}.
We denote the multiplication and comultiplication of the Frobenius algebra $K$ by $m_K: K \otimes K \rightarrow K$ and $\Delta_K: K \rightarrow K \otimes K$ respectively.
The Hamiltonian of the model is given by the sum of local commuting projectors as follows:
\begin{equation}
H = - \sum_i h_{i, i+1}, \quad h_{i, i+1} = \Delta_K \circ m_K: K \otimes K \rightarrow K \otimes K.
\label{eq: commuting projector}
\end{equation}
Here, $h_{i, i+1}$ is a local term that acts only on sites $i$ and $i+1$.
The ground states of this Hamiltonian are in one-to-one correspondence with simple left $K$-modules.
Specifically, the ground state labeled by a simple left $K$-module $M$ can be written as an MPS
\begin{equation}
\ket{A_M} = \;\adjincludegraphics[valign=c, scale=1, trim={10pt 10pt 10pt 10pt}]{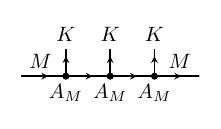},
\end{equation}
where $A_M: M \rightarrow K \otimes M$ denotes the left $K$-comodule action on $M$ defined by
\begin{equation}
A_M = (\text{id}_K \otimes \rho_M) \circ ((\Delta_K \circ u) \otimes \text{id}_M).
\label{eq: Rep(H) MPS}
\end{equation}
Here, $\rho_M: K \otimes M \rightarrow M$ denotes the left $K$-action on $M$ and $u: \C \rightarrow K$ denotes the unit of $K$.
It turns out that the Hamiltonian \eqref{eq: commuting projector} realizes the $\Rep(H)$-symmetric gapped phase that corresponds to a $\Rep(H)$-module category ${}_K \cM$, the category of left $K$-modules \cite{Inamura:2021szw}.
The above construction produces all gapped phases with $\Rep(H)$ symmetry because any $\Rep(H)$-module category is equivalent to ${}_K \cM$ for some left $H$-comodule algebra $K$ \cite{AM2007}.
In particular, the Hamiltonian \eqref{eq: commuting projector} realizes a $\Rep(H)$ SPT phase when $K$ is simple.

Now, let us restrict our attention to the $\Rep(G)$ SPT phase corresponding to the forgetful functor of $\Rep(G)$ \cite[Section 4.4]{Inamura:2021szw}.
To obtain this SPT phase, we choose the physical Hilbert space $K$ to be the smash product $K = \C[G]^* \# \C[G]$ of the group algebra $\C[G]$ and its dual $\C[G]^*$.
Concretely, $K$ is isomorphic to $\C[G]^* \otimes \C[G]$ as a vector space, and the multiplication and comultiplication of $K$ are given by 
\begin{equation}
\begin{aligned}
& (v^{g_1} \# v_{h_1}) \cdot (v^{g_2} \# v_{h_2}) = \delta_{g_1, h_1g_2} v^{g_1} \# v_{h_1h_2},\\
& \Delta_K \circ u(1) = \frac{1}{|G|} \sum_{g, h \in G} (v^g \# v_h) \otimes (v^{h^{-1}g} \# v_{h^{-1}}).
\end{aligned} 
\label{eq: smash}
\end{equation}
Here, $\{v^g \mid g \in G\}$ and $\{v_g \mid g \in G\}$ are dual bases of $\C[G]^*$ and $\C[G]$.
The multiplication of $K$ is simply written as $\cdot$ in Eq.~\eqref{eq: smash}.
The algebra $K$ defined above is simple,\footnote{A straightforward computation shows that the center of $K$ is one-dimensional and hence $K$ is simple.} meaning that an irreducible left $K$-module $M$ is unique up to isomorphism.
Since the dimension of $K$ is $|G|^2$, the dimension of its irreducible module $M$ is $\sqrt{|G|^2} = |G|$.
Thus, $M$ is isomorphic to $\C[G]$ as a vector space.
If we denote the basis of $M$ as $\{w_g \mid g \in G\}$, the action of $K$ on $M$ can be written explicitly as
\begin{equation}
(v^g \# v_h) \cdot w_l = \delta_{l, h^{-1}g} w_{hl},
\label{eq: cluster module}
\end{equation}
where $\cdot$ represents the left $K$-action $\rho_M: K \otimes M \rightarrow M$.
Substituting Eqs.~\eqref{eq: smash} and \eqref{eq: cluster module} into Eq.~\eqref{eq: Rep(H) MPS}, we find that the MPS tensor $A_M$ of the ground state is given by
\begin{equation}
A_M(w_g) = \frac{1}{|G|} \sum_{h \in G} (v^g \# v_h) \otimes w_{h^{-1} g}.
\end{equation}
In terms of diagrams, this MPS tensor can be written as\footnote{The mismatch of the orientations of the arrows in Eq.~\eqref{eq: AM=clsuter} is not a typo: taking the standard basis of $\C[G]$, one can verify that the components of the tensors on both sides agree with each other.}
\begin{equation}
\adjincludegraphics[valign=c, scale=1, trim={10pt 10pt 10pt 10pt}]{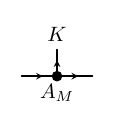}\:
= \;\adjincludegraphics[valign=c, scale=1, trim={10pt 10pt 10pt 10pt}]{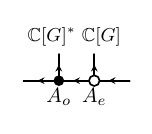}\;,
\label{eq: AM=clsuter}
\end{equation}
where
\begin{equation}
\adjincludegraphics[valign=c, scale=1, trim={10pt 10pt 10pt 10pt}]{tikz/out/MPS_odd.pdf}\; = \sum_{g \in G} P_g \otimes \ket{g}_{\phys}, \quad
\adjincludegraphics[valign=c, scale=1, trim={10pt 10pt 10pt 10pt}]{tikz/out/MPS_even.pdf}\; = \frac{1}{|G|} \sum_{g \in G} L_g \otimes \ket{g}_{\phys}.
\end{equation}
The operators $P_g$ and $L_g$ are defined by $P_g = \ket{g} \bra{g}$ and $L_g = \sum_{h \in G} \ket{gh} \bra{h}$ as in Eq.~\eqref{eq: Pg Lg}.
The above MPS tensors agree with those of the $G \times \Rep(G)$ cluster state in \cite{Fechisin:2023dkj} up to normalization.

\section{Weak completeness relation}
\label{sec: Weak completeness relation}
In this appendix, we show the following weak completeness relation, cf. eq.~\eqref{eq: weak completeness}:
\begin{equation}
\sum_i \;\adjincludegraphics[valign=c, scale=1.2, trim={10pt 10pt 10pt 10pt}]{tikz/out/weak_completeness1.pdf}\;
=\;\adjincludegraphics[valign=c, scale=1.2, trim={10pt 10pt 10pt 10pt}]{tikz/out/weak_completeness2.pdf}\;.
\label{eq: weak completeness appendix}
\end{equation}
Here, $\psi$ is an arbitrary unitary operator acting on the bond Hilbert space $V$ of the MPS.
The bond Hilbert space of the MPO $\widehat{\cO}_x$ will be denoted by $V_x$.
In what follows, every diagram is supposed to represent a linear map from the left side to the right side.
For instance, the diagrams in eq.~\eqref{eq: weak completeness appendix} represent linear maps from $V^*$ to $V_x \otimes V^*$.

To show eq.~\eqref{eq: weak completeness appendix}, we assume the following dual orthogonality relation:
\begin{equation}
\;\adjincludegraphics[scale=1.2,trim={10pt 10pt 10pt 10pt},valign = c]{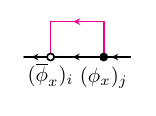}\;
= \delta_{ij} \;\adjincludegraphics[scale=1.2,trim={10pt 10pt 10pt 10pt},valign = c]{tikz/out/orthogonality2.pdf}\;.
\label{eq: dual action orthogonality}
\end{equation}
This equality should be regarded as the orthogonality relation for the dual object $x^* \in \cC$.
Equation~\eqref{eq: dual action orthogonality} implies that the action tensor $(\phi_x)_j$ is surjective as a linear map from $V_x^* \otimes V^*$ to $V^*$.
Correspondingly, the image of $(\phi_x)_j$ viewed as a linear map from $V^*$ to $V_x \otimes V^*$ is of the form
\begin{equation}
\text{Im}((\phi_x)_j) = \text{Span}\{w_{\alpha} \otimes v_{\alpha} \mid \alpha = 1, 2, \cdots, \dim V\} \subset V_x \otimes V^*,
\end{equation}
where $\{v_{\alpha} \mid \alpha = 1, 2, \cdots, \dim V\}$ is a basis of $V^*$ and $w_{\alpha}$ is an element of $V_x$.

Now, since $\psi$ is unitary, it is diagonalizable by another unitary operator $U$ as follows:
\begin{equation}
U^{\dagger} \psi U = \text{diag}(\lambda_1, \lambda_2, \cdots, \lambda_{\dim V}), \quad \abs{\lambda_{i}}=1.
\end{equation}
If we choose a basis $\{v_{\alpha}\}$ to be the set of eigenvectors of $\psi$, the action of $\text{id}_{V_x} \otimes \psi$ on $w_{\alpha} \otimes v_{\alpha}$ can be computed as
$(\text{id}_{V_x} \otimes \psi) (w_{\alpha} \otimes v_{\alpha}) = \lambda_{\alpha} w_{\alpha} \otimes v_{\alpha}$.
In particular, the action of $\text{id}_{V_x} \otimes \psi$ preserves the image of $(\phi_x)_{j}$, i.e.,
\begin{equation}
\text{Im}((\text{id}_{V_x} \otimes \psi) (\phi_x)_j) = \text{Im}((\phi_x)_j).
\end{equation}
Recalling that the idempotent $\sum_i (\phi_x)_i (\overline{\phi}_x)_i$ acts as the identity on $\text{Im}((\phi_x)_j)$ due to the orthogonality relation~\eqref{eq: action orthogonality}, we find that it also acts as the identity on $\text{Im}((\text{id}_{V_x} \otimes \psi) (\phi_x)_j)$.
Thus, the weak completeness relation~\eqref{eq: weak completeness appendix} holds.
We note that the same proof applies to any diagonalizable $\psi$.

\section{Hochschild cohomology}
\label{sec: cohomology}

\subsection{Proof of $\delta^{2}=0$}
In Sec.~\ref{sec: Abelianization of the nonabelian factor system}, we introduced a cohomology theory $\cohoZ{k}{\mathcal{C}}$. 
In that section, we utilized the property $\delta^{2}=0$ without proving it. 
In this section, we show it explicitly:
\begin{proposition}
    $\delta^{2}=0:C^{k}(\mathcal{C}, \mathbb{Z})\to C^{k+2}(\mathcal{C}, \mathbb{Z}).$
\end{proposition}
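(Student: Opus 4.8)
The plan is to recognize the differential $\delta$ as the standard Hochschild (equivalently, bar) coboundary and to deduce $\delta^2=0$ from the cosimplicial identities, rather than by brute-force expansion of $\delta_{k+1}\delta_k$. Since a cochain $n\in C^k(\cC;\Z)$ is extended additively to non-simple objects, it is $\Z$-multilinear in each slot and therefore factors through the Grothendieck (fusion) ring $R:=\mathrm{Gr}(\cC)$. All computations may thus be carried out in $R$, where the fusion product is associative and where $\dim\colon R\to\Z$ is a ring homomorphism, i.e. $\dim(\rho\otimes\rho')=\dim(\rho)\dim(\rho')$. These two structural facts — associativity and multiplicativity of the quantum dimension — are the only algebraic inputs.

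First I would introduce coface maps $\partial_i\colon C^k(\cC;\Z)\to C^{k+1}(\cC;\Z)$ for $0\le i\le k+1$ by
\begin{equation}
\begin{aligned}
(\partial_0 n)(\rho_1,\dots,\rho_{k+1})&=\dim(\rho_1)\,n(\rho_2,\dots,\rho_{k+1}),\\
(\partial_i n)(\rho_1,\dots,\rho_{k+1})&=n(\rho_1,\dots,\rho_i\otimes\rho_{i+1},\dots,\rho_{k+1}),\qquad 1\le i\le k,\\
(\partial_{k+1} n)(\rho_1,\dots,\rho_{k+1})&=n(\rho_1,\dots,\rho_{k})\,\dim(\rho_{k+1}),
\end{aligned}
\end{equation}
so that $\delta_k=\sum_{i=0}^{k+1}(-1)^i\partial_i$ directly from the definition of the differential. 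The problem then reduces to verifying the cosimplicial identities $\partial_j\partial_i=\partial_i\partial_{j-1}$ for all $0\le i<j$.

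Next I would check these identities by cases. When $i$ and $j$ both lie in the interior range, the identity is either associativity of $\otimes$ (for $j=i+1$, fusing three consecutive labels) or the trivial commutation of operations acting on disjoint slots (for $j>i+1$). The only delicate cases involve the boundary maps $\partial_0$ and $\partial_{k+1}$ interacting with an adjacent fusion: for example, $\partial_1\partial_0 n$ and $\partial_0\partial_0 n$ evaluate on $(\rho_1,\dots,\rho_{k+2})$ to $\dim(\rho_1\otimes\rho_2)\,n(\rho_3,\dots,\rho_{k+2})$ and $\dim(\rho_1)\dim(\rho_2)\,n(\rho_3,\dots,\rho_{k+2})$ respectively, so their equality is precisely the multiplicativity of the quantum dimension; the mirror identity at the right end uses the same fact. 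Tracking the $\dim$ factors through these boundary cases is the only genuine bookkeeping obstacle, and I expect the interior cases to be immediate.

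Finally, granting the cosimplicial identities, $\delta^2=0$ follows from the classical alternating-sum cancellation: writing $\delta_{k+1}\delta_k=\sum_{i,j}(-1)^{i+j}\partial_j\partial_i$, one substitutes $\partial_j\partial_i=\partial_i\partial_{j-1}$ in the terms with $i<j$ and reindexes, so that these cancel the terms with $i\ge j$ in matched pairs. This yields $\delta^2=0$ on $C^k(\cC;\Z)$, as claimed.
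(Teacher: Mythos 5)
Your proof is correct, but it takes a genuinely different route from the paper's. The paper proves $\delta^2=0$ by writing out $(\delta^2 n)(\rho_1,\dots,\rho_{k+2})$ in full, grouping the resulting terms into five blocks $l_1,\dots,l_5$, and cancelling them pairwise by hand; the multiplicativity $\dim(\rho\otimes\rho')=\dim(\rho)\dim(\rho')$ and associativity of $\otimes$ are used implicitly inside that bookkeeping. You instead package the differential as $\delta_k=\sum_{i=0}^{k+1}(-1)^i\partial_i$ and reduce everything to the cosimplicial identities $\partial_j\partial_i=\partial_i\partial_{j-1}$ for $i<j$, after which the cancellation is the classical alternating-sum argument. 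Your case analysis is complete and each case checks out: the interior adjacent case is associativity in the fusion ring (legitimate because the additive extension of $n$ to non-simple objects means $n$ only sees Grothendieck classes), the disjoint-slot cases are trivial, and the boundary cases $\partial_1\partial_0=\partial_0\partial_0$ and $\partial_{k+2}\partial_{k+1}=\partial_{k+1}\partial_{k+1}$ are exactly the statement that $\dim$ is a ring homomorphism, which the paper guarantees since $\dim(x)\dim(y)=\sum_z N_{xy}^z\dim(z)$. What your approach buys is transparency and reusability: it isolates the two structural inputs (associativity and multiplicativity of $\dim$), makes clear that $\coho{k}{\cC}{\Z}$ is genuinely a Hochschild-type cohomology of the fusion ring with coefficients in the $\dim$-bimodule $\Z$ (as the paper itself remarks in a footnote), and avoids the error-prone index gymnastics of the direct expansion. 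The paper's computation, by contrast, is self-contained and requires no simplicial vocabulary. Either proof is acceptable; yours is the more conceptual one.
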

\begin{proof}\;\\
  Let $n$ be an aribitrary element in $C^{k}(\mathcal{C}, \mathbb{Z})$ 
  and $\{\rho_{i}\}_{i=1}^{k+2}$ be simple objects in $\mathcal{C}$.  
  We compute $(\delta^{2}n)(\rho_{1},...,\rho_{k+2})$ directly:
\begin{align*}
  &\quad(\delta^2 n)(\rho_1,...,\rho_{k+2})\\
  &=\dim(\rho_1)(\delta n)(\rho_2,...,\rho_{k+1})-(\delta n)(\rho_1\otimes\rho_2,\rho_3,...,\rho_{k+1})+\sum_{i=2}^{k}(-1)^{i}(\delta n)(\rho_1,...,\rho_i\otimes\rho_{i+1},...,\rho_{k+1})\\
  &\quad+(-1)^{k+1}(\delta n)(\rho_1,...,\rho_k,\rho_{k+1}\otimes\rho_{k+2})+(-1)^{k+2}(\delta n)(\rho_1,...,\rho_{k+1})\dim(\rho_{k+2}),\\
  &=\dim(\rho_1)\{\dim(\rho_2)n(\rho_3,...,\rho_{k+2})+\sum_{j=2}^{k+1}(-1)^{j-1}n(\rho_2,...,\rho_j\otimes\rho_{j+1},...,\rho_{k+2})\\
  &\quad+(-1)^{k+1}n(\rho_2,...,\rho_{k+1})\dim(\rho_{k+2})\}-\{\dim(\rho_1\otimes\rho_2)n(\rho_3,...,\rho_{k+2})-n(\rho_1\otimes\rho_2\otimes\rho_3,\rho_4,...,\rho_{k+2})\\
  &\quad+\sum_{j=3}^{k+1}(-1)^{j+1}n(\rho_1\otimes\rho_2,...,\rho_{j}\otimes\rho_{j+1},...,\rho_{k+2})
  +(-1)^{k}n(\rho_1\otimes\rho_2,...,\rho_{k+1})\dim(\rho_{k+2})\}\\
  &\quad+\sum_{i=2}^{k}(-1)^i\{\dim(\rho_1)n(\rho_2,...,\rho_i\otimes\rho_{i+1},...,\rho_{k+2})+\sum_{j=1}^{i-2}(-1)^{j}n(\rho_1,...,\rho_j\otimes\rho_{j+1},...,\rho_i\otimes\rho_{i+1},...,\rho_{k+2})\\
  &\quad+(-1)^{i+1}n(\rho_1,...,\rho_{i-1}\otimes\rho_i\otimes\rho_{i+1},...,\rho_{k+2})
  +(-1)^{i}n(\rho_1,...,\rho_{i-1},\rho_i\otimes\rho_{i+1}\otimes\rho_{i+2},...,\rho_{k+2})\\
  &\quad+\sum_{j=i+2}^{k+1}(-1)^{j+1}n(\rho_1,...,\rho_i\otimes\rho_{i+1},...,\rho_{j}\otimes\rho_{j+1},...,\rho_{k+2})\\
  &\quad+(-1)^kn(\rho_1,...,\rho_i\otimes\rho_{i+1},...,\rho_{k+1})\dim(\rho_{k+2})\}\\
  &\quad+(-1)^{k+1}\{\dim(\rho_1)n(\rho_2,...,\rho_k,\rho_{k+1}\otimes\rho_{k+2})
  +\sum_{j=1}^{k-1}(-1)^jn(\rho_1,...,\rho_{j}\otimes\rho_{j+1},...,\rho_k,\rho_{k+1}\otimes\rho_{k+2})\\
  &\quad+(-1)^{k}n(\rho_1,...,\rho_{k-1},\rho_{k}\otimes\rho_{k+1}\otimes\rho_{k+2})
  +(-1)^{k+1}n(\rho_1,...,\rho_k)\dim(\rho_{k+1}\otimes\rho_{k+2})
  \}\\
  &\quad+(-1)^{k+2}\dim(\rho_{k+2})\{\dim(\rho_1)n(\rho_2,...,\rho_{k+1})
  +\sum_{j=1}^k(-1)^jn(\rho_1,...,\rho_{j}\otimes\rho_{j+1},...,\rho_{k+1})\\
  &\quad+(-1)^{k+1}n(\rho_1,...,\rho_{k})\dim(\rho_{k+1})
  \}.
\end{align*}
Since the expression is long, let us calculate it term by term. Note that the first line, the second and third lines, the fourth to sixth lines, the seventh and eighth lines, and the ninth line correspond to the first through fifth terms of the first equation, respectively. Let us denote them as $l_1,l_2,...,l_5$:
\begin{align*}
    l_1=&\dim(\rho_1)\{\dim(\rho_2)n(\rho_3,...,\rho_{k+2})+\sum_{j=2}^{k+1}(-1)^{j-1}n(\rho_2,...,\rho_j\otimes\rho_{j+1},...,\rho_{k+2})\\
    &+(-1)^{k+1}n(\rho_2,...,\rho_{k+1})\dim(\rho_{k+2})\},\\
    l_2=&-\{\dim(\rho_1\otimes\rho_2)n(\rho_3,...,\rho_{k+2})-n(\rho_1\otimes\rho_2\otimes\rho_3,\rho_4,...,\rho_{k+2})\\
    &+\sum_{j=3}^{k+1}(-1)^{j+1}n(\rho_1\otimes\rho_2,...,\rho_{j}\otimes\rho_{j+1},...,\rho_{k+2})+(-1)^{k}n(\rho_1\otimes\rho_2,...,\rho_{k+1})\dim(\rho_{k+2})
    \},\\
    l_3=&+\sum_{i=2}^{k}(-1)^i\{\dim(\rho_1)n(\rho_2,...,\rho_i\otimes\rho_{i+1},...,\rho_{k+2})\\
    &+\sum_{j=1}^{i-2}(-1)^{j}n(\rho_1,...,\rho_j\otimes\rho_{j+1},...,\rho_i\otimes\rho_{i+1},...,\rho_{k+2})\\
    &+(-1)^{i+1}n(\rho_1,...,\rho_{i-1}\otimes\rho_i\otimes\rho_{i+1},...,\rho_{k+2})
    +(-1)^{i}n(\rho_1,...,\rho_{i-1},\rho_i\otimes\rho_{i+1}\otimes\rho_{i+2},...,\rho_{k+2})\\
    &+\sum_{j=i+2}^{k+1}(-1)^{j+1}n(\rho_1,...,\rho_i\otimes\rho_{i+1},...,\rho_{j}\otimes\rho_{j+1},...,\rho_{k+2})\\
    &+(-1)^kn(\rho_1,...,\rho_i\otimes\rho_{i+1},...,\rho_{k+1})\dim(\rho_{k+2})\},\\
    l_4=&+(-1)^{k+1}\{\dim(\rho_1)n(\rho_2,...,\rho_k,\rho_{k+1}\otimes\rho_{k+2})
    +\sum_{j=1}^{k-1}(-1)^jn(\rho_1,...,\rho_{j}\otimes\rho_{j+1},...,\rho_k,\rho_{k+1}\otimes\rho_{k+2})\\
    &+(-1)^{k}n(\rho_1,...,\rho_{k-1},\rho_{k}\otimes\rho_{k+1}\otimes\rho_{k+2})
    +(-1)^{k+1}n(\rho_1,...,\rho_k)\dim(\rho_{k+1}\otimes\rho_{k+2})
    \},\\
    l_5=&+(-1)^{k+2}\dim(\rho_{k+2})\{\dim(\rho_1)n(\rho_2,...,\rho_{k+1})
    +\sum_{j=1}^k(-1)^jn(\rho_1,...,\rho_{j}\otimes\rho_{j+1},...,\rho_{k+1})\\
    &+(-1)^{k+1}n(\rho_1,...,\rho_{k})\dim(\rho_{k+1})
    \}.
\end{align*}
First,  $l_1$ and $l_5$ cancel out the first and last terms of 
$l_2,l_3$, and $l_4$. Let us denote the terms obtained by removing the first and last terms from $l_2, l_3,$ and $ l_4 $ as $ l_2', l_3', $ and $ l_4' $, respectively:
\begin{align*}
    l_2^\prime=&-\{-n(\rho_1\otimes\rho_2\otimes\rho_3,\rho_4,...,\rho_{k+2})
    +\sum_{j=3}^{k+1}(-1)^{j+1}n(\rho_1\otimes\rho_2,...,\rho_{j}\otimes\rho_{j+1},...,\rho_{k+2})
    \},\\
    l_3^\prime=&+\sum_{i=2}^{k}(-1)^i\{
    \sum_{j=1}^{i-2}(-1)^{j}n(\rho_1,...,\rho_j\otimes\rho_{j+1},...,\rho_i\otimes\rho_{i+1},...,\rho_{k+2})\\
    &+(-1)^{i+1}n(\rho_1,...,\rho_{i-1}\otimes\rho_i\otimes\rho_{i+1},...,\rho_{k+2})
    +(-1)^{i}n(\rho_1,...,\rho_{i-1},\rho_i\otimes\rho_{i+1}\otimes\rho_{i+2},...,\rho_{k+2})\\
    &+\sum_{j=i+2}^{k+1}(-1)^{j+1}n(\rho_1,...,\rho_i\otimes\rho_{i+1},...,\rho_{j}\otimes\rho_{j+1},...,\rho_{k+2})
    \},\\
    l_4^\prime=&(-1)^{k+1}\{\sum_{j=1}^{k-1}(-1)^jn(\rho_1,...,\rho_{j}\otimes\rho_{j+1},...,\rho_k,\rho_{k+1}\otimes\rho_{k+2})
    +(-1)^{k}n(\rho_1,...,\rho_{k-1},\rho_{k}\otimes\rho_{k+1}\otimes\rho_{k+2})
    \}.
\end{align*}
Let us focus on the second and third terms of $l_3^\prime$.
\begin{align*}
    &\quad \sum_{i=2}^{k}(-1)^{2i+1}n(\rho_1,...,\rho_{i-1}\otimes\rho_i\otimes\rho_{i+1},...,\rho_{k+2})
    +\sum_{i=2}^{k}(-1)^{2i}n(\rho_1,...,\rho_{i-1},\rho_i\otimes\rho_{i+1}\otimes\rho_{i+2},...,\rho_{k+2})\\
    &=-n(\rho_1\otimes\rho_2\otimes\rho_3,\rho_4...,\rho_{k+2})+n(\rho_1,...,\rho_{k}\otimes\rho_{k+1}\otimes\rho_{k+2}).
\end{align*}
This term cancels out the first term of $l_1'$ and the last term of $l_4' $. Next, let us focus on the first and fourth terms of $l_3^\prime$:
{\small
\begin{align*}
    &\quad\sum_{i=2}^{k}
    \sum_{j=1}^{i-2}(-1)^{i+j}n(\rho_1,...,\rho_j\otimes\rho_{j+1},...,\rho_i\otimes\rho_{i+1},...,\rho_{k+2})+\sum_{i=2}^{k}\sum_{j=i+2}^{k+1}(-1)^{i+j+1}n(\rho_1,...,\rho_i\otimes\rho_{i+1},...,\rho_{j}\otimes\rho_{j+1},...,\rho_{k+2})\\
    &=\sum_{j=1}^{k-1}\sum_{i=j+2}^{k}(-1)^{i+j}n(\rho_1,...,\rho_j\otimes\rho_{j+1},...,\rho_i\otimes\rho_{i+1},...,\rho_{k+2})-\sum_{i=2}^{k}\sum_{j=i+2}^{k+1}(-1)^{i+j}n(\rho_1,...,\rho_i\otimes\rho_{i+1},...,\rho_{j}\otimes\rho_{j+1},...,\rho_{k+2})\\
    &=\sum_{i=1}^{k-1}\sum_{j=i+2}^{k}(-1)^{i+j}n(\rho_1,...,\rho_i\otimes\rho_{i+1},...,\rho_j\otimes\rho_{j+1},...,\rho_{k+2})-\sum_{i=2}^{k}\sum_{j=i+2}^{k+1}(-1)^{i+j}n(\rho_1,...,\rho_i\otimes\rho_{i+1},...,\rho_{j}\otimes\rho_{j+1},...,\rho_{k+2})\\
    &=\sum_{j=3}^{k}(-1)^{1+j}n(\rho_1\otimes\rho_{2},...,\rho_j\otimes\rho_{j+1},...,\rho_{k+2})-
    \sum_{i=2}^{k-1}(-1)^{i+k+1}n(\rho_1,...,\rho_i\otimes\rho_{i+1},...,\rho_{k+1}\otimes\rho_{k+2}).
\end{align*}
}%
Here, we use $\sum_{i=2}^{k}\sum_{j=i+2}^{k+1}=\sum_{j=1}^{k-1}\sum_{i=j+2}^{k}$. On the other hand, the second term of $l_2'$ and the first term of $l_4'$ are
\begin{align*}
    &-\sum_{j=3}^{k+1}(-1)^{j+1}n(\rho_1\otimes\rho_2,...,\rho_{j}\otimes\rho_{j+1},...,\rho_{k+2})
    +\sum_{j=1}^{k-1}(-1)^{j+k+1}n(\rho_1,...,\rho_{j}\otimes\rho_{j+1},...,\rho_k,\rho_{k+1}\otimes\rho_{k+2})\\
    =&-\sum_{j=3}^{k}(-1)^{j+1}n(\rho_1\otimes\rho_2,...,\rho_{j}\otimes\rho_{j+1},...,\rho_{k+2})
    +\sum_{j=2}^{k-1}(-1)^{j+k+1}n(\rho_1,...,\rho_{j}\otimes\rho_{j+1},...,\rho_k,\rho_{k+1}\otimes\rho_{k+2}).
\end{align*}
This term cancels out the contribution from the previous one. Consequently, $\delta^2=0$.
\end{proof}

\subsection{The $2$nd Hochschild cohomology for $\rep{D_{8}}$}
In this section, we show the following proposition:
\begin{proposition}
    \begin{eqnarray}
        \cohoZ{2}{\rep{D_{8}}}\simeq(\zmod{2}\times\zmod{2})\ltimes_{\omega}\zmod{4},
    \end{eqnarray}
    where $\omega$ is a nontrivial element in $\coho{1}{(\zmod{2})^2}{\zmod{4}}^{\otimes2}\subset\coho{2}{(\zmod{2})^2}{\zmod{4}}$
\end{proposition}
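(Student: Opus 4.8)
The plan is to compute $\cohoZ{2}{\rep{D_8}}$ directly from the complex $(C^{\bullet}(\rep{D_8};\mathbb{Z}),\delta)$ of Sec.~\ref{sec: Abelianization of the nonabelian factor system}, using the fusion rules of $\rep{D_8}\cong\TY(\zmod2\times\zmod2,\chi_+,+1)$. Write $G=\zmod2\times\zmod2$ for the group of invertible objects and $D$ for the non-invertible object, so $\dim(g)=1$ and $\dim(D)=2$. A normalized $2$-cochain $n$ is recorded by its four sectors $c(g,h):=n(g,h)$, $p(g):=n(g,D)$, $q(g):=n(D,g)$ and $t:=n(D,D)$ with $g,h\in G$. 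First I would impose $\delta n=0$ by evaluating $(\delta n)(\rho_1,\rho_2,\rho_3)$ on every type of triple $\rho_i\in\{G,D\}$. The triples with at most one $D$ yield the ordinary group-cocycle equation for $c$ on $G$ together with $\delta p=2c$ and $\delta q=2c$, the factor $2$ coming from $\dim(D)=2$; the triples $(g,D,D)$ and $(D,D,g)$ give the ``sum conditions'' $\sum_{k\in G}c(g,k)=2p(g)$ and $\sum_{k\in G}c(k,g)=2q(g)$, produced by $D\otimes D=\bigoplus_{k\in G}k$; and $(D,g,D)$ forces $p=q$.

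Since $Z^1_{\text{gp}}(G,\mathbb{Z})=\mathrm{Hom}(G,\mathbb{Z})=0$, the equation $\delta p=2c$ has a unique solution, namely $p(g)=c(g,g)$ in the normalized gauge (using $g^2=1$). Hence $p=q$, and the two sum conditions are then automatically satisfied by every $c\in Z^2_{\text{gp}}(G,\mathbb{Z})$, so that the cocycle data reduce to a free pair $(c,t)$ with $c$ a group $2$-cocycle and $t\in\mathbb{Z}$. Next I would compute the coboundaries: a normalized $1$-cochain $(\mu,\tau)$, with $\mu\colon G\to\mathbb{Z}$ and $\tau=m(D)$, maps to $c=\delta\mu$, $p=q=2\mu$, and crucially $t=4\tau-\sum_{k}\mu(k)$, the last term again forced by $D\otimes D=\bigoplus_{k}k$. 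The key structural point is that a coboundary moving $c$ within $B^2_{\text{gp}}(G,\mathbb{Z})$ necessarily shifts $t$ by $-\sum_k\mu(k)$, coupling the two sectors.

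Assembling the quotient, the class $[c]\in\mathrm{H}_{\text{gp}}^2(G,\mathbb{Z})\cong\zmod2\times\zmod2$ gives a well-defined $\zmod2\times\zmod2$, while $t$ is well-defined modulo $4$, since coboundaries with $\delta\mu=0$ reach exactly $4\mathbb{Z}$, giving a $\zmod4$. I would exhibit generators $[C_1],[C_2]$, lifts of the two generators of $\mathrm{H}_{\text{gp}}^2(G,\mathbb{Z})$ built from the pullbacks of the generator of $\mathrm{H}_{\text{gp}}^2(\zmod2,\mathbb{Z})$ along the two projections, together with $[T]=[(0,0,0,1)]$, and compute the defining relations $4[T]=0$ and $2[C_i]=2[T]$. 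The latter arise from $2c_i=\delta(p_{c_i})$ together with $\sum_k p_{c_i}(k)=2$, i.e. precisely from the coupling above. These relations present $\cohoZ{2}{\rep{D_8}}$ as the extension of $\zmod2\times\zmod2$ (the $[c]$-sector) by $\zmod4$ (the $t$-sector) whose $2$-cocycle is $\omega(\epsilon,\epsilon')=2(\epsilon_1\epsilon_1'+\epsilon_2\epsilon_2')\bmod 4$; I would then identify $\omega$ as the image under $\zmod2\hookrightarrow\zmod4$ of the cup-product class $\alpha_1\cup\alpha_1+\alpha_2\cup\alpha_2$, exhibiting it inside $\coho{1}{G}{\zmod4}^{\otimes2}\subset\coho{2}{G}{\zmod4}$ as claimed. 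The property $\delta^2=0$ is already established in the previous subsection, and the order is pinned to $16$ by the set description (four classes $[c]$ times four values of $t$).

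The main obstacle is this final identification: computing the extension cocycle $\omega$ and determining its cohomology class. This requires carefully tracking how the fusion $D\otimes D=\bigoplus_{k}k$ feeds the group-cohomology data of $G$ into the $t$-sector, concretely evaluating $\sum_k p_{c_i}(k)$ and the shift of $t$ induced when $c_1+c_2$ is reduced to a canonical representative, and then matching the resulting symmetric $\zmod4$-valued $2$-cochain with the cup-product description. Care is needed because $\omega$ is pure $2$-torsion, so the class in $\coho{2}{G}{\zmod4}$, its image in $\mathrm{Ext}(G,\zmod4)$, and the abstract isomorphism type of the order-$16$ group must be kept distinct; I would verify the group structure independently by counting elements of order dividing $2$ and cross-checking against the presentation $\langle [C_1],[C_2],[T]\mid 4[T]=0,\ 2[C_i]=2[T]\rangle$.
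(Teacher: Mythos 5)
Your proposal is correct and follows essentially the same route as the paper: a direct computation with the complex $(C^{\bullet}(\Rep(D_8);\mathbb{Z}),\delta)$, solving the cocycle conditions sector by sector (the $(g,h)$, $(g,D)$, $(D,g)$, $(D,D)$ entries), quotienting by coboundaries, and reading off the group structure from the carry-over $2[C_i]=2[T]$ forced by $(\delta l)(D,D)=4l(D)-\sum_{g}l(g)$. Your closing caution about distinguishing the cochain $\omega$ from its cohomology class is well placed: $\omega(\epsilon,\epsilon')=2(\epsilon_1\epsilon_1'+\epsilon_2\epsilon_2')$ equals $\delta\lambda$ for $\lambda(\epsilon)=\epsilon_1+\epsilon_2\in\zmod{4}$, so its class in $\coho{2}{(\zmod{2})^2}{\zmod{4}}$ vanishes, the extension splits, and the presentation $\langle [C_1],[C_2],[T]\mid 4[T]=0,\ 2[C_i]=2[T]\rangle$ is abstractly $\zmod{2}\times\zmod{2}\times\zmod{4}$ --- the computation in both your argument and the paper is right, but the ``nontrivial extension'' phrasing can only refer to the specific cocycle attached to the chosen section, not to its cohomology class.
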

\begin{proof}
    We denote the elements of $D_{8}$ as $D_{8}=\{\sigma^i\tau^j\left|\right.\sigma^4=1,\tau^2=1,\sigma\tau=\tau\sigma^3\}$, and denote the simple objects of $\rep{D_{8}}$ as $\Simp(\rep{D_{8}})=\{(0,0),(0,1),(1,0),(1,1),D\}$. Here, $(a,b)$ represents the sign representation with $(\sigma,\tau)=((-1)^{a},(-1)^{b})$ and $D$ is the unique two-dimensional representation of $D_{8}$.
    
    Let $n\in \mathrm{Z}^2(\rep{D_{8}}, \mathbb{Z})$ be a cocycle, i.e.
    \begin{eqnarray}
        (\delta n)(\rho_1,\rho_2,\rho_3)=0,
    \end{eqnarray}
    for any simple objects $\rho_1,\rho_2,\rho_3$. First, we consider the case where $\rho_1,\rho_2,\rho_3$ are one-dimensional representations. Since the one-dimensional representations follow the multiplication rule of the group $\mathbb{Z}_2 \times \mathbb{Z}_2 $, the restrictions on $n$ for one-dimensional representations are the same as that of the cocycles for the group cohomology of $\zmod{2}\times\zmod{2}$. Then, the remaining degrees of freedom are
    \begin{eqnarray}
        n((0,0),(0,0)), \quad n((0,1),(0,1)), \quad n((1,0),(1,0)).
    \end{eqnarray}
    The others are determined as 
    \begin{eqnarray}
        n(\rho_1,\rho_2)=\begin{cases}
            n((0,1),(0,1))+n((1,0),(1,0))& \text{ if }(\rho_1,\rho_2)=((1,1),(1,1)),\\
            n((0,0),(0,0))&\text{ otherwise, }
        \end{cases}
    \end{eqnarray}
    for one-dimensional representations $(\rho_1,\rho_2)$. These are all the constraints for the one-dimensional representations.

    Similarly, for one-dimensional representation $\rho$, $n(\rho,D)$ and $n(D,\rho)$ are completely fixed by the cocycle condition as follows: Since
    \begin{eqnarray}
        (\delta n)((0,0),(0,0),D)&=&n((0,0),D)-n((0,0),D)+n((0,0),D)-2n((0,0),(0,0))=0,\\
        (\delta n)(D,(0,0),(0,0))&=&2n((0,0),(0,0))-n(D,(0,0))+n(D,(0,0))-n(D,(0,0))=0,
    \end{eqnarray}
    we obtain
    \begin{eqnarray}\label{eq:even}
        n((0,0),D)=n(D,(0,0))=2n((0,0),(0,0)).
    \end{eqnarray}
    Also, since
    \begin{eqnarray}
        (\delta n)(\rho,\rho,D)&=&n(\rho,D)-n((0,0),D)+n(\rho,D)-2n(\rho,\rho)=0,\\
        (\delta n)(D,\rho,\rho)&=&2n(\rho,\rho)-n(D,\rho)+n(D,(0,0))-n(D,\rho)=0,
    \end{eqnarray}
    we obtain
    \begin{eqnarray}
        n(\rho,D)&=&n(\rho,\rho)+\frac{1}{2}n((0,0),D)=n(\rho,\rho)+n((0,0),(0,0)),\\
        n(D,\rho)&=&n(\rho,\rho)-\frac{1}{2}n(D,(0,0))=n(\rho,\rho)-n((0,0),(0,0)).
    \end{eqnarray}
    In contrast, there are no restrictions on $n(D,D)$. In summary, $n((0,0),(0,0)), n((0,1),(0,1)), n((1,0),(1,0))$ and $n(D,D)$ remain as undetermined degrees of freedom.

    Next, let us consider the coboundary shift by $l\in C^1(\rep{D_{8}}, \mathbb{Z})$. First, the coboundary action for $n((0,0),(0,0)), n((0,1),(0,1)), n((1,0),(1,0))$ are
    \begin{eqnarray}
        (\delta l)((0,0),(0,0))&=&l((0,0)),\\
        (\delta l)((0,1),(0,1))&=&2l((0,1))-l((0,0)),\\
        (\delta l)((1,0),(1,0))&=&2l((1,0))-l((0,0)).
    \end{eqnarray}
    By using the first equation, we normalize $n((0,0),(0,0))$ to $0$. The second and third equations imply that we can freely shift $n((0,1),(0,1)), n((1,0),(1,0))$ by even integers. Thus, in cohomology theory, $n((0,1),(0,1)), n((1,0),(1,0))$ are regarded as elements of $\zmod{2}$. By using these even integer shift, we normalize $n((0,1),(0,1)), n((1,0),(1,0))$ to $0$ or $1$.

    The coboundary action for $n(D,D)$ is
    \begin{eqnarray}
        (\delta l)(D,D)=4l(D)-l((0,0))-l((0,1))-l((1,0))-l((1,1)).
    \end{eqnarray}
    Since we already fix $l((0,0)),l((0,1)),l((1,0)),l((1,1))$, we cannot shift $n(D,D)$ by $1$. However, by using $(\delta l)(D,D)$, we can shift it by multiples of $4$. Thus, in cohomology theory, $n(D,D)$ is regarded as an element of $\zmod{4}$. Consequently, we show that 
    \begin{eqnarray}
        \cohoZ{2}{\rep{D_{8}}}=\zmod{2}\times\zmod{2}\times\zmod{4},
    \end{eqnarray}
    as a set, where first and second $\zmod{2}$ is a choice of $n((0,1),(0,1))=0,1$ and $n((1,0),(1,0))=0,1$ respectively, and $\zmod{4}$ is a choice of $n(D,D)=0,1,2,3$.

    Finally, we identify the group structure of $\cohoZ{2}{\rep{D_{8}}}$. Let us take $(1,0,0)\in\zmod{2}\times\zmod{2}\times\zmod{4}$. This element is represented as 
    \begin{eqnarray}
        n(\rho,\rho)=\begin{cases}
            1,&\text{ if }(\rho,\rho)=((0,1),(0,1)),((1,1),(1,1)),\\
            0,& \text{ otherwise. }
        \end{cases}
    \end{eqnarray}
    Thus, multiplying this element by two results in 
    \begin{eqnarray}
        n(\rho,\rho)=\begin{cases}
            2,&\text{ if }(\rho,\rho)=((0,1),(0,1)),((1,1),(1,1)),\\
            0,& \text{ otherwise. }
        \end{cases}
    \end{eqnarray}
    We have decided to normalize $n((0,1),(0,1))$ and $n((1,1),(1,1))$ to $0$ or $1$ by using $l((0,1))$ and $l((1,1))$. Under this renormalization, $n(D,D)$ shifts by $2$. In other words, this means that the element of the first $\mathbb{Z}_2$ has carried over to become $2$ in $\mathbb{Z}_4$. Also, the second $\mathbb{Z}_2$ exhibits the same carry-over structure. Therefore, $\cohoZ{2}{\rep{D_{8}}}$ is a nontrivial extension of the $\zmod{2}\times\zmod{2}$ by $\zmod{4}$.
\end{proof}

\section{Computation of pump invariant for $\Rep(D_8)$}
\label{sec: Computation of pump invariant}
Let us compute the pump invariant for $S^1$-parameterized families constructed from the $D_8 \times \Rep(D_8)$ cluster state.
We note that the pump invariant is already computed in Sec.~\ref{sec: Example: Rep(G) symmetry II} in the case of the general $G \times \Rep(G)$ cluster state.
The computations there were performed in a gauge where the action tensors $\hat{\phi}_{\rho}(\theta)$ for $\rho \in \Rep(G)$ are $2\pi$-periodic.
In this appendix, we do a similar computation in a gauge where the MPS tensor $A(\theta)$ is $2\pi$-periodic.
Furthermore, we not only compute the invariant but also write down $A(\theta)$ and $\hat{\phi}_{\rho}(\theta)$ explicitly for all $\theta \in S^1$.

The dihedral group $D_8$ of order 8 is defined by 
\begin{equation}
  D_{8}=\left< \sigma, \tau \left.  \right| \sigma^{4}=\tau^{2}=1, \sigma\tau=\tau\sigma^{3}  \right>. 
\end{equation}
Under the right regular representation of \( D_8 \), 
the generators $\sigma$ and $\tau$ are given as follows:
\begin{align}
  R(\sigma)=\frac{1}{2\sqrt{2}}\left(
\begin{array}{cccccccc}
0 & 0 & 0 & 0 & 0 & 0 & 1 & 0 \\
0 & 0 & 0 & 0 & 0 & 0 & 0 & 1 \\
1 & 0 & 0 & 0 & 0 & 0 & 0 & 0 \\
0 & 1 & 0 & 0 & 0 & 0 & 0 & 0 \\
0 & 0 & 1 & 0 & 0 & 0 & 0 & 0 \\
0 & 0 & 0 & 1 & 0 & 0 & 0 & 0 \\
0 & 0 & 0 & 0 & 1 & 0 & 0 & 0 \\
0 & 0 & 0 & 0 & 0 & 1 & 0 & 0 \\
\end{array}
\right),\;
  R(\tau)=\frac{1}{2\sqrt{2}}\left(
\begin{array}{cccccccc}
0 & 1 & 0 & 0 & 0 & 0 & 0 & 0 \\
1 & 0 & 0 & 0 & 0 & 0 & 0 & 0 \\
0 & 0 & 0 & 0 & 0 & 0 & 0 & 1 \\
0 & 0 & 0 & 0 & 0 & 0 & 1 & 0 \\
0 & 0 & 0 & 0 & 0 & 1 & 0 & 0 \\
0 & 0 & 0 & 0 & 1 & 0 & 0 & 0 \\
0 & 0 & 0 & 1 & 0 & 0 & 0 & 0 \\
0 & 0 & 1 & 0 & 0 & 0 & 0 & 0 \\
\end{array}
\right).
\end{align}
We denote by $e_{i, j}$ the $8 \times 8$ matrix with 1 in the $(i, j)$ entry and 0 elsewhere.
By using these matrices, an MPS representation of the $G\times \rep{G}$ cluster state is given by
\begin{equation}
  A^{g}_{o}=e_{g,g}
\end{equation}
for odd sites and 
\begin{equation}
  A^{g}_{e}=R(g)
\end{equation}
for even sites~\cite{Fechisin:2023dkj} and the translationally invariant MPS is given by 
\begin{equation}
  A^{g,h}_{oe}:=A^{g}_{o}A^{h}_{e}=e_{g,g}R(h)=e_{g,h^{-1}g}.
\end{equation}
We note that this MPS matrix is injective.

Since the parameterization is introduced by a unitary transformation acting on the physical legs (see Eq.~\eqref{eq: cluster pump}),
the ground state of $H_g(\theta)$ is given by
\begin{equation}
  \left(\prod_{i: \text{odd}} R_{g}(\theta)_i \right)\ket{\{A^{g,h}_{oe}\}}_{L},
\end{equation}
where the subscript $L$ denotes the number of unit cells.
Therefore, we can easily get an MPS representation of the ground state as follows:
{\small
\begin{equation*}
\begin{aligned}
  \left(\prod_{i=1}^{L} R_{g}(\theta)_i \right)\ket{\{A^{g,h}_{oe}\}}_{L}
  &=\sum_{\{g_k,h_k\}}\tr{A^{g_1,h_1}_{oe} \cdots A^{g_L,h_L}_{oe}}\left(R_{g}(\theta)_1\ket{g_{1}}\right)\ket{h_{1}} \cdots\left(R_{g}(\theta)_L\ket{g_{L}}\right)\ket{h_{L}} \\
  &=\sum_{\{g_k,h_k\}}\tr{A^{g_1,h_1}_{oe} \cdots A^{g_L,h_L}_{oe}} \left(\sum_{\tilde{g}_1}\ket{\tilde{g}_{1}}[R_{g}(\theta)_1]_{\tilde{g}_{1},g_{1}}\right)\ket{h_1}  \cdots \left(\sum_{\tilde{g}_L}\ket{\tilde{g}_{L}}[R_{g}(\theta)_L]_{\tilde{g}_{L},g_{L}}\right)\ket{h_L} \\
  &=\sum_{\{\tilde{g}_k,h_k\}}\tr{A^{\tilde{g}_1,h_1}_{oe}(\theta;g) \cdots A^{\tilde{g}_L,h_L}_{oe}(\theta;g)} \ket{\tilde{g}_{1},h_{1}, \cdots,\tilde{g}_{L},h_{L}},
\end{aligned}
\end{equation*}
}
where 
\begin{equation}
  A^{g,h}_{oe}(\theta;k)=\sum_{\tilde{g}} [R_{k}(\theta)]_{g,\tilde{g}} A^{\tilde{g},h}_{oe}.
  \label{eq: theta-dep Agh}
\end{equation}

Let us compute the invariant using a gauge where $\phi(\theta)$ is non-periodic.
Although $A^{g,h}_{oe}(\theta; k)$ is not $2\pi$-periodic as a matrix, 
the $2\pi$-periodicity of the MPS guarantees the existence of the transition function $\hat{U}_{k}$:
\begin{equation}
  A^{g,h}_{oe}(2\pi;k)=\hat{U}_{k} A^{g,h}_{oe}(0;k) \hat{U}_{k}^{\dagger}.
  \label{eq: Agh gauge transformation}
\end{equation}
By explicit calculations, we can check that
\begin{equation}
  \hat{U}_{\sigma}=
  \left(
    \begin{array}{cccccccc}
     0 & 0 & 0 & 0 & 0 & 0 & 1 & 0 \\
     0 & 0 & 0 & 1 & 0 & 0 & 0 & 0 \\
     1 & 0 & 0 & 0 & 0 & 0 & 0 & 0 \\
     0 & 0 & 0 & 0 & 0 & 1 & 0 & 0 \\
     0 & 0 & 1 & 0 & 0 & 0 & 0 & 0 \\
     0 & 0 & 0 & 0 & 0 & 0 & 0 & 1 \\
     0 & 0 & 0 & 0 & 1 & 0 & 0 & 0 \\
     0 & 1 & 0 & 0 & 0 & 0 & 0 & 0 \\
    \end{array}
  \right), \quad
  \hat{U}_{\tau}=\left(
    \begin{array}{cccccccc}
     0 & 1 & 0 & 0 & 0 & 0 & 0 & 0 \\
     1 & 0 & 0 & 0 & 0 & 0 & 0 & 0 \\
     0 & 0 & 0 & 1 & 0 & 0 & 0 & 0 \\
     0 & 0 & 1 & 0 & 0 & 0 & 0 & 0 \\
     0 & 0 & 0 & 0 & 0 & 1 & 0 & 0 \\
     0 & 0 & 0 & 0 & 1 & 0 & 0 & 0 \\
     0 & 0 & 0 & 0 & 0 & 0 & 0 & 1 \\
     0 & 0 & 0 & 0 & 0 & 0 & 1 & 0 \\
    \end{array}
  \right),
\label{eq: Usigma Utau}
\end{equation}
and $\hat{U}_{k}$ is given by a multiplication of them for general $k\in D_{8}$.
We interpolate between $\hat{U}_{k}$ and the identity matrix as
\begin{equation}
  \hat{U}_{k}(\theta):=\exp\left(\frac{\theta}{2\pi}\log(\hat{U}_{k})\right),
  \label{eq: def Uk theta}
\end{equation}
and define a $2\pi$-periodic MPS matrix as 
\begin{equation}
  \tilde{A}^{g,h}_{oe}(\theta;k)=\hat{U}_{k}(\theta)^{\dagger} A^{g,h}_{oe}(\theta;k) \hat{U}_{k}(\theta).
  \label{eq: gauge transf of Agh}
\end{equation}
Note that the $L$-symbol calculated from this MPS is constant. 
This is due to the following reasons: 
First, for the MPS before making it $2\pi$-periodic, the transfer matrix is constant. 
Thus, we can choose a gauge so that the action tensor is constant.
Therefore, the $L$-symbol calculated in this gauge is constant. 
Since the $L$-symbol does not change under the gauge transformation Eq.~\eqref{eq: gauge transf of Agh}, the $L$-symbol calculated with $\tilde{A}^{g,h}_{oe}(\theta;k)$ is also constant.

Let us compute the fractionalized symmetry operator $\hat{\phi}_{\rho}^{k}(\theta)$ for $\rho \in \Rep(D_8)$ defined by
\begin{equation}
  \sum_{\tilde{h}} \mathcal{O}^{h\tilde{h}}_{\rho} \tilde{A}^{g,\tilde{h}}_{oe}(\theta;k) = \hat{\phi}^{k}(\theta)_{\rho}^{-1} \tilde{A}^{g,h}_{oe}(\theta;k) \hat{\phi}^{k}(\theta)_{\rho}.
\end{equation}
By explicit calculations, we obtain
\begin{equation}
  \hat{\phi}^{\sigma}(\theta)_{(0,0)}
  =\left(
    \begin{array}{cccccccc}
     1 & 0 & 0 & 0 & 0 & 0 & 0 & 0 \\
     0 & 1 & 0 & 0 & 0 & 0 & 0 & 0 \\
     0 & 0 & 1 & 0 & 0 & 0 & 0 & 0 \\
     0 & 0 & 0 & 1 & 0 & 0 & 0 & 0 \\
     0 & 0 & 0 & 0 & 1 & 0 & 0 & 0 \\
     0 & 0 & 0 & 0 & 0 & 1 & 0 & 0 \\
     0 & 0 & 0 & 0 & 0 & 0 & 1 & 0 \\
     0 & 0 & 0 & 0 & 0 & 0 & 0 & 1 \\
    \end{array}
    \right), \quad
  \hat{\phi}^{\sigma}(\theta)_{(0,1)}
  =\left(
    \begin{array}{cccccccc}
     -1 & 0 & 0 & 0 & 0 & 0 & 0 & 0 \\
     0 & 1 & 0 & 0 & 0 & 0 & 0 & 0 \\
     0 & 0 & -1 & 0 & 0 & 0 & 0 & 0 \\
     0 & 0 & 0 & 1 & 0 & 0 & 0 & 0 \\
     0 & 0 & 0 & 0 & -1 & 0 & 0 & 0 \\
     0 & 0 & 0 & 0 & 0 & 1 & 0 & 0 \\
     0 & 0 & 0 & 0 & 0 & 0 & -1 & 0 \\
     0 & 0 & 0 & 0 & 0 & 0 & 0 & 1 \\
    \end{array}
    \right),
    \end{equation}
  \begin{equation}
  \hat{\phi}^{\sigma}(\theta)_{(1,0)}
  =
  {\small
  \left(
    \begin{array}{cccccccc}
     -c \frac{\theta }{2} & 0 &  \frac{1+i}{2}   s \frac{\theta}{2} & 0 & 0 & 0 & \frac{-1+i}{2}  s \frac{\theta}{2} & 0 \\
     0 & -c \frac{\theta}{2} & 0 & \frac{-1+i}{2}  s \frac{\theta}{2} & 0 & 0 & 0 &  \frac{1+i}{2}   s \frac{\theta}{2} \\
     \frac{1-i}{2} s \frac{\theta}{2} & 0 & c \frac{\theta}{2} & 0 & -\frac{1+i}{2} s \frac{\theta}{2} & 0 & 0 & 0 \\
     0 & -\frac{1+i}{2} s \frac{\theta}{2} & 0 & c \frac{\theta}{2} & 0 & \frac{1-i}{2} s \frac{\theta}{2} & 0 & 0 \\
     0 & 0 & \frac{-1+i}{2}  s \frac{\theta}{2} & 0 & -c \frac{\theta}{2} & 0 &  \frac{1+i}{2}   s \frac{\theta}{2} & 0 \\
     0 & 0 & 0 &  \frac{1+i}{2}   s \frac{\theta}{2} & 0 & -c \frac{\theta}{2} & 0 & \frac{-1+i}{2}  s \frac{\theta}{2} \\
     -\frac{1+i}{2} s \frac{\theta}{2} & 0 & 0 & 0 & \frac{1-i}{2} s \frac{\theta}{2} & 0 & c \frac{\theta}{2} & 0 \\
     0 & \frac{1-i}{2} s \frac{\theta}{2} & 0 & 0 & 0 & -\frac{1+i}{2} s \frac{\theta}{2} & 0 & c \frac{\theta}{2} \\
    \end{array}
    \right),\\
  }
  \end{equation}
  \vspace{11pt}
  \begin{equation}
  \hat{\phi}^{\sigma}(\theta)_{(1,1)}
  =
  {\small
  \left(
    \begin{array}{cccccccc}
     c \frac{\theta}{2} & 0 & -\frac{1+i}{2} s \frac{\theta}{2} & 0 & 0 & 0 & \frac{1-i}{2} s \frac{\theta}{2} & 0 \\
     0 & -c \frac{\theta}{2} & 0 & \frac{-1+i}{2}  s \frac{\theta}{2} & 0 & 0 & 0 &  \frac{1+i}{2}   s \frac{\theta}{2} \\
     \frac{-1+i}{2}  s \frac{\theta}{2} & 0 & -c \frac{\theta}{2} & 0 &  \frac{1+i}{2}   s \frac{\theta}{2} & 0 & 0 & 0 \\
     0 & -\frac{1+i}{2} s \frac{\theta}{2} & 0 & c \frac{\theta}{2} & 0 & \frac{1-i}{2} s \frac{\theta}{2} & 0 & 0 \\
     0 & 0 & \frac{1-i}{2} s \frac{\theta}{2} & 0 & c \frac{\theta}{2} & 0 & -\frac{1+i}{2} s \frac{\theta}{2} & 0 \\
     0 & 0 & 0 &  \frac{1+i}{2}   s \frac{\theta}{2} & 0 & -c \frac{\theta}{2} & 0 & \frac{-1+i}{2}  s \frac{\theta}{2} \\
      \frac{1+i}{2}   s \frac{\theta}{2} & 0 & 0 & 0 & \frac{-1+i}{2}  s \frac{\theta}{2} & 0 & -c \frac{\theta}{2} & 0 \\
     0 & \frac{1-i}{2} s \frac{\theta}{2} & 0 & 0 & 0 & -\frac{1+i}{2} s \frac{\theta}{2} & 0 & c \frac{\theta}{2} \\
    \end{array}
    \right),    
  }
  \end{equation} 
  {
    \begin{equation*}
    \begin{aligned}
      \hspace{-10mm}\hat{\phi}^{\sigma}(\theta)_{D}=&\left(
      \begin{array}{cccccccccccccccc}
       s ^3\frac{\theta}{4} & c ^3\frac{\theta}{4}                                                                                               & 0 & 0                                                                                                                                     & \frac{\omega^{2}_{8}}{2} s \frac{\theta}{2} s\frac{\theta+\pi}{4} & \frac{\omega^{10}_{8}}{2} s \frac{\theta}{2} s \frac{\pi-\theta}{4}   & 0 & 0  \\
       -c ^3\frac{\theta}{4} & s ^3\frac{\theta}{4}                                                                                              & 0 & 0                                                                                                                                     & \frac{\omega^{2}_{8}}{2} s \frac{\theta}{2} s \frac{\pi-\theta}{4} & \frac{\omega^{2}_{8}}{2} s \frac{\theta}{2} s\frac{\theta+\pi}{4}    & 0 & 0  \\
       0 & 0                                                                                                                                     & c ^3\frac{\theta}{4} & s ^3\frac{\theta}{4}                                                                                               & 0 & 0                                                                                                                                     & \frac{\omega^{14}_{8}}{2} s \frac{\theta}{2} s\frac{\theta+\pi}{4} & \frac{\omega^{14}_{8}}{2} s \frac{\theta}{2} s \frac{\pi-\theta}{4}  \\
       0 & 0                                                                                                                                     & s ^3\frac{\theta}{4} & -c ^3\frac{\theta}{4}                                                                                              & 0 & 0                                                                                                                                     & \frac{\omega^{14}_{8}}{2} s \frac{\theta}{2} s \frac{\pi-\theta}{4} & \frac{\omega^{6}_{8}}{2} s \frac{\theta}{2} s\frac{\theta+\pi}{4}  \\
       \frac{\omega^{14}_{8}}{2} s \frac{\theta}{2} s\frac{\theta+\pi}{4} & \frac{\omega^{6}_{8}}{2} s \frac{\theta}{2} s \frac{\pi-\theta}{4}   & 0 & 0                                                                                                                                     & c ^3\frac{\theta}{4} & -s ^3\frac{\theta}{4}                                                                                              & 0 & 0  \\
       \frac{\omega^{14}_{8}}{2} s \frac{\theta}{2} s \frac{\pi-\theta}{4} & \frac{\omega^{14}_{8}}{2} s \frac{\theta}{2} s\frac{\theta+\pi}{4}  & 0 & 0                                                                                                                                     & s ^3\frac{\theta}{4} & c ^3\frac{\theta}{4}                                                                                               & 0 & 0  \\
       0 & 0                                                                                                                                     & \frac{\omega^{2}_{8}}{2} s \frac{\theta}{2} s\frac{\theta+\pi}{4} & \frac{\omega^{2}_{8}}{2} s \frac{\theta}{2} s \frac{\pi-\theta}{4}    & 0 & 0                                                                                                                                     & s ^3\frac{\theta}{4} & -c ^3\frac{\theta}{4}  \\
       0 & 0                                                                                                                                     & \frac{\omega^{2}_{8}}{2} s \frac{\theta}{2} s \frac{\pi-\theta}{4} & \frac{\omega^{10}_{8}}{2} s \frac{\theta}{2} s\frac{\theta+\pi}{4}   & 0 & 0                                                                                                                                     & -c ^3\frac{\theta}{4} & -s ^3\frac{\theta}{4}  \\
       i s ^2\frac{\theta}{4} c \frac{\theta}{4} & -i s \frac{\theta}{4} c ^2\frac{\theta}{4}                                                    & 0 & 0                                                                                                                                     & \frac{\omega^{6}_{8}}{2} s \frac{\theta}{2} s \frac{\pi-\theta}{4} & \frac{\omega^{6}_{8}}{2} s \frac{\theta}{2} s\frac{\theta+\pi}{4}    & 0 & 0  \\
       i s \frac{\theta}{4} c ^2\frac{\theta}{4} & i s ^2\frac{\theta}{4} c \frac{\theta}{4}                                                     & 0 & 0                                                                                                                                     & \frac{\omega^{14}_{8}}{2} s \frac{\theta}{2} s\frac{\theta+\pi}{4} & \frac{\omega^{6}_{8}}{2} s \frac{\theta}{2} s \frac{\pi-\theta}{4}   & 0 & 0  \\
       0 & 0                                                                                                                                     & -i s \frac{\theta}{4} c ^2\frac{\theta}{4} & i s ^2\frac{\theta}{4} c \frac{\theta}{4}                                                    & 0 & 0                                                                                                                                     & \frac{\omega^{2}_{8}}{2} s \frac{\theta}{2} s \frac{\pi-\theta}{4} & \frac{\omega^{10}_{8}}{2} s \frac{\theta}{2} s\frac{\theta+\pi}{4}  \\
       0 & 0                                                                                                                                     & i s ^2\frac{\theta}{4} c \frac{\theta}{4} & i s \frac{\theta}{4} c ^2\frac{\theta}{4}                                                     & 0 & 0                                                                                                                                     & \frac{\omega^{10}_{8}}{2} s \frac{\theta}{2} s\frac{\theta+\pi}{4} & \frac{\omega^{10}_{8}}{2} s \frac{\theta}{2} s \frac{\pi-\theta}{4}  \\
       \frac{\omega^{2}_{8}}{2} s \frac{\theta}{2} s \frac{\pi-\theta}{4} & \frac{\omega^{2}_{8}}{2} s \frac{\theta}{2} s\frac{\theta+\pi}{4}    & 0 & 0                                                                                                                                     & -i s \frac{\theta}{4} c ^2\frac{\theta}{4} & -i s ^2\frac{\theta}{4} c \frac{\theta}{4}                                                   & 0 & 0  \\
       \frac{\omega^{10}_{8}}{2} s \frac{\theta}{2} s\frac{\theta+\pi}{4} & \frac{\omega^{2}_{8}}{2} s \frac{\theta}{2} s \frac{\pi-\theta}{4}   & 0 & 0                                                                                                                                     & i s ^2\frac{\theta}{4} c \frac{\theta}{4} & -i s \frac{\theta}{4} c ^2\frac{\theta}{4}                                                    & 0 & 0  \\
       0 & 0                                                                                                                                     & \frac{\omega^{6}_{8}}{2} s \frac{\theta}{2} s \frac{\pi-\theta}{4} & \frac{\omega^{14}_{8}}{2} s \frac{\theta}{2} s\frac{\theta+\pi}{4}   & 0 & 0                                                                                                                                     & i s ^2\frac{\theta}{4} c \frac{\theta}{4} & i s \frac{\theta}{4} c ^2\frac{\theta}{4}  \\
       0 & 0                                                                                                                                     & \frac{\omega^{14}_{8}}{2} s \frac{\theta}{2} s\frac{\theta+\pi}{4} & \frac{\omega^{14}_{8}}{2} s \frac{\theta}{2} s \frac{\pi-\theta}{4}  & 0 & 0                                                                                                                                     & i s \frac{\theta}{4} c ^2\frac{\theta}{4} & -i s ^2\frac{\theta}{4} c \frac{\theta}{4}  \\
      \end{array}\right.\\
      &\;\\
      &\;\;\;\left.
      \begin{array}{cccccccccccccccc}
       -i s ^2\frac{\theta}{4} c \frac{\theta}{4} & i s \frac{\theta}{4} c ^2\frac{\theta}{4}                                                    & 0 & 0                                                                                                                                     & \frac{\omega^{14}_{8}}{2} s \frac{\theta}{2} s \frac{\pi-\theta}{4} & \frac{\omega^{14}_{8}}{2} s \frac{\theta}{2} s\frac{\theta+\pi}{4}  & 0 & 0 \\
       -i s \frac{\theta}{4} c ^2\frac{\theta}{4} & -i s ^2\frac{\theta}{4} c \frac{\theta}{4}                                                   & 0 & 0                                                                                                                                     & \frac{\omega^{6}_{8}}{2} s \frac{\theta}{2} s\frac{\theta+\pi}{4} & \frac{\omega^{14}_{8}}{2} s \frac{\theta}{2} s \frac{\pi-\theta}{4}   & 0 & 0 \\
       0 & 0                                                                                                                                     & i s \frac{\theta}{4} c ^2\frac{\theta}{4} & -i s ^2\frac{\theta}{4} c \frac{\theta}{4}                                                    & 0 & 0                                                                                                                                     & \frac{\omega^{10}_{8}}{2} s \frac{\theta}{2} s \frac{\pi-\theta}{4} & \frac{\omega^{2}_{8}}{2} s \frac{\theta}{2} s\frac{\theta+\pi}{4} \\
       0 & 0                                                                                                                                     & -i s ^2\frac{\theta}{4} c \frac{\theta}{4} & -i s \frac{\theta}{4} c ^2\frac{\theta}{4}                                                   & 0 & 0                                                                                                                                     & \frac{\omega^{2}_{8}}{2} s \frac{\theta}{2} s\frac{\theta+\pi}{4} & \frac{\omega^{2}_{8}}{2} s \frac{\theta}{2} s \frac{\pi-\theta}{4} \\
       \frac{\omega^{10}_{8}}{2} s \frac{\theta}{2} s \frac{\pi-\theta}{4} & \frac{\omega^{10}_{8}}{2} s \frac{\theta}{2} s\frac{\theta+\pi}{4}  & 0 & 0                                                                                                                                     & i s \frac{\theta}{4} c ^2\frac{\theta}{4} & i s ^2\frac{\theta}{4} c \frac{\theta}{4}                                                     & 0 & 0 \\
       \frac{\omega^{2}_{8}}{2} s \frac{\theta}{2} s\frac{\theta+\pi}{4} & \frac{\omega^{10}_{8}}{2} s \frac{\theta}{2} s \frac{\pi-\theta}{4}   & 0 & 0                                                                                                                                     & -i s ^2\frac{\theta}{4} c \frac{\theta}{4} & i s \frac{\theta}{4} c ^2\frac{\theta}{4}                                                    & 0 & 0 \\
       0 & 0                                                                                                                                     & \frac{\omega^{14}_{8}}{2} s \frac{\theta}{2} s \frac{\pi-\theta}{4} & \frac{\omega^{6}_{8}}{2} s \frac{\theta}{2} s\frac{\theta+\pi}{4}   & 0 & 0                                                                                                                                     & -i s ^2\frac{\theta}{4} c \frac{\theta}{4} & -i s \frac{\theta}{4} c ^2\frac{\theta}{4} \\
       0 & 0                                                                                                                                     & \frac{\omega^{6}_{8}}{2} s \frac{\theta}{2} s\frac{\theta+\pi}{4} & \frac{\omega^{6}_{8}}{2} s \frac{\theta}{2} s \frac{\pi-\theta}{4}    & 0 & 0                                                                                                                                     & -i s \frac{\theta}{4} c ^2\frac{\theta}{4} & i s ^2\frac{\theta}{4} c \frac{\theta}{4} \\
       -s ^3\frac{\theta}{4} & -c ^3\frac{\theta}{4}                                                                                             & 0 & 0                                                                                                                                     & \frac{\omega^{10}_{8}}{2} s \frac{\theta}{2} s\frac{\theta+\pi}{4} & \frac{\omega^{2}_{8}}{2} s \frac{\theta}{2} s \frac{\pi-\theta}{4}   & 0 & 0 \\
       c ^3\frac{\theta}{4} & -s ^3\frac{\theta}{4}                                                                                              & 0 & 0                                                                                                                                     & \frac{\omega^{10}_{8}}{2} s \frac{\theta}{2} s \frac{\pi-\theta}{4} & \frac{\omega^{10}_{8}}{2} s \frac{\theta}{2} s\frac{\theta+\pi}{4}  & 0 & 0 \\
       0 & 0                                                                                                                                     & -c ^3\frac{\theta}{4} & -s ^3\frac{\theta}{4}                                                                                             & 0 & 0                                                                                                                                     & \frac{\omega^{6}_{8}}{2} s \frac{\theta}{2} s\frac{\theta+\pi}{4} & \frac{\omega^{6}_{8}}{2} s \frac{\theta}{2} s \frac{\pi-\theta}{4} \\
       0 & 0                                                                                                                                     & -s ^3\frac{\theta}{4} & c ^3\frac{\theta}{4}                                                                                              & 0 & 0                                                                                                                                     & \frac{\omega^{6}_{8}}{2} s \frac{\theta}{2} s \frac{\pi-\theta}{4} & \frac{\omega^{14}_{8}}{2} s \frac{\theta}{2} s\frac{\theta+\pi}{4} \\
       \frac{\omega^{6}_{8}}{2} s \frac{\theta}{2} s\frac{\theta+\pi}{4} & \frac{\omega^{14}_{8}}{2} s \frac{\theta}{2} s \frac{\pi-\theta}{4}   & 0 & 0                                                                                                                                     & -c ^3\frac{\theta}{4} & s ^3\frac{\theta}{4}                                                                                              & 0 & 0 \\
       \frac{\omega^{6}_{8}}{2} s \frac{\theta}{2} s \frac{\pi-\theta}{4} & \frac{\omega^{6}_{8}}{2} s \frac{\theta}{2} s\frac{\theta+\pi}{4}    & 0 & 0                                                                                                                                     & -s ^3\frac{\theta}{4} & -c ^3\frac{\theta}{4}                                                                                             & 0 & 0 \\
       0 & 0                                                                                                                                     & \frac{\omega^{10}_{8}}{2} s \frac{\theta}{2} s\frac{\theta+\pi}{4} & \frac{\omega^{10}_{8}}{2} s \frac{\theta}{2} s \frac{\pi-\theta}{4}  & 0 & 0                                                                                                                                     & -s ^3\frac{\theta}{4} & c ^3\frac{\theta}{4} \\
       0 & 0                                                                                                                                     & \frac{\omega^{10}_{8}}{2} s \frac{\theta}{2} s \frac{\pi-\theta}{4} & \frac{\omega^{2}_{8}}{2} s \frac{\theta}{2} s\frac{\theta+\pi}{4}   & 0 & 0                                                                                                                                     & c ^3\frac{\theta}{4} & s ^3\frac{\theta}{4} \\
      \end{array}
      \right),
  \end{aligned}
\end{equation*}
  }
and
{
\begin{eqnarray*}
  \hat{\phi}^{\tau}(\theta)_{(0,0)}
  &=&\left(
    \begin{array}{cccccccc}
     1 & 0 & 0 & 0 & 0 & 0 & 0 & 0 \\
     0 & 1 & 0 & 0 & 0 & 0 & 0 & 0 \\
     0 & 0 & 1 & 0 & 0 & 0 & 0 & 0 \\
     0 & 0 & 0 & 1 & 0 & 0 & 0 & 0 \\
     0 & 0 & 0 & 0 & 1 & 0 & 0 & 0 \\
     0 & 0 & 0 & 0 & 0 & 1 & 0 & 0 \\
     0 & 0 & 0 & 0 & 0 & 0 & 1 & 0 \\
     0 & 0 & 0 & 0 & 0 & 0 & 0 & 1 \\
    \end{array}
    \right), \quad
  \hat{\phi}^{\tau}(\theta)_{(0,1)}
  =\left(
    \begin{array}{cccccccc}
     -c \frac{\theta }{2} & i s \frac{\theta }{2} & 0 & 0 & 0 & 0 & 0 & 0 \\
     -i s \frac{\theta }{2} & c \frac{\theta }{2} & 0 & 0 & 0 & 0 & 0 & 0 \\
     0 & 0 & -c \frac{\theta }{2} & i s \frac{\theta }{2} & 0 & 0 & 0 & 0 \\
     0 & 0 & -i s \frac{\theta }{2} & c \frac{\theta }{2} & 0 & 0 & 0 & 0 \\
     0 & 0 & 0 & 0 & -c \frac{\theta }{2} & i s \frac{\theta }{2} & 0 & 0 \\
     0 & 0 & 0 & 0 & -i s \frac{\theta }{2} & c \frac{\theta }{2} & 0 & 0 \\
     0 & 0 & 0 & 0 & 0 & 0 & -c \frac{\theta }{2} & i s \frac{\theta }{2} \\
     0 & 0 & 0 & 0 & 0 & 0 & -i s \frac{\theta }{2} & c \frac{\theta }{2} \\
    \end{array}
    \right),\\
  \hat{\phi}^{\tau}(\theta)_{(1,0)}
  &=&\left(
    \begin{array}{cccccccc}
     -1 & 0 & 0 & 0 & 0 & 0 & 0 & 0 \\
     0 & -1 & 0 & 0 & 0 & 0 & 0 & 0 \\
     0 & 0 & 1 & 0 & 0 & 0 & 0 & 0 \\
     0 & 0 & 0 & 1 & 0 & 0 & 0 & 0 \\
     0 & 0 & 0 & 0 & -1 & 0 & 0 & 0 \\
     0 & 0 & 0 & 0 & 0 & -1 & 0 & 0 \\
     0 & 0 & 0 & 0 & 0 & 0 & 1 & 0 \\
     0 & 0 & 0 & 0 & 0 & 0 & 0 & 1 \\
    \end{array}
    \right), \quad
  \hat{\phi}^{\tau}(\theta)_{(1,1)}
  =\left(
    \begin{array}{cccccccc}
     c \frac{\theta }{2} & -i s \frac{\theta }{2} & 0 & 0 & 0 & 0 & 0 & 0 \\
     i s \frac{\theta }{2} & -c \frac{\theta }{2} & 0 & 0 & 0 & 0 & 0 & 0 \\
     0 & 0 & -c \frac{\theta }{2} & i s \frac{\theta }{2} & 0 & 0 & 0 & 0 \\
     0 & 0 & -i s \frac{\theta }{2} & c \frac{\theta }{2} & 0 & 0 & 0 & 0 \\
     0 & 0 & 0 & 0 & c \frac{\theta }{2} & -i s \frac{\theta }{2} & 0 & 0 \\
     0 & 0 & 0 & 0 & i s \frac{\theta }{2} & -c \frac{\theta }{2} & 0 & 0 \\
     0 & 0 & 0 & 0 & 0 & 0 & -c \frac{\theta }{2} & i s \frac{\theta }{2} \\
     0 & 0 & 0 & 0 & 0 & 0 & -i s \frac{\theta }{2} & c \frac{\theta }{2} \\
    \end{array}
    \right),\\
  \hat{\phi}^{\tau}(\theta)_{D}
  &=&\left(
    \begin{array}{cccc}
    A&0&0&0\\
    0&B&0&0\\
    0&0&-A&0\\
    0&0&0&-B\\
    \end{array}
    \right).
\end{eqnarray*}
}
Here, $s$ and $c$ represent $\sin$ and $\cos$, respectively, and $\omega_8$ is the shorthand notation for $e^{2\pi i/8}$, and $A$ and $B$ are defined by
  \begin{eqnarray*}
  A&=&\left(
    \begin{array}{cccc}
     \sin ^2 \frac{\theta }{4}               &  \cos ^2 \frac{\theta }{4}              & \frac{1}{2} i \sin  \frac{\theta }{2}   & -\frac{1}{2} i \sin  \frac{\theta }{2}  \\
     -\cos ^2 \frac{\theta }{4}               & -\sin ^2 \frac{\theta }{4}             & \frac{1}{2} i \sin  \frac{\theta }{2}   & -\frac{1}{2} i \sin  \frac{\theta }{2}  \\
     -\frac{1}{2} i \sin  \frac{\theta }{2}  &  \frac{1}{2} i \sin  \frac{\theta }{2}  & \cos ^2 \frac{\theta }{4}               & \sin ^2 \frac{\theta }{4}               \\
     -\frac{1}{2} i \sin  \frac{\theta }{2}   & \frac{1}{2} i \sin  \frac{\theta }{2}  & -\sin ^2 \frac{\theta }{4}              & -\cos ^2 \frac{\theta }{4}              \\
    \end{array}
    \right),\;
  B=\left(
    \begin{array}{cccc}
     \cos ^2 \frac{\theta }{4}   & -\sin ^2 \frac{\theta }{4}  & -\frac{1}{2} i \sin  \frac{\theta }{2}  & -\frac{1}{2} i \sin  \frac{\theta }{2}   \\
     -\sin ^2 \frac{\theta }{4}  & \cos ^2 \frac{\theta }{4}  & -\frac{1}{2} i \sin  \frac{\theta }{2}  & -\frac{1}{2} i \sin  \frac{\theta }{2}   \\
     \frac{1}{2} i \sin  \frac{\theta }{2}  & \frac{1}{2} i \sin  \frac{\theta }{2}  & \sin ^2 \frac{\theta }{4}  & -\cos ^2 \frac{\theta }{4}     \\
     \frac{1}{2} i \sin  \frac{\theta }{2}  & \frac{1}{2} i \sin  \frac{\theta }{2}  & -\cos ^2 \frac{\theta }{4}  & \sin ^2 \frac{\theta }{4}     \\
    \end{array}
    \right).
  \end{eqnarray*}
The action tensor $\hat{\phi}_{\rho}^{k}(\theta)$ for general $k \in D_8$ is given by a multiplication of the above matrices.
The pump invariant is the difference of $\hat{\phi}_{\rho}^{k}(\theta)$ at $\theta=0$ and $\theta=2\pi$:
\begin{equation}
  \hat{\phi}^{k}(2\pi)_{\rho}=\eta^{k}_{\rho} \hat{\phi}_{\rho}^{k}(0).
\end{equation}
By using the above matrices, we can explicitly read off the pump invariant:
\begin{equation}
\begin{aligned}
  \eta_{(0,0)}^{\sigma}&=1, \quad 
  \eta_{(0,1)}^{\sigma}=1, \quad 
  \eta_{(1,0)}^{\sigma}=-1, \quad 
  \eta_{(1,1)}^{\sigma}=-1, \quad 
  \eta_{D}^{\sigma}=
    \left(
    \begin{array}{cc}
      0&-1\\
      1&0
    \end{array}
    \right),
\\
  \eta_{(0,0)}^{\tau}&=1, \quad 
  \eta_{(0,1)}^{\tau}=-1, \quad 
  \eta_{(1,0)}^{\tau}=1, \quad 
  \eta_{(1,1)}^{\tau}=-1, \quad 
  \eta_{D}^{\tau}=
    \left(
    \begin{array}{cc}
      0&-1\\
      -1&0
    \end{array}
    \right).
\end{aligned}
\end{equation}
The pump invariant $\eta_{\rho}^{k}$ for general $k\in D_{8}$ is given by a multiplication of the above invariants, i.e.,
\begin{equation}
\eta_{\rho}^{\sigma^i \tau^j} = (\eta_{\rho}^{\sigma})^i (\eta_{\rho}^{\tau})^j.
\label{eq: pump for general k}
\end{equation}

According to the Tannaka-Krein duality, 
it is expected that the pump invariant is classified by $D_8$.  
This is consistent with the fact that the invariants we computed form $D_8$ as shown in Eq.~\eqref{eq: pump for general k}.
Our computation demonstrates that all (equivalence classes of) $S^1$-parameterized families can be constructed in this way.

\bibliography{bibliography}

\end{document}